\DeclareFontFamily{U}{mathx}{\hyphenchar\font45}
\DeclareFontShape{U}{mathx}{m}{n}{
      <5> <6> <7> <8> <9> <10>
      <10.95> <12> <14.4> <17.28> <20.74> <24.88>
      mathx10
      }{}
\DeclareSymbolFont{mathx}{U}{mathx}{m}{n}
\DeclareMathSymbol{\bigtimes}{1}{mathx}{"91}
\definecolor{DarkRed}{rgb}{0.5,0.1,0.1}
\definecolor{DarkBlue}{rgb}{0.1,0.1,0.5}
\definecolor{ForestGreen}{rgb}{0.1333,0.5451,0.1333}
\definecolor{Red}{rgb}{0.9,0,0}
\crefname{property}{property}{Property}
\crefname{equation}{eq}{Eq}
\def\BState{\State\hskip-\ALG@thistlm}
\newtheorem{theorem}{Theorem}
\newtheorem{lemma}{Lemma}[section]
\newtheorem{proposition}[lemma]{Proposition}
\newtheorem{claim}[lemma]{Claim}
\newtheorem{fact}[lemma]{Fact}
\newtheorem{assumption}[lemma]{Assumption}
\newtheorem{problem}{Problem}
\newtheorem*{claim*}{Claim}
\newtheorem*{assumption*}{Assumption}
\newtheorem*{proposition*}{Proposition}
\newtheorem*{lemma*}{Lemma}
\newtheorem*{problem5*}{Problem}
\newtheorem{observation}[lemma]{Observation}
\newtheorem*{theorem*}{Theorem}
\crefname{lemma}{Lemma}{Lemmas}
\crefname{claim}{claim}{claims}
\newtheorem{mdresult}{Result}
\newenvironment{result}{\begin{mdframed}[backgroundcolor=lightgray!40,topline=false,rightline=false,leftline=false,bottomline=false,innertopmargin=2pt]\begin{mdresult}}{\end{mdresult}\end{mdframed}}
\theoremstyle{definition}
\newtheorem{mdproblem}{Problem}
\newtheorem*{mdproblem*}{Problem}
\newenvironment{Problem*}{\begin{mdframed}[hidealllines=false,innerleftmargin=10pt,backgroundcolor=gray!10,innertopmargin=5pt,innerbottommargin=5pt,roundcorner=10pt]\begin{mdproblem*}}{\end{mdproblem*}\end{mdframed}}
\newtheorem{mddefinition}[lemma]{Definition}
\newenvironment{Definition}{\begin{mdframed}[hidealllines=false,innerleftmargin=10pt,backgroundcolor=white!10,innertopmargin=5pt,innerbottommargin=5pt,roundcorner=10pt]\begin{mddefinition}}{\end{mddefinition}\end{mdframed}}
\newtheorem*{mddefinition*}{Definition}
\newenvironment{Definition*}{\begin{mdframed}[hidealllines=false,innerleftmargin=10pt,backgroundcolor=white!10,innertopmargin=5pt,innerbottommargin=5pt,roundcorner=10pt]\begin{mddefinition*}}{\end{mddefinition*}\end{mdframed}}
\newtheorem{mdremark}{Remark}
\newenvironment{Remark}{\begin{mdframed}[backgroundcolor=lightgray!40,topline=false,rightline=false,leftline=false,bottomline=false,innertopmargin=2pt]\begin{mdremark}}{\end{mdremark}\end{mdframed}}
\newtheoremstyle{restate}{}{}{\itshape}{}{\bfseries}{~(restated).}{.5em}{\thmnote{#3}}
\theoremstyle{restate}
\renewcommand{\qed}{\nobreak \ifvmode \relax \else
      \ifdim\lastskip<1.5em \hskip-\lastskip
      \hskip1.5em plus0em minus0.5em \fi \nobreak
      \vrule height0.75em width0.5em depth0.25em\fi}
\newcommand{\logstar}[1]{\ensuremath{\log^{*}\!{#1}}}
\renewcommand{\leq}{\leqslant}
\renewcommand{\geq}{\geqslant}
\renewcommand{\ge}{\geq}
\newcommand{\Leq}[1]{\ensuremath{\underset{\textnormal{#1}}\leq}}
\newcommand{\tvd}[2]{\ensuremath{\norm{#1 - #2}_{\mathrm{tvd}}}}
\newcommand{\Ot}{\ensuremath{\widetilde{O}}}
\newcommand{\eps}{\ensuremath{\varepsilon}}
\newcommand{\Paren}[1]{\Big(#1\Big)}
\newcommand{\Bracket}[1]{\Big[#1\Big]}
\newcommand{\bracket}[1]{\left[#1\right]}
\newcommand{\paren}[1]{\ensuremath{\left(#1\right)}\xspace}
\newcommand{\card}[1]{\left\vert{#1}\right\vert}
\newcommand{\IR}{\ensuremath{\mathbb{R}}}
\newcommand{\cU}{\ensuremath{\mathcal{U}}}
\newcommand{\norm}[1]{\ensuremath{\|#1\|}}
\newcommand{\ceil}[1]{{\left\lceil{#1}\right\rceil}}
\newcommand{\floor}[1]{{\left\lfloor{#1}\right\rfloor}}
\newcommand{\set}[1]{\ensuremath{\left\{ #1 \right\}}}
\newcommand{\poly}{\mbox{\rm poly}}
\newcommand{\polylog}{\mbox{\rm  polylog}}
\newcommand{\alg}{\ensuremath{\mathcal{A}}\xspace}
\DeclareMathOperator*{\Exp}{\ensuremath{{\mathbb{E}}}}
\DeclareMathOperator*{\Prob}{\ensuremath{\textnormal{Pr}}}
\renewcommand{\Pr}{\Prob}
\newcommand{\Ex}{\Exp}
\newenvironment{ourbox}{\begin{mdframed}[hidealllines=false,innerleftmargin=10pt,backgroundcolor=white!10,innertopmargin=5pt,innerbottommargin=5pt,roundcorner=10pt]}{\end{mdframed}}
\newenvironment{tbox}{\begin{tcolorbox}[
		enlarge top by=5pt,
		enlarge bottom by=5pt,
		breakable,
		boxsep=0pt,
		left=4pt,
		right=4pt,
		top=10pt,
		arc=0pt,
		boxrule=1pt,toprule=1pt,
		colback=white
		]
	}
	{\end{tcolorbox}}
\newcommand{\event}{\ensuremath{\mathcal{E}}}
\newcommand{\rv}[1]{\ensuremath{{\mathsf{#1}}}\xspace}
\newcommand{\rA}{\rv{A}}
\newcommand{\rB}{\rv{B}}
\newcommand{\rC}{\rv{C}}
\newcommand{\rD}{\rv{D}}
\newcommand{\rX}{\rv{X}}
\newcommand{\rY}{\rv{Y}}
\newcommand{\rZ}{\rv{Z}}
\newcommand{\cD}{\mathcal{D}}
\newcommand{\supp}[1]{\ensuremath{\textnormal{\text{supp}}(#1)}}
\newcommand{\distribution}[1]{\ensuremath{\textnormal{dist}(#1)}\xspace}
\newcommand{\kl}[2]{\ensuremath{\mathbb{D}(#1~||~#2)}}
\newcommand{\II}{\ensuremath{\mathbb{I}}}
\newcommand{\HH}{\ensuremath{\mathbb{H}}}
\newcommand{\mi}[2]{\ensuremath{\def\mione{#1}\def\mitwo{#2}\mireal}}
\newcommand{\mireal}[1][]{
	\ifx\relax#1\relax%
	\II(\mione \,; \mitwo)%
	\else%
	\II(\mione \,; \mitwo\mid #1)%
	\fi
}
\newcommand{\en}[1]{\ensuremath{\HH(#1)}}
\newcommand{\itfacts}[1]{\Cref{fact:it-facts}-(\ref{part:#1})\xspace}
\newcommand{\rs}{Ruzsa-Szemer\'edi\xspace} 
\newcommand{\Grs}{\ensuremath{G_{\textnormal{rs}}}}
\newcommand{\Ers}{\ensuremath{E_{\textnormal{rs}}}}
\newcommand{\Lrs}{\ensuremath{L_{\textnormal{rs}}}}
\newcommand{\Rrs}{\ensuremath{R_{\textnormal{rs}}}}
\newcommand{\Mrs}[1]{\ensuremath{M^\textnormal{rs}_{#1}}}
\newcommand{\nrs}{\ensuremath{n^{\textnormal{rs}}}}
\newcommand{\rightRS}[1]{\ensuremath{\textnormal{\textsc{right}}(#1)}}
\newcommand{\leftRS}[1]{\ensuremath{\textnormal{\textsc{left}}(#1)}}
\newcommand{\daks}{\ensuremath{d_{\textnormal{\textsc{sort}}}}}
\newcommand{\caks}{\ensuremath{c_{\textnormal{\textsc{sort}}}}}
\newcommand{\conc}{\ensuremath{\circ}}
\newcommand{\first}{\ensuremath{\textnormal{\textsc{First}}}}
\newcommand{\last}{\ensuremath{\textnormal{\textsc{Last}}}}
\newcommand{\mem}[2]{\ensuremath{\textnormal{\textsc{mem}}_{#2}^{#1}}}
\newcommand{\mph}{\ensuremath{\textnormal{\textsf{MPH}}}\xspace}
\newcommand{\QP}[1]{\ensuremath{Q^{(#1)}}}
\newcommand{\sigmaP}[1]{\ensuremath{\sigma^{(#1)}}}
\newcommand{\SigmaP}[1]{\ensuremath{\Sigma^{(#1)}}}
\newcommand{\prot}{\pi}
\newcommand{\Prot}{\Pi}
\newcommand{\ProtP}[1]{\Pi^{(#1)}}
\newcommand{\rProt}{\rv{\Prot}}
\newcommand{\info}{\ensuremath{\textnormal{Inf}}\xspace}
\newcommand{\good}{\ensuremath{\textnormal{Good}}\xspace}
\newcommand{\reps}{\ensuremath{\textnormal{\textsc{RepBasis}}}}
\newcommand{\hrho}[1]{\ensuremath{\widehat{\rho(#1)}}}
\newcommand{\hrhot}[1]{\ensuremath{\widehat{\rho_0(#1)}}}
\newcommand{\distsigma}[1]{\ensuremath{\cD_{\sigma}}}
\newcommand{\protFake}{\ensuremath{\Pi_{\texnormal{fake}}}}
\newcommand{\jstar}{\ensuremath{j^{*}}}
\newcommand{\GammaY}{\ensuremath{\Gamma_{\textnormal{Yes}}}}
\newcommand{\GammaN}{\ensuremath{\Gamma_{\textnormal{No}}}}
\newcommand{\Gammastar}{\ensuremath{\Gamma^*}}
\newcommand{\cP}{\ensuremath{\mathcal{P}}}
\newcommand{\cM}{\ensuremath{\mathcal{M}}}
\newcommand{\IndexHPH}{\ensuremath{\textnormal{\textbf{Multi-HPH}}}\xspace}
\newcommand{\dist}{\mu}
\newcommand{\rGammastar}{\rv{\Gammastar}}
\newcommand{\rSigma}{\rv{\Sigma}}
\newcommand{\rGamma}{\rv{\Gamma}}
\newcommand{\rL}{\rv{L}}
\newcommand{\rgammastar}{\bm{\gamma}^{\star}}
\renewcommand{\cM}{M}
\newcommand{\rcM}{{\rv{\cM}}}
\newcommand{\bSigma}{\ensuremath{\bar{\Sigma}}}
\newcommand{\bsigma}{\ensuremath{\bar{\sigma}}}
\newcommand{\bSigmaP}[1]{\ensuremath{\bSigma^{(#1)}}}
\newcommand{\bsigmaP}[1]{\ensuremath{\bsigma^{(#1)}}}
\newcommand{\bProt}{\ensuremath{\bar{\Prot}}}
\newcommand{\rbProt}{\rv{\bProt}}
\newcommand{\rbsigmaP}[1]{\ensuremath{\bar{\bm{\sigma}}^{(#1)}}}
\newcommand{\rbSigmaP}[1]{\ensuremath{\rv{\bSigma}^{(#1)}}}
\newcommand{\rbSigma}{\ensuremath{\rv{\bSigma}}}
\newcommand{\rell}{\bm{\ell}}
\newcommand{\rvv}{\rv{v}}
\renewcommand{\info}{\ensuremath{\textnormal{\textsc{Info}}}\xspace}
\newcommand{\re}{\rv{e}}
\newcommand{\sigmaId}{\ensuremath{\sigma_{\textnormal{\textsc{id}}}}}
\newcommand{\ww}{\ensuremath{w}}
\newcommand{\dd}{\ensuremath{d}}
\newcommand{\bb}{\ensuremath{b}}
\newcommand{\LL}{\ensuremath{\mathcal{L}}}
\newcommand{\extFull}[1]{\ensuremath{\textnormal{\textsc{Match-Perm}}(#1)}}
\newcommand{\edgepick}{\ensuremath{\textnormal{\textsc{Edge-Pick}}}}
\newcommand{\Estar}{\ensuremath{E^*}}
\newcommand{\sigmaLeft}{\ensuremath{\sigma_{\textnormal{\textsc{L}}}}}
\newcommand{\sigmaRight}{\ensuremath{\sigma_{\textnormal{\textsc{R}}}}}
\newcommand{\Mleft}{\ensuremath{M_{\textnormal{\textsc{L}}}}}
\newcommand{\Mright}{\ensuremath{M_{\textnormal{\textsc{R}}}}}
\newcommand{\basic}[1]{\ensuremath{\textnormal{\textsc{Basic}}(#1)}}
\newcommand{\permgroups}[1]{\ensuremath{\textnormal{\textsc{Permute}}(#1)}}
\newcommand{\encodedRS}{\ensuremath{\textnormal{\textsc{Encoded-RS}}}}
\newcommand{\block}{\ensuremath{\textnormal{\textsc{Block}}}}
\newcommand{\mulblock}{\ensuremath{\textnormal{\textsc{Multi-Block}}}}
\newcommand{\GG}{\mathcal{G}}
\newcommand{\gammastar}{\ensuremath{\gamma^*}}
\newcommand{\Ssim}{\ensuremath{S^{\textnormal{\textsc{sim}}}}}
\newcommand{\GGsim}{\ensuremath{\GG^{\textnormal{\textsc{sim}}}}}
\newcommand{\tovec}{\ensuremath{\textnormal{\textsc{vec}}}}
\newcommand{\joinit}{\ensuremath{\textnormal{\textsc{join}}}}
\newcommand{\lexpart}{\ensuremath{\textnormal{\textsf{Lex}}}}
\newcommand{\swap}{\ensuremath{\textnormal{\textsf{swap}}}}
\newcommand{\memn}{\ensuremath{\textnormal{\textsc{mem}}}}
\newcommand{\PP}{\ensuremath{\cD}}
\newcommand{\extend}{\ensuremath{\textnormal{\textsc{Extend}}}}
\title{Hidden Permutations to the Rescue: Multi-Pass Semi-Streaming Lower Bounds for Approximate Matchings}
\author{Sepehr Assadi\footnote{(sepehr@assadi.info) Cheriton School of Computer Science, University of Waterloo, and Department of Computer Science, Rutgers University. 
Supported in part by an Alfred P. Sloan Fellowship, a University of Waterloo startup grant, an NSF CAREER grant CCF-2047061, and a gift from Google Research. \smallskip}
\and Janani Sundaresan\footnote{(jsundaresan@uwaterloo.ca) Cheriton School of Computer Science, University of Waterloo.}} 
\begin{document}
\maketitle

\pagenumbering{roman}

\begin{abstract}

\bigskip

We prove that any semi-streaming algorithm for $(1-\eps)$-approximation of maximum bipartite matching requires 
\[
\Omega\Paren{\dfrac{\log{(1/\eps)}}{{\log{(1/\beta)}}}}
\]
passes, 
where $\beta \in (0,1)$ is the largest parameter so that an $n$-vertex graph with $n^{\beta}$ edge-disjoint induced matchings of size $\Theta(n)$ exist (such graphs are referred to as \emph{\rs} graphs).  
Currently, it is known that 
\[
\Omega(\frac{1}{\log\log{n}}) \leq \beta \leq 1-\Theta(\frac{\logstar{n}}{{\log{n}}})
\]
and closing this huge gap between upper and lower bounds has remained a notoriously difficult problem in combinatorics.

\medskip

Under the plausible hypothesis that $\beta = \Omega(1)$, our lower bound result provides the {first} \textbf{pass-approximation} lower bound for (small) \textbf{constant approximation} of matchings in the semi-streaming model, a longstanding open question in the graph streaming literature. 

\medskip

Our techniques are based on analyzing communication protocols for compressing (hidden) {permutations}. 
Prior work in this context relied on reducing such problems to Boolean domain and analyzing them via tools like XOR Lemmas and Fourier analysis on Boolean hypercube. In contrast, our main technical contribution
is a hardness amplification result for permutations through concatenation in place of prior XOR Lemmas. This result is proven by analyzing permutations directly 
via simple tools from group representation theory combined with detailed information-theoretic arguments, and can be of independent interest.

\end{abstract}

\clearpage

\setcounter{tocdepth}{3}
\tableofcontents
\clearpage
\pagenumbering{arabic}
\setcounter{page}{1}

\renewcommand{\gammastar}{\gamma^{\star}}

\newcommand{\betars}{\ensuremath{\beta_{\textsc{rs}}}}
\section{Introduction}

In the semi-streaming model for graph computation, formalized by~\cite{FeigenbaumKMSZ05}, the edges of an $n$-vertex graph $G=(V,E)$ are presented to the algorithm in some
arbitrarily ordered stream. The algorithm can make one or few passes over this stream and uses $\Ot(n) := O(n \cdot \polylog{(n)})$ memory to solve the given problem. 
The semi-streaming model has been at the forefront of research on processing massive graphs since its introduction almost two decades ago. 
In this work, we focus on the \emph{maximum matching} problem
in this model.

The maximum matching problem is arguably the most studied problem in the semi-streaming model and 
has been considered from numerous angles (this list is by no means a comprehensive summary of prior results): 
\begin{itemize}
	\item single-pass algorithms~\cite{FeigenbaumKMSZ05,GoelKK12,Kapralov13,Kapralov21,AssadiBKL23},
	\item constant-pass algorithms~\cite{KonradMM12,EsfandiariHM16,KaleT17,Konrad18,KonradN21,Assadi22,FeldmanS22,KonradNS23}, 
	\item $(1-\eps)$-approximation algorithms~\cite{McGregor05,AhnG11,EggertKMS12,AhnG18,Tirodkar18,GamlathKMS19,AssadiLT21,FischerMU22,AssadiJJST22,Assadi23},
	\item random-order streams~\cite{KonradMM12,Konrad18,AssadiBBMS19,GamlathKMS19,FarhadiHMRR20,Bernstein20,AssadiB21a,AssadiS23},
	\item dynamic streams~\cite{Konrad15,ChitnisCHM15,AssadiKLY16,ChitnisCEHMMV16, AssadiKL17,DarkK20,AssadiS22},
	\item weighted or submodular matchings~\cite{FeigenbaumKMSZ05,CrouchS14,ChakrabartiK14,ChekuriGQ15,PazS17,BernsteinDL21,LevinW21},
	\item matching size estimation~\cite{KapralovKS14,EsfandiariHLMO15,BuryS15,McGregorV16,CormodeJMM17,McGregorV18,AssadiKL17,KapralovMNT20,AssadiKSY20,AssadiN21,AssadiS23},  
	\item and, exact algorithms~\cite{FeigenbaumKMSZ05,GuruswamiO13,AssadiR20,LiuSZ20,ChenKPSSY21,AssadiJJST22}.
\end{itemize}

In this paper, we focus on proving \textbf{multi-pass lower bounds for $(1-\eps)$-approximation} of the maximum matching problem via semi-streaming algorithms, primarily for the regime 
of \textbf{small constant} $\eps > 0$ independent of size of the graph. 

The question of understanding the approximation ratio achievable by multi-pass semi-streaming algorithms for matchings was posed by~\cite{FeigenbaumKMSZ05} alongside the introduction of the semi-streaming model itself. Moreover,~\cite{FeigenbaumKMSZ05} also gave a $(2/3-\eps)$-approximation algorithm for this problem in $O(1/\eps)$ passes, which was soon after improved by~\cite{McGregor05} to a $(1-\eps)$-approximation in $(1/\eps)^{O(1/\eps)}$ passes. 
A long line of work since then~\cite{AhnG11,KonradMM12,Kapralov13,EggertKMS12,KaleT17,AhnG18,Konrad18,Tirodkar18,AssadiLT21,FischerMU22,AssadiJJST22} has culminated in semi-streaming algorithms 
with $\poly(1/\eps)$ passes for general graphs~\cite{FischerMU22} and $O(1/\eps^2)$ passes for bipartite graphs~\cite{AssadiLT21} (there are also algorithms with pass-complexity with better dependence on $\eps$ at the cost of 
mild dependence on $n$,
namely, $O(\log{n}/\eps)$ passes in~\cite{AhnG18,AssadiJJST22,Assadi23} or for finding perfect matchings in $n^{3/4+o(1)}$ passes~\cite{AssadiJJST22}; see also~\cite{LiuSZ20}). 

The lower bound front however has seen much less progress with only a handful of results known 
for single-pass algorithms~\cite{GoelKK12,Kapralov13,AssadiKL17,Kapralov21} and very recently two-pass algorithms~\cite{KonradN21,Assadi22}. But no lower bounds
beyond two-pass algorithms are known for constant factor approximation of matchings in the semi-streaming model (lower bounds for computing exact maximum matchings up to almost $\Omega(\log{n})$ passes 
are proven in~\cite{GuruswamiO13}; see also~\cite{AssadiR20,ChenKPSSY21}, but these lower bounds cannot apply to $\eps > n^{-o(1)}$, and we shall discuss them later in more details). It is worth noting that in the much more restricted setting of $\polylog{(n)}$-space
algorithms,~\cite{AssadiN21}, building on~\cite{AssadiKSY20}, proved an $\Omega(1/\eps)$-pass lower bound for estimating the matching size. 

\subsection{Our Contribution}\label{sec:results}

We present a new lower bound for multi-pass semi-streaming algorithms for the maximum matching problem. The lower bound is parameterized 
by the density of \emph{\rs} (RS) graphs~\cite{RuzsaS78}, namely, graphs whose edges can be partitioned into induced matchings of size $\Theta(n)$ (see~\Cref{sec:rs}). Let $\betars \in (0,1)$ 
denote the largest parameter such that there exist $n$-vertex RS graphs with $\Omega(n^{\betars})$ edge-disjoint induced matchings of size $\Theta(n)$. We prove the following result in this paper. 

\smallskip

\begin{result}[Formalized in~\Cref{thm:main-lb}]\label{res:main-lb}
	Any (possibly randomized) semi-streaming algorithm 
	for $(1-\eps)$-approximation of even the size of maximum matchings requires 
	\[
	\Omega\Paren{\dfrac{\log{(1/\eps)}}{\log{(1/\betars)}}}
	\]
	passes over the stream. The lower bound holds for the entire range of $\eps \in [n^{-\Theta(\betars)},\Theta(\betars)]$. 
\end{result}
To put this result in more context, we shall note that currently, it is only known that
\[
\Omega\!\paren{\frac{1}{\log\log{n}}} \Leq{\cite{FischerLNRRS02}} \betars \Leq{\cite{FoxHS15}} 1-\Theta\!\paren{\frac{\logstar{n}}{{\log{n}}}},
\]
and closing this gap appears to be a  challenging question in combinatorics (see, e.g.~\cite{Gowers01,FoxHS15,ConlonlF13}). 
Thus, \Cref{res:main-lb} can be interpreted as an $\Omega(\log{(1/\eps)})$ lower bound on pass-complexity of $(1-\eps)$-approximation of matchings 
in the semi-streaming model in two different ways: 
\begin{enumerate}[label=$(\roman*)$]
	\item A \emph{conditional} lower bound, under the plausible hypothesis that $\betars=\Omega(1)$. It is  known how to construct RS graphs with induced matchings of size $n^{1-o(1)}$
	that   have ${{n}\choose{2}} - o(n^2)$ edges~\cite{AlonMS12}, but the regime of $\Theta(n)$-size induced matchings is  wide open.
	\item A \emph{barrier result}; obtaining such algorithms requires 
	reducing $\betars$ from $1-o(1)$ to $o(1)$ which seems beyond the reach of current techniques.
	\end{enumerate}

Let us now compare this result with some prior work. 

A line of work closely related to ours is lower bounds for constant-approximation of matchings in one pass~\cite{GoelKK12,Kapralov13,AssadiKL17,Kapralov21} or two passes~\cite{Assadi22}. 
Specifically,~\cite{Kapralov21} rules out single-pass semi-streaming algorithms for finding $(0.59)$-approximate matchings (see also~\cite{GoelKK12,Kapralov13}). And,~\cite{AssadiKL17} and~\cite{Assadi22} rule out semi-streaming algorithms for approximating \emph{size} of maximum matchings to within a $(1-\eps_0)$ factor for some $\eps_0 > 0$, in one and two passes\footnote{See also~\cite{KonradN21} that give a two-pass lower bound for a restricted family of algorithms that only compute a greedy matching in their first pass but then can be arbitrary in their second pass.}, respectively 
(these two lower bounds, similar to ours, rely on the hypothesis that $\betars=\Omega(1)$).

Another line of closely related work are lower bounds for computing perfect or nearly-perfect matchings in multiple passes~\cite{GuruswamiO13,AssadiR20,ChenKPSSY21}, 
which culminated in the $\Omega(\sqrt{\log{n}})$ pass lower bound of \cite{ChenKPSSY21} even for algorithms with $n^{2-o(1)}$ space (an almost $\Omega(\log{n})$ pass lower bound for semi-streaming algorithms 
was already known by~\cite{GuruswamiO13}). The lower bound of~\cite{ChenKPSSY21} can be further interpreted 
for $(1-\eps)$-approximate matching algorithms as follows: 
\begin{equation}
	\begin{aligned}\label{eq:chenkpssy21}
		&\text{$\Omega(\dfrac{\log{(1/\eps)}}{\sqrt{\log{n}}})$ pass lower bound when $\eps \leq 2^{-\Theta(\sqrt{\log{n}})}$ for $n^{2-o(1)}$-space algorithms}; \\
		&\text{$\Omega(\dfrac{\log{(1/\eps)}}{{\log\log{n}}})$ pass lower bound when $\eps \leq (\log{n})^{-\Theta(1)}$ for $\Ot(n)$-space algorithms}. 
	\end{aligned}
\end{equation}
Yet, these lower bounds, even under the strongest assumption of $\betars = 1-o(1)$ do not imply any non-trivial bounds for constant-factor approximation algorithms. 

Before moving on from this section, we mention some important remarks about our result. 

\paragraph{Constant-factor approximation.} 
Our~\Cref{res:main-lb} is  the \emph{first} lower bound on pass-approximation tradeoffs for semi-streaming matching algorithms
that applies to constant-factor approximations. The fact that the approximation can be a constant is  critical here as we elaborate on below. 

Firstly, in contrast to possibly some other models, 
in the semi-streaming model, the most interesting regime for $(1-\eps)$-approximation is for constant $\eps >0$ (see, e.g.~\cite{FeigenbaumKMSZ05,McGregor05,Tirodkar18,GamlathKMS19,FischerMU22}). 
One key reason, among others,  is that the cost of each additional pass over the stream is non-trivially high and thus algorithms that need super-constant number of passes over the stream (a consequence of sub-constant $\eps$) are prohibitively costly already. 

Secondly, many streaming matching algorithms have rather cavalier space-dependence on $\eps$ (as space is typically much less costly compared to passes), even exponential-in-$\eps$, e.g., in~\cite{McGregor05,GamlathKMS19,BernsteinDL21} (although see~\cite{AssadiLT21,AssadiJJST22} for some exceptions with no space-dependence on $\eps$ at all). Yet, the lower bounds of the type obtained by~\cite{ChenKPSSY21} that require $\eps$ to be at most $(\log{n})^{-\Theta(1)}$ (even assuming $\betars=\Omega(1)$)  
cannot provide any meaningful guarantees for these algorithms, as the space of such algorithms for such small $\eps$ already become more than size of the input.\footnote{To give a concrete example, the state-of-the-art lower bounds before 
	our paper left open the possibility of a $(1-\eps)$-approximation algorithm in $3$ passes and $(1/\eps)^{O(1/\eps)} \cdot \Ot(n)$ space. Obtaining such algorithms would have been a huge breakthrough and quite interesting. 
	Our~\Cref{res:main-lb} however now rules out such an algorithm (conditionally) even in any $o(\log{(1/\eps)})$ passes (or alternatively, identify a challenging barrier toward obtaining such algorithms).}
This however is not an issue for our lower bounds for constant-approximation algorithms. 

\paragraph{Role of RS graphs.}  Starting from the work of~\cite{GoelKK12}, all previous single- and multi-pass lower
bounds for semi-streaming matching problem in~\cite{GoelKK12,Kapralov13,AssadiKL17,AssadiR20,Kapralov21,ChenKPSSY21,KonradN21,Assadi22} are based on RS graphs---the only exception is the lower 
bound of~\cite{GuruswamiO13} for finding perfect matchings (which is improved upon in~\cite{AssadiR20,ChenKPSSY21} using RS graphs).  

Our~\Cref{res:main-lb} is also based on RS graphs and relies on the hypothesis that $\betars=\Omega(1)$ in order
to be applicable to constant-factor approximation algorithms. While not all prior lower bounds rely on this hypothesis, assuming it also is not uncommon (see, e.g.~\cite{AssadiKL17,Assadi22}). Indeed, currently, 
a $(1-\eps)$-approximation lower bound for estimating size of maximum matching that does not rely on this hypothesis is not known even for single-pass algorithms. Similarly, the space lower bound in~\Cref{res:main-lb} is
in fact $n^{1+\Omega(1)}$; again, such bounds are not known even for finding edges of a $(1-\eps)$-approximate matching in a single pass without relying on the $\betars=\Omega(1)$ hypothesis. 
This in fact may not be a coincidence: a very recent work of~\cite{AssadiBKL23} has provided evidence that at least qualitatively, relying on such hypotheses might be necessary. They use RS graph bounds {algorithmically} instead and show
that if $\betars=o(1)$, then one can find a $(1-\eps)$-approximate matching already in a single pass in much better than quadratic space\footnote{Quantitatively however, there is 
	still a large gap between upper bounds of~\cite{AssadiBKL23} even if $\betars=o(1)$, and our bounds or those of~\cite{AssadiKL17,Assadi22} even if $\beta=1-o(1)$. Yet, this still suggests
	that the complexity of matching problem in graph streams is very closely tied to the density of RS graphs from both upper and lower bound fronts.}. 

All in all, while we find the problem of proving (even single-pass) streaming matching lower bounds without relying on RS graphs, or even better yet, improving bounds on the density of RS graphs, 
quite fascinating open questions, we believe those questions are orthogonal to our research direction on multi-pass lower bounds. 

Finally, we mention the current lower bound on $\betars$ due to~\cite{FischerLNRRS02,GoelKK12}
combined with our~\Cref{res:main-lb} leads the following \textbf{unconditional result} for $(1-\eps)$-approximation of matching size:
\begin{align}
	&\text{$\Omega(\frac{\log{(1/\eps)}}{\log\log\log{n}})$ pass lower bound when $\eps \leq (\log\log{n})^{-\Theta(1)}$ for $\Ot(n)$-space algorithms},
	\label{eq:our-non-bounds}
\end{align}
which exponentially improves the range of $\eps$ (and the denominator) compared to~\cite{ChenKPSSY21} in~\Cref{eq:chenkpssy21}. 

\paragraph{Beyond $(\log{(1/\eps)})$ passes.} The pass lower bound in~\Cref{res:main-lb} (for $\betars=\Omega(1)$) appears to hit the same standard barrier of 
proving super-logarithmic lower bounds for most graph streaming problems including reachability, shortest path, and perfect matching~\cite{GuruswamiO13,AssadiR20,ChakrabartiGMV20,ChenKPSSY21} (see~\cite{AssadiCK19b} for an in-depth discussion
on this topic). Even for the seemingly algorithmically harder problem of finding a perfect matching, $\eps=n^{-1}$, or nearly-perfect, $\eps = n^{-\Omega(1)}$, the best lower bounds are only $\Omega(\log{n})=\Omega(\log{(1/\eps)})$ passes~\cite{GuruswamiO13,ChenKPSSY21} (in contrast, the best known upper bounds for perfect matchings are $n^{3/4+o(1)}$ passes~\cite{AssadiJJST22}). Thus, going beyond $(\log{(1/\eps)})$ passes seems to require
fundamentally new techniques and the first step would be improving perfect matching lower bounds beyond $(\log{n)}$ passes, which is another fascinating open question. 

\subsection{Our Techniques}\label{sec:techniques}

We follow the \emph{set hiding} approach of~\cite{AssadiR20,ChenKPSSY21,Assadi22} and an elegant recursive framework of~\cite{ChenKPSSY21} that achieves \emph{permutation hiding} (a primitive entirely missing from~\cite{AssadiR20,Assadi22} and seemingly crucial for proving more than two-pass lower bounds). At a high level, $p$-pass permutation hiding graphs hide a unique permutation of vertex-disjoint augmenting paths---necessary for finding large enough matchings---, in a way that a semi-streaming algorithm cannot find this permutation in $p$ passes (see~\Cref{sec:perm-hiding}); a set hiding graph roughly corresponds to only hiding the endpoints of these paths. 
\cite{ChenKPSSY21} shows a way of constructing $p$-pass set hiding graphs from $(p-1)$-pass permutation hiding graphs (that can be made efficient), and constructing $p$-pass permutation hiding graphs from $p$-pass set hiding graphs rather inefficiently by blowing up the number of vertices by a $\Theta(\log{n})$ factor. This results in having to reduce $\eps$ to $\eps/\Theta(\log{n})$ for each application of this idea in each pass, leading to a lower bound of $\Omega(\log{(1/\eps)}/\log\log{n})$ passes eventually.

In a nutshell, we present a novel approach for directly constructing permutation hiding graphs, without cycling through set hiding ones first and thus avoiding the $\Theta(\log{n})$ overhead of~\cite{ChenKPSSY21} 
in  vertices and the approximation factor. Hence, we only need to reduce $\eps$ to some $\Theta(\eps)$ for each pass, leading to our $\Omega(\log{(1/\eps)})$ lower bound. 
Conceptually, the  technical novelty of our paper can be summarized as working with permutations \emph{directly} both in the construction and in the analysis. 

We first give a novel combinatorial approach for hiding permutations directly inside RS graphs, instead of only using them for hiding Boolean strings and applying Boolean operators
on top of RS graphs as was done previously (e.g., $\wedge$- or $\vee$- operators of~\cite{ChenKPSSY21}). This step uses various ideas developed in~\cite{GoelKK12,AssadiKL17,AssadiR20,AssadiB21a,Assadi22,AssadiS23} 
(see~\cite[Section 1.2]{Assadi22} for an overview of these techniques)
that can then be combined with the general framework of~\cite{ChenKPSSY21} in a non-black-box way, for instance, by replacing  sorting network ideas of~\cite{ChenKPSSY21} 
with $k$-sorter networks in~\cite{ParkerP89,chvatal1992lecture} with lower depth (and various technical changes in the analysis).

The second and the main technical ingredient of our paper 
is to introduce and analyze a ``permutation variant'' of the \emph{Boolean Hidden (Hyper)Matching (BHH)} communication problem of~\cite{GavinskyKKRW07,VerbinY11}. 
The BHH problem, alongside its proof ideas, has been a key ingredient of  streaming matching lower bounds, among many others, in recent years~\cite{AssadiKSY20,ChenKPSSY21,AssadiN21,Assadi22,KapralovMTWZ22,AssadiS23} (see~\cite[Appendix B]{AssadiKSY20} for an overview). 
Roughly speaking, our problem (see~\Cref{sec:hph}), replaces the hidden Boolean strings in BHH and their XOR operator  with permutations and the concatenation operator. Its analysis then boils down to proving a hardness amplification result for the concatenation of permutations, quite similar in spirit to XOR Lemmas for Boolean strings. 
While these XOR Lemmas are primarily proven using Fourier analysis on Boolean hypercube (see, e.g.,~\cite{GavinskyKKRW07,VerbinY11,KapralovKS15,KapralovMTWZ22,Assadi22,AssadiS23}), and in particular
the KKL inequality~\cite{KahnKL88}, we prove our results by working with basic tools from representation theory and Fourier analysis on symmetric groups, combined with  
detailed information-theoretic arguments, including a recent KL-divergence vs $\ell_1/\ell_2$-distance inequality of~\cite{ChakrabartyK18}.

\renewcommand{\first}{\ensuremath{\textnormal{\textsc{First}}}}
\renewcommand{\last}{\ensuremath{\textnormal{\textsc{Last}}}}
\newcommand{\depth}[1]{\ensuremath{\textnormal{\textsc{depth}}(#1)}}
\newcommand{\mblock}[1]{\ensuremath{\textnormal{\textsc{Multi-Block}}(#1)}}

\newcommand{\gammaspec}{\ensuremath{\gamma^{*}}}

\newcommand{\btwn}{\ensuremath{\xi}}

\newcommand{\Left}{\ensuremath{\textnormal{\textsc{l}}}}
\newcommand{\Right}{\ensuremath{\textnormal{\textsc{r}}}}

\newcommand{\bipartite}[1]{\ensuremath{\textnormal{\textsc{Bipartite}}(#1)}}

\section{Main Result}\label{sec:result}

We  present our main theorem in this section that formalizes~\Cref{res:main-lb} from the introduction, plus the key ingredients we use to prove it.

We define a bipartite graph $\Grs=(\Lrs,\Rrs,\Ers)$ to be a $(2\nrs)$-vertex
bipartite $(r,t)$-RS graph if its edges can be partitioned into $t$ induced matchings $\Mrs{1},\ldots,\Mrs{t}$ each of size $r$; here, an induced matching means that there are no other edges between the endpoints of the matching. See~\Cref{sec:rs} for more details on RS graphs.

\begin{theorem}[Formalization of~\Cref{res:main-lb}]\label{thm:main-lb}
	Suppose that for infinitely many choices of $\nrs \geq 1$, there exists $(2\nrs)$-vertex bipartite $(r,t)$-RS graphs with $r = \alpha \cdot \nrs$ and $t = \paren{\nrs}^{\beta}$ for some fixed parameters $\alpha, \beta \in (0,1)$; the parameters $\alpha,\beta$ can depend
	on $\nrs$. 
	
	Then, there exists an $\eps_0 = \eps_0(\alpha,\beta)$ such that the following is true. For any $0 < \eps < \eps_0$, any streaming algorithm that uses $o(\eps^{2} \cdot n^{1+\beta/2})$ space on $n$-vertex bipartite graphs 
	and can determine with constant probability whether the input graph has a perfect matching or its maximum matchings have size at most $(1-\eps) \cdot n/2$ requires 
	\[
	\Omega\Paren{\dfrac{\log{(1/\eps)}}{\log{(1/\alpha\beta)}}}
	\]
	passes over the stream. 
\end{theorem}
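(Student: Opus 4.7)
The plan is to build, for each target pass count $p$, a hard distribution on bipartite graphs that is ``$p$-pass permutation hiding'': either the graph admits a perfect matching or its maximum matching has size at most $(1-\eps)\cdot n/2$, yet distinguishing the two cases in $p$ passes of a semi-streaming algorithm is as hard as recovering a uniformly hidden permutation of $\Theta(\nrs^\beta)$ elements embedded inside a copy of the given $(r,t)$-RS graph. The construction is recursive, following the general shape of the~\cite{ChenKPSSY21} framework, but with the key change motivated in~\Cref{sec:techniques}: rather than alternating between set-hiding and permutation-hiding layers at a cost of $\Theta(\log n)$ per level, I directly embed permutations into the RS graph at every level via the combinatorial construction suggested by~\cite{GoelKK12,AssadiKL17,AssadiR20,AssadiB21a,Assadi22,AssadiS23}. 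This way each level only shrinks the effective approximation parameter by a factor $\Theta(\alpha\beta)$, so after $p$ levels one still has $\eps\gtrsim(\alpha\beta)^p$, i.e.\ $p=\Omega(\log(1/\eps)/\log(1/\alpha\beta))$.

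For the base case I would set up a single-pass communication lower bound on a gadget built from one copy of the hypothesized $(r,t)$-RS graph. A random permutation $\sigma$ over the $t$ induced matchings designates one ``live'' matching; the adversary then plants augmenting edges whose endpoints match the live induced matching of $\sigma$ iff the matching is of a specific predetermined type. Either a perfect matching exists (when $\sigma$ is identified correctly) or the matching is short by $\Theta(r)=\Theta(\alpha n)$ edges. Hardness of identifying $\sigma$ in one pass is a direct instance of the \HPH/\IndexHPH primitive introduced in~\Cref{sec:hph} -- the permutation variant of Boolean Hidden Hypermatching -- whose one-round communication lower bound I would prove using the hardness-amplification-for-concatenated-permutations result described below.

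The inductive step is a streaming-to-communication simulation: any $p$-pass $(1-\eps)$-approximation algorithm on the composite graph yields a $(p-1)$-pass protocol solving the same problem with parameter $\eps'=\Theta(\eps/(\alpha\beta))$ on a graph obtained by restricting to a randomly chosen sub-block of one of the outer permutation's coordinates. The reduction mimics the standard BHH-to-streaming template: the first pass of the algorithm is treated as one round of a multi-party protocol whose transcript length is bounded by the algorithm's space; the outer permutation hides which block carries the reduced sub-instance; and conditioning on the transcript essentially leaves the inner sub-instance ``fresh'' with the approximation parameter only mildly degraded. Iterating $p$ times terminates at a zero-pass protocol that must succeed with constant advantage against a uniform permutation, which is impossible once $\eps/(\alpha\beta)^p$ exceeds a fixed constant $\eps_0=\eps_0(\alpha,\beta)$. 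The space ceiling $o(\eps^2\cdot n^{1+\beta/2})$ emerges from the combined per-pass information budget and the $\nrs^\beta$-factor that controls how much information about the outer permutation can be preserved.

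The hard part is the hardness amplification step itself, i.e., showing that concatenating $k$ independent copies of the permutation-hiding primitive drives the distinguishing advantage down like $(\text{per-copy advantage})^{\Theta(k)}$ rather than just $\Theta(k/\log n)$. The classical Boolean route through XOR lemmas and KKL on $\{0,1\}^n$ is unavailable because the hidden objects are permutations and the composition operator is concatenation inside (a power of) the symmetric group $S_m$. I would instead expand the relevant posterior distributions in the basis of irreducible characters of $S_m$, use standard character-norm bounds to control the contribution of each non-trivial representation in $\ell_2$, and convert the resulting $\ell_2$ estimate into the required total-variation bound via the KL-vs-$\ell_1/\ell_2$ inequality of~\cite{ChakrabartyK18}, combined with detailed information-theoretic chain-rule bookkeeping as in prior streaming lower bounds. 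Getting the exponents right in this step -- so that the multiplicative $\eps$-loss per pass is the desired $\Theta(\alpha\beta)$ rather than $\Theta(\alpha\beta/\log n)$ -- is precisely what produces the $\log n$-factor improvement over~\cite{ChenKPSSY21}, and is where the delicate representation-theoretic work must be done carefully.
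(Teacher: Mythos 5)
Your high-level roadmap is genuinely close to the paper's: a recursive permutation-hiding framework, a permutation variant of BHH as the hard communication primitive, and hardness amplification for concatenation via Fourier analysis on the symmetric group combined with the $\ell_1/\ell_2$ refinement of Pinsker from~\cite{ChakrabartyK18}. However, there is a concrete missing step that the construction cannot do without. Embedding permutation matrices into the RS graph (the $\Grs\otimes\Sigma$ product) can only hide permutations that are \emph{simple} with respect to a fixed equipartition of $[m]$ into blocks of size $b$ -- i.e.\ permutations that stay inside each size-$b$ block. But the two permutations whose indistinguishability yields the matching-size gap (the identity $\sigma_=$ and the cross-identity $\sigma_\times$) are \emph{not} simple for any small $b$: $\sigma_\times$ moves $i$ to $i+m/2$, which lives in a far-away block. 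The paper bridges this with the $b$-sorter generalization of AKS (\Cref{prop:AKSsorting}): any $\sigma\in S_m$ factors as a composition of $O(\log_b m)=O(1/\beta)$ simple permutations on fixed partitions, and choosing $b=m^{\Theta(\beta)}$ makes this a constant number of layers, which is what keeps the per-pass blowup at $\Theta(1/\alpha\beta^2)$ rather than $\Theta(\log n/\alpha\beta)$. Your proposal asserts ``directly embed permutations into the RS graph'' without this decomposition, and as written it cannot produce the $\sigma_=$ vs.\ $\sigma_\times$ dichotomy that makes the matching reduction go through.

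Two further points where your sketch departs from what actually works. First, your inductive step is described as round elimination (``a $p$-pass algorithm yields a $(p-1)$-pass protocol''); the paper's induction is the opposite direction: the $p$-pass generator is \emph{built} by replacing each helper matching layer of a multi-block with a graph drawn from a $(p-1)$-pass generator, and the argument proceeds by a hybrid ($\Prot_A$ vs.\ $\protFake$) showing the first $p-1$ passes of any algorithm cannot tell the real referee inputs $(L,\cM,\Gamma)$ from fixed fakes. Second, you write that conditioning on the transcript ``essentially leaves the inner sub-instance fresh,'' but that is precisely the non-trivial part that KKL handles in the Boolean setting and that has no off-the-shelf analogue here; the paper restores the needed conditional independence by distributing $\Sigma$ column-wise among $k$ separate players and invoking the rectangle property (\Cref{lem:conditional-independence-hcp}), not by giving all of $\Sigma$ to one party. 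Taking the freshness for granted while keeping a two-party communication model would leave a hole exactly where the XOR/KKL route fails; the multi-party design is load-bearing, not cosmetic.
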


\Cref{res:main-lb} then follows from~\Cref{thm:main-lb} by setting $\alpha = \Theta(1)$ and using the bound $\eps > n^{-\beta/6}$ to obtain a space lower bound of $\Omega(n^{1+\beta/6})$ 
for $o(\log{(1/\eps)}/\log{(1/\beta)})$ pass algorithms by~\Cref{thm:main-lb}. This is because any $(1-\eps)$-approximation of size of  maximum matchings  distinguishes between the two families of the graphs in the theorem. 
Finally, given that we know by~\cite{FischerLNRRS02,GoelKK12} that $\beta = \Omega(1/\log\log{(n)})$, the space bound of $o(n^{1+\beta/6})$ will always rule out semi-streaming algorithms. 

We now go over the two main ingredients in the proof of this theorem, and state our main results for them. In the next section, we 
present a proof outline of each of these ingredients, plus that of~\Cref{thm:main-lb}. The rest of the paper is then dedicated to formalizing these proof outlines.



\subsection{Ingredient I: (Multi) Hidden Permutation Hypermatching}\label{sec:hph} 

A key to our lower bound constructions is a problem in spirit of the Boolean Hidden Hypermatching (BHH) problem of~\cite{VerbinY11} (itself based on~\cite{GavinskyKKRW07}) that we introduce in this paper. 
The definition of the problem is rather lengthy and can be daunting at first, so we build our way toward it by considering BHH first, and then move from there. 

\subsubsection*{The Boolean Hidden Hypermatching (BHH) problem}
The BHH problem can be phrased as follows (this is slightly different from the presentation in~\cite{VerbinY11} but is equivalent to the original problem). We have Alice with a string $x \in \set{0,1}^{r \times k}$ and Bob who has a hypermatching $\cM$ over $[r]^k$ with size $r/2$ plus a string $w \in \set{0,1}^{r/2}$. The players are promised that the parity of $x$ on hyperedges of $M$, i.e.,  
\[
\cM \cdot x = (\oplus_{i=1}^{k}x_{\cM_{1,i},i}~ ,~ \oplus_{i=1}^{k}x_{\cM_{2,i},i}~,~ \cdots~,~ \oplus_{i=1}^{k} x_{\cM_{r/2,i},i}) \in \set{0,1}^{r/2}
\]
is either equal to $w$ or $\bar{w}$. The goal is for Alice to send a single message to Bob and Bob outputs which case the input belongs to. 
It is known that $\Theta(r^{1-1/k})$ communication is necessary and sufficient for solving BHH with constant probability~\cite{VerbinY11}. 

A natural variant of BHH (defined as a direct-sum version of BHH) is also used in~\cite{ChenKPSSY21} as one of the main building blocks for constructing their permutation hiding graphs. Roughly speaking, in that problem, 
Alice is given several different strings $x$ and Bob's input additionally identifies which string to compute the parities of hyperedges of $M$ over. 

Nevertheless, the Boolean nature of this problem is  too restrictive for the purpose of our constructions (and in fact, this Boolean nature is the
key bottleneck in the construction of~\cite{ChenKPSSY21}). Thus, for our purpose, we  define a ``permutation variant'' of this problem. 

\subsubsection*{The Hidden Permutation Hypermatching (HPH) Problem}

We define the \textbf{Hidden Permutation Hypermatching (HPH)} problem as follows. Let $b \geq 1$ and $S_b$ be the set of permutations over $[b]$. We have a permutation matrix $\Sigma \in (S_b)^{r \times k}$ and
a hypermatching $\cM$ over $[r]^k$ with size $r/2$, plus a permutation vector $\Gamma \in (S_b)^{r/2}$. We are promised: 
\begin{align*}
	\Gamma^* = \paren{\circ_{i=1}^{k} \sigma_{\cM_{1,i},i}~,~\circ_{i=1}^{k} \sigma_{\cM_{2,i},i}~,~\cdots~,~,\circ_{i=1}^{k} \sigma_{\cM_{r/2,i},i}} \in (S_b)^{r/2}
\end{align*}
is such that $\Gamma^* \circ \Gamma$ is either equal to one of the two fixed known permutation vectors $\GammaY$ or $\GammaN$; here, $\circ_{i=1}^{k}(\cdot)$ concatenates the given $k$ permutations together. 
The goal is to distinguish which case the input belongs to. See~\Cref{fig:hph} for an illustration. 

\medskip
\begin{figure}[h!]
	\centering
	\input{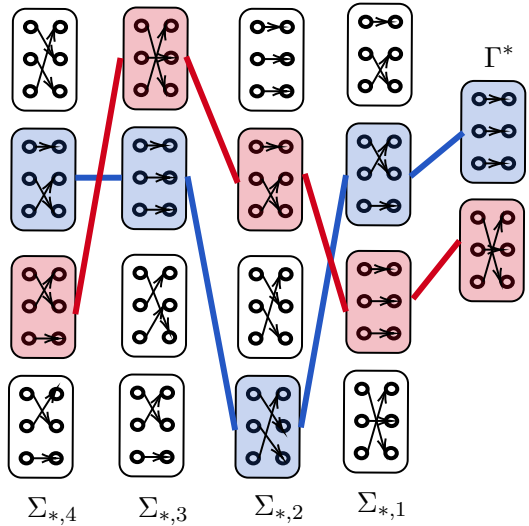}\caption{An illustration the HPH problem for $r=4$, $k=4$ and $b=3$. We have $r/2=2$ hypermatching edges corresponding to the thick edges (red and blue, respectively, for each hyperedge).}\label{fig:hph}
\end{figure}

One can see that this problem is equivalent to BHH whenever $b=2$: we can simply interpret the identity permutation in $S_2$ as a $0$-bit and the cross permutation  as a $1$-bit; the concatenation operator 
in this case will then become XOR naturally and the problem will be identical to BHH. 

Nonetheless, once we move on to larger values of $b > 2$, this problem becomes much ``richer'' than BHH as it can encode 
$b!$ different ``states'' per each entry and the concatenation operator becomes quite different than XOR. Indeed, there are already generalizations of BHH by replacing Boolean domain with finite fields $\mathbb{F}_b$ for $b > 2$, and 
the XOR operator with addition  in this field~\cite{GuruswamiT19}; yet, even those generalizations only correspond to very limited types of permutations and concatenations, and are strict special cases of HPH (which also 
do not work for our constructions as they can only hold $b$ ``states'' per entry as opposed to $b!$). 

There is however an important subtlety in the definition of this problem in our paper that we need to clarify. While we could have turned HPH into a communication game, exactly as in BHH, by giving $\Sigma$ to Alice and 
$\cM,\Gamma$ to Bob, this is \emph{not} what we do actually. Instead, we partition $\Sigma$ column-wise between $k$ different players $\QP{1},\ldots,\QP{k}$ by providing each player $\QP{i}$ for $i \in [k]$ with the 
permutation vector $\Sigma_{*,i} \in (S_b)^{r}$, the $i$-th column of $\Sigma$, and provide $(\cM,\Gamma)$ to a referee (identical to Bob). The communication pattern is also different in that, first, $\QP{1},\ldots,\QP{k}$ can talk with each other, with back and forth communication, 
using a shared blackboard visible to all parties. Then, at the end of their communication, the referee can check the content of the blackboard and output the answer (think of the final state of the blackboard as the message of players to the referee). 
We shall discuss the technical reasons behind this change in our proof outline in~\Cref{sec:overview-hph}.


\subsubsection*{The Multi Hidden Permutation Hypermatching (Multi-HPH) Problem}

We are now ready to present the full version of the problem we study, which can be seen as a certain direct-sum variant of the HPH (similar in spirit to the way BHH is generalized in~\cite{ChenKPSSY21}). 
Roughly speaking, in this problem, there are $t$ instances of HPH and the referee additionally chooses which instance of the HPH they should all solve, without the other players knowing this information
at the time of their communication.

We do caution the reader that since this problem involves ``tensorizing'' the inputs in HPH (e.g., $\Sigma$ becoming a $3$-dimensional tensor and each player receiving a permutation matrix), the indices stated below may not directly map to the ones
stated earlier for HPH (which is the $t=1$ case).

\begin{problem}[\textbf{Multi-Hidden-Permutation-Hypermatching}]\label{prob:mph}
	For any integers $r,t,b,k \geq 1$, $\IndexHPH_{r,t,b,k}$ is a distributional $(k+1)$-communication game, consisting of $k$ players plus a referee, defined as: 	
	\begin{enumerate}[label=$(\roman*)$]
		\item Let $\GammaY, \GammaN \in \paren{S_b}^{r/2}$ be two arbitrary tuples of permutations known to all $(k+1)$ parties. We refer to $\GammaY$ and $\GammaN$ as the \textbf{target tuples}. 
		\item We have $k$ players $\QP{1},\ldots,\QP{k}$ and for all $i \in [k]$ the $i$-th player is given a permutation matrix $\SigmaP{i} \in (S_b)^{t \times r}$ chosen independently and uniformly at random. 
		
		\item The referee receives $k$ indices $L := (\ell_{1},\ldots,\ell_{k})$ each picked uniformly and independently from $[t]$ and a hypermatching $\cM \subseteq [r]^k$ with $r/2$ hyperedges picked uniformly. 	
		
		Additionally, let the permutation vector $\Gammastar = \paren{\gammastar_1, \gammastar_2, \ldots, \gammastar_{r/2}} \in \paren{S_b}^{r/2}$ be defined as,
		\[
		\forall a \in [r/2] \hspace{2mm} \gammastar_a := \sigmaP{1}_{\ell_1, \cM_{a,1}} \circ \cdots \circ \sigmaP{k}_{\ell_k,\cM_{a,k}},
		\]
		where $\cM_{a,i}$ for $a \in [r/2]$ and $i \in [k]$ refers to the $i$-th vertex of the $a$-th hyperedge in $\cM$. 
		
		The referee is also given another permutation vector $\Gamma = (\gamma_1, \gamma_2, \ldots, \gamma_{r/2}) $ sampled from $\paren{S_b}^{r/2}$ conditioned on  $\Gammastar \conc \Gamma$  being equal to either $\GammaY$ or  $\GammaN$ (here, the $\conc$ operator is used to concatenate each permutation in the vectors individually). 
		
	\end{enumerate}
	\noindent
	The players $\QP{1},\ldots,\QP{k}$ can communicate with each other by writing on a shared board visible to all parties with possible back and forth and in no fixed order (the players' messages are functions of their inputs and the board). 
	At the end of the players' communication, the referee can use all these messages plus its input and outputs whether $\Gammastar \conc \Gamma$ is $\GammaY$ or $\GammaN$. 
\end{problem}

The following theorem on the communication complexity of $\IndexHPH$ involves the bulk of our technical efforts in this paper. 

\begin{theorem}\label{thm:mph}
	For any integers $t \geq 1$, $b \geq 2$, and sufficiently large $r, k \geq 1$, any  protocol for $\IndexHPH_{r,t,b,k}$, for any pairs of target tuples, with at most $s$ bits of total 
	communication for $s$ satisfying
	\[
	k \cdot \log{(r \cdot t)} \leq s \leq 10^{-3} \cdot (r \cdot t)
	\]
	can only succeed with probability at most 
	\[
	\frac{1}{2} + r \cdot O\!\paren{\frac{s}{r \cdot t}}^{k/32}.
	\]
\end{theorem}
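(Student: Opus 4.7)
The plan is to prove \Cref{thm:mph} by decomposing the problem into three conceptual layers that match its three parameters $t$, $r/2$, and $k$: the hidden row index per player, the hypermatching over columns, and the $k$-fold concatenation of permutations. The starting point is the rectangle property of any shared-blackboard transcript: conditioned on $\rProt = \pi$, the players' inputs $\SigmaP{1}, \ldots, \SigmaP{k}$ are mutually independent. Combined with $\sum_i \mi{\rProt}{\SigmaP{i}} \le s$, this lets me cleanly track how much information the transcript reveals about any sub-block of the inputs, and in particular lets me reduce distinguishing $\Gammastar \conc \Gamma = \GammaY$ from $\GammaN$ to controlling the TVD between $\gammastar_a \mid \pi$ and the uniform distribution on $S_b$, one hyperedge at a time.

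The first reduction handles the $t$ hidden rows. Since each player's $t \times r$ matrix consists of $t$ i.i.d.\ rows, subadditivity of mutual information gives $\Exp_{\ell_i}[\mi{\rProt}{\SigmaP{i}_{\ell_i,*}}] \le \mi{\rProt}{\SigmaP{i}}/t$, so a Markov-style argument shows that for typical referee index vectors $L=(\ell_i)$ the transcript reveals only $O(s/t)$ bits in total about the ``relevant'' rows. The second reduction is a direct-sum over the $r/2$ hyperedges of $\cM$: since distinct hyperedges touch disjoint columns, for a uniformly random hyperedge $a \in [r/2]$ the average per-player information about the tuple $(\sigmaP{1}_{\ell_1, \cM_{a,1}}, \ldots, \sigmaP{k}_{\ell_k, \cM_{a,k}})$ is $O(s/(rt))$. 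A hybrid argument over the $r/2$ coordinates of $\GammaY$ vs.\ $\GammaN$ then reduces distinguishing the target tuples to distinguishing a single coordinate, producing the linear-in-$r$ factor in the final bound.

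The crux is the single-hyperedge analysis: showing that the distribution of $\gammastar_a = \sigma_1 \conc \cdots \conc \sigma_k$ (with $\sigma_i$ conditionally independent given $\pi$) is exponentially close to uniform on $S_b$ whenever each $\mi{\rProt}{\sigma_i}$ is small. For this I would work with Fourier analysis on the non-abelian group $S_b$: for every non-trivial irreducible representation $\rho$, the convolution theorem yields
\[
\widehat{\gammastar_a \mid \pi}(\rho) \;=\; \widehat{\sigma_1 \mid \pi}(\rho) \cdots \widehat{\sigma_k \mid \pi}(\rho),
\]
so matrix-valued ``deviations from uniform'' multiply across the $k$ factors and, by operator-norm inequalities, each additional factor shrinks the product. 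The recent Chakrabarty--Khanna KL-vs-$\ell_1/\ell_2$ inequality then converts each $O(s/(rt))$-bit information bound into an $\ell_2$-distance bound on $\sigma_i \mid \pi$; combined with Plancherel and the product form above, this gives a per-hyperedge TVD between $\gammastar_a \mid \pi$ and uniform of roughly $O(s/(rt))^{k/2}$ before losses, which Pinsker conversions and bad-event handling soften to the stated exponent $k/32$.

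The main obstacle, and where the bulk of the technical work lies, is precisely this single-hyperedge amplification. Unlike the Boolean case where XOR corresponds to pointwise multiplication of characters of an abelian group and KKL-style hypercontractive inequalities on the Boolean hypercube give clean Fourier-decay bounds, $S_b$ is non-abelian, so its Fourier coefficients are matrix-valued and the ``product equals concatenation'' identity needs operator-norm care across all irreps simultaneously. Moreover, the information-theoretic hypotheses concern $\mi{\rProt}{\sigma_i}$ under a joint distribution that is only \emph{approximately} a product after conditioning on typical $\pi$ and typical $(L, a)$, so one must interleave representation-theoretic computations with information-theoretic averaging to extract a clean product-form bound; it is this interleaving, rather than any single estimate, that sets the final exponent $k/32$.
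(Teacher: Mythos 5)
Your proposal follows the paper's proof closely and correctly: you identify the same three-layer decomposition (over the $t$ hidden rows, the $r/2$ hyperedges, and the $k$ players), the same reduction to bounding the per-hyperedge TVD of $\gammastar_a \mid \pi$ from uniform, and the same core toolbox (rectangle property of the blackboard, Chakrabarty--Khanna KL-vs-$\ell_1/\ell_2$ to get strong $\ell_2$ bounds, non-abelian Fourier analysis with the convolution theorem and Plancherel on $S_b$), so there is no gap in the approach. The one place your last paragraph slightly underestimates what the argument delivers is the concern that the joint distribution of the $\sigma_i$ is ``only approximately a product after conditioning'': in fact the rectangle property gives \emph{exact} conditional independence of $\sigmaP{1}_{\ell_1,v_1},\ldots,\sigmaP{k}_{\ell_k,v_k}$ given $\pi$ (and given $L,e$, since those are independent of $\Sigma$ and $\Prot$), and this exact product structure even survives the good/bad-event conditioning because those events are functions of individual $\sigma_i$'s, which is precisely what lets the Frobenius-norm product step in the Plancherel calculation run cleanly with no approximation slack.
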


In our lower bounds, we use this theorem with parameters $k = \Theta(1/\beta)$, and $s = r \cdot \sqrt{t}$ (where $r$ and $t \approx r^{\beta}$ will be determined by the underlying $(r,t)$-RS graph we use). This allows us to prove a lower bound for protocols that can solve $\IndexHPH$ with advantage as as small as $1/\poly(r)$ over random guessing, which will be crucial for our constructions.

\subsection{Ingredient II: Permutation Hiding Graphs}\label{sec:perm-hiding}

We now present the definition of permutation hiding graphs, introduced by~\cite{ChenKPSSY21}, that are also a key building block in our lower bound. To do this, we need some notation.

We call a directed acyclic graph $G=(V,E)$ a \textbf{layered} graph if its vertices can be partitioned into sets $V^1,\ldots,V^d$ for some $d \geq 1$, 
such that any edge of $G$ is directed from some $V^i$ to $V^{i+1}$ for some $i \in [d-1]$; we further define $\first{G} := V^1$ and $\last{G} := V^d$. 

We define a \textbf{layered graph} as $G = (V,E)$ as a directed acyclic graph whose vertex set $V$ can be partitioned into $d$ different sets $V_1, V_2, \ldots, V_{d}$ where any edge $(u,v)$ is such that $u \in L_i$ and $v \in L_{i+1}$ for some $i \in [d-1]$. We use $\first(G)$ and $\last(G)$ to denote the sets $V_1$ (the first layer of $G$) and $V_{\ell}$ (the last layer of $G$) respectively. We index each layer $V_i$ by the set $[\card{V_i}]$ for each $i \in [d]$. We use $\first_{[i]}(G)$ to denote the first $i$ vertices of $\first(G)$ indexed by the set $[i]$ for any $i \in [\card{\first(G)}]$ and similarly for $\last_{[j]}(G)$ for $j \in [\card{\last(G)}]$.

We will use layered graphs to represent permutations over $[m]$, denoted by $S_m$, as follows.

\medskip

\begin{Definition}[Permutation Graph]\label{def:perm-graph}
	For any integer $m \geq 1$, a layered graph $G = (V, E)$ is said to be a \textbf{permutation graph} for $\sigma  \in S_m$ if $\card{\first(G)},\card{\last(G)} \geq m$ and there is a path from $i \in \first_{[m]}(G)$ to $j \in \last_{[m]}(G)$ if and only if $\sigma(i) = j$ for each $i,j \in [m]$.
\end{Definition}

\smallskip

We use $\cD_m$ to denote the set of all permutation graphs on $S_m$. 
We are now ready to define the (distribution of) permutation hiding graphs. Our definition is a slight rephrasing of the one in~\cite{ChenKPSSY21} and we claim no novelty here. 

\begin{Definition}[Permutation Hiding Graphs; c.f.~\cite{ChenKPSSY21}]\label{def:perm-hiding}
	For  integers $m,n,p,s \geq 1$ and real $\delta \in (0,1)$, we define a \textbf{permutation hiding generator} $\GG=\GG(m,n,p,s,\delta)$ as any family of distributions $\GG: S_m \rightarrow \PP_m$ on permutation graphs satisfying the following two properties:
	\begin{enumerate}[label=$(\roman*)$]
		\item For any $\sigma \in S_m$, any permutation graph $G$ in the support of $\GG(\sigma)$ has $n$ vertices. 
		\item For any $\sigma_1,\sigma_2 \in S_m$, the distribution of graphs $\GG(\sigma_1)$ and $\GG(\sigma_2)$ are $\delta$-indistinguishable 
		for any $p$-pass $s$-space streaming algorithm. 
	\end{enumerate}
\end{Definition}

\cite{ChenKPSSY21} presented a permutation hiding generator with the following parameters for any integers $m,p \geq 1$ in terms of the parameters $\alpha$ and $\beta$ of RS graphs: 
\begin{align*}
	&n := {(\frac{\log{n}}{\alpha})}^{\Theta(p)} \cdot \Theta(m) \tag{number of vertices}; \\
	&s := o(m^{1+\beta}) \tag{space of streaming algorithm}; \\
	&\delta := 1/\poly{(n)} \tag{probability of success of the algorithm}. 
\end{align*}
In particular, notice that the number of vertices in the graph grows by a factor of $\log{(n)}$ per pass even when both $\alpha$ and $\beta$ are constant. As we shall see later in this section, 
the ratio of $m/n$ governs the approximation ratio of the algorithms for the maximum matching problem. As such, the ratio in this construction is too small to provide lower bounds for constant-factor approximation streaming algorithms, no matter
the space of the algorithm.  

We present an alternative construction of permutation hiding graphs in this paper, which is more suited for proving streaming lower bounds for maximum matching. 

\begin{theorem}\label{thm:perm-hiding}
	For any integers $m,p \geq 1$, there is a permutation hiding generator $\GG=\GG(m,n,p,s,\delta)$ with the following parameters: 
	\begin{align*}
		&n := {\Theta(\frac{1}{\alpha \cdot \beta^2})}^{p} \cdot \Theta(m) \tag{number of vertices}; \\
		&s := o(m^{1+\beta/2}) \tag{space of streaming algorithm}; \\
		&\delta := (p/\beta)^{\Theta(1/\beta)}\cdot \Theta(1/\beta)^{2p} \cdot 1/\poly{(m)} \tag{probability of success of the algorithm}. 
	\end{align*}
\end{theorem}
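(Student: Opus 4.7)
The plan is to proceed by induction on $p$, constructing a $p$-pass permutation-hiding generator from a $(p-1)$-pass one via an instance of the Multi-Hidden-Permutation-Hypermatching problem of~\Cref{prob:mph}, and using the communication lower bound of~\Cref{thm:mph} to establish indistinguishability. This is a direct permutation-based analogue of the recursive hiding framework of~\cite{ChenKPSSY21}; the key quantitative savings are (i) skipping any intermediate set-hiding step, which in~\cite{ChenKPSSY21} is what forces a $\Theta(\log n)$ per-pass blowup in order to boost distinguishing advantages back to $1/\poly(n)$, and (ii) replacing the $2$-sorter gluing gadgets of~\cite{ChenKPSSY21} with the lower-depth $k$-sorter networks of~\cite{ParkerP89,chvatal1992lecture}, with $k = \Theta(1/\beta)$ matched to the number of players in our Multi-HPH instance.

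For the base case $p = 1$, I would take an $(r, t)$-RS graph with $r = \alpha \nrs$ and $t = \nrs^{\beta}$ and encode $\sigma \in S_m$ (with $m = \Theta(r)$) by picking a uniformly random matching index $i^\ast \in [t]$ and identifying the $j$-th left/right endpoints of $\Mrs{i^\ast}$ with $j$ and $\sigma(j)$. Single-pass hiding of the index $i^\ast$ (and hence of $\sigma$) is essentially the $k=1$ specialization of $\IndexHPH$, which reduces to the standard RS-graph indistinguishability argument underlying prior single-pass matching lower bounds such as those of~\cite{GoelKK12,AssadiKL17}.

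For the inductive step, assume $\GG_{p-1}$ is given. To construct $\GG_p(\sigma)$ for $\sigma \in S_m$, I would pick $k = \Theta(1/\beta)$ and a domain size $b$ with $b! \geq m^{\Theta(1/k)}$, then glue $k$ independent inner $(p-1)$-pass permutation-hiding graphs (one per ``player'' of a Multi-HPH instance) together via the RS-graph/hypermatching template of a single $\IndexHPH_{r,t,b,k}$ instance, arranging the encoding so that recovering $\sigma$ from $\GG_p$ is combinatorially \emph{forced} to go through a composition $\gammastar_a = \sigmaP{1}_{\ell_1, \cM_{a,1}} \circ \cdots \circ \sigmaP{k}_{\ell_k, \cM_{a,k}}$ along a hyperedge of $\cM$. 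A $p$-pass $s$-space streaming algorithm $\alg$ distinguishing $\GG_p(\sigma_1)$ from $\GG_p(\sigma_2)$ is then simulated by a $(k+1)$-party blackboard protocol for $\IndexHPH_{r,t,b,k}$: player $\QP{i}$ owns the stream portion encoding $\SigmaP{i}$ via $\GG_{p-1}$, the referee owns the portion encoding $(\cM, \Gamma)$, and each pass contributes at most $k \cdot s$ bits of board writes. The total communication $O(p \cdot k \cdot s) = o(p \cdot m^{1+\beta/2}/\beta)$ sits well below the $10^{-3} \, r t \approx m^{1+\beta}$ threshold in~\Cref{thm:mph}, so the distinguishing advantage is bounded by $r \cdot O(s/(r t))^{k/32} = 1/\poly(m)$, and a union bound over the $k$ inner $\GG_{p-1}$ copies accumulates to the claimed $\delta$. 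The per-pass vertex blowup then decomposes as $\Theta(1/\alpha)$ (from the RS graph's matching-to-total-size ratio) times two $\Theta(1/\beta)$ factors (one from $k$ players, one from the $k$-sorter gadget that stitches their outputs back into a single outer permutation), yielding the claimed $\Theta(1/(\alpha\beta^2))^p \cdot \Theta(m)$ vertex count.

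The main obstacle is engineering the outer gluing so that the Multi-HPH-to-streaming reduction is tight in both directions, i.e.\ the hypermatching $\cM$ must route the $k$ inner reveals of the $\SigmaP{i}$'s in such a way that any distinguisher for $(\sigma_1, \sigma_2)$ inside $\GG_p$ pulls back into a distinguisher for $(\GammaY, \GammaN)$ in $\IndexHPH$ with comparable advantage; without a faithful such embedding the reduction collapses. A secondary technical hurdle is controlling the accumulation of $\delta$ through the $p$ levels of recursion without reintroducing a $\log n$-per-pass factor; the fact that our construction hides full permutations directly (so no boosting step is needed to convert set-hiding into permutation-hiding) is what keeps the per-pass error term at $\Theta(1/\beta)^{O(1)}$ rather than $\poly(\log n)$, and is ultimately the source of the improvement from the $\log{(1/\eps)}/\log\log n$ pass bound of~\cite{ChenKPSSY21} to the $\log{(1/\eps)}/\log(1/\alpha\beta)$ bound we obtain.
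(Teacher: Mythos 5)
Your high-level plan (induct on $p$, reduce a $p$-pass distinguisher to an $\IndexHPH$ protocol, count vertices via $\Theta(1/\alpha)\cdot\Theta(1/\beta)^2$ per pass) points in the right direction, and your bookkeeping of the total communication $O(pks)$ versus the $10^{-3}rt$ threshold is essentially correct. But there are two genuine gaps that would break the inductive step as you have written it.

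First, you have the recursion hiding the wrong thing. You wrap player $\QP{i}$'s matrix $\SigmaP{i}$ inside $\GG_{p-1}$, but $\SigmaP{i}$ must remain in a \emph{plain} encoded RS graph --- player $\QP{i}$ has to be able to directly stream those edges and advance the algorithm's state, so hiding them is both unnecessary and self-defeating. What is recursively hidden is the \emph{referee-side} data: in each of the $k$ blocks, the two edge-picking gadgets $\permgroups{\sigmaLeft,b}$ and $\permgroups{\sigmaRight,b}$ (which encode the chosen matching index $\ell_i$ and hyperedge choices $\Estar_i$) are replaced by $\GG_{p-1}(\extend(\sigmaLeft,b))$ and $\GG_{p-1}(\extend(\sigmaRight,b))$, and the trailing $\basic{\joinit(\Gamma)}$ is replaced by $\GG_{p-1}(\joinit(\Gamma))$. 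The point of the recursion is that a $(p-1)$-pass algorithm cannot tell the referee's true $(L,\cM,\Gamma)$ from dummy ones, which brings us to the second gap: you never describe how the players run passes $1,\dots,p-1$ at all. They cannot run them on the real graph, since they do not know the referee's edges. The missing mechanism is the ``fake graph'' simulation: the players fix arbitrary $(\Lbar,\Mbar,\Gammabar)$ in advance, sample a fake graph $\Gbar$, run $A$ on $\Gbar$ for $p-1$ passes (internally, no referee needed), and only on the final pass switch to the real graph with the real referee gadgets; $(p-1)$-pass indistinguishability of $\GG_{p-1}$ is what bounds the drift between the fake and real memory states. Without this, there is no valid $\IndexHPH$ protocol. (As a smaller issue, your base case with $k=1$ does not suffice: $\IndexHPH$ with $k=1$ only yields advantage $r\cdot O(s/rt)^{1/32}$, far from $1/\poly(m)$ when $\beta$ is small. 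The paper already uses $k=\Theta(1/\beta)$ at $p=1$, and the base case is the same construction sans recursion. Also, the condition you want on $b$ is $b\geq m^{\Theta(\beta)}$, not $b!\geq m^{\Theta(1/k)}$, so that $\log_b m = \Theta(1/\beta)$ and the $b$-sorter network has the needed depth.)
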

Thus, we obtain a different tradeoff than~\cite{ChenKPSSY21} on the parameters of the permutation hiding generator. In particular, now, 
the ratio of $n/m$ is only $2^{\Theta(p)}$ for constant values of $\alpha,\beta$, which as we shall see soon, is sufficient to obtain our $\Omega(\log{(1/\eps)})$-pass lower bound. 

\medskip

Our proof of~\Cref{thm:perm-hiding} considerably deviates from that of~\cite{ChenKPSSY21} both in terms of the combinatorial construction of the permutation hiding graphs and even more so in the information-theoretic analysis of their properties. 
Since the majority of these changes are already apparent even for single-pass algorithms, in~\Cref{sec:one-pass} we first present the construction for single-pass algorithms 
separately as a warm-up to our main construction.~\Cref{sec:multi} then contains the construction and analysis for multi-pass algorithms using the same type of inductive argument as in~\cite{ChenKPSSY21} by replacing 
their induction step with our approach in~\Cref{sec:one-pass} for single-pass algorithms.


\section{Proof Outline}\label{sec:outline}

We present a proof outline of~\Cref{thm:main-lb} and its ingredients in this section. We emphasize that this section oversimplifies many details and the discussions 
will be informal for the sake of intuition. We will start with the two key ingredients of our main theorem, namely,~\Cref{thm:mph,thm:perm-hiding}, and then at the end, 
show how they can easily imply the main theorem as well. 

\subsection{Proof Outline of~\Cref{thm:mph}: (Multi) Hidden Permutation Hypermatching}\label{sec:overview-hph} 

We start with the proof outline of~\Cref{thm:mph}. The formal proof is presented in~\Cref{sec:lb-multi-hph}. 
\subsubsection{From XOR Lemmas to a ``Concatenation Lemma''}
Let us focus on the HPH problem wherein the input is a permutation matrix $\Sigma \in (S_b)^{r \times k}$ given to $k$ players $\QP{1},\ldots,\QP{k}$, and a hypermatching $\cM$, and permutation vector $\Gamma \in (S_b)^{r/2}$
given to the referee. 
The goal is to determine whether for the permutation vector $\Gammastar$ obtained via concatenating permutations on indices of $\cM$, $\Gammastar \circ \Gamma$ is equal to a fixed 
permutation vector $\GammaY$ or another vector $\GammaN$. The communication pattern is as specified in~\Cref{prob:mph}. 

Our starting point is similar to that of~\cite{GavinskyKKRW07,VerbinY11} by breaking
the correlation in the input instance (either all permutations in $\Gammastar \circ \Gamma$ are consistent with $\GammaY$ or all are consistent with $\GammaN$). 
Roughly speaking, this corresponds to showing that if we only have a single random hyperedge $e=(v_1,\ldots,v_k) \in [r]^k$, then, the distribution of
\[
\gamma^* = \sigma_{v_1,1} ~\circ ~\cdots ~\circ~ \sigma_{v_k,k};
\]
is $1/\poly(r)$ close to the uniform distribution in the total variation distance. Having proven this, one can then essentially do a ``union bound'' and show that the distribution of the entire vector
$\Gammastar \circ \Gamma$ remains so close to uniform that the referee cannot distinguish whether it is consistent with $\GammaY$ or $\GammaN$ in this case. Formalizing this step
is not immediate and requires a \emph{hybrid argument} similar to those of~\cite{GavinskyKKRW07,VerbinY11} (see~\cite[Section 4.3]{AssadiN21} also for a general treatment) 
but we shall skip it in this discussion. 

Concretely, our task is to prove the following inequality for any low communication protocol $\prot$ for HPH, with a transcript $\Prot$ between its $k$ players $\QP{1},\ldots,\QP{k}$:
\begin{align}
	\Exp_{(\Prot,e)} \tvd{\paren{\gamma^* \mid \Prot,e}}{\cU_{S_b}} \leq 1/\poly(r), \label{eq:ss1}
\end{align} 
where $\cU_{S_b}$ is the uniform distribution over $S_b$ and $\norm{\cdot}_{\textnormal{tvd}}$ is the total variation distance (TVD). 

\paragraph{XOR Lemmas.} By the equivalence between HPH and BHH for $b=2$, proving the equivalent of~\Cref{eq:ss1} in $b=2$ case for BHH in all prior work~\cite{GavinskyKKRW07,VerbinY11,KapralovKS15,ChenKPSSY21,KapralovMTWZ22,Assadi22,AssadiS23}
corresponds to 
proving an \emph{XOR Lemma} for the \emph{Index} communication problem: We have a string $x \in \set{0,1}^{r \times k}$ conditioned on a short message $\Prot$, and 
we are interested in $\oplus_{i=1}^{k} x_{v_i,i}$ for $(v_1,\ldots,v_k)$ chosen uniformly from $[r]^k$. At an intuitive level, these works rely on the following two statements: $(i)$ since 
$\Prot$ is a short message, each $x_{v_i,i}$ should be individually somewhat close to uniform distribution (by the standard lower bounds for the Index problem~\cite{Ablayev93,KremerNR95}),
and $(ii)$ since we are taking XOR of multiple close-to-uniform bits, the final outcome should be even exponentially-in-$k$ closer to uniform\footnote{This step is easy to see \emph{had} the message $\Prot$ was not correlating
	the values of $x_{v_i,i}$ across different $i \in [k]$: the bias of XOR of independent bits is equal to multiplication of their biases (see, e.g.~\cite[Proposition A.9]{AssadiN21}).  Yet the message can indeed correlate these 
	values and the main challenge in proving any XOR Lemma is to handle this. We refer the reader to Yao's XOR Lemma~\cite{Yao82a} for
	the first example of such approaches in circuit complexity, and~\cite{AssadiN21} for a streaming XOR Lemma and~\cite{Yu22} for a 
	bounded-round communication XOR Lemma.}. Formalizing this intuition is done via different tools such as Fourier analysis on Boolean hypercube~\cite{GavinskyKKRW07,VerbinY11,KapralovKS15,KapralovMTWZ22,Assadi22,AssadiS23}, 
discrepancy bounds~\cite{ChenKPSSY21} or a generic streaming XOR Lemma~\cite{AssadiN21}. 

Our approach here is then to extend these XOR Lemmas to a \textbf{``Concatenation Lemma''} for~\Cref{eq:ss1}, which we describe in the next part. 

\subsubsection{A ``Concatenation Lemma''}  

Following the previous discussion, our goal in proving~\Cref{eq:ss1} is to show that: $(i)$ since 
$\Prot$ is a short message, each permutation $\sigma_{v_i,i}$ should be individually somewhat close to the uniform distribution (which still follows from a similar argument as for the Index problem~\cite{Ablayev93,KremerNR95}),
and $(ii)$ since we are taking concatenation of multiple close-to-uniform permutations, the final outcome should be even exponentially-in-$k$ closer to uniform. 
This is what we consider a ``concatenation lemma'' in this paper, and it is where we start to deviate completely from prior approaches in~\cite{GavinskyKKRW07,VerbinY11,KapralovKS15,ChenKPSSY21,KapralovMTWZ22,Assadi22,AssadiS23}
in this context. 

\paragraph{A Conceptual Roadblock.} Let us point out an important conceptual challenge in formalizing step $(ii)$ of this plan. 
Let $\nu$ denote the uniform distribution over all \emph{even} permutations on $S_b$, namely, permutations with an even number of inversions. 
We have that the TVD of $\nu$ from $\cU_{S_b}$ is $1/2$, so, roughly speaking, $\nu$ is already ``not-too-far'' from the uniform distribution. 

But now consider the distribution $\nu^*$ which is the $k$-fold concatenation of $\nu$ for some $k \geq 1$, meaning that to sample from $\nu^*$, we sample $k$ independent 
permutations from $\nu$ and concatenate them together. To be able to implement step $(ii)$ of our plan, at the very least, we should have that in this purely independent case, 
TVD of $\nu^*$ from $\cU_{S_b}$ exponentially drops, i.e., becomes $2^{-\Omega(k)}$. Alas, it is easy to see that $\nu^*$ is in fact the same as $\nu$ and thus we have \emph{no} change in TVD at all! This is 
in stark contrast with the XOR and Boolean case where the drop in TVD, at least for purely independent inputs, is \emph{always} happening.

\subsubsection{Our Proof Strategy for the ``Concatenation Lemma''} With the above example in mind, we are now ready to discuss our solution for proving the Concatenation Lemma and establishing~\Cref{eq:ss1}. 
For now, let us \underline{limit ourselves} to \emph{nice} protocols that do {not} correlate the outcome of different permutations with each other, meaning that
we are still in this blissful case wherein the permutations $\sigma_{v_i,i}$ are independent even conditioned on $\Prot$. The first and easy step of the argument is to show
that the distribution of each $\sigma_{v_i,i}$, for a random $v_i$, is close to uniform not only in TVD but also KL-divergence, namely, 
\begin{align}
	\Exp_{\Prot,v_i}\bracket{\kl{\sigma_{v_i,i} \mid \Prot}{\cU_{S_b}}} \leq r^{-1/2}. \label{eq:ss2}
\end{align}
This step is still not particularly different from the typical Index lower bounds and is an easy application of the chain rule of KL-divergence. Let us make one further \underline{simplifying assumption} by 
taking~\Cref{eq:ss2} to hold, not in expectation, but rather simultaneously for all coordinates of a fixed $(v_{1},\ldots,v_{k})$ at the same time.  

At this point, one could apply Pinsker's inequality (\Cref{fact:pinskers}) to relate the KL-divergence bound in~\Cref{eq:ss2} to a bound on TVD, but then we may end up in the situation shown in the roadblock, hence 
not allowing for further decay in the distance through concatenation.  

Another approach, taken for instance in~\cite{AssadiKSY20}, is to relate~\Cref{eq:ss2} to an $\ell_2$-distance of the distributions (instead of TVD which is half the $\ell_1$-distance). But then, the best bound one can prove on $\ell_2$-distance will also be $r^{-\Theta(1)}$, which again would not suffice for our purpose (using a ``smooth'' version of the example in the roadblock; see~\Cref{app:tight-remark}). 

Both above cases suggest that we may need a more nuanced understanding of the distribution of $\sigma_{v_i,i} \mid \Prot$. To do so, we consider a combination of TVD and $\ell_2$-distances in the following way. 
We apply the ``KL-vs-$\ell_1/\ell_2$-inequality'' of~\cite{ChakrabartyK18} (\Cref{prop:pinsker++})---a generalization of the Pinsker's inequality---to decompose the support of the distribution of $\sigma_{v_i,i} \mid \Prot$ into two parts: 
\begin{itemize}
	\item $A_i$: A part that does \emph{not} happen that frequently, meaning  $\sigma_{v_i,i} \mid \Prot$ is only in  $A_i$ w.p. $r^{-1/4}$;
	\item $B_i$: A part that is \emph{extremely close} to uniform distribution in $\ell_2$-distance, meaning 
	\begin{align}
		\norm{(\sigma_{v_i,i} \mid \Prot,B_i)-\cU_{S_b}}_2 \leq \frac{r^{-\Theta(1)}}{b!}. \label{eq:ss3}
	\end{align}
\end{itemize}
Putting these together, plus our simplifying assumption that $\sigma_{v_i,i}$'s are still independent conditioned on $\Prot$ allows us to argue that with probability $1-r^{-\Theta(k)}$, there 
are at least $\Theta(k)$ coordinates $i \in [k]$ that satisfy~\Cref{eq:ss3}. 

This brings us to the last part of the argument. Having obtained $\Theta(k)$ coordinates that are extremely close to uniform, we use basic tools from representation theory and Fourier analysis on permutations
to analyze the distribution of their concatenation. The Fourier basis here is a set of irreducible representation matrices (see~\Cref{app:fourier-permutation}), and the convolution theorem
allows us to relate Fourier coefficients of the concatenated permutation via multiplication of Fourier coefficients of each individual permutation. Finally, the distance 
to the uniform distribution can be bounded by Plancharel's inequality for this Fourier transform (\Cref{prop:Four-plancherel}), similar to the standard analysis on the Boolean hypercube (see, e.g.,~\cite{Wolf08}). 

All in all, this step allows us to bound the TVD of the concatenation of the permutation---conditioned on the case of $\Theta(k)$ extremely-close to uniform indices in $[k]$ which happens with probability $1-r^{-\Theta(k)}$---by another $r^{-\Theta(k)}$. Putting all these together then gives us the desired inequality in~\Cref{eq:ss1} under all our earlier simplifying assumptions. 

\paragraph{Removing simplifying assumptions.} The above discussion oversimplified many details, chief among them, the main challenge that stems from the inputs becoming correlated through the transcript $\Prot$ (which is the \emph{key} challenge
in proving XOR Lemmas as well). For the BHH problem and XOR Lemmas (for the Index problem) in the Boolean setting, a key tool to handle this is the \emph{KKL inequality} of~\cite{KahnKL88} for the Fourier transform
on Boolean hypercube, which as a corollary, almost immediately gives an XOR Lemma for the Index problem (see~\cite[Section 4.2]{Wolf08}). 
For our permutation problem, however, we are not aware of any similar counterpart. 

We handle the aforementioned challenge by replacing the role of a single player, Alice, in BHH with $k$ separate players in HPH and use a detailed information-theoretic argument to analyze how much these players can correlate their inputs. 
Roughly speaking, this reduces the problem to proving that the inputs of players are only correlated through the message $\Prot$ and not beyond that (a consequence of the rectangle property of communication protocol), and then 
making a \emph{direct product} argument to show a single message $\Prot$, cannot, \emph{simultaneously}, change the distribution of multiple coordinates. 

\subsubsection{From HPH to \IndexHPH: (Not) A Direct-Sum Result}

To prove~\Cref{thm:mph}, we need a lower bound for the $\IndexHPH$ problem which is stronger than the lower bound for HPH by a factor of $t$. Given that $\IndexHPH$ is effectively a direct-sum version of HPH---we need to solve one \emph{unknown} copy
out of $t$ given copies---it is natural to expect the complexity of the problem also increases by a factor $t$. Moreover, given various direct-sum results known using \emph{information complexity} (see, e.g.~\cite{ChakrabartiSWY01,JainRS03,BarakBCR10,BravermanR11,BravermanRWY13} and references therein), one might expect this step to be an easy corollary. 

Unfortunately, this is in fact not the case, due to the crucial reason that we need a lower bound for protocols with an extremely small advantage of only $1/\poly(r)$ over random guessing. In general, one should not expect a generic 
direct sum result to hold in this low-probability regime\footnote{A short explanation is based on the equivalence of information complexity and direct sum result~\cite{BravermanR11}, plus the fact that information complexity is an ``expected'' term
	while communication complexity is a ``worst case'' measure.}. Moreover, these information complexity approaches typically fail on problems with a low probability of success, and in our case, they 
cannot be applied readily.\footnote{Concretely, a protocol that with probability $1/\poly(r)$, communicates its input and otherwise is silent has an $O(1)$ information complexity (albeit large communication complexity) and can 
	lead to the desired advantage of $1/\poly(r)$ trivially; thus information complexity of HPH in such a low-probability-of-success regime is simply $O(1)$.} 

Consequently, in our proof of~\Cref{thm:mph}, we directly work with the $\IndexHPH$ problem, which means all the arguments stated in the previous part should be implemented 
for this ``higher'' dimensional problem. Nevertheless, most of these changes appear in the first part of the argument that reduces the original problem to proving~\Cref{eq:ss1} and~\Cref{eq:ss2} (or rather their equivalents for $\IndexHPH$), 
as well as the direct product arguments mentioned at the end of the last subsection for removing our simplifying assumptions. Thus, these changes do not fundamentally alter our previously stated plan in the lower bound and 
we postpone their details to~\Cref{sec:lb-multi-hph}.

\subsection{Proof Outline of~\Cref{thm:perm-hiding}: Permutation Hiding Graphs}\label{sec:overview-perm-hiding}

We now switch to the proof outline of~\Cref{thm:perm-hiding}. The formal proof is presented in~\Cref{sec:one-pass} for single-pass algorithms and~\Cref{sec:multi} for multi-pass ones. 

Our proof primarily builds on~\cite{ChenKPSSY21} (for the general framework) and~\cite{Assadi22} (for the construction of the graphs used in the framework). 
Both these papers have excellent overviews of their technical approach,~\cite[Section 2]{ChenKPSSY21} and~\cite[Section 1.2]{Assadi22}, and we refer the reader to those parts
for further background as well as an overview of prior techniques.

As stated earlier, our main point of departure from the framework of~\cite{ChenKPSSY21} already appears for single-pass algorithms. So, in this overview also, we mostly focus on this case. 
For the simplicity of exposition, in the following, we assume the parameters $\alpha,\beta$ of RS graphs are both $\Theta(1)$ and ignore the dependence of our bounds on them. 

\subsubsection{Permutation Hiding for Single-Pass Algorithms}

Recall the definition of permutation hiding graphs from~\Cref{def:perm-hiding}. 
\Cref{thm:mph} for $\IndexHPH$ is our main tool for constructing these graphs, so let us see how we can turn an instance $(\Sigma,L,\cM,\Gamma)$ of $\IndexHPH$ into a permutation graph. We start with each component separately. 

\subsubsection*{Encoded RS graphs and $\Sigma$}
Recall that $\Sigma$ consists of $k$ permutation matrices $\SigmaP{i} \in (S_b)^{t \times r}$. Let $\Grs = (\Lrs,\Rrs,\Ers)$ be a bipartite RS graph with $\nrs$ vertices
on each side and $t$ induced matchings $\Mrs{1},\ldots,\Mrs{t}$ each of size $r$ (for the same parameters as the dimensions of $\SigmaP{i}$). 

We  ``encode'' $\SigmaP{i}$ in $\Grs$ via the following ``graph product'' $\Grs \otimes \SigmaP{i}$ (strictly speaking, this is the product of the graph $\Grs$
and the matrix $\SigmaP{i}$):
\begin{itemize}
	\item Vertices of $\Grs \otimes \SigmaP{i}$ are the bipartition $L^{(i)}=\Lrs \times [b]$ and $R^{(i)} = \Rrs \times [b]$. 
	\item Edges of $\Grs \otimes \SigmaP{i}$ are directed from $(u,x) \in L$ to $(v,y) \in R$ whenever $(u,v)$ is an edge in $E$ \emph{and} $y=\sigmaP{i}_{c_1,c_2}(x)$ where $(c_1,c_2) \in [t] \times [r]$ is chosen
	so that $e$ is the $c_2$-th edge of the $c_1$-th induced matching $\Mrs{c_1}$. 
\end{itemize}

\begin{figure}[h!]
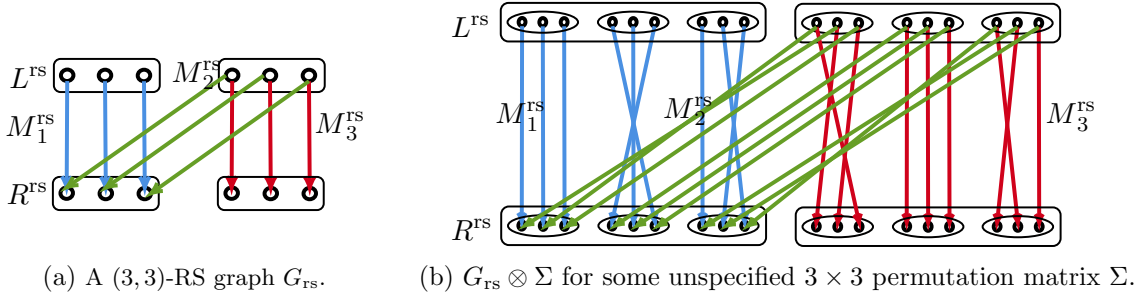

	\centering
	\subcaptionbox{\footnotesize A $(3,3)$-RS graph $\Grs$.}%
	[.3\linewidth]{
		\input{Figures/s-rsgraph}
	} 
	\hspace{0.2cm} 
	\subcaptionbox{$\Grs \otimes \Sigma$ for some unspecified $3 \times 3$ permutation matrix $\Sigma$.}%
	[.6\linewidth]{
		\input{Figures/s-encodedrs}
	}
	\caption{An illustration of RS graphs and our ``graph product''.}
	\label{fig:encoded-rs}
\end{figure}

In words, we ``stretch'' each vertex of $\Grs$ to become $b$ separate vertices, and replace the edge $e=(u,v) \in \Ers$ by a matching of size $b$ between vertices $(u,*)$ and $(v,*)$ in the product; the choice of this matching
is then determined by the entry of $\SigmaP{i}$ that ``corresponds'' to this edge, where in this correspondence we interpret rows of $\SigmaP{i}$ as the induced matchings in $\Grs$, and the columns as the edges of these matchings. 
See~\Cref{fig:encoded-rs} for an illustration.

It is not hard to verify that $\Grs \otimes \SigmaP{i}$ is itself an RS graph with $t$ induced matchings of size $r \cdot b$ on $\nrs \cdot b$ vertices. This means that the product has ``sparsified'' the original RS graph relatively, but 
we shall ensure that $b$ is sufficiently small, such that this product graph is also sufficiently dense still for the purpose of our lower bound. 

We can then encode the entirety of $\Sigma$ into $k$ vertex-disjoint RS graphs $\Grs \otimes \SigmaP{i}$ for $i \in [k]$. We refer to the graph consisting of these $k$ disjoint parts as $G^{\Sigma}$. 
We shall review the properties of this graph after defining the remaining graphs related to $\IndexHPH$. 

Before moving on, a quick remark is in order. RS graphs have been used extensively in the last decade for streaming lower bounds (see~\Cref{sec:rs}). However, \emph{all} prior work on RS graphs 
that we are aware of has used them for encoding Boolean strings. For instance,~\cite{AssadiR20} uses RS graphs to encode input sets to the \emph{set disjointness} communication problem, 
and~\cite{ChenKPSSY21} uses them for encoding their direct-sum BHH problem (outlined earlier). To obtain more ``complex'' structures such as permutations,~\cite{ChenKPSSY21} further defined Boolean operations of RS graphs 
like $\vee$, $\wedge$, and $\oplus$ and used them in conjunction with sorting networks (we shall describe this connection shortly also). This is precisely the source of the $\Theta(\log{n})$ loss in the parameters 
of permutation hiding graphs in~\cite{ChenKPSSY21} that we discussed earlier. 

Despite its simplicity, our new construction---inspired by~\cite{Assadi22}, which still encoded a Boolean string in the RS graph but in a similar fashion---is directly encoding a permutation matrix inside the RS graph, which allows for encoding more complex structures, without having to pay \emph{too much} on the density of the resulting graph. 

\subsubsection*{Simple permutation graphs and $(L,\cM,\Gamma)$}
We will create a new graph $H = H(G^{\Sigma},L,\cM,\Gamma)$ out of $G^{\Sigma}$. 
For the purpose of this part, we only care about the \emph{vertices} of $G^{\Sigma}$ and not its edges (this will be crucial for the reduction). We shall not go into the rather tedious definition of 
the graph $H$ here and instead simply mention its main properties (see also~\Cref{fig:H-over} for an illustration): 
\begin{itemize}
	\item The graph $H$ has asymptotically the same number of vertices as $G^{\Sigma}$ and in particular includes an extra set of vertices $S = [r/2] \times [b]$ on its ``left'' most part layer. 
	The extra edges inserted to $H$ at this step form perm perfect matchings between the layers of $H$. 
	\item If we start from the $a$-th group of vertices in $S$ for $a \in [r/2]$, namely, $(a,*) \in S$, 
	and follow the edges of the graph $H$,  we shall be moving according to the hyperedge $\cM_a$ across matchings $\Mrs{\ell_i} \otimes \SigmaP{i}$ for $i \in [k]$ in a way that 
	we will end at the same block of vertices $(a,*) $ in the last layer of $H$; the mapping between the groups $(a,*) \in S$ and $(a,*) $ in the last layer will now be equal to $\gammastar_a \circ \gamma_a$. 
\end{itemize}

\begin{figure}[h!]
	\centering
	\input{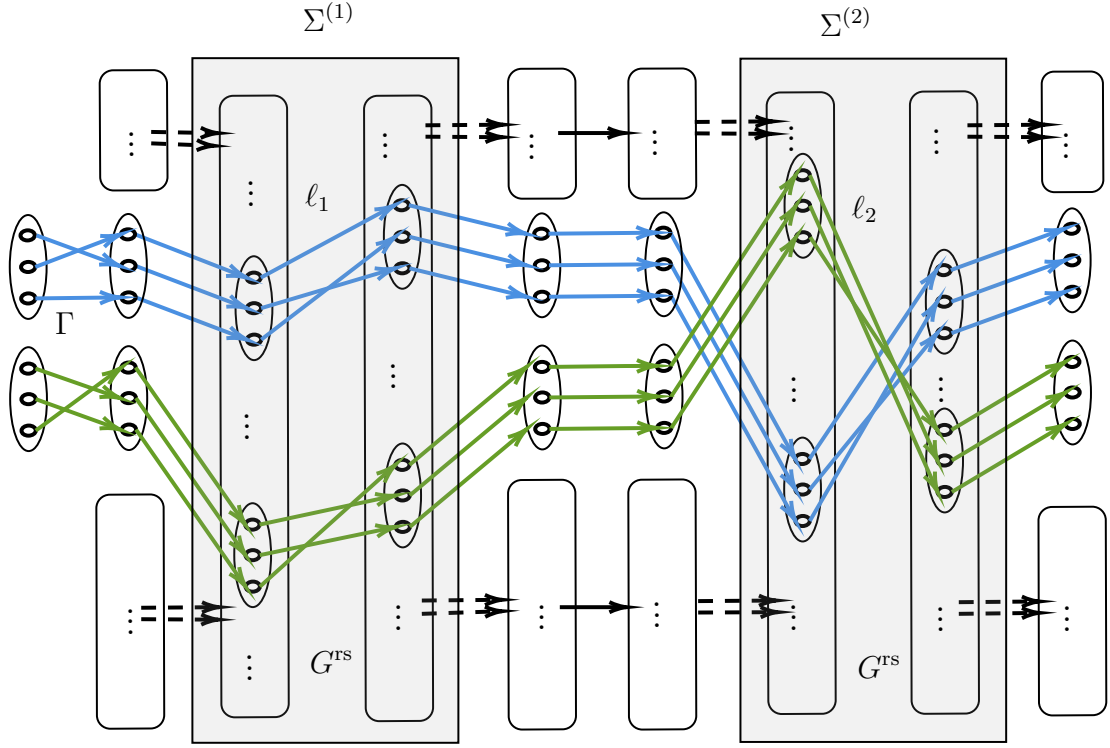}\caption{An illustration of the graph $H = H(G^{\Sigma},L,\cM,\Gamma)$ for some unspecified parameters.}\label{fig:H-over}
\end{figure}

With the properties above, one can see that the graph $H$ is in fact a permutation graph for the permutation induced\footnote{We 
	can interpret each permutation vector in $(S_b)^{r/2}$ as a permutation over $[r/2 \cdot b]$, wherein for every $a \in [r/2]$, the elements $[(a-1) \cdot b+1: a \cdot b]$ are permuted according to the $a$-the permutation
	of the vector. We emphasize that while every permutation vector leads to a permutation, the converse is not true, and this is in fact going to play an important role in our arguments.} 
by the permutation vector $\Gammastar \circ \Gamma \in (S_b)^{r/2}$ in $\IndexHPH$. Thus, learning the hidden permutation of the permutation graph $H$ is equivalent to 
figuring out whether the answer to $\IndexHPH$ was $\GammaY$ or $\GammaN$. 

We can now conclude the following. Given $\Sigma$, the players in $\IndexHPH$ can create the graph $G^{\Sigma}$ without any communication (as each player $\QP{i}$ will be responsible for creating $\Grs \otimes \SigmaP{i}$). The referee can also create the edges of the graph $H$ given the inputs $(L,\cM,\Gamma)$ as they \emph{do not depend} on $\Sigma$.  This in turn implies that a single-pass $s$-space streaming algorithm for learning the hidden permutation of the permutation graph $H$
can be used as a $(k \cdot s)$-communication protocol for $\IndexHPH$. Given our lower bound in~\Cref{thm:mph} for $\IndexHPH$ (for appropriate choices of $s$), this implies that the resulting distribution we have on the graphs $H$ is a 
permutation hiding generator with the important \emph{caveat} that it can only hide permutation vectors, as opposed to arbitrary permutations.

\subsubsection*{From Permutation Vectors to Arbitrary Permutations}
At this point, we have already constructed a generator for permutation vectors. This generator can be seen as a middle ground between the \emph{set hiding} generators of~\cite{ChenKPSSY21} (which are considerably weaker as they can hide  limited permutations, roughly corresponding to the $b=2$ case) and permutation hiding generators (which are considerably stronger, as they can hide every permutation).

Our last step to permutation hiding generators is inspired by a nice strategy of~\cite{ChenKPSSY21} for going from their {set hiding} graphs to their permutation hiding graphs. 
They show that one can use \emph{sorting networks} (see~\Cref{sec:comprators}), and in particular, the celebrated \emph{AKS sorters}~\cite{AjtaiKS83} to compute any arbitrary permutation 
as concatenation of $\Theta(\log{n})$ ``matching permutations'', namely, permutations that can only change two previously fixed pairs of elements with each other. This allows them to obtain a 
permutation hiding generator from $\Theta(\log{n})$ set hiding ones -- this is precisely the source of the factor $\Theta(\log{n})$ loss in the construction of~\cite{ChenKPSSY21}. 

For our purpose, we also use sorting networks but this time with larger comparators (typically called \emph{$b$-sorters}), wherein each comparator can sort $b$ wires simultaneously (see~\Cref{sec:comprators}). 
We can then use an extension of the sorting network of~\cite{AjtaiKS83} to $b$-sorter networks due to~\cite{chvatal1992lecture} with depth $O(\log_b{n})$ instead  (see also~\cite{ParkerP89} and~\Cref{app:sorting-stuff}). 
Finally, we show that our generator for hiding permutation vectors can ``simulate'' each layer of this network efficiently and use this to obtain a generator for every arbitrary permutation, by applying 
our generator for permutation vectors $O(\log_b{n})$ times. By taking $b$ to be a sufficiently small polynomial in $n$, we can ensure that $O(\log_b{n}) = O(1)$ and thus only 
pay a constant factor overhead when constructing our final permutation hiding generator for single-pass algorithms. Finally, $b$ is still sufficiently small such that even though 
in our encoded RS graph, we essentially make the input graph sparser by a factor of $b$, the graph is still sufficiently dense to allow for our desired lower bound. The language used in \Cref{sec:one-pass,sec:multi} is slightly different for readability. We work with permutations induced by permutation vectors, which we call simple permutations (see \Cref{def:simple-perm}) directly.

\subsubsection{Permutation Hiding for Multi-Pass Algorithms} 

The last step of our approach is to go from single-pass algorithms to multi-pass algorithms using the strong guarantee of permutation hiding generators and the power of back-and-forth communication 
between the players, but not the referee, in the $\IndexHPH$ problem. While at a technical level this step still requires addressing several new challenges, at a conceptual level, it more or less mimics that of~\cite{ChenKPSSY21} without any 
particularly novel ideas.  

The goal is now to construct a permutation hiding generator for $p$-pass algorithms, simply denoted by $\GG_p(\cdot)$, from a generator for $(p-1)$-pass algorithms denoted similarly by $\GG_{p-1}(\cdot)$. The main idea is still based on a 
reduction from $\IndexHPH$ for inputs $(\Sigma,L,\cM,\Gamma)$. A key point in our construction of $G^{\Sigma}$ and $H$ for single-pass algorithms is that the edges added by the referee, but certainly not the players, to $H$
form different \emph{matchings} $M_1,M_2,\ldots$ between \emph{disjoint} sets of vertices of $H$. In particular, if we see $H$ as a layered graph, some of these layers are constructed based on $G^{\Sigma}$ and the edges the between
remaining layers are perfect matchings. 

To create our $p$-pass generator, we start with creating a graph $H$ from $(\Sigma,L,\cM,\Gamma)$ as described in the previous part. Then, for every one of the perfect-matching layers $M_j$ described above, we replace those layers 
with a permutation hiding graph $\GG_{p-1}(M_j)$; here, we consider the perfect matching $M_j$ as a permutation over the vertices of the layer. This will increase the number of vertices, and the number of layers, in the entire graph by a constant factor. 
See~\Cref{fig:gen-over} for an illustration. 
\begin{figure}[h!]
	\centering
	\input{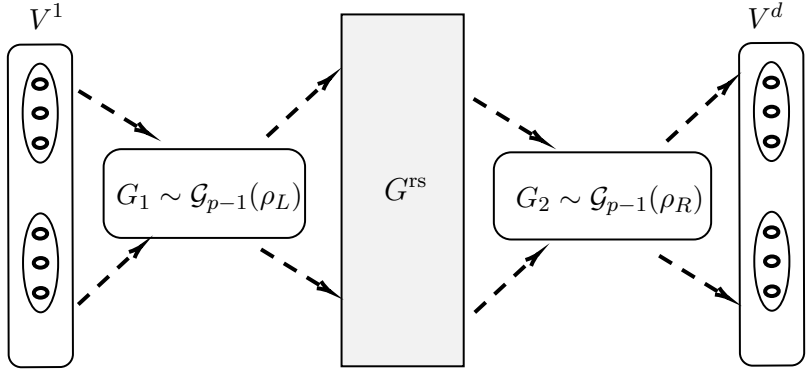}\caption{An illustration a (subgraph of a) $p$-pass generator from $(p-1)$-pass generator.}\label{fig:gen-over}
\end{figure}

We can now claim that this new family of graphs is in fact a $p$-pass generator. At an intuitive level, we can say that the first $(p-1)$ passes of the algorithm cannot figure $(L,\cM,\Gamma)$ as it cannot determine the identity of 
each of the matchings $M_1,M_2,\ldots,$ added to $H$ by the referee. This is by the guarantee of the generator $\GG_{p-1}$ against $(p-1)$ pass algorithms.  Because of this, the problem ``effectively'' become the same 
as the single-pass case again, and we can perform a reduction from $\IndexHPH$ as before. Making this intuition precise requires quite a lot of technical work, but as we stated earlier, it follows a similar 
pattern as that of~\cite{ChenKPSSY21} (plus certain simplifications using a similar argument in~\cite{AssadiR20}), and we postpone their details to~\Cref{sec:multi}.

\subsection{Proof Outline of~\Cref{thm:main-lb}: A Multi-Pass Lower Bound for Matchings}\label{sec:proof-main-lb}

The proof is based on a standard reduction, e.g., in~\cite{AssadiR20}, from finding vertex-disjoint paths to bipartite matching (a similar reduction also appeared in~\cite{ChenKPSSY21}). 
Given the simplicity of this proof, we more or less provide its full details here for completeness. 

Consider any permutation graph $G=(V,E)$ for some $\sigma \in S_m$. A simple observation is that for any pairs of vertices $s_1,s_2 \in \first_{[m]}(G)$, their corresponding paths to $\sigma(s_1),\sigma(s_2) \in \last_{[m]}(G)$
should be vertex disjoint. Suppose not, and let $v \in V$ be one intersecting vertex; then, $s_1 \rightsquigarrow v \rightsquigarrow \sigma(s_1)$ and $s_1 \rightsquigarrow v \rightsquigarrow \sigma(s_2)$ should 
be some paths in $G$, violating the permutation graph property of $G$. We use this observation crucially in our proof. 

Fix a permutation hiding generator $\GG := \GG_{m,n,s,p,\delta}$ for any integers $m,p \geq 1$ and parameters $n,s,\delta$ as in~\Cref{thm:perm-hiding}. For any graph $G$ in the support of any of the distributions output by $\GG$, 
let $S = \first_{[m/2]}(G)$ and $T = \last_{[m/2]}(G)$. Now, 
\begin{itemize}
	\item By taking $\sigma_{=}$ to be the identity permutation, we can ensure that in any $G \sim \GG(\sigma_{=})$, there are $m/2$ vertex-disjoint paths from $S$ to $T$ in $G$; 
	\item But, by taking $\sigma_{\times}$ to be the ``cross identity'', namely, a one mapping $[m/2]$ to $[m/2+1:m]$ identically and vice versa, we can ensure that in $G \sim \GG(\sigma_{\times})$, there are no vertex-disjoint paths from $S$ to $T$ in $G$. 
\end{itemize}

The next step is to turn this vertex-disjoint path problem into a bipartite matching instance. Roughly speaking, this is done by turning these paths into augmenting paths for a canonical matching in the bipartite matching instance. 

Consider the following function $\bipartite{G}$ that given a permutation graph $G$ in the support of any distribution in $\GG$ creates the following bipartite graph: 
\begin{enumerate}[label=$(\roman*)$]
	\item Copy the vertices of $G$ twice to obtain two sets $V^{\Left}$ and $V^{\Right}$. Additionally, add two new sets $S_0$ and $T_0$ to the graph. 
	\item For every vertex $v \in G$, connect $v^{\Left} \in V^{\Left}$ to $v^{\Right} \in V^{\Right}$ ( refer to these edges as a matching $M$). 
	\item For every edge $(u,v) \in G$, connect $u^{\Left} \in V^{\Left}$ to $v^{\Right} \in V^{\Right}$. Additionally, for every vertex $s_i \in S$ of $G$, connect $s_i \in S_0$ to $s_i^{\Right} \in V^{\Right}$ and
	for every vertex $t_i \in T$ of $G$, connect $t_i \in T_0$ to $t_i^{\Left} \in V^{\Left}$. 
\end{enumerate}

It is not hard to see that in this graph, any augmenting path for the matching $M$ has to start from $S_0$ and end in $T_0$. In addition, the structure of the graph, plus the alternating nature of an augmenting path, 
forces the path to basically follow the copies of edges in $G$ from $S$ to $T$. This will turn imply that: 
\begin{itemize}
	\item When $G \sim \GG(\sigma_{=})$, the matching $M$ in $\bipartite{G}$ has $m/2$ vertex-disjoint augmenting paths, and thus can be augmented to a perfect matching of size $n+m/2$; 
	\item On the other hand, when $G \sim \GG(\sigma_{\times})$, the matching $M$ in $\bipartite{G}$ has no augmenting paths and is thus a maximum matching of size $n$. 
\end{itemize} 

Given that streaming algorithms running on $G$ can generate $\bipartite{G}$ ``on the fly'', this is sufficient to prove that any streaming algorithm that can determine whether the input graph has a perfect matching of size $n+m/2$ or its largest matching is of size $n$, 
would also distinguish between the distributions $\GG(\sigma_{=})$ and $\GG(\sigma_{\times})$. But, by the indistinguishability of these distributions in~\Cref{thm:perm-hiding}, we 
obtain that no $p$-pass $s$-space algorithm can solve the underlying matching problem. We can now instantiate the parameters of~\Cref{thm:perm-hiding} as follows. 
\begin{itemize}[itemsep=10pt]
	\item The parameter $\eps$ of~\Cref{thm:main-lb} is obtained via: 
	\[
	(1-\eps) \cdot \paren{n+\frac{m}{2}} = n,
	\]
	which implies that $\eps = \Theta(m/n)$. 
	\item The number of vertices of $\bipartite{G}$ is $n+m/2$ on each side and thus $\Theta(n)$ in general. 
	
	\item The number of passes of the algorithm needs to be at least 
	\[
	p = \Omega\Paren{\frac{\log{({(n+m/2)/m)}}}{\log({1}/{\alpha \cdot \beta^2})}}= \Omega\Paren{\frac{\log{(1/\eps)}}{\log{(1/\alpha\beta)}}}. 
	\]
	\item The bound on the space $s$ of the algorithms needs to be
	\[
	s = o(m^{1+\beta/2}) = o((\alpha \cdot \beta^2)^{p \cdot (1+\beta/2)} \cdot (n+m/2)^{1+\beta/2}) = o(\eps^{2} \cdot (n+m/2)^{1+\beta/2}).
	\]
\end{itemize}
This then concludes the proof of~\Cref{thm:main-lb}.

\section{Preliminaries}


\subsection{Notation}

\subsubsection*{Graphs}
For any graph $G=(V,E)$, we use $n:=\card{V}$ to denote the number of vertices. For an edge $e=(u,v)$, we use $V(e)$ to denote the vertices incident on $e$. 
When $G$ is a directed graph, for any vertices $s,t \in V$, we write $s \rightarrow t$ to mean there is an edge from $s$ to $t$ and $s \rightsquigarrow t$ to mean there is a path from $s$ to $t$ in $G$. 

We sometimes say that two (or more)  disjoint sets $S$ and $T$ of a graph $G$ with size $m=\card{S}=\card{T}$ are identified by the set $[m]$ to mean that any integer $i \in [m]$ can be 
used to refer to both the $i$-th vertex of $S$ as well as $T$, when it is clear from the context (e.g., we say that for any $i \in [m]$, connect vertex $i \in S$ to vertex $i \in T$). 

\subsubsection*{Tuples and matrices} 
  For any $m$-tuple  $X: =(x_1,\ldots,x_m)$, and any integer $i \in [m]$, we define 
  \[
  X_{<i} := (x_1,\ldots,x_{i-1}) \qquad \text{and} \qquad X_{-i} := (x_1,\ldots,x_{i-1},x_{i+1},\ldots,x_m).
  \] 
  We can further define $X_{>i}$ and $X_{\leq i}$ analogously and extend these definitions to vectors as well. 

For a $A \in \IR^{m \times n}$, we use $A_{i,*}$ to denote the $i$-th row of the matrix $A$ and $A_{*,j}$ to denote the $j$-th column. We denote the sub-matrix of $A$ on rows in $S \subseteq [m]$ and $T \subseteq [n]$ 
by $A_{S,T}$, and similarly use $A_{S,*}$ and $A_{*,T}$ to only limited the rows or and columns to $S$ and $T$, respectively.

\subsubsection*{Permutations}
We use $S_m$ to denote the symmetric group of permutations over the set $[m]$.
We use $\sigmaId(m) \in S_m$ to denote the identity permutation on $[m]$. For any permutation matrix $\Sigma \in \paren{S_b}^{t \times r}$, we use $\sigma_{i,j} \in S_b$ to denote the permutation at row $i$ and column $j$ for $i \in [t], j \in [r]$.

\subsection{Bipartite \rs-Graphs}\label{sec:rs}

Let $G=(V,E)$ be an undirected graph, and $M \subseteq E$ be a matching in $G$. We say that $M$ is an \textbf{induced matching} iff the subgraph of $G$ induced on the vertices of $M$ is the matching $M$ itself; in other words, 
there are no other edges between the vertices of this matching. 

\begin{Definition}[{\rs Graphs~\cite{RuzsaS78}}]\label{def:rs}
	For any integers $r,t \geq 1$, a \emph{bipartite} graph $\Grs=(\Lrs \cup \Rrs,\Ers)$ is called a bipartite \textnormal{\textbf{$(r,t)$-\rs graph}} (RS graph for short) iff its edge-set $E$ can be partitioned into $t$ \emph{induced} matchings $\Mrs{1},\ldots, \Mrs{t}$, each of size $r$.
\end{Definition}

\begin{figure}[h!]
	\centering
	\input{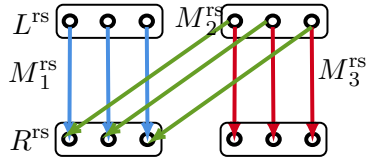}\caption{An illustration of simple RS-Graph construction.}\label{fig:RSbasic}
\end{figure}

It is easy to construct bipartite $(r,t)$-RS graphs with $n$ vertices on each side and parameter $t=(n/r)^2$ for any $r \geq 1$ (see \Cref{fig:RSbasic}). More interestingly, and quite surprisingly, 
one can create much denser RS graphs as well; for instance, the original construction of these graphs due to Ruzsa and Szemer\'edi~\cite{RuzsaS78} create graphs with $r=n/2^{\Theta(\sqrt{\log{n}})}$ and $t=\Omega(n)$. 
Yet another construction is that of~\cite{FischerLNRRS02,GoelKK12} that shows that already for $r < n/2$, we can have $t=n^{\Omega(1/\log\log{n})}$. 
See~\cite{BirkLM93,FischerLNRRS02,Alon02,TaoV06,AlonS06,AlonMS12,GoelKK12,FoxHS15,AssadiB19,KapralovKTY21} for various other constructions and applications of RS graphs. At the same time, 
proving upper bounds on the density of RS graphs turned out to be a  notoriously difficult  question (see, e.g.~\cite{Gowers01,FoxHS15,ConlonlF13}) and the best known upper bounds, even 
for $r=\Omega(n)$, only imply that $t < n/2^{O(\log^*{n})}$~\cite{Fox11,FoxHS15}. 

A line of work initiated by Goel, Kapralov, and Khanna~\cite{GoelKK12} have used different constructions of RS graphs to prove lower bounds for graph streaming algorithms~\cite{GoelKK12,Kapralov13,Konrad15,AssadiKLY16,AssadiKL17,CormodeDK19,AssadiR20,Kapralov21,AssadiB21a,ChenKPSSY21,KonradN21,Assadi22} (very recently, they have also been used in~\cite{AssadiBKL23} for 
providing graph streaming \emph{algorithms} for the matching problem).

\paragraph{Notation for RS-Graphs.}
The number of vertices in one partition of the bipartite graph $\Grs = (\Lrs \cup \Rrs, \Ers)$ is denoted by $\nrs$ (the total number of vertices is $2\nrs$). The two sets $\Lrs$ and $\Rrs$ are both identified by the set $[\nrs]$. For each matching $\Mrs{i}$ for $i \in [t]$, the edges in the matching are identified by the set $[r]$.  We also use $e$ to iterate over the edges in $\Ers$, or point to any arbitrary edge in $\Ers$. Given any $e \in \Ers$, we use $\leftRS{e} \in [\nrs]$ and $\rightRS{e} \in [\nrs]$ to denote the vertices in $\Lrs$ and $\Rrs$ that edge $e$ is incident on, respectively.

For any $(r,t)$-\rs graph $\Grs$, we use $\alpha, \beta \in [0,1]$ to denote the following parameters.
\begin{align}
	\alpha := \frac{r}{\nrs} \quad \beta := \frac{\log{(t)}}{\log{(\nrs)}}.\label{eq:rs-parameters}
\end{align}
That is, $\Grs$ has $\paren{\nrs}^{\beta}$ induced matchings of size $\alpha\cdot \nrs$ each. 

\subsection{Sorting Networks with Large Comparators}\label{sec:comprators} 

Following~\cite{ChenKPSSY21}, we  use sorting networks (see, e.g.~\cite[Section 5.3.4]{knuth1997art}) in our constructions. The key difference is that we need sorting networks that work with larger comparators, typically called \emph{$b$-sorters} (i.e., 
each sorter can sort $b > 2$ wires in one step as a primitive). We refer the reader to~\cite{ParkerP89,chvatal1992lecture} for more information on sorting networks with $b$-sorters (see \Cref{fig:large-sorters}). 

\begin{figure}[h!]
	\centering
	\input{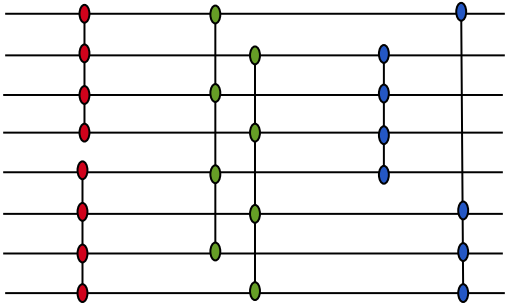}\caption{A sorting network with $b = 4$ size sorters, $m = 8$ wires, and depth $d= 3$.}
\label{fig:large-sorters}
\end{figure}

We shall 
use the following result from~\cite{chvatal1992lecture} that is a generalization of the celebrated \emph{AKS sorter} of~\cite{AjtaiKS83} with  $O(\log{n})$ depth to networks with $b$-sorters with $O(\log_b{n})$ depth instead\footnote{Although we note that given the range of parameters
	used in our paper, namely, $b=n^{\Omega(1)}$, we can alternatively use the much simpler $O(\log^2_b\!{(n)})$-depth construction of~\cite{ParkerP89} as well to the same effect; see~\Cref{app:sorting-stuff}.}. We present this result 
directly in the language we shall use later in this paper, but the equivalence in terms of sorting networks is straightforward. 

\begin{Definition}\label{def:simple-perm}
	Given an equipartition $\cP$ of the set $[m]$ into $r$ groups $P_1, P_2, \ldots, P_r$ of size $b = m/r$ each, a permutation $\sigma \in S_m$ is said to be \textbf{simple} on partition $\cP$ if for each group $P_j$ with $j \in [r]$, for any element $i \in [m]$ belonging to group $P_j$, we have that $\sigma(i) \in P_j$ also. 
\end{Definition}

Informally, a simple permutation permutes elements only inside the groups in the partition $\cP$.

\begin{proposition}[\!\!\cite{chvatal1992lecture}; see also~\cite{AjtaiKS83}]\label{prop:AKSsorting}
	There exists an absolute constant $\caks > 0$ such that the following is true. For every pair of integers $r,b \geq 1$ and $m = r \cdot b$, there exists 
	\[
	\daks = \daks(r,b) = \caks \cdot \log_b(m)
	\]
	fixed equipartitions of $[m]$ into $\cP_1,\ldots,\cP_{\daks}$, each one consisting of $r$ sets of size $b$, with the following property. 
	Given any permutation $\sigma \in S_m$, there are $\daks$ permutations $\gamma_1,\ldots,\gamma_{\daks}$ where for every $i \in [\daks]$, $\gamma_i$ is simple on partition $\cP_i$ so that we have
	$
	\sigma = \gamma_1 \circ \cdots \circ \gamma_{\daks}. 
	$
\end{proposition}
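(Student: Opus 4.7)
The plan is to obtain \Cref{prop:AKSsorting} by invoking the $b$-sorter sorting network of Chv\'atal~\cite{chvatal1992lecture} (which extends the classical $2$-sorter AKS construction~\cite{AjtaiKS83} to wider comparators) and reinterpreting its action on wires as a composition of simple permutations. Recall that a \emph{$b$-sorter} is a gadget that takes $b$ wires as input and outputs them in sorted order. A sorting network on $m$ wires using $b$-sorters is a sequence of layers $1, 2, \ldots, \daks$, where at each layer the $m$ wires are partitioned into $r = m/b$ disjoint groups of $b$ wires, and each group is fed into a single $b$-sorter; the guarantee is that for every initial assignment of values to the $m$ wires, after executing all the layers the values appear in sorted order on the wires. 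Chv\'atal's result gives such a network with depth $\daks = \caks \cdot \log_b{m}$ for an absolute constant $\caks$.

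The first step is to define the partitions $\cP_1, \ldots, \cP_{\daks}$ directly from this network: let $\cP_\ell$ be the equipartition of $[m]$ whose parts are precisely the $r$ groups of $b$ wires that are fed into the $r$ individual $b$-sorters at layer $\ell$. Crucially, these partitions are a property of the network architecture only, and do not depend on any particular input or on $\sigma$, which is exactly what the statement requires.

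The second step is to realize an arbitrary $\sigma \in S_m$ as a composition of simple permutations on these partitions. Given $\sigma$, I will run the network on the input that places the value $\sigma^{-1}(i)$ on wire $i$. Since the $m$ values $\{1, \ldots, m\}$ occur exactly once each, the sorting guarantee implies that after executing the network, wire $j$ carries the value $j$. Thus the value originally on wire $i$, namely $\sigma^{-1}(i)$, ends up on wire $\sigma^{-1}(i)$; equivalently, the overall permutation of wires induced by the network on this input is exactly $\sigma$. Now let $\gamma_\ell \in S_m$ be the permutation on wires induced by layer $\ell$ alone (i.e., the permutation that each $b$-sorter applies to its $b$ inputs, assembled across the $r$ sorters of that layer). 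Since each $b$-sorter only permutes the $b$ wires in its group and does not affect wires outside the group, $\gamma_\ell$ maps each part of $\cP_\ell$ to itself, and is therefore simple on $\cP_\ell$ in the sense of \Cref{def:simple-perm}. Composing across layers yields $\sigma = \gamma_1 \circ \cdots \circ \gamma_{\daks}$ (up to relabeling of the composition order, which only requires reindexing the partitions in reverse).

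There is no genuine obstacle here: the only conceptual point to check is the separation between the \emph{input-independent} data (the partitions $\cP_\ell$, determined by the wiring of the network) and the \emph{input-dependent} data (the simple permutations $\gamma_\ell$, determined by how each $b$-sorter reorders its incoming values). If one preferred a self-contained route that avoids the depth-optimal construction, the simpler $O(\log_b^2{m})$-depth construction of Parker--Parberry~\cite{ParkerP89} (for $b = m^{\Omega(1)}$) suffices for our target depth $O(\log_b{m})$ in the regimes of parameters used elsewhere in the paper, but the cleanest route is simply to cite~\cite{chvatal1992lecture} and perform the translation above.
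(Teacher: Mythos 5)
Your proposal is correct and takes essentially the same route as the paper: the paper also proves this proposition by citing~\cite{chvatal1992lecture} and performing the straightforward dictionary between $b$-sorter networks and compositions of simple permutations on fixed, input-independent partitions, with the Parker--Parberry $O(\log_b^2 m)$-depth construction noted as a self-contained (but weaker) fallback in the appendix. The only place to be slightly careful, which you gesture at, is the composition direction: depending on whether one views each layer as a ``push'' or ``pull'' permutation of wires, the factorization comes out as $\sigma$ directly or as $\sigma^{-1}$ with the layers in reversed order, and in the latter case one fixes it by taking inverses of each $\gamma_\ell$ (which are again simple on the same $\cP_\ell$) rather than by reordering the partitions.
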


To interpret this result in terms of sorting networks on $m$ wires, think of each $\cP_i$ for $i \in [\daks]$ as one layer of the sorting network and each $b$-sorter in this layer as one component of $\cP_i$ (see \Cref{fig:large-sorters}). We have provided a proof of a weaker version of \Cref{prop:AKSsorting} when $b = 2$ with a larger depth $\daks = O(\log ^2(m))$ for intuition in \Cref{app:sorting-stuff}. The proof of \Cref{prop:AKSsorting} with $\daks = \caks \log_b(m)$ and $b > 2$ is quite involved, and the details can be found in \cite{chvatal1992lecture}. 

\subsubsection*{Simple Permutations and Permutation Vectors}

In \Cref{sec:lb-multi-hph}, we work with permutation vectors from $\paren{S_b}^{r/2}$ with $r = 2m/b$ and in \Cref{sec:one-pass,sec:multi}, we predominantly work with simple permutations. We will establish the bijection between the two for any fixed partition here. 
Let equipartition $\lexpart$ of $[m]$ be the partition that splits the elements lexicographically into groups of size $b$. That is, for each $i \in [m/b]$:
\begin{align*}
	\lexpart_i = \set{j \mid (i-1) \cdot b+1 \leq j \leq i \cdot b}.
\end{align*} Let $\Ssim_m \subset S_m$ be the set of all permutations which are simple on partition $\lexpart$.

\begin{observation}\label{clm:simple-perm-to-perm-vec}
	There is a bijection $\tovec: \Ssim_m \rightarrow \paren{S_b}^{m/b}$. 
\end{observation}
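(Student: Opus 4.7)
The plan is to exhibit the bijection explicitly via group-wise restriction and then verify it is a bijection by constructing the inverse. Given any simple permutation $\sigma \in \Ssim_m$, by definition $\sigma$ maps each block $\lexpart_i = \{(i-1)\cdot b+1, \ldots, i \cdot b\}$ to itself. I will define $\tovec(\sigma) := (\pi_1, \pi_2, \ldots, \pi_{m/b}) \in (S_b)^{m/b}$ by setting, for each $i \in [m/b]$ and each $j \in [b]$,
\[
\pi_i(j) := \sigma\!\paren{(i-1)\cdot b + j} - (i-1)\cdot b,
\]
i.e., we identify each block $\lexpart_i$ with $[b]$ via the obvious shift and read off the restriction of $\sigma$ to that block.

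For the inverse direction, given any $(\pi_1, \ldots, \pi_{m/b}) \in (S_b)^{m/b}$, define $\joinit(\pi_1, \ldots, \pi_{m/b}) \in S_m$ by
\[
\joinit(\pi_1, \ldots, \pi_{m/b})\!\paren{(i-1)\cdot b + j} := (i-1)\cdot b + \pi_i(j)
\]
for every $i \in [m/b]$ and $j \in [b]$. This is a well-defined permutation of $[m]$ since each $\pi_i$ is a permutation of $[b]$ and distinct blocks have disjoint images; moreover, by construction it is simple on $\lexpart$, hence lies in $\Ssim_m$.

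A direct check then shows that $\joinit \circ \tovec$ and $\tovec \circ \joinit$ are the identity maps on $\Ssim_m$ and $(S_b)^{m/b}$ respectively: in both compositions the block index $i$ is preserved and the intra-block action agrees after applying the shift and its inverse. This establishes that $\tovec$ is a bijection. I do not anticipate any obstacles here; the only care needed is to keep track of the shift $(i-1)\cdot b$ correctly so that both maps land in the proper sets. As a sanity check, one may also note $|\Ssim_m| = (b!)^{m/b} = |(S_b)^{m/b}|$, which is consistent with the existence of such a bijection.
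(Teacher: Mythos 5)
Your proposal is correct and takes essentially the same approach as the paper: it defines $\tovec$ by the same shift-by-$(i-1)b$ restriction to each lexicographic block, and the inverse $\joinit$ you construct matches what the paper later names $\joinit$. You verify the bijection slightly more explicitly than the paper, but the argument is the same.
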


\begin{proof}
	We know that partition $\lexpart$ lexicographically groups the elements of $[m]$ into groups of size $b$. For any $\rho \in \Ssim_m$, we define $\tovec(\rho)$ as follows. For $i \in [m/b]$, let $\gamma_i \in S_b$ be,
	\begin{align*}
		\gamma_i(j) = \rho((i-1)b+j) - (i-1)b
	\end{align*}
	for each $j \in [b]$. As $\rho((i-1)b+j)$ belongs to the same partition as $(i-1)b+j$ for $j \in [b]$, $\gamma_i$ is a permutation in $S_b$. Then, $\tovec(\rho) = (\gamma_1, \gamma_2, \ldots, \gamma_{m/b})$.
\end{proof}

Note that \Cref{clm:simple-perm-to-perm-vec} is applicable for any set of simple permutations for a fixed partition, but we will work mainly with $\lexpart$. 

\paragraph{Notation.} For any permutation $\rho \in \Ssim_m$, we use $\tovec(\rho)$ to denote the corresponding element from $\paren{S_b}^{m/b}$. For any permutation vector $\Gamma = (\gamma_1, \gamma_2, \ldots, \gamma_{m/b}) \in \paren{S_b}^{m/b}$, we use $\joinit(\Gamma)$ to denote the corresponding element of $\Ssim_m$, as the tuple is combined into a single larger permutation. (This is just the inverse of bijection $\tovec$ from \Cref{clm:simple-perm-to-perm-vec}.)

\subsection{Streaming Algorithms}\label{sec:streaming} 

For the purpose of our lower bounds, we shall work with a {more powerful} model than what is typically considered when designing streaming algorithms (this is the common approach when proving streaming
lower bounds; see, e.g.~\cite{GuhaM08,LiNW14,BravermanGW20}). In particular, we shall define streaming algorithms 
similar to branching programs as follows and then point out the subtle differences with what one typically expect of a streaming algorithm.  

\begin{Definition}[\textbf{Streaming algorithms}]\label{def:streaming}
	For any integers $m,p,s \geq 1$, we define a $p$-pass $s$-space streaming algorithm $A$ working on a length-$m$ stream $\sigma=(\sigma_1,\ldots,\sigma_m)$ with entries from a universe $U$
	as follows: 
	\begin{enumerate}[label=$(\roman*)$]
		\item The algorithm $A$ has access to a read-only tape of uniform random bits $r$ from a finite, but arbitrarily large range $R$ without having to pay for the cost of storing these bits.
		\item There is a function $f_A : \set{0,1}^{s} \times U \times R  \rightarrow \set{0,1}^s$ that updates the state of the algorithm as follows. The algorithm starts with the state $S:=0^{s}$, 
		and for $i \in [m]$, whenever $A$ reads $\sigma_i$ in the stream during its $p$ passes, it updates its current state $S$ to
		$S \leftarrow f_A(S,\sigma_i,r)$ (the algorithm is computationally unbounded when computing its next state). 
		\item At the end of the last pass, the algorithm outputs the answer as a function of its state $S$ and random bits $r$. 
	\end{enumerate}
	(We note that this model is non-uniform and is defined for each choice of $m,p,s$ individually.)
\end{Definition}

Let us  point out two main differences with what one may expect of streaming algorithms.  Firstly, we allow our streaming algorithms to do an unbounded amount of work using an unbounded amount of space \emph{between} the
arrival of each stream element; we only bound the space in transition between two elements. Secondly, we do not charge the streaming algorithms
for storing random bits. 

Clearly, any lower bound proven for streaming algorithms in~\Cref{def:streaming} will hold also for more restrictive (and ``algorithmic'') definitions of streaming algorithms. 
We shall note that however almost all streaming lower bounds we are aware of directly 
work with this definition and thus we claim no strengthening in proving our lower bounds under this definition; rather, we merely use this formalism to carry out the reductions in our arguments formally. 

For any $p$-pass algorithm $A$, $q \in [p]$, and input distribution $\dist$, let $\mem{q}{A}(\dist)$ denote the memory state of $A$ after $q$ passes plus  the content of its random tape on an input sampled from $\dist$.
Note that $\mem{q}{A}(\dist)$ is a random variable depending on randomness of $A$ as well as $\dist$. 

\begin{Definition}[$\delta$-Indistinguishable Distributions]
	For any $\delta \in [0,1]$, the two distributions $\mu,\nu$ are said to be $\delta$-indistinguishable for $p$-pass $s$-space streaming algorithms if for every such algorithm $A$, 
	\[
	\tvd{\mem{p}{A}(\mu)}{\mem{p}{A}(\nu)} \leq \delta.
	\]
\end{Definition}

Finally, we have the following standard hybrid argument for multi-pass streaming algorithms (see, e.g.~\cite{ChenKPSSY21,AssadiN21}). We present its short proof for completeness.

\begin{proposition}[c.f.~\cite{ChenKPSSY21}]\label{prop:hybrid-arg}
	Let $\ell$ be a positive integer and $\delta \in \IR^{\ell}_{\geq 0}$ denote $\ell$ parameters. Let $(\mu_1, \nu_1), (\mu_2, \nu_2) , \ldots, (\mu_{\ell}, \nu_{\ell})$ be $\ell$ pairs of distributions, such that
	for every $i \in [\ell]$, $\mu_i$ and $\nu_i$ are $\delta_i$-indistinguishable for $p$-pass $s$-space streaming algorithms. Then, $\mu := (\mu_1,\ldots,\mu_\ell)$ and $\nu := (\nu_1,\ldots,\nu_{\ell})$ are $\norm{\delta}_1$-indistinguishable 
	for $p$-pass $s$-space streaming. 
\end{proposition}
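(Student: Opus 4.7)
The plan is a standard hybrid argument. I would introduce $\ell+1$ hybrid distributions $H_0, H_1, \ldots, H_\ell$ defined by
\[
H_i := (\mu_1, \ldots, \mu_i, \nu_{i+1}, \ldots, \nu_\ell),
\]
so that $H_0 = \nu$ and $H_\ell = \mu$. By the triangle inequality for total variation distance, it suffices to show that for each $i \in [\ell]$,
\[
\tvd{\mem{p}{A}(H_{i-1})}{\mem{p}{A}(H_i)} \leq \delta_i,
\]
since summing over $i$ then yields the desired bound $\norm{\delta}_1$.

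For the key step, fix an index $i \in [\ell]$. Let $X_{-i} := (X_1, \ldots, X_{i-1}, X_{i+1}, \ldots, X_\ell)$ be distributed according to $\mu_1 \times \cdots \times \mu_{i-1} \times \nu_{i+1} \times \cdots \times \nu_\ell$. For any fixed realization $x_{-i}$ of $X_{-i}$, I define a new streaming algorithm $A_{x_{-i}}$ on a single input block $y$: the algorithm simulates $A$ on the concatenated stream $(x_{<i}, y, x_{>i})$ using the same random tape as $A$. Since $A$ uses $s$ space and makes $p$ passes, $A_{x_{-i}}$ is still a $p$-pass $s$-space streaming algorithm in the sense of \Cref{def:streaming}: hardcoding the other blocks does not change the transition function's output space, and the number of passes over $y$ is at most $p$. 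Crucially, by construction, $\mem{p}{A_{x_{-i}}}(\cdot) = \mem{p}{A}(x_{<i}, \cdot, x_{>i})$ as distributions. Applying the hypothesis that $\mu_i$ and $\nu_i$ are $\delta_i$-indistinguishable for any $p$-pass $s$-space algorithm to $A_{x_{-i}}$ gives
\[
\tvd{\mem{p}{A_{x_{-i}}}(\nu_i)}{\mem{p}{A_{x_{-i}}}(\mu_i)} \leq \delta_i.
\]

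To finish, I would observe that $\mem{p}{A}(H_{i-1})$ is exactly the mixture $\Exp_{x_{-i}}[\mem{p}{A_{x_{-i}}}(\nu_i)]$ and likewise $\mem{p}{A}(H_i) = \Exp_{x_{-i}}[\mem{p}{A_{x_{-i}}}(\mu_i)]$, since the blocks of $H_{i-1}$ and $H_i$ are independent across coordinates. By convexity of total variation distance under mixtures,
\[
\tvd{\mem{p}{A}(H_{i-1})}{\mem{p}{A}(H_i)} \leq \Exp_{x_{-i}}\!\bracket{\tvd{\mem{p}{A_{x_{-i}}}(\nu_i)}{\mem{p}{A_{x_{-i}}}(\mu_i)}} \leq \delta_i,
\]
as required. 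The only subtle point is checking that ``hardcoding'' $x_{<i}$ and $x_{>i}$ does not violate the $s$-space bound on $A_{x_{-i}}$. This is immediate because $A_{x_{-i}}$ only needs to track the state of $A$ during its passes over $y$; the symbols $x_{<i}$ and $x_{>i}$ are fed into $A$'s transition function as external input and are never stored. Thus, no step of the argument departs from the standard hybrid template, and this is the entire proof.
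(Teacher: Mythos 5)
Your proof is correct and follows essentially the same hybrid argument as the paper. The only cosmetic difference is that you fix the other coordinates $x_{-i}$ and then average via convexity of total variation distance, whereas the paper's reduction has the simulating algorithm $B$ sample those coordinates internally (which is permitted free of charge under~\Cref{def:streaming}); these are two standard and equivalent ways of packaging the same step.
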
 
\begin{proof}
	For any $i \in [\ell]$, define the hybrid distribution: 
	\[
	h_i := (\nu_1,\ldots,\nu_{i},\mu_{i+1},\ldots,\mu_{\ell}). 
	\]
	This way, $h_0 = \mu$ and $h_{\ell} = \nu$. Consider any $p$-pass $s$-space streaming algorithm $A$. We prove that for every $i \in [\ell]$, 
	\[
	\tvd{\mem{p}{A}(h_{i-1})}{\mem{p}{A}(h_{i})} \leq \delta_i,
	\]
	by turning $A$ into a $p$-pass $s$-space streaming algorithm $B$ for distinguishing between $\mu_i$ and $\nu_i$. Algorithm $B$, given a stream $\sigma$ from either $\mu_i$ or $\nu_i$, is defined as follows: 
	\begin{enumerate}[label=$(\roman*)$]
		\item Sample the inputs $\sigma_1,\ldots,\sigma_{i-1} \sim \nu_1,\ldots,\nu_{i-1}$ and $\sigma_{i+1},\ldots,\sigma_{\ell} \sim \mu_{i+1},\ldots,\mu_{\ell}$. This sampling is free of charge for the algorithm $B$ by~\Cref{def:streaming}. 
		\item Run $A$ on the input $(\sigma_1,\ldots,\sigma_{i-1},\sigma,\sigma_{i+1},\ldots,\sigma_{\ell})$ in $p$ passes and $s$ space. 
	\end{enumerate}
	We thus have, 
	\[
	\tvd{\mem{p}{A}(h_{i-1})}{\mem{p}{A}(h_i)} = \tvd{\mem{p}{B}(\mu_i)}{\mem{p}{B}(\nu_i)} \leq \delta_i,
	\]
	by the $\eps_i$-indistinguishability of $\mu_i$ and $\nu_i$ for $p$-pass $s$-space streaming algorithm $B$.  
	The final result now follows from triangle inequality. 
\end{proof}


\section{The Multi Hidden Permutation Hypermatching Problem}\label{sec:lb-multi-hph}

We prove~\Cref{thm:mph} on the communication cost of $\IndexHPH$  in this section. For the convenience of the reader, we restate the definition of this communication game
and our theorem below.

\begin{Problem*}[Restatement of \Cref{prob:mph}]
	For any integers $r,t,b,k \geq 1$, $\IndexHPH_{r,t,b,k}$ is a distributional $(k+1)$-communication game, consisting of $k$ players plus a referee, defined as: 	
	\begin{enumerate}[label=$(\roman*)$]
		\item Let $\GammaY, \GammaN \in \paren{S_b}^{r/2}$ be two arbitrary tuples of permutations known to all $(k+1)$ parties. We refer to $\GammaY$ and $\GammaN$ as the \textbf{target tuples}. 
		\item We have $k$ players $\QP{1},\ldots,\QP{k}$ and for all $i \in [k]$ the $i$-th player is given a permutation matrix $\SigmaP{i} \in (S_b)^{t \times r}$ chosen independently and uniformly at random. 
	
		\item The referee receives $k$ indices $L := (\ell_{1},\ldots,\ell_{k})$ each picked uniformly and independently from $[t]$ and a hypermatching $\cM \subseteq [r]^k$ with $r/2$ hyperedges picked uniformly. 	
			
		 Additionally, let the permutation vector $\Gammastar = \paren{\gammastar_1, \gammastar_2, \ldots, \gammastar_{r/2}} \in \paren{S_b}^{r/2}$ be defined as,
		\[
			\forall a \in [r/2] \hspace{2mm} \gammastar_a := \sigmaP{1}_{\ell_1, \cM_{a,1}} \circ \cdots \circ \sigmaP{k}_{\ell_k,\cM_{a,k}},
		\]
		where $\cM_{a,i}$ for $a \in [r/2]$ and $i \in [k]$ refers to the $i$-th vertex of the $a$-th hyperedge in $\cM$. 
		
		 The referee is also given another permutation vector $\Gamma = (\gamma_1, \gamma_2, \ldots, \gamma_{r/2}) $ sampled from $\paren{S_b}^{r/2}$ conditioned on  $\Gammastar \conc \Gamma$  being equal to either $\GammaY$ or  $\GammaN$. 
	
	\end{enumerate}
	\noindent
	The players $\QP{1},\ldots,\QP{k}$ can communicate with each other by writing on a shared board visible to all parties with possible back and forth and in no fixed order (the players' messages are functions of their inputs and the board). 
	At the end of the players' communication, the referee can use all these messages plus its input and outputs whether $\Gammastar \conc \Gamma$ is $\GammaY$ or $\GammaN$. 
\end{Problem*}

\begin{theorem*}[Restatement of \Cref{thm:mph}]
	For any $t \geq 1$, $b \geq 2$, and sufficiently large $r, k \geq 1$, any communication protocol for $\IndexHPH_{r,t,b,k}$, for any choice of target tuples, with at most $s$ bits of total 
	communication for $s$ satisfying
	\[
	k \cdot \log{(r \cdot t)} \leq s \leq 10^{-3} \cdot (r \cdot t)
	\]
	can only succeed with probability at most 
	\[
		\frac{1}{2} + r \cdot O\!\paren{\frac{s}{r \cdot t}}^{k/32}.
	\]
\end{theorem*}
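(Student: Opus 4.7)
The plan is to follow the two-step reduction sketched in the overview: first a hybrid/conditioning argument reduces the problem to a single-hyperedge ``concatenation lemma,'' and then that lemma is proved via a combination of information-theoretic decomposition and Fourier analysis on the symmetric group. For the hybrid step, I would show that for a uniformly random single hyperedge $e = (v_1,\ldots,v_k) \in [r]^k$ and indices $L = (\ell_1,\ldots,\ell_k) \in [t]^k$, the conditional distribution of $\gamma^\ast := \sigmaP{1}_{\ell_1, v_1} \conc \cdots \conc \sigmaP{k}_{\ell_k, v_k}$ given the transcript $\Prot$ is exponentially-in-$k$ close to uniform on $S_b$ in total variation. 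Summing closeness over the $r/2$ hyperedges of $\cM$ via a hybrid argument (as in \cite{GavinskyKKRW07,VerbinY11,AssadiN21}) accounts for the factor of $r$ in the final bound, and the referee's distinguishing advantage between $\GammaY$ and $\GammaN$ is bounded by this TVD.

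To set up the concatenation lemma, I would first use the chain rule of KL-divergence and the fact that the transcript carries at most $s$ bits to derive an average per-coordinate bound
\[
\Exp_{\Prot, L, e}\bracket{\kl{\sigmaP{i}_{\ell_i, v_i} \mid \Prot, L, e}{\cU_{S_b}}} = O\!\paren{\frac{s}{k \cdot r \cdot t}}.
\]
The subtle point here is that shared-board communication can correlate the players' inputs through $\Prot$; however, the rectangle property of protocols guarantees that conditioned on $\Prot$ the joint input distribution factorizes across the $k$ players, so the per-factor KL bound remains meaningful. Next, I would invoke the KL-vs-$\ell_1/\ell_2$ inequality of Chakrabarty-Kolla to decompose the support of each $(\sigmaP{i}_{\ell_i, v_i} \mid \Prot, L, e)$ into a ``rare'' event $A_i$ of probability at most $(s/(rt))^{1/4}$ and a ``good'' event $B_i$ on which the conditional distribution is $\ell_2$-close to uniform at distance $(s/(rt))^{\Theta(1)}/\sqrt{b!}$. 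A union/concentration argument over the $k$ coordinates then shows that with probability $1 - O(s/(rt))^{k/16}$ over $(\Prot, L, e)$, at least $k/2$ of the coordinates fall into their good parts; I condition on this event.

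The final step is Fourier-analytic: concatenation of permutations in $S_b$ corresponds to convolution of distributions on $S_b$, and by the convolution theorem, Fourier coefficients (indexed by irreducible representations of $S_b$) of the concatenation factor as products of the individual Fourier coefficients. Plancherel's identity then bounds the $\ell_2$-distance of the concatenation to uniform by the product over the good factors of their $\ell_2$-distances to uniform, giving a bound of roughly $(s/(rt))^{\Theta(k)}$; converting this $\ell_2$-bound to TVD costs only a $\sqrt{b!}$ factor, absorbed into the constants. The main obstacle I anticipate is carefully managing correlations between players' inputs in the shared-board model when lifting the average-case per-coordinate KL bound to a simultaneous bound on many coordinates: a direct-product argument is required to show that a single short transcript cannot bias many coordinates at once. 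A secondary challenge is that the ``Multi'' structure of the problem (with $t$ hidden indices unknown to the players) must be analyzed directly throughout, rather than reduced to single-instance HPH, since generic information-complexity-based direct-sum results fail in the low-advantage $1/\poly(r)$ regime we need.
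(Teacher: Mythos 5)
Your proposal is correct and takes essentially the same route as the paper: reduce via TVD chain rule to a single hyperedge, apply the rectangle property for conditional independence, use the Chakrabarty--Kolla KL-vs-$\ell_1/\ell_2$ inequality to split each coordinate into a rare event and a near-uniform-in-$\ell_2$ event, argue via a direct-product bound that at least $\Theta(k)$ coordinates land in the good event simultaneously, and finish with the convolution theorem and Plancherel's identity on $S_b$. The paper adds one technical device you do not name explicitly --- over-conditioning on $\Sigma_{*,\mathcal{M}_{<a}}$ rather than on $\Gamma^*_{<a}$ in the hybrid step, to decouple the dependence on $L$ before the per-coordinate KL analysis --- but you correctly identify the two genuine obstacles (correlation through the shared transcript, and the need to analyze the Multi structure directly because information-complexity direct sums fail at $1/\mathrm{poly}(r)$ advantage), which is where the real work lies.
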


When going through the proof, we shall condition on several probabilistic events and properties at different points, and from thereon continue all 
our analysis conditioned on them. We will mark each of these steps clearly in the corresponding subsection as a ``conditioning step'' (typically as the conclusion of the subsection), in 
order to help the reader keep track of them. 





\renewcommand{\ProtP}[1]{\Prot^{(#1)}}

\subsection{Part One: Setup and the Basic Problem}\label{sec:part-one-setup} 

We begin our proof of~\Cref{thm:mph} in this subsection by defining the ``basic'' problem we will need to focus on to prove this result. 
This will set up the stage for the main parts of the proof in the subsequent subsections. We start with some notation. 

\paragraph{Notation.} We fix $t \geq 1$, $b \geq 2$, and sufficiently large choice of integers $r,k \geq 1$ and consider the $\IndexHPH_{r,t,b,k}$ problem (henceforth, denoted simply by $\IndexHPH$) with any two arbitrarily target tuples $\GammaY$ and $\GammaN$. 
We further define
\[
\Sigma := (\SigmaP{1},\ldots,\SigmaP{k}),
\]
to denote the input to all players $\QP{1},\ldots,\QP{k}$. 

From now on, fix a \textbf{deterministic} protocol $\prot$ for $\IndexHPH$ and let $\Prot=(\Prot_1,\ldots,\Prot_k)$ denote the transcript 
of the protocol, where $\Prot_i$ is the messages communicated the player $\QP{i}$ for $i \in [k]$. 
 As per the theorem statement, we assume that $\prot$ communicates $s$ bits in total. 

Finally, throughout this section, we use sans-serif letters to denote the random variables for corresponding objects, e.g., $\rSigma$ and $\rProt$ for the input $\Sigma$ and messages $\Prot$ of players, and $\rL,\rcM,\rGamma$ for the input of referee $(L,\cM,\Gamma)$. 
Moreover, we may use random variables and their distributions interchangeably when the meaning is clear from the context.

\subsubsection{Removing the Role of $\GammaY,\GammaN$, and $\Gamma$} 

We are now ready to start the proof. We can interpret the goal of the referee in $\IndexHPH$  as follows: Given $(\Prot,L,\cM)$, the final input of the referee is the tuple of permutations $\Gamma$ chosen uniformly at random from one of the following 
two distributions: 
\[
	({\rGammastar}^{-1} \circ \GammaY \mid \Prot, L, \cM) \qquad \text{or} \qquad ({\rGammastar}^{-1} \circ \GammaN \mid \Prot, L,\cM); 
\]
here, and throughout, ${\Gammastar}^{-1}$ is interpreted as taking the inverse of each permutation in the tuple. 

Thus, given one sample $\Gamma$ from a uniform mixture of the above two distributions, the referee needs to determine which distribution $\Gamma$ was sampled from. This, together with~\Cref{fact:tvd-sample} (on success probability of distinguishing distributions with one sample), implies that
the probability of success of the referee is equal to: 
\begin{align}
	\Pr_{\Prot,L,\cM,\Gamma}\paren{\text{$\pi$ succeeds}} = \frac12 + \frac12 \cdot \!\!\Exp_{\Prot,L,\cM} \tvd{({\rGammastar}^{-1} \circ \GammaY \mid \Prot, L, \cM)}{({\rGammastar}^{-1} \circ \GammaN \mid \Prot, L,\cM)}. \label{eq:pi-success-tvd1}
\end{align}

Our plan is to bound the RHS of~\Cref{eq:pi-success-tvd1}. To do so, we are going to do this indirectly by proving that both distributions considered in this equation are quite close to the uniform distribution. 
Another simple step also allows us to entirely ignore the choice of $\GammaY$ and $\GammaN$ and simply consider the distribution of $\Gammastar$ itself. These parts are captured in the following claim. 

\begin{claim}\label{clm:get-rid-of-GammaYN}
	For any choice of $\Prot,L,\cM$, 
	\[
		\tvd{({\rGammastar}^{-1} \circ \GammaY \mid \Prot, L, \cM)}{({\rGammastar}^{-1} \circ \GammaN \mid \Prot, L,\cM)} \leq 2 \cdot \tvd{(\rGammastar \mid \Prot, L, \cM)}{\cU_{S_b^{r/2}}}, 
	\]
	where $\cU_{S_b^{r/2}}$ is the uniform distribution over tuples in $S_b^{r/2}$. 
\end{claim}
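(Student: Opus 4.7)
The plan is to prove this by triangle inequality, using the uniform distribution on $S_b^{r/2}$ as an intermediate point, together with the fact that total variation distance is invariant under bijections and that the uniform distribution on a finite set is invariant under the group action of that set.

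First I would fix any conditioning $(\Prot, L, \cM)$ and work with the conditional distribution of $\rGammastar$ on one side and the uniform distribution $\cU := \cU_{S_b^{r/2}}$ on the other. The key observation is that for any fixed tuple $\Lambda \in S_b^{r/2}$, the map $f_\Lambda : S_b^{r/2} \to S_b^{r/2}$ defined coordinatewise by $f_\Lambda(\gamma) := \gamma^{-1} \circ \Lambda$ is a bijection; its inverse sends $\eta$ to $(\Lambda \circ \eta^{-1})$ componentwise. Two standard facts about TVD will then do all the work: (i) TVD is invariant under pushforward by a bijection, i.e.\ $\tvd{f_\Lambda(X)}{f_\Lambda(Y)} = \tvd{X}{Y}$ for any two random variables $X,Y$ on $S_b^{r/2}$; and (ii) the uniform distribution on a finite set is invariant under any bijection of that set, so $f_\Lambda(\cU) = \cU$ for both $\Lambda = \GammaY$ and $\Lambda = \GammaN$.

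Applying these two facts to $\Lambda = \GammaY$ and $\Lambda = \GammaN$, I get that for the conditional distributions
\[
\tvd{f_{\GammaY}(\rGammastar \mid \Prot,L,\cM)}{\cU} = \tvd{(\rGammastar\mid \Prot,L,\cM)}{\cU},
\]
and likewise with $\GammaN$ in place of $\GammaY$. Triangle inequality on TVD then gives
\[
\tvd{f_{\GammaY}(\rGammastar\mid\Prot,L,\cM)}{f_{\GammaN}(\rGammastar\mid\Prot,L,\cM)} \leq 2\cdot \tvd{(\rGammastar\mid\Prot,L,\cM)}{\cU},
\]
which is exactly the desired bound since $f_{\GammaY}(\rGammastar) = {\rGammastar}^{-1}\circ \GammaY$ and similarly for $\GammaN$.

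There is no real obstacle here; this is a clean one-step triangle-inequality argument whose entire content is to reduce the distinguishing task against two arbitrary adversarial targets to the single intrinsic quantity $\tvd{(\rGammastar\mid\Prot,L,\cM)}{\cU}$. The only thing to be careful about is that the bijections act coordinatewise on tuples of permutations and that ``inverse'' and ``$\circ \GammaY$'' are each bijections on $S_b^{r/2}$, but both are immediate from the group structure of $S_b$. After this claim, combined with \Cref{eq:pi-success-tvd1}, the rest of the proof will be dedicated to upper bounding $\Exp_{\Prot,L,\cM}\tvd{(\rGammastar\mid\Prot,L,\cM)}{\cU}$, which is where the hard technical work (the concatenation/hardness amplification for permutations outlined in \Cref{sec:overview-hph}) enters.
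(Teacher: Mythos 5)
Your proof is correct and takes essentially the same route as the paper: triangle inequality through the uniform distribution, followed by observing that each of the two resulting terms equals $\tvd{(\rGammastar\mid\Prot,L,\cM)}{\cU_{S_b^{r/2}}}$. The paper establishes this last equality by explicitly reindexing the TVD sum over $\Gamma \mapsto (\Gamma\circ\GammaY^{-1})^{-1}$, whereas you phrase the same step abstractly via pushforward-invariance of TVD and bijection-invariance of the uniform distribution; these are the same argument at different levels of abstraction.
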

\begin{proof}
	By the triangle inequality, 
	\[
		\text{LHS of the claim} \leq	\tvd{({\rGammastar}^{-1} \circ \GammaY \mid \Prot, L, \cM)}{\cU_{S_b^{r/2}}} \, + \, \tvd{\cU_{S_b^{r/2}}}{({\rGammastar}^{-1} \circ \GammaN \mid \Prot, L,\cM)}.
	\]
	 For the first term above, we have, 
	\begin{align*}
		\tvd{({\rGammastar}^{-1} \circ \GammaY \mid \Prot, L, \cM)}{\cU_{S_b^{r/2}}} &= \frac12 \cdot \sum_{\Gamma \in S_b^{r/2}} \card{\Pr\paren{\rGammastar^{-1} \circ \GammaY = \Gamma \mid \Prot,L,\cM} - \paren{\frac{1}{b!}}^{r/2}} 
		\tag{by the definition of TVD in~\Cref{eq:tvd}} \\
		&=  \frac12 \cdot  \sum_{\Gamma \in S_b^{r/2}} \card{\Pr\paren{\rGammastar  = (\Gamma \circ \GammaY^{-1})^{-1} \mid \Prot,L,\cM} - \paren{\frac{1}{b!}}^{r/2}} \tag{each tuple of permutations has a unique inverse by taking  inverse of each
		of its permutations} \\
		&= \frac12 \cdot \sum_{\Gammastar \in S_b^{r/2}} \card{\Pr\paren{\rGammastar  = \Gammastar  \mid \Prot,L,\cM} - \paren{\frac{1}{b!}}^{r/2}} \tag{as there is a one-to-one mapping 
		between $S_b^{r/2}$ and  $(\Gamma \circ \GammaY^{-1})^{-1}$ when $\Gamma$ ranges over $S_b^{r/2}$} \\
		&= \tvd{({\rGammastar} \mid \Prot, L, \cM)}{\cU_{S_b^{r/2}}} \tag{again, by the definition of TVD in~\Cref{eq:tvd}}.
	\end{align*}
	Applying the same  calculation (by replacing $\GammaY$ with $\GammaN$) to the second term above concludes the proof of this claim. 
\end{proof}
Plugging in the bounds we obtained in~\Cref{clm:get-rid-of-GammaYN} in~\Cref{eq:pi-success-tvd1}, we obtain that 
\begin{align}
	\Pr_{\Prot,L,\cM,\Gamma}\paren{\text{$\pi$ succeeds}} \leq \frac12 +  \Exp_{\Prot,L,\cM} \tvd{(\rGammastar \mid \Prot, L, \cM)}{\cU_{S_b^{r/2}}}. \label{eq:pi-success-tvd2}
\end{align}
The rest of the proof is dedicated to bounding the RHS of~\Cref{eq:pi-success-tvd2}. 
Notice this at this point, we have entirely removed the role of $\GammaY, \GammaN$, and $\Gamma$. Thus, from now on, we can solely focus on the power of players in $\IndexHPH$ in changing the distribution of $\Gammastar$ from 
its original distribution which is uniform over $(S_b)^{r/2}$. 

A remark about the importance of $\Gamma$ is in order. All the inputs in \Cref{prob:mph}, barring $\Gamma$ are chosen uniformly at random and independently from their support sets. This is crucial to our proof, as we will see in the later parts of this section. Without $\Gamma$, (say if we set $\Gammastar = \GammaY$ or $\GammaN$) the original distribution of $\Gammastar$ would not be uniform, and we cannot bound the RHS of \Cref{eq:pi-success-tvd2}.

\subsubsection{From the Hypermatching to a Single Hyperedge} 

We now further simplify our task in proving an upper bound on the RHS of~\Cref{eq:pi-success-tvd2}. Recall that $\Gammastar = (\gammastar_1,\ldots,\gammastar_{r/2})$. By the (weak) chain-rule property of total variation distance in~\Cref{fact:tvd-chain-rule}, 
\begin{align}
	\Exp_{\Prot,L,\cM} \tvd{(\rGammastar \mid \Prot, L, \cM)}{\cU_{S_b^{r/2}}} \leq \sum_{a=1}^{r/2} \,\Exp_{\Prot,L,\cM} \, \Exp_{\Gammastar_{<a} \mid \Prot,L,\cM} \tvd{(\rgammastar_a \mid \Prot,L,\cM,\Gammastar_{<a})}{\cU_{S_b}}. 
	\label{eq:after-tvd-cr}
\end{align}
Our goal is thus to bound each individual term in the RHS of~\Cref{eq:after-tvd-cr}. To continue we need the following definitions. For any choice of $\Prot, L, \cM$, and integers $a \in [r/2]$, $i \in [k]$, and $j \in [t]$, define:  
\begin{alignat*}{2}
	&\cM_a &&:= (\cM_{a,1},\ldots,\cM_{a,k}) \tag{the $a$-th hyperedge in $\cM$} \\
	&\cM_{<a} &&:= (\cM_1,\ldots,\cM_{a-1}) \tag{the first $a-1$ hyperedges in $\cM$} \\ 
	&\SigmaP{i}_{j,\cM_{<a}} &&:= (\sigmaP{i}_{j,\cM_{1,i}},\ldots,\sigmaP{i}_{j , \cM_{a-1,i}}) \tag{the  $a-1$ entries of $\SigmaP{i}_{j,*}$ indexed by $\cM_{<a}$} \\
	&\SigmaP{i}_{*,\cM_{<a}} &&:= (\SigmaP{i}_{1,\cM_{<a}},\ldots,\SigmaP{i}_{t,\cM_{<a}}) \tag{the input of $\QP{i}$ on hyperedges in $\cM_{<a}$} \\ 
	&\Sigma_{*,\cM_{<a}} &&:= (\SigmaP{1}_{\cM_{<a}},\ldots,\SigmaP{k}_{\cM_{<a}}) \tag{the  input of $\QP{1},\ldots,\QP{k}$ on hyperedges in $\cM_{<a}$}. 
\end{alignat*}
Notice that this way, 
\[
	\Gammastar_{<a} = \SigmaP{1}_{\ell_1, \cM_{<a}}  \circ \, \cdots \, \circ \, \SigmaP{k}_{\ell_k, \cM_{<a}}. 
\]
The following  claim allows us to ``over condition'' on $\Sigma_{*,\cM_{<a}}$ instead of $\Gammastar_{<a}$ in the 
RHS of~\Cref{eq:after-tvd-cr}. The purpose of this step is to isolate the dependence on $L$ which is needed for our subsequent proofs (as $\Gammastar_{<a}$ depends on $L$ but $\Sigma_{*,\cM_{<a}}$ does not).

\begin{claim}\label{clm:single-hyperedge-1}
	We have: 
	\[
		\Exp_{\Prot,L,\cM} \,\,\, \Exp_{\Gammastar_{<a} \mid \Prot,L,\cM} \tvd{(\rgammastar_a \mid \Prot,L,\cM,\Gammastar_{<a})}{\cU_{S_b}} \leq
		 \Exp_{\Prot,L,\cM,\Sigma_{*,\cM_{<a}}} \tvd{(\rgammastar_a \mid \Prot,L,\cM,\Sigma_{*,\cM_{<a}})}{\cU_{S_b}}.
	\]
\end{claim}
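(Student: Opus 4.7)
The plan is to observe that conditioning on $(\Prot,L,\cM,\Sigma_{*,\cM_{<a}})$ is a \emph{refinement} of conditioning on $(\Prot,L,\cM,\Gammastar_{<a})$, and then invoke convexity of the total variation distance. Concretely, for every $b<a$ the formula
\[
\gammastar_b \;=\; \sigmaP{1}_{\ell_1,\cM_{b,1}} \,\circ\, \cdots \,\circ\, \sigmaP{k}_{\ell_k,\cM_{b,k}}
\]
shows that, once $L$ and $\cM$ are fixed, $\Gammastar_{<a}$ is a deterministic function of $\Sigma_{*,\cM_{<a}}$. Hence fixing any value of $(\Prot,L,\cM,\Sigma_{*,\cM_{<a}})$ also fixes the corresponding value of $(\Prot,L,\cM,\Gammastar_{<a})$.

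First, I would write the conditional distribution on the LHS as a mixture of the finer conditional distributions on the RHS. Fix any $(p,\ell,m,g)$ with positive probability. Since $(L,\cM,\Sigma_{*,\cM_{<a}})$ determines $\Gammastar_{<a}$,
\[
\Pr(\rgammastar_a \in A \mid \Prot=p, L=\ell, \cM=m, \Gammastar_{<a}=g)
= \sum_{s} w_{s} \cdot \Pr(\rgammastar_a \in A \mid \Prot=p, L=\ell, \cM=m, \Sigma_{*,\cM_{<a}}=s),
\]
where $w_s := \Pr(\Sigma_{*,\cM_{<a}}=s \mid \Prot=p,L=\ell,\cM=m,\Gammastar_{<a}=g)$ and the sum is over $s$ compatible with the given $g$ (for all other $s$ the weight is zero). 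Applying the convexity of $\tvd{\cdot}{\cU_{S_b}}$ in its first argument gives
\[
\tvd{(\rgammastar_a \mid \Prot=p,L=\ell,\cM=m,\Gammastar_{<a}=g)}{\cU_{S_b}}
\;\leq\; \sum_{s} w_s \cdot \tvd{(\rgammastar_a \mid \Prot=p,L=\ell,\cM=m,\Sigma_{*,\cM_{<a}}=s)}{\cU_{S_b}}.
\]

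Second, I would take expectation over $(\Prot,L,\cM,\Gammastar_{<a})$ and collapse the resulting double expectation. Using the tower rule together with the identity $\Pr(\Prot=p,L=\ell,\cM=m,\Gammastar_{<a}=g) \cdot w_s = \Pr(\Prot=p,L=\ell,\cM=m,\Sigma_{*,\cM_{<a}}=s)$ for $s$ with $\Gammastar_{<a}(s,\ell,m)=g$, the RHS of the displayed inequality integrates exactly to
\[
\Exp_{\Prot,L,\cM,\Sigma_{*,\cM_{<a}}}\tvd{(\rgammastar_a \mid \Prot,L,\cM,\Sigma_{*,\cM_{<a}})}{\cU_{S_b}},
\]
which is the desired bound.

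There is essentially no technical obstacle here: the entire content of the claim is the standard ``over-conditioning cannot decrease expected distance to a fixed distribution'' principle, applied to the refinement $(L,\Sigma_{*,\cM_{<a}}) \mapsto (L,\Gammastar_{<a})$. The only thing to be careful about is that $\Prot$ is itself a random variable depending on $\Sigma$, but this causes no issue because it appears on both sides of the conditioning and the convexity argument is carried out pointwise in $(p,\ell,m,g)$ before any expectation over $\Prot$ is taken.
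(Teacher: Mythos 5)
Your proof is correct and follows essentially the same approach as the paper: the paper invokes its ``over-conditioning'' property of total variation distance (Fact~\ref{fact:tvd-over-conditioning}) as a black box, together with the observation that $(L,\cM,\Sigma_{*,\cM_{<a}})$ determines $\Gammastar_{<a}$, while you simply unfold that fact into the explicit mixture-and-convexity computation. The substance — refinement of the conditioning, convexity of $\tvd{\cdot}{\cU_{S_b}}$, then collapsing the iterated expectation — is identical.
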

\begin{proof}
	For any choice of $\Prot,L,\cM$ and $\Gammastar_{<a}$, we have, 
	\begin{alignat*}{2}
		\tvd{(\rgammastar_a \mid \Prot,L,\cM,\Gammastar_{<a})}{\cU_{S_b}} &\leq \Exp_{\Sigma_{*,\cM_{<a}} \mid \Prot,L,\cM,\Gammastar_{<a}} \tvd{(\rgammastar_a \mid \Prot,L,\cM,\Gammastar_{<a}, \Sigma_{*,\cM_{<a}})}{\cU_{S_b}} \tag{by
		the over conditioning property of TVD in~\Cref{fact:tvd-over-conditioning}} \\
		&=  \Exp_{\Sigma_{*,\cM_{<a}} \mid \Prot,L,\cM,\Gammastar_{<a}} \tvd{(\rgammastar_a \mid \Prot,L,\cM,\Sigma_{*,\cM_{<a}})}{\cU_{S_b}} \tag{as $\Sigma_{*,\cM_{<a}}$ and $L$ together deterministically fix $\Gammastar_{<a}$}. 
	\end{alignat*}
	Using this in the LHS of the claim gives us, 
	\begin{align*}
		\Exp_{\Prot,L,\cM} \, \Exp_{\Gammastar_{<a} \mid \Prot,L,\cM} \tvd{(\rgammastar_a \mid \Prot,L,\cM,\Gammastar_{<a})}{\cU_{S_b}} &= 	\Exp_{\Prot,L,\cM,\Gammastar_{<a}} 
		\tvd{(\rgammastar_a \mid \Prot,L,\cM,\Gammastar_{<a})}{\cU_{S_b}}	\\
		&\leq \Exp_{\Prot,L,\cM,\Gammastar_{<a},\Sigma_{*,\cM_{<a}}}\!\!\! \tvd{(\rgammastar_a \mid \Prot,L,\cM,\Sigma_{*,\cM_{<a}})}{\cU_{S_b}} \tag{by the above inequality} \\
		&= \Exp_{\Prot,L,\cM,\Sigma_{*,\cM_{<a}}}\!\!\! \tvd{(\rgammastar_a \mid \Prot,L,\cM,\Sigma_{*,\cM_{<a}})}{\cU_{S_b}}  \tag{again, since $\Sigma_{*,\cM_{<a}}$ and $L$ together deterministically fix $\Gammastar_{<a}$},  
	\end{align*}
	concluding the proof. 
\end{proof}
\noindent
\paragraph{The main distribution.} 
For any choice of $\cM_{<a}$ and $\Sigma_{*,\cM_{<a}}$, we define, 
\begin{ourbox}
 \textbf{Distribution $\dist = \dist_{\cM_{<a},\Sigma_{*,\cM_{<a}}}$:} Joint distribution of variables in $\IndexHPH$ conditioned on this particular choice of $\cM_{<a}$ and $\Sigma_{*,\cM_{<a}}$. 
	
	For simplicity of notation, and to avoid clutter, we denote the variables in this distribution, for $i \in [k]$ and $j \in [t]$ as follows: 
	\begin{alignat*}{2}
		&L &&:= (\ell_1,\ldots,\ell_k) \tag{the same input $L$ of the referee as before} \\
		&R_i &&:= [r] - \set{\cM_{1,i},\ldots,\cM_{a-1,i}} \tag{the set of vertices in layer $i$ not matched by $\cM_{<a}$} \\
		&\cM_a &&:= e := (v_1,\ldots,v_k) \tag{the hyperedge $e=\cM_a$ with vertex $v_i$ chosen from $R_i$} \\
		&\bSigmaP{i}\! &&:= \SigmaP{i}_{*,{R_i}} \tag{the remaining input of player $\QP{i}$ not fixed by $\Sigma_{*,\cM_{<a}}$} \\
		&\bSigma &&:= (\bSigmaP{1},\ldots,\bSigmaP{k}) \tag{the remaining input of all players $\QP{1},\ldots,\QP{k}$} \\
		&\gammastar &&:= \gammastar_a := \bsigmaP{1}_{\ell_1,v_1} \circ \cdots \circ \bsigmaP{k}_{\ell_k,v_k} \tag{the permutation in $S_b$ we are interested in} \\
		&\bProt &&:= (\bProt_{1},\ldots,\bProt_{k}) \tag{the messages of players conditioned on $\cM_{<a}$ and $\Sigma_{*,\cM_{<a}}$}.
	\end{alignat*}
\end{ourbox}
We  list several  useful properties of this distribution in the following claim. 
\begin{claim}\label{clm:single-hyperedge-2}
	For any choice of $\cM_{<a},\Sigma_{*,\cM_{<a}}$, let $\dist := \dist_{\cM_{<a},\Sigma_{*,\cM_{<a}}}$. We have, 
	\begin{enumerate}[label=$(\roman*)$]
		\item\label{item:sh-L} $\dist(L)$ is uniform over $[t]^k$; 
		\item\label{item:sh-Ri} $R_i$ is $\card{R_i} = r-(a-1) \geq r/2$ and we can use $r_a := r-(a-1)$ to denote the size of every $R_i$; 
		\item\label{item:sh-e} $\dist(e=(v_1,\ldots,v_k))$ picks each $v_i \in R_i$ independently and uniformly at random;
		\item\label{item:sh-bsigma} $\dist(\bSigma)$ is uniform distribution over its support $({S_b})^{[t] \times R_1} \times \cdots \times {(S_b)}^{[t] \times R_k}$; 
		\item\label{item:sh-independence} We have the following independence properties: 
		\begin{align*}
			\dist(\rbSigma,\rbProt,\rL,\re) = \dist(\rbSigma,\rbProt) \times \prod_{i=1}^{k} \dist(\rell_i) \times \prod_{i=1}^{k} \dist(\rvv_i). 
		\end{align*}
	\end{enumerate}
\end{claim}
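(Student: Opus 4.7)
} The plan is to derive all five properties from two structural observations about the original $\IndexHPH$ distribution, and then verify each item by carefully tracking which conditioning variables touch which random variables.

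The two basic observations are: $(a)$ in the original distribution, the three blocks $\rSigma$, $\rL$, $\rcM$ are mutually independent, and $\rSigma$ itself is a product of independent uniform permutations over $(S_b)^{t \times r}$; $(b)$ since $\prot$ is deterministic and each player $\QP{i}$ sees only $\SigmaP{i}$, the transcript $\rProt$ (and hence $\rbProt$) is a deterministic function of $\rSigma$ alone. From $(a)$ it follows immediately that $\rL$ is independent of $(\rcM, \rSigma)$, so conditioning on any sub-coordinates $\cM_{<a}$ and $\Sigma_{*, \cM_{<a}}$ leaves the distribution of $\rL$ uniform on $[t]^k$, which is~\cref{item:sh-L}. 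Property~\cref{item:sh-Ri} is purely combinatorial: by definition of a hypermatching, the values $\cM_{1,i}, \ldots, \cM_{a-1,i}$ are distinct in each coordinate, so $|R_i|=r-(a-1)$, and the bound $a \le r/2$ gives $|R_i| \ge r/2$.

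For~\cref{item:sh-e}, I will use the symmetry of the distribution of $\rcM$: sampling an ordered hypermatching $(\cM_1,\dots,\cM_{r/2})$ uniformly is equivalent to sampling, in each coordinate $i \in [k]$ independently, a uniformly random injection $[r/2]\to[r]$. Under this equivalent sampling, given $\cM_{<a}$, the $i$-th coordinate of $\cM_a$ is just the next image of this injection, which is uniform over $R_i = [r] \setminus \{\cM_{1,i},\ldots,\cM_{a-1,i}\}$, and these are independent across $i$. Property~\cref{item:sh-bsigma} follows from the product structure of the uniform distribution on $(S_b)^{k \cdot t \cdot r}$: the entries of $\rSigma$ are mutually independent uniform permutations in $S_b$, so fixing the entries indexed by $\Sigma_{*,\cM_{<a}}$ leaves the remaining entries $\bSigma$ uniform and independent over $\prod_{i\in[k]} (S_b)^{[t]\times R_i}$.

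The main content is~\cref{item:sh-independence}, and this will be the step I'm most careful about. Writing the joint as a chain, I need to verify that under $\dist$: $\rL$, the $k$ coordinates $\rvv_1,\dots,\rvv_k$, and the pair $(\rbSigma,\rbProt)$ all factor. The key point is that $\rL$ and $\re$ depend only on $(\rL,\rcM)$, while $(\rbSigma,\rbProt)$ depends only on $\rSigma$ (via observation $(b)$), and these two groups are independent by $(a)$ even after conditioning on $(\cM_{<a},\Sigma_{*,\cM_{<a}})$, because that conditioning separates across the same groups. Within the $(L,\cM)$ block, independence of $\rL$ from $\re=\rcM_a$ follows from $(a)$ applied at the level $\dist$, and the factorization $\prod_i \dist(\rell_i)$ (resp.\ $\prod_i \dist(\rvv_i)$) follows from $L$ being a product of $k$ independent uniforms (resp.\ from the injection-based sampling argument used for~\cref{item:sh-e}). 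The main obstacle is bookkeeping: I will write out the joint density of $(\rbSigma,\rbProt,\rL,\re)$ as a product of an indicator (for $\rbProt$ being the correct function of $\rbSigma$) times the densities of the independent blocks, and then collapse to the claimed product form.
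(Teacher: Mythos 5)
Your proposal is correct. Items (i)--(iv) track the paper's reasoning closely: for (iii) your ``uniform injection per coordinate'' characterization of a random ordered hypermatching is a slightly more foundational way to see the paper's claim that, conditioned on $\cM_{<a}$, the vertices $v_1,\ldots,v_k$ of $\cM_a$ are independent uniforms over $R_1,\ldots,R_k$, but the two land in the same place.

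The genuine difference is in item (v). The paper establishes $(\rL,\rcM_a) \perp (\rProt,\rSigma)$ conditioned on $(\cM_{<a},\Sigma_{*,\cM_{<a}})$ by computing a conditional mutual information: it applies the chain rule of mutual information to split off $\rProt$ (which vanishes since $\rProt$ is a deterministic function of $\rSigma$) and then invokes the a priori independence of $\rSigma$ from $(\rL,\rcM)$. You instead propose a bare-hands density factorization. Both are valid; your route is more elementary and avoids the information-theoretic machinery. One subtlety you should make explicit when you carry out the density calculation: the indices defining $\Sigma_{*,\cM_{<a}}$ are themselves determined by $\cM_{<a}$, so $\Sigma_{*,\cM_{<a}}$ is \emph{not} a fixed deterministic function of $\rSigma$ in isolation. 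The clean way to write it is to condition in two stages --- first on $\cM_{<a}$, which touches only the $(\rL,\rcM)$ block and, in particular, fixes the index set; then, with those indices frozen, conditioning on $\Sigma_{*,\cM_{<a}}$ touches only the $\rSigma$ block. The original block independence $\rSigma \perp (\rL,\rcM)$ then propagates, and the rest of your factorization (pulling out $\rbProt$ as an indicator of a function of $\rbSigma$, and splitting $\dist(\rL,\re)$ into $\prod_i\dist(\rell_i)\times\prod_i\dist(\rvv_i)$ via items (i) and (iii)) goes through exactly as you sketch. The paper's mutual-information version silently absorbs this same two-stage point inside the chain rule computation; your version makes it visible, which is a small expository advantage.
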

\begin{proof}
	In this proof, all random variables are with respect to the original distribution in $\IndexHPH$ unless explicitly indexed by $\dist$. 

\paragraph{Proof of~\Cref{item:sh-L}.} We have $\dist(\rL) = (\rL \mid \cM_{<a},\Sigma_{*,\cM_{<a}})$. In $\IndexHPH$, $L$ is chosen independent of $\cM,\Sigma$ and is uniform over $[t]^k$, thus $\dist(L)$ is also uniform over $[t]^k$. 

\paragraph{Proof of~\Cref{item:sh-Ri}.} Each hyperedge in $\cM_{<a}$ uses one vertex in each layer of the graph, thus $R_i$ has $a-1$ matched vertices by $\cM_{<a}$ and $(r-(a-1))$ unmatched vertices. Moreover, since 
$a \leq r/2$, we get that $r_a = (r-(a-1)) \geq r/2$ as well. 

\paragraph{Proof of~\Cref{item:sh-e}.} We have $\dist(\re) = (\rcM_a \mid \cM_{<a},\Sigma_{*,\cM_{<a}})$. In $\IndexHPH$, $\rcM_a$ is chosen independent of $\Sigma$ and is uniform over all hyperedges of the $k$-layered graph on $[r]^k$ 
that are not matched by $\cM_{<a}$. Given number of unmatched vertices in each layer is exactly $r-(a-1)$ by the previous part, such a hyperedge can be chosen by picking one vertex uniformly at random from unmatched vertices of 
each level, i.e., from $R_i$ in layer $i$. 

\paragraph{Proof of~\Cref{item:sh-bsigma}.} We have $\dist(\rbSigma) = (\rSigma \mid \cM_{<a},\Sigma_{*,\cM_{<a}})$. In $\IndexHPH$, every permutation in $S_b$ of $\Sigma$ is chosen independently of the rest (and independent of $\cM$). 
Thus, after conditioning on $\Sigma_{*,\cM_{<a}}$,  remaining coordinates, i.e., matrices $\SigmaP{i}_{*,R_i} \in (S_b)^{[t] \times R_i}$ for $i \in [k]$, are chosen independently. 

\paragraph{Proof of~\Cref{item:sh-independence}.} By the chain rule, 
	\[
		(\rProt,\rSigma,\rL,\rcM_{a} \mid \cM_{<a},\Sigma_{*,\cM_{<a}}) = (\rProt,\rSigma \mid \cM_{<a},\Sigma_{*,\cM_{<a}}) \times (\rL,\rcM_{a} \mid  \rProt,\rSigma, \cM_{<a},\Sigma_{*,\cM_{<a}}). 
	\]
	We prove that 
	\[
		(\rL,\rcM_{a} \perp \rProt, \rSigma \mid \cM_{<a},\Sigma_{*,\cM_{<a}}) \quad \equiv \quad \mi{\rL,\rcM_{a}}{\rProt, \rSigma \mid \cM_{<a},\Sigma_{*,\cM_{<a}}} = 0,
	\]
	where the equivalence is by~\itfacts{info-zero}. We have, 
	\begin{align*}
		\mi{\rL,\rcM_{a}}{\rProt, \rSigma \mid \cM_{<a},\rSigma_{*,\cM_{<a}}} &= \mi{\rL,\rcM_{a}}{\rSigma \mid \cM_{<a},\Sigma_{*,\cM_{<a}}} + \mi{\rL,\rcM_{a}}{\rProt \mid \rSigma, \cM_{<a},\Sigma_{*,\cM_{<a}}}
		\tag{by chain rule of mutual information in~\itfacts{chain-rule}} \\
		&= \mi{\rL,\rcM_{a}}{\rSigma \mid \cM_{<a},\Sigma_{*,\cM_{<a}}}
		\tag{as $\rProt$ is fixed by $\rSigma \mid \cM_{<a},\Sigma_{*,\cM_{<a}}$ and thus the second term is zero by~\itfacts{info-zero}} \\
		&=0 \tag{by~\itfacts{info-zero} as $\rSigma$ is chosen independent of $\rL,\rcM$}. 
	\end{align*}
	This implies that for every choice of $\cM_{<a},\Sigma_{*,\cM_{<a}}$, 
	\begin{align*}
		\dist(\rbSigma,\rbProt,\rL,\re) &= (\rProt,\rSigma,\rL,\rcM_{a} \mid \cM_{<a},\Sigma_{*,\cM_{<a}}) = (\rProt,\rSigma \mid \cM_{<a},\Sigma_{*,\rcM_{<a}}) \times (\rL,\rcM_{a} \mid \cM_{<a},\Sigma_{*,\cM_{<a}}) \\
		&= \dist(\rbProt,\rbSigma) \times \dist(\rL,\re) = \dist(\rbProt,\rbSigma) \times \dist(\rL) \times \dist(\re \mid \rL). 
	\end{align*}
	by the definition of variables $(\rbSigma,\rbProt,\rL,\re)$. 
	
	By~\Cref{item:sh-L}, we have $\dist(\rL) = \prod_{i=1}^{k} \dist(\rell_i)$. We also have $\dist(\re \mid \rL) = (\rcM_a \mid \rL, \cM_{<a},\Sigma_{*,\cM_{<a}})$ which is the same as $\dist(e)$ 
	as $\rcM_a$ is chosen independent of $\rL$. Thus, by \Cref{item:sh-e}, we also have $\dist(\re \mid \rL) = \dist(\re) = \prod_{i=1}^{k} \dist(\rvv_i)$, concluding the proof. 
\end{proof}

\noindent
We can now write the RHS of~\Cref{clm:single-hyperedge-1} in terms of the distribution $\dist = \dist_{\cM_{<a},\Sigma_{*,\cM_{<a}}}$ as follows:
\begin{align*}
	\text{RHS of~\Cref{clm:single-hyperedge-1}} &= \Exp_{\cM_{<a},\Sigma_{*,\cM_{<a}}} \Exp_{\Prot, L, \cM_{\geq a} \sim  \dist} \tvd{ \dist(\rgammastar_a \mid \Prot,L,\cM_{\geq a})}{\cU_{S_b}}
	 \tag{by the definition of $\mu$ as conditioning on $\cM_{<a},\Sigma_{*,\cM_{<a}}$} \\
	 &= \Exp_{\cM_{<a},\Sigma_{*,\cM_{<a}}} \Exp_{\Prot, L, e \sim  \dist} \tvd{\dist(\rgammastar \mid \Prot,L,e)}{\cU_{S_b}} 
	 \tag{by~\Cref{clm:single-hyperedge-2}-\Cref{item:sh-independence}, $\gammastar = \gammastar_a$ is only a function of $e=\cM_a$ among $\cM_{\geq a}$} \\
	 &= \Exp_{\cM_{<a},\Sigma_{*,\cM_{<a}}} \, \Exp_{\bProt \sim \dist} \, \Exp_{L,e=(v_1,\ldots,v_k) \sim \dist} \tvd{\dist(\rbsigmaP{1}_{\ell_1,v_1} \circ \cdots \circ \rbsigmaP{k}_{\ell_k,v_k} \mid \Prot)}{\cU_{S_b}}
	 \tag{by~\Cref{clm:single-hyperedge-2}-\Cref{item:sh-independence}, $\bProt \perp L,e$ in $\dist$ and $\bSigma$ is only a function of $\bProt$ after fixing $(\ell_i,v_i)$ for $i \in [k]$}. 
\end{align*}
\noindent
By plugging in these bounds for the RHS of~\Cref{clm:single-hyperedge-1} in 	\Cref{eq:after-tvd-cr} and then in~\Cref{eq:pi-success-tvd2}, 
we get that 
\begin{align}
	\Pr\paren{\text{$\pi$ succeeds}} \leq \frac12 + 
	\frac r2 \cdot \!\!\! \max_{\substack{\cM_{<a},\Sigma_{*,\cM_{<a}}\\ \mu=\mu_{\cM_{<a},\Sigma_{*,\cM_{<a}}}}} 
	\!\!\!\bracket{\Exp_{\bProt \sim \dist} \, \Exp_{\substack{\ell_1,\ldots,\ell_k \\ v_1,\ldots,v_k} \sim \dist} \tvd{\dist(\rbsigmaP{1}_{\ell_1,v_1} \circ \cdots \circ \rbsigmaP{k}_{\ell_k,v_k} \mid \Prot)}{\cU_{S_b}}}.
	\label{eq:basic-problem}
\end{align}
This concludes the basic setup we have for proving the lower bound for $\IndexHPH$. The remaining subsections prove an upper bound for the RHS of~\Cref{eq:basic-problem}, which is the heart of the proof.

\paragraph{Conditioning Step:} For the remainder of this section, 
we  {fix an arbitrary choice of $\cM_{<a},\Sigma_{*,\cM_{<a}}$ and $\mu = \mu_{\cM_{<a},\Sigma_{*,\cM_{<a}}}$ and all random variables are with respect to $\mu$ unless specified otherwise}. 

\begin{problem}\label{rem:basic-problem}
	It is worth explicitly identifying the problem we need to solve when bounding the RHS of~\Cref{eq:basic-problem} for the distribution $\mu$. The problem we are interested in is as follows:  
	\begin{enumerate}[label=$(\roman*)$]
		\item We have $k$ players $\QP{1},\ldots,\QP{k}$ each getting a matrix $\bSigmaP{i} \in (S_b)^{t \times r_a}$, with each entry being a permutation in $S_b$ 
	chosen uniformly at random and independently. 
		\item We also have a referee that picks $k$ random and independently chosen entries of this matrix, namely, $(\ell_i,v_i) \in [t] \times [r_a]$ for $i \in [k]$. We
		refer to $(\bsigmaP{1}_{\ell_1,v_1},\ldots,\bsigmaP{k}_{\ell_k,v_k})$ as the \textbf{hidden permutations} of the players. 
		We also define the \textbf{target permutation} $\gammastar$ as: 
		\[
			\gammastar := \bsigmaP{1}_{\ell_1,v_1} \circ \cdots \circ \bsigmaP{k}_{\ell_k,v_k}. 
		\]
	\end{enumerate}
	We are interested in communication protocols for this problem that follow the same rules as the ones for $\IndexHPH$, with the different goal 
	of changing the distribution of the target permutation from its original uniform distribution as much as possible. 
\end{problem}




\subsection{Part Two: KL-Divergence of Individual Hidden Permutations}\label{sec:part-two-individual}

In this subsection, we prove that not many players are able to reveal much information about their hidden permutation, measured as the KL-divergence of 
their hidden permutation conditioned on the transcript from the uniform distributions. This is proven using a simple \emph{direct-product} style argument to show that not only each player is unable to reveal much about their hidden permutation, 
but in fact this is true \emph{simultaneously}  for most players.

To continue, we need the following definition. 
\begin{Definition}\label{def:theta}
	Set the parameter
	\begin{align}
		\theta := \paren{\frac{4s + 4k \cdot \log{(t \cdot r_a)}}{t \cdot r_a}}^{1/2} \label{eq:def-theta}. 
	\end{align}
	(Notice that by the bound we have on $s$ in~\Cref{thm:mph}, $\theta < 1/100$). 
	
	For any $i \in [k]$ and $(\ell,v) \in [t] \times [r_a]$, we say that a transcript $\bProt$ is \textnormal{\textbf{informative}} for player $\QP{i}$ on the index $(\ell,v)$ iff: 
	\[
		\kl{\dist(\rbsigmaP{i}_{\ell,v} \mid \bProt)}{\cU_{S_b}} > \theta.
	\]
	Moreover, for a transcript $\bProt$ and input $L=(\ell_1,\ldots,\ell_k)$, $e=(v_1,\ldots,v_k)$ to the referee, we define the set of \textnormal{\textbf{informed indices}} as: 
	\[
		\info=\info(\bProt,L,e) := \set{i \in [k] \mid \text{$\Prot$ is informative for player $\QP{i}$ on $(\ell_i,v_i)$}}. 
	\]
\end{Definition}

Informally, this definition captures {which permutations} the referee has a lot of information about. Conditioned on any transcript and input indices to the referee, if the distribution of $\bsigmaP{i}_{\ell_i,v_i}$ differs from $\cU_{S_b}$ by $\theta$
 in KL-Divergence, the referee has a better chance of guessing $\bsigmaP{i}_{\ell_i,v_i}$, which in turn helps with reducing the uncertainty on the target permutation $\gammastar$. 
 
The following lemma shows that size of $\info$ cannot be too large with high probability. 

\begin{lemma}[``Informed indices cannot be too many'']\label{lem:info-size}
	~
	\[
		\Pr_{(\bProt,L,e)}\Paren{|{\info(\bProt,L,e)}| > k/2} \leq 2 \cdot (16 \cdot \theta)^{k/4}. 
	\]
\end{lemma}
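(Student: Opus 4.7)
My plan is to reduce the event $|\info(\bProt, L, e)| > k/2$ to a tail bound on a sum of conditionally independent Bernoulli trials, and then combine an information-theoretic estimate of the mean with a subset union bound and a truncation argument. Define $K_i := \kl{\dist(\rbsigmaP{i}_{\ell_i, v_i} \mid \bProt)}{\cU_{S_b}}$ and $Y_i(\bProt) := \Exp_{\ell_i, v_i}[K_i \mid \bProt]$, so that $\info = \{i : K_i > \theta\}$. The crucial observation, which follows from~\Cref{clm:single-hyperedge-2}-\ref{item:sh-independence}, is that the pairs $(\ell_1, v_1), \ldots, (\ell_k, v_k)$ are mutually independent and independent of $\bProt$; since the indicator $\ind[i \in \info]$ depends only on $\bProt$ and $(\ell_i, v_i)$, these indicators are conditionally independent given $\bProt$, with conditional probability $F_i(\bProt) := \Pr_{\ell_i, v_i}(K_i > \theta \mid \bProt) \le Y_i(\bProt)/\theta$ by Markov.

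Next, I plan to bound $\sum_i \Exp_{\bProt}[Y_i]$ by information theory. The rectangle property of the deterministic protocol $\pi$ ensures that $\dist(\bSigma \mid \bProt)$ factorizes as $\prod_i \dist(\bSigmaP{i} \mid \bProt)$, so super-additivity of KL divergence against product references and the fact that each $\bSigmaP{i}$ has independent entries in $(S_b)^{t\times r_a}$ under the uniform reference yield
\[
t\,r_a \sum_i \Exp_{\bProt}[Y_i] \;=\; \sum_{i, \ell, v} \mi{\bProt}{\rbsigmaP{i}_{\ell, v}} \;\le\; \mi{\bProt}{\bSigma} \;\le\; \en{\bProt} \;\le\; s.
\]
With the definition of $\theta$ in~\Cref{eq:def-theta}, this yields $\Exp_{\bProt}[Z] \le \theta^2/4$ for $Z(\bProt) := \sum_i Y_i(\bProt)$.

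The final step is a subset union bound followed by AM-GM. Conditioning on $\bProt$ and using independence of the $F_i$'s gives
\[
\Pr(|\info| \ge k/2 \mid \bProt) \;\le\; \sum_{|S|=k/2} \prod_{i \in S} F_i \;\le\; \binom{k}{k/2} \Paren{\frac{2Z}{k\theta}}^{k/2} \;\le\; \Paren{\frac{8Z}{k\theta}}^{k/2}.
\]
I then split on $Z$ at the threshold $C := k\theta^{3/2}/2$. On $\{Z \le C\}$ the estimate above evaluates to $(4\sqrt{\theta})^{k/2} = (16\theta)^{k/4}$, matching one of the two terms in the target bound.

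The main obstacle is controlling $\Pr(Z > C)$: a naive Markov bound on $\Exp[Z]$ is too weak to match $(16\theta)^{k/4}$ in the regime of very small $\theta$. This is precisely where the additive $k \log(t r_a)$ term inside the definition of $\theta^2$ buys slack. I plan to bound the tail via one of two equivalent routes---either a higher-moment Markov inequality $\Pr(Z > C) \le \Exp[Z^q]/C^q$ (using the a priori bound $Z \le k \log(b!)$ to control higher moments of $Z$ from $\Exp[Z]$), or a union bound over the at most $2^s$ possible transcripts $\bProt$, which is exactly the $\log$-factor absorbed by the $k \log(t r_a)$ term in $\theta$. Choosing $q$ (or the threshold) carefully delivers $\Pr(Z > C) \le (16\theta)^{k/4}$, and summing the two contributions gives the claimed $2 \cdot (16\theta)^{k/4}$.
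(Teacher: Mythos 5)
Your decomposition into a subset union bound conditioned on $\bProt$, AM--GM to bring in the aggregated quantity $Z(\bProt) := \sum_i Y_i(\bProt)$, and a split at a threshold $C = k\theta^{3/2}/2$ is a reasonable alternative to the paper's organization, and the calculation on the event $\{Z \leq C\}$ does deliver $(16\theta)^{k/4}$. The conditional independence of the indicators $\ind[i\in\info]$ given $\bProt$ is also correctly justified from~\Cref{clm:single-hyperedge-2}-\ref{item:sh-independence}. However, there is a genuine gap in the remaining step $\Pr_{\bProt}(Z > C) \leq (16\theta)^{k/4}$, and the two ``equivalent routes'' you offer are not in fact interchangeable.

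The higher-moment Markov route, as you sketch it (bounding $\Exp[Z^q]$ from $\Exp[Z]$ via the crude pointwise bound $Z \le k\log(b!)$), does not close the gap: you need $\Pr(Z>C)$ to be as small as $(16\theta)^{k/4}$, and the a priori bound $Z\le k\log(b!)$ has nothing to do with $\theta$ or the structure of $\theta^2 \approx (4s + 4k\log(tr_a))/(tr_a)$, so raising moments under that bound just costs a factor of $(k\log(b!))^{q-1}$ with no compensating decay. What actually makes the tail bound work---and this is the paper's key device---is a \emph{pointwise} version of your super-additivity inequality combined with the deterministic-protocol identity: since the transcript is a deterministic function of $\bSigma$, \Cref{fact:kl-event} gives $\kl{\rbSigma\mid\rbProt=\bProt}{\rbSigma} = \log(1/\Pr(\rbProt=\bProt))$, and super-additivity against the product reference then yields the \emph{per-transcript} bound $tr_a \cdot Z(\bProt) \leq \log(1/\Pr(\bProt))$, not merely the expectation-level bound $tr_a\cdot\Exp[Z] \leq s$ you derive. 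With this pointwise bound in hand, a union bound over the at most $2^s$ transcripts (each transcript with $\log(1/\Pr(\bProt)) > s + k\log(1/\theta)$ contributes at most $2^{-s-k\log(1/\theta)}$) shows $\Pr(Z > (s+k\log(1/\theta))/(tr_a)) \leq \theta^k$, and one checks $(s+k\log(1/\theta))/(tr_a) \leq \theta^2/4 < C$ and $\theta^k \leq (16\theta)^{k/4}$ for $\theta < 1/100$. You do gesture at ``a union bound over the at most $2^s$ possible transcripts,'' which is the right idea, but you never write down the deterministic-protocol KL identity that is the engine of that union bound, and you present it as equivalent to the moment route, which it is not. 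The paper avoids the threshold split entirely by fixing a good set $L(\bProt)$ of indices with small per-block KL (of which there are at least $3k/4$, by Markov on $\sum_i \kl{\rbSigmaP{i}\mid\bProt}{\rbSigmaP{i}} \le s + k\log(1/\theta)$ on the good event), and then applying the Markov-plus-independence argument only to $i \in L(\bProt)$; your $Z$-based phrasing is a valid repackaging, but it does not sidestep the need for the transcript-level union bound.
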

\begin{proof}
	The proof consists of two parts. The first part is to show that with high probability over the choice of $\bProt$, the KL-divergence of 
	$\bSigmaP{i}$ conditioned on the transcript from its original distribution for ``most'' indices $i \in [k]$ is sufficiently ``low''. The second part then shows that 
	only a small fraction of these low KL-divergence indices can become informative, which concludes the proof. 
	
	\paragraph{Part I: KL-divergence of $\bSigmaP{i}$ is ``low'' for ``most'' $i \in [k]$.} For any transcript $\bProt$, define: 
	\begin{alignat*}{2}
		&L(\bProt) &&:= \set{i \in [k] ~\mid~ \kl{\rbSigmaP{i} \mid \rbProt=\bProt}{\rbSigmaP{i}} \leq 4s + 4 k \cdot \log{(1/\theta)}},
	\end{alignat*}
	namely, the indices in $[k]$ with low KL-divergence conditioned on $\bProt$ from their original distribution. We are going to prove that $L(\bProt)$ is 
	going to be large quite likely. 

	\begin{claim}\label{clm:info-size-p1}
		$
			\Pr_{\bProt}\paren{\card{L(\bProt)} \leq 3k/4} \leq \theta^{k}. 
		$
	\end{claim}
	\begin{proof}
		Let $S$ be the set of all transcripts $\bProt$ such that,
		\[
			\kl{\rbSigma \mid \rbProt=\bProt}{\rbSigma} > s+k \cdot \log{(1/\theta)}. 
		\]
		Note that since $\bProt$ is a deterministic function of $\rbSigma$, by~\Cref{fact:kl-event} (the moreover part), we have
		\[
			\kl{\rbSigma \mid \rbProt=\bProt}{\rbSigma} = \log\Paren{\frac{1}{\Pr(\rbProt=\bProt)}}. 
		\]
		Thus, for each $\bProt \in S$, we have, 
		\[
			\Pr\paren{\rbProt=\bProt} \leq 2^{-s-k\cdot \log{(1/\theta)}}. 
		\]
		At the same time, since there are at most $2^s$ choices for $\bProt$, by union bound, we have, 
		\[
			\Pr_{\bProt}\paren{\bProt \in S} \leq \sum_{\bProt \in S} \Pr\paren{\rbProt=\bProt} \leq 2^{s} \cdot 2^{-s-k \cdot \log{(1/\theta)}} = \theta^{k}.  
		\]
		In the following, we condition on $\bProt \notin S$ which happens with probability $1-\theta^k$. 
		
		We now have, 
		\begin{align*}
			s+k\cdot \log{(1/\theta)} &\geq \kl{\rbSigma \mid \rbProt=\bProt}{\rbSigma} \tag{by the definition of $\bProt$ being not informative} \\
			&\geq \sum_{i=1}^{k} \kl{\rbSigmaP{i} \mid \rbProt = \bProt}{\rbSigmaP{i}}. \tag{$\rbSigma$ is a product distribution (by~\Cref{clm:single-hyperedge-2}-\Cref{item:sh-bsigma}) so we can apply~\Cref{fact:kl-chain-rule} (moreover part)}
		\end{align*}
		A simple Markov bound implies that at most one-fourth of the indices can have KL-divergence more than $4s+4k\cdot\log{(1/\theta)}$. Thus, for any $\bProt \notin S$, $L(\bProt)$ is of size at least $3k/4$. 
		The bound of $\theta^k$ above on the probability of $\bProt$ being in $S$ concludes the proof. 
	\end{proof}
	
	In the following, we fix any choice of $\bProt$ with $L(\bProt)$ having size at least $3k/4$, which by~\Cref{clm:info-size-p1} happens with probability at least $1-\theta^k$. 
	
	\paragraph{Part II: ``Few'' indices in $L(\bProt)$ can be informative.} We bound the size of informative indices in $L(\bProt)$ as follows. 
	
	\begin{claim}\label{clm:info-size-p2}
	$\Pr_{(L,e) \mid \bProt}\Paren{\text{$\geq k/4$ indices in $L(\bProt)$ are informative}} \leq (16\theta)^{k/4}$.
	\end{claim}
	\begin{proof}
	For any $i \in L(\bProt)$, 
	\begin{align*}
		\kl{\rbSigmaP{i} \mid \rbProt=\bProt}{\rbSigmaP{i}} &\geq \sum_{\ell=1}^{t} \sum_{v \in R_i} \kl{\rbsigmaP{i}_{\ell,v} \mid \rbProt=\bProt}{\rbsigmaP{i}_{\ell,v}} 
		\tag{$\rbSigma$ is a product distribution (by~\Cref{clm:single-hyperedge-2}-\Cref{item:sh-bsigma}) so we can apply~\Cref{fact:kl-chain-rule} (moreover part)}.
	\end{align*}
	Thus, by the KL-divergence bound of the previous part for indices $i \in L(\bProt)$, 
	\[
		\Exp_{(\ell_i,v_i) \mid \bProt} \kl{\rbsigmaP{i}_{\ell_i,v_i} \mid \rbProt=\bProt}{\rbsigmaP{i}_{\ell_i,v_i}}  \leq \frac{4s + 4  k \cdot \log{(1/\theta)}}{t \cdot r_a} \leq \frac{4s + 4k \cdot \log{(tr_a)}}{t \cdot r_a} = \theta^2,
	\]
	where the first inequality is because $(\ell,v) \mid \bProt$ is uniformly distributed over $[t] \times R_i$ (by~\Cref{clm:single-hyperedge-2}-\Cref{item:sh-independence}), and the second inequality and subsequent equality hold
	by the choice of $\theta$. 
	
	We can now apply Markov bound and obtain that for every $i \in L(\bProt)$,  
	\[
		\Pr_{(\ell_i,v_i) \mid \bProt}\Paren{\kl{\rbsigmaP{i}_{\ell_i,v_i} \mid \rbProt=\bProt}{\rbsigmaP{i}_{\ell_i,v_i}} > \theta} \leq \theta. 
	\]
	Since $\rbsigmaP{i}_{\ell_i,v_i}$ is distributed as $\cU_{S_b}$ (by~\Cref{clm:single-hyperedge-2}-\Cref{item:sh-bsigma}),  this implies that for every $i \in L(\bProt)$, 
	\[
		\Pr_{(\ell_i,v_i) \mid \bProt}\Paren{\text{$i$ is informative}} \leq \theta. 
	\]
	In addition, since the choice of $(\ell_i,v_i) \mid \bProt$ for all $i \in L(\bProt)$ (generally in $[k]$) are independent of each other (by~\Cref{clm:single-hyperedge-2}-\Cref{item:sh-independence}), 
	we have that, 
	\begin{align*}
		\Pr_{(L,e) \mid \bProt}\Paren{\text{$\geq k/4$ indices in $L(\bProt)$ are informative}} &\leq \sum_{\substack{S \subseteq L(\bProt) \\ \card{S} = k/4}} \prod_{i \in S} \Pr_{(\ell_i,v_i) \mid \bProt}\Paren{\text{$i$ is informative}} \\
		&\leq 2^{k} \cdot \theta^{k/4} = (16\theta)^{k/4}. \qed
	\end{align*}
	
	\end{proof}
	\paragraph{Conclusion.} We can now conclude the proof of~\Cref{lem:info-size}. By~\Cref{clm:info-size-p1}, with probability $1-\theta^k$ over the choice of $\bProt$, at least $3k/4$ indices are in $L(\bProt)$ (and so
	we can have at most $k/4$ informative indices outside $L(\bProt)$). 
	For any $\bProt$ with $\card{L(\bProt)} \geq 3k/4$, by~\Cref{clm:info-size-p2}, at most $k/4$ indices $i \in L(\bProt)$ can be informative for $(\bProt,L,e)$. Thus, by union bound, 
	\[
		\Pr_{(\bProt,L,e)}\Paren{|{\info(\bProt,L,e)}| > k/2} \leq \theta^k + (16 \cdot \theta)^{k/4} \leq 2 \cdot (16 \cdot \theta)^{k/4}. 
	\]
	This finalizes the proof of~\Cref{lem:info-size}. 
\end{proof}

\paragraph{Conditioning Step:} 
For the rest of the proof, we condition on any choice of $(\bProt,L,e)$ such that 
\begin{align}
	\info := \info(\bProt,L,e) \leq k/2, \label{eq:info-size}
\end{align}
and we know that by~\Cref{lem:info-size}, this event happens with high probability.




\newcommand{\eventA}{\event_{A}}
\newcommand{\eventB}{\event_{B}}

\renewcommand{\good}{\textsc{Good}}

\newcommand{\tsigma}{\tilde{\sigma}}
\newcommand{\rtsigma}{\tilde{\bm{\sigma}}}
\newcommand{\tgamma}{\tilde{\gamma}}
\newcommand{\rtgamma}{\tilde{\bm{\gamma}}}

\subsection{Part Three: Total Variation Distance of the Target Permutation}

At this point of the argument, we have already fixed the choice of $(\bProt,L,e)$ which, by~\Cref{eq:info-size}, gives us that
for many indices $i \in [k]$, the distribution of the hidden permutation $\bsigmaP{i}_{\ell_i,v_i}$ is close to uniform in the KL-divergence. Recall that the target
distribution $\gammastar$ is
\[
	\gammastar = \bsigmaP{1}_{\ell_1,v_1} \circ \cdots \circ \bsigmaP{k}_{\ell_k,v_k}. 
\]
Our goal is to show that this concatenation is going to make the distribution of $\gammastar$ exponentially closer to uniform (albeit in another measure of distance, not KL-divergence). 
This is proven in the following three steps: 
\begin{itemize}[leftmargin=15pt]
\item \textbf{Step I:} We first prove that even conditioned on $(\bProt,L,e)$, the distribution of hidden permutations are independent of each other. This is an application of the rectangle property of protocols. 
\item \textbf{Step II:} We then show that for many (but not all) of the hidden permutations, the KL-divergence bounds can be translated to bounds on the $\ell_2$-distance of the distribution from uniform. 
A crucial tool we use here is an strengthened Pinsker's inequality, due to~\cite{ChakrabartyK18}, that bounds a mixture of $\ell_1$- and $\ell_2$-distance between distributions via KL-divergence. The main part of the argument 
here is to handle the $\ell_1$-distance bounds, and postpone the $\ell_2$-distances to the next step. 
\item \textbf{Step III:} Finally, we use the Fourier analytic tools of~\Cref{app:fourier-permutation} combined with the $\ell_2$-distance bounds of the previous step, to bound the $\ell_2$-distance of the distribution of $\gammastar$ from uniform, and 
get the desired bound on the total variation distance as well.
\end{itemize}

\subsubsection{Step I: Conditional Independence of Inputs Even After Communication} 

We prove that players' inputs remain independent even conditioned on the messages and input of the referee. 
The proof is a standard application of the rectangle property of communication protocols extended to the multi-party setting and two specific parts in the inputs. But, despite its simplicity, this lemma  plays a crucial rule in our arguments in the last step. 

\begin{lemma}\label{lem:conditional-independence-hcp}
	For any transcript $\bProt$ and input $L=(\ell_1,\ldots,\ell_k), e = (v_1,\ldots,v_k)$ to the referee in $\mu$,    
	\begin{alignat*}{2}
		\dist({\rbsigmaP{1}_{\ell_1,v_1},\ldots,\rbsigmaP{k}_{\ell_k,v_k} \mid \bProt, L,e}) &= \prod_{i=1}^{k} \dist({\rbsigmaP{i}_{\ell_i,v_i} \mid \bProt}); 
	\end{alignat*}
\end{lemma}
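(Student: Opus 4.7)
The plan is to combine two ingredients: (i) the fact that the referee's inputs $L,e$ are independent of the pair $(\bSigma,\bProt)$ under $\mu$, which lets us drop the conditioning on $L,e$, and (ii) the classical \emph{rectangle property} of communication protocols extended to the multi-party blackboard setting, which forces $\bSigma \mid \bProt$ to factorize as a product over the $k$ players.

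First I would invoke \Cref{clm:single-hyperedge-2}-\Cref{item:sh-independence}, which tells us that under $\mu$ the pair $(\bSigma,\bProt)$ is independent of $(L,e)$. Since $\bsigmaP{i}_{\ell_i,v_i}$ is, for each fixed value of $(\ell_i,v_i)$, a deterministic coordinate-projection of $\bSigmaP{i}$, this independence immediately yields
\[
\dist(\rbsigmaP{1}_{\ell_1,v_1},\ldots,\rbsigmaP{k}_{\ell_k,v_k} \mid \bProt,L,e) = \dist(\rbsigmaP{1}_{\ell_1,v_1},\ldots,\rbsigmaP{k}_{\ell_k,v_k} \mid \bProt),
\]
and similarly $\dist(\rbsigmaP{i}_{\ell_i,v_i}\mid \bProt,L,e) = \dist(\rbsigmaP{i}_{\ell_i,v_i}\mid \bProt)$ for each $i$. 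So the conditioning on $L,e$ can be removed from both sides and the claim reduces to showing that $(\rbsigmaP{1}_{\ell_1,v_1},\ldots,\rbsigmaP{k}_{\ell_k,v_k})$ is a product distribution given $\bProt$.

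Next I would establish the rectangle property for the blackboard-based multi-party protocol $\prot$: the set of input tuples $(\bSigma^{(1)},\ldots,\bSigma^{(k)})$ consistent with any fixed transcript $\bProt$ is a combinatorial rectangle $A_1 \times \cdots \times A_k$, where $A_i$ depends only on $\bProt$ and on player $\QP{i}$'s input. This follows by induction on the length of $\bProt$: each message written on the board is a deterministic function of the speaking player's input and the already-visible board content, so the predicate ``player $\QP{i}$'s messages in $\bProt$ are generated by $\bSigma^{(i)}$'' depends only on $\bSigma^{(i)}$. Combined with the fact that the prior distribution on $\bSigma$ is a product over the $k$ players (\Cref{clm:single-hyperedge-2}-\Cref{item:sh-bsigma}), this gives
\[
\dist(\rbSigma \mid \bProt) \;=\; \prod_{i=1}^{k} \dist(\rbSigmaP{i}\mid \bProt).
\]
Finally, pushing this product structure through the coordinate-projections $\bSigmaP{i} \mapsto \bsigmaP{i}_{\ell_i,v_i}$ (for the fixed values $\ell_i,v_i$ we conditioned on) yields the desired factorization.

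The only potentially subtle step is the rectangle property in the blackboard model with arbitrary back-and-forth among $k$ players; the verification is a routine induction but must be done carefully, since unlike the two-party case the ``speaking order'' is itself a function of the transcript. Once that is in place the rest of the argument is essentially bookkeeping via \Cref{clm:single-hyperedge-2}.
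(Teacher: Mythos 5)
Your proposal is correct and reaches the same conclusion, but it takes a genuinely different route from the paper's proof. You prove the factorization by establishing the classical combinatorial rectangle property directly: the set of input tuples consistent with any fixed blackboard transcript is a combinatorial rectangle $A_1 \times \cdots \times A_k$, and combined with the product prior on $\rbSigma$ this yields $\dist(\rbSigma \mid \bProt) = \prod_i \dist(\rbSigmaP{i}\mid\bProt)$, from which coordinate projections give the stated identity. The paper instead runs an \emph{information-theoretic} version of this argument: it applies the chain rule to write the left-hand side as a product $\prod_i \dist(\rbsigmaP{i}_{\ell_i,v_i}\mid\bProt, \bsigmaP{<i}_{L_{<i},e_{<i}})$, reduces (via the data processing inequality, \itfacts{data-processing}) to showing $\mi{\rbSigmaP{i}}{\rbSigmaP{-i}\mid\rbProt}=0$, and then proves this by an induction over transcript bits using~\Cref{prop:info-decrease}, with the case split on which player wrote bit $j$. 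Both handle the subtlety you flagged --- that the next speaker is itself a deterministic function of the prior board contents --- your version does it by noting the transcript constraint splits into per-player predicates, the paper's does it by the per-bit case analysis inside the induction. Your combinatorial route is arguably cleaner and more self-contained; the paper's route is phrased in the mutual-information language used throughout the rest of~\Cref{sec:lb-multi-hph}, and in fact never explicitly proves the full product decomposition of $\dist(\rbSigma\mid\bProt)$, only the independence of the projected coordinates conditioned on $\bProt$. Both are valid proofs of the lemma, and you correctly use \Cref{clm:single-hyperedge-2}-\Cref{item:sh-independence} for the step that removes the conditioning on $(L,e)$ while retaining the specific index values $(\ell_i,v_i)$, which is exactly what the paper does as well.
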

\begin{proof}
	By the chain rule of probabilities, 
	\[
		\dist({\rbsigmaP{1}_{\ell_1,v_1},\ldots,\rbsigmaP{k}_{\ell_k,v_k} \mid \bProt, L,e}) = \prod_{i=1}^{k} \dist(\rbsigmaP{i}_{\ell_i,v_i} \mid \bProt, \bsigmaP{<i}_{L_{<i},e_{<i}},L,e) = \prod_{i=1}^{k} \dist(\rbsigmaP{i}_{\ell_i,v_i} \mid \bProt, \bsigmaP{<i}_{L_{<i},e_{<i}}), 
	\]
	where the final equality is by~\Cref{clm:single-hyperedge-2}-\Cref{item:sh-independence} because $(L,e)$ are independent of $(\bSigma,\bProt)$. 
	Thus, to prove the lemma, we only need to prove that for every $i \in [k]$, 
	\[
		\paren{\rbsigmaP{i}_{\ell_i,v_i} \perp \rbsigmaP{<i}_{L_{<i},e_{<i}} \mid \rbProt} \quad  \equiv \quad \mi{\rbsigmaP{i}_{\ell_i,v_i}}{\rbsigmaP{<i}_{L_{<i},e_{<i}} \mid \rbProt} = 0, 
	\]
	where the equivalence is by~\itfacts{info-zero}. 
	
	We have, 
	\begin{align*}
		\mi{\rbsigmaP{i}_{\ell_i,v_i}}{\rbsigmaP{<i}_{L_{<i},e_{<i}} \mid \rbProt} &\leq \mi{\rbSigmaP{i}}{\rbSigmaP{-i} \mid \rbProt} 
	\end{align*}
	by the data processing inequality (\itfacts{data-processing}), as $\rbSigmaP{i}$ fixes $\rbsigmaP{i}_{\ell_i,v_i}$ and $\rbSigmaP{-i}$ fixes $\rbsigmaP{<i}_{L_{<i},e_{<i}}$; notice that here $(L,e)$ are fixed and we are looking
	at specific indices of $\rbsigmaP{i}_{\ell_i,v_i}$ of $\rbSigmaP{i}$ and $\rbsigmaP{<i}_{L_{<i},e_{<i}}$ of $\rbSigmaP{-i}$, and as such we can indeed apply the data processing inequality. 
	
	With a slight abuse of notation, 
	we denote $\rbProt = \rbProt(1),\ldots,\rbProt(s)$ where for all $j \in [s]$, $\rbProt(j)$ denotes the $j$-th bit communicated by the players. We claim that for every $j \in [s]$, 
	\begin{align*}
		\mi{\rbSigmaP{i}}{\rbSigmaP{-i} \mid \rbProt(1),\ldots,\rbProt(j)} &\leq \mi{\rbSigmaP{i}}{\rbSigmaP{-i} \mid \rbProt(1),\ldots,\rbProt(j-1)};
	\end{align*}
	This is because: 
	\begin{itemize}
	\item if the $j$-th bit of the protocol is sent by player $\QP{i}$, then it is a deterministic function of $\rbSigmaP{i}$ and $\rbProt(1),\ldots,\rbProt(j-1)$ and thus
	\[
		\rbProt(j) \perp \rbSigmaP{-i} \mid \rbSigmaP{i}, \rbProt(1),\ldots,\rbProt(j-1);
	\]
	hence, the inequality holds by~\Cref{prop:info-decrease}. 
	\item  if the $j$-th bit of the protocol is sent by any player other than $\QP{i}$, then it is a deterministic function of $\rbSigmaP{-i}$ and $\rbProt(1),\ldots,\rbProt(j-1)$ and thus
	\[
		\rbProt(j) \perp \rbSigmaP{i} \mid \rbSigmaP{-i}, \rbProt(1),\ldots,\rbProt(j-1);
	\]
	hence, again, the inequality holds by~\Cref{prop:info-decrease}. 
	\end{itemize}
	Applying this inequality repeatedly then gives us 
	\[
		\mi{\rbSigmaP{i}}{\rbSigmaP{-i} \mid \rbProt(1),\ldots,\rbProt(j)} \leq \mi{\rbSigmaP{i}}{\rbSigmaP{-i}} = 0,
	\]
	where the second equality holds by~\itfacts{info-zero} because of the independence of the parameters (as shown in~\Cref{clm:single-hyperedge-2}-\Cref{item:sh-bsigma}). This concludes the proof of the lemma. 
\end{proof}

\subsubsection{Step II: From KL-Divergence to ``Strong'' $\ell_2$-Bounds}

Recall that we already fixed a choice of $(\bProt,L,e)$ that guarantees $\info=\info(\bProt,L,e)$ has size at most $k/2$ by~\Cref{eq:info-size}. We further define the following two sets for every $i \in [k]$: 
\begin{alignat*}{2}
	&A_i &&:= \set{\sigma \in S_b \mid \dist(\rbsigmaP{i}_{\ell_i,v_i} = \sigma \mid \bProt) > {2}/{b!}}; \\
	&B_i &&:= S_b \setminus A_i. 
\end{alignat*}
In words, $A_i$ is the set of those permutations in $S_b$ whose probability mass under $\rbsigmaP{i}_{\ell_i,v_i} \mid \bProt$ is ``much higher'' than that of uniform distribution (more precisely, more than twice), and $B_i$ collects the remaining permutations. 
We also define two events: 
\begin{itemize}
	\item \textbf{Event $\eventA(i)$}: the sampled input $\rbsigmaP{i}_{\ell_i,v_i}$ belongs to $A_i$; 
	\item \textbf{Event $\eventB(i)$}: the sampled input $\rbsigmaP{i}_{\ell_i,v_i}$ belongs to $B_i$. 
\end{itemize}
Note that these two events are complement of each other and only depend on the choice of $\bsigmaP{i}_{\ell_i,v_i}$ at this point.  
Finally, we define the set of \textbf{good} indices $i \in [k]$ as: 
\[
	\good := \set{i \in [k] \mid \text{$i$ is \underline{not} in $\info$ and $\eventB(i)$ happens}}. 
\]
The reason we consider indices in $\good$ as ``good'' is because we can translate our KL-divergence bound on $\rbsigmaP{i}_{\ell_i,v_i}$ for $i \in \good$ (even conditioned on them being in good) into a ``strong'' 
bound on their $\ell_2$-distance from the uniform distribution (this is formalized in~\Cref{lem:good}). 

We start by showing that $\eventB$ happens frequently for indices not in $\info$. 

\begin{claim}\label{clm:bad-each-i}
		For any $i \notin \info$: 
		\[
		\Pr\paren{\eventB{(i)} \mid \bProt} \geq 1- \sqrt{8\theta}.
		\] 
\end{claim}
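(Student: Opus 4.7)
The plan is to use the definition of ``not informative'' to obtain a KL-divergence bound, convert it to total variation distance via Pinsker's inequality, and then relate $\Pr(\eventA(i) \mid \bProt)$ to the total variation distance via a Markov-type argument that exploits the factor-of-two gap in the definition of $A_i$.

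More concretely, let $p(\sigma) := \mu(\rbsigmaP{i}_{\ell_i,v_i} = \sigma \mid \bProt)$ and $q(\sigma) := 1/b!$ (the uniform distribution on $S_b$). Since $i \notin \info$, Definition~\ref{def:theta} gives
\[
\kl{p}{q} \leq \theta.
\]
By Pinsker's inequality, $\tvd{p}{q} \leq \sqrt{\theta/2}$. Since $\mathrm{TVD}$ equals the total positive mass difference,
\[
\sum_{\sigma \in S_b :\, p(\sigma) > q(\sigma)} \bigl(p(\sigma) - q(\sigma)\bigr) = \tvd{p}{q} \leq \sqrt{\theta/2}.
\]
Since $A_i \subseteq \{\sigma : p(\sigma) > 2/b! > q(\sigma)\}$, the set $A_i$ is contained in this positive-difference set, so in particular
\[
\sum_{\sigma \in A_i} \bigl(p(\sigma) - q(\sigma)\bigr) \leq \sqrt{\theta/2}.
\]

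The key observation is that for any $\sigma \in A_i$, we have $p(\sigma) > 2/b! = 2q(\sigma)$, and hence $p(\sigma) - q(\sigma) > p(\sigma)/2$. Summing over $\sigma \in A_i$,
\[
\Pr\bigl(\eventA(i) \mid \bProt\bigr) = \sum_{\sigma \in A_i} p(\sigma) < 2 \sum_{\sigma \in A_i} \bigl(p(\sigma) - q(\sigma)\bigr) \leq 2\sqrt{\theta/2} = \sqrt{2\theta} \leq \sqrt{8\theta}.
\]
Taking complements yields $\Pr(\eventB(i) \mid \bProt) \geq 1 - \sqrt{8\theta}$, as desired.

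The argument is entirely elementary and the only obstacle is wiring together the right inequalities in the right order; the core trick is the factor-of-two gap in the definition of $A_i$, which is precisely what lets us convert a $\mathrm{TVD}$-bound (which is a difference $p - q$) into a bound on $\Pr(\eventA(i) \mid \bProt)$ itself (the total mass $p$ on $A_i$). No conditioning subtleties arise because everything is already conditioned on $\bProt$, and $\eventA(i), \eventB(i)$ depend only on $\bsigmaP{i}_{\ell_i,v_i}$.
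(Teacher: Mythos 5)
Your proof is correct and takes essentially the same route as the paper: Pinsker's inequality converts the KL bound to a TVD bound, and the factor-of-two gap in the definition of $A_i$ (namely $p(\sigma) - q(\sigma) > p(\sigma)/2$ for $\sigma \in A_i$) turns a bound on $\sum_{A_i}(p-q)$ into a bound on $\Pr(\eventA(i) \mid \bProt)$. The only difference is that you invoke the one-sided characterization $\tvd{p}{q} = \sum_{p(\sigma)>q(\sigma)}(p(\sigma)-q(\sigma))$ rather than the $\frac12\norm{p-q}_1$ form used in the paper, which saves you a factor of two and gives the tighter bound $\sqrt{2\theta}$ before relaxing to $\sqrt{8\theta}$; this is a minor optimization, not a structurally different argument.
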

\begin{proof}
	By the definition of $i$ not being in $\info$ and by Pinsker's inequality of~\Cref{fact:pinskers}, we have, 
	\begin{align*}
		\tvd{\dist(\rbsigmaP{i}_{\ell_i,v_i} \mid \bProt)}{\cU_{S_b}} \leq \sqrt{\frac{1}{2} \cdot \kl{\dist(\rbsigmaP{i}_{\ell_i,v_i} \mid \bProt)}{\cU_{S_b}}} \leq \sqrt{\frac{\theta}{2}}. 
	\end{align*}
	On the other hand, 
	\begin{align*}
	\tvd{\dist(\rbsigmaP{i}_{\ell_i,v_i} \mid \bProt)}{\cU_{S_b}} &\geq \frac12 \cdot \sum_{\sigma \in A_i} \card{\dist(\rbsigmaP{i}_{\ell_i,v_i} = \sigma \mid \bProt)-\frac{1}{b!} } \\
			 & \geq \frac{1}{4}  \cdot \sum_{\sigma \in A_i} \dist(\rbsigmaP{i}_{\ell_i,v_i} = \sigma \mid \bProt),
	\end{align*}
	by the definition of $A_i$ (note that all permutations in $A_i$ have a probability of at least~$2/b!$ in our distribution, higher than compared to the uniform distribution). 
	Combining the above two equations gives us: 
	\begin{align*}
		\Pr\paren{\eventA(i) \mid \bProt} = \sum_{\sigma \in A_i} \dist(\rbsigmaP{i}_{\ell_i,v_i} = \sigma \mid \bProt)\leq \sqrt{8\theta}. 
	\end{align*}
	As $\eventB(i)$ is the complement of $\eventA(i)$, we can conclude the proof. 
\end{proof}
\noindent
A simple corollary of~\Cref{clm:bad-each-i} is the following (probabilistic) lower bound on the size of $\good$: 
\begin{align}
	\Pr\paren{\card{\good} < k/4} \leq {k/2 \choose k/4} \cdot ({8\theta})^{k/8} \leq (128\theta)^{k/8} \label{eq:good-size}; 
\end{align}
here, we used the fact that for $\good$ to be of size less than $k/4$, at least $k/4$ indices from the first $k/2$ indices of $\info$ should have $\eventA$ happen for them; we then 
used~\Cref{clm:bad-each-i} to bound this probability and combined it with~\Cref{lem:conditional-independence-hcp} to crucially use the independence of the events $\eventA(i)$ for 
different values of $i \in [k]$ (the second inequality is by just upper bounding ${k/2 \choose k/4}$ by $2^{k/2} = 16^{k/8}$).

The following lemma now establishes why the indices in $\good$ are ``good'' for our purpose: on these indices, the distribution of $\bsigmaP{i}_{\ell_i,v_i}$ is ``quite'' close to uniform in $\ell_2$-distance, much closer than a $\theta^2$-bound that follows from 
directly applying Pinsker's inequality to the KL-divergence bound of indices not in $\info$; we prove this using the $\ell_2/\ell_1$-version of Pinsker's inequality due to~\cite{ChakrabartyK18} mentioned in~\Cref{prop:pinsker++}.  

\begin{lemma}\label{lem:good}
	For any $i \in \good$: 
	\[
		\norm{\dist(\rbsigmaP{i}_{\ell_i,v_i} \mid \bProt, i \in \good) - \cU_{S_b}}_2^2 \leq \frac{20\sqrt{\theta}}{b!}. 
	\]
\end{lemma}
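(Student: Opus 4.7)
The plan is to analyze the distribution $P_B := \dist(\rbsigmaP{i}_{\ell_i,v_i} \mid \bProt, i \in \good)$ directly in terms of the unconditional distribution $P := \dist(\rbsigmaP{i}_{\ell_i,v_i} \mid \bProt)$, since by~\Cref{lem:conditional-independence-hcp} and since $i \notin \info$ is determined by $\bProt$ alone, $P_B$ is simply $P$ restricted to $B_i$ and renormalized: $P_B(\sigma) = P(\sigma)/P(B_i)$ for $\sigma \in B_i$ and $P_B(\sigma) = 0$ for $\sigma \in A_i$. Writing $Q := \cU_{S_b}$, I would decompose the squared $\ell_2$-distance as
\[
\norm{P_B - Q}_2^2 = \sum_{\sigma \in A_i} \Paren{\tfrac{1}{b!}}^2 \, + \, \sum_{\sigma \in B_i} \Paren{P_B(\sigma) - \tfrac{1}{b!}}^2
\]
and bound the two sums separately.

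For the first sum, I would use the bound $P(A_i) \leq \sqrt{8\theta}$ already established inside the proof of~\Cref{clm:bad-each-i}, together with the fact that every $\sigma \in A_i$ satisfies $P(\sigma) > 2/b!$. This immediately gives $|A_i| \leq P(A_i) \cdot b!/2 \leq \sqrt{2\theta}\cdot b!$, so the $A_i$-contribution is at most $\sqrt{2\theta}/b!$. This is already the dominant term, and in fact this is where the $\sqrt{\theta}$ (rather than $\theta$) in the bound comes from.

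For the second sum, I would write
\[
P_B(\sigma) - \tfrac{1}{b!} = \frac{P(\sigma) - 1/b!}{P(B_i)} + \frac{P(A_i)/b!}{P(B_i)},
\]
apply $(x+y)^2 \leq 2(x^2+y^2)$, and then use the strengthened Pinsker inequality of~\cite{ChakrabartyK18} (\Cref{prop:pinsker++}). The key point here is that $B_i$ is by definition the subset where $P(\sigma)/Q(\sigma) \leq 2$; on such ``bounded ratio'' subsets the KL-vs-$\ell_1/\ell_2$ inequality gives a chi-square bound of the form
\[
\sum_{\sigma \in B_i} \frac{(P(\sigma)-Q(\sigma))^2}{Q(\sigma)} \;\leq\; O\!\paren{\kl{P}{Q}} \;\leq\; O(\theta),
\]
which multiplying by $Q(\sigma)=1/b!$ yields $\sum_{\sigma \in B_i}(P(\sigma) - 1/b!)^2 \leq O(\theta/b!)$. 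Combining this with the trivial bound $|B_i| \leq b!$ on the second (additive normalization) term, and using that $P(B_i) \geq 1-\sqrt{8\theta} \geq 1/2$ for $\theta$ sufficiently small, yields a $B_i$-contribution of at most $O(\theta/b!)$.

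Adding the two parts gives $\norm{P_B-Q}_2^2 \leq \sqrt{2\theta}/b! + O(\theta/b!)$, and since $\theta < 1/100$ the $\sqrt{\theta}$ term dominates, yielding the claimed $20\sqrt{\theta}/b!$ bound. The main obstacle is really just invoking~\Cref{prop:pinsker++} in the precise form needed; the rest is bookkeeping. Conceptually, the proof exploits the fact that $A_i$ and $B_i$ were chosen as the exact threshold set at which this refined Pinsker inequality becomes effective, so the $B_i$ part is controlled by chi-squared (and hence by KL) while the mass on $A_i$ is small enough to be absorbed by the total-variation-style bound coming from~\Cref{clm:bad-each-i}.
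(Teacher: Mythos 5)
Your proposal is correct and follows essentially the same route as the paper: apply the strengthened Pinsker inequality of \cite{ChakrabartyK18} with the threshold sets $A_i,B_i$ to get a chi-squared bound on $B_i$, then control the effect of renormalization using the bound on $\Pr(\eventA(i)\mid\bProt)$ from \Cref{clm:bad-each-i}. The only cosmetic difference is that you absorb the renormalization via $(x+y)^2\le 2(x^2+y^2)$ whereas the paper expands $(1+\delta_i)\nu(\sigma)$ exactly; both work.

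You have, however, caught something that the paper's proof glosses over, and this is worth flagging. The paper writes
\[
\norm{\dist(\rbsigmaP{i}_{\ell_i,v_i}\mid\bProt, i\in\good)-\cU_{S_b}}_2^2 \;=\; \sum_{\sigma\in B_i}\Paren{\dist(\rbsigmaP{i}_{\ell_i,v_i}=\sigma\mid\bProt, i\in\good)-\tfrac{1}{b!}}^2,
\]
justified by the remark that only $\sigma\in B_i$ have nonzero probability under the conditioned law. But the $\ell_2$-norm ranges over all of $S_b$, and for $\sigma\in A_i$ the conditioned probability is $0$, not $1/b!$, contributing $|A_i|/(b!)^2$ to the squared norm; this term is simply dropped. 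Your proof correctly retains it and bounds it, via $\Pr(\eventA(i)\mid\bProt)\le\sqrt{8\theta}$ and $\dist(\rbsigmaP{i}_{\ell_i,v_i}=\sigma\mid\bProt)>2/b!$ on $A_i$, by $|A_i|/(b!)^2 \le \sqrt{2\theta}/b!$. Your further observation that this is in fact the source of the $\sqrt{\theta}$ scaling (the $B_i$ part being $O(\theta/b!)$) is accurate once the renormalization cross-term is handled by Cauchy--Schwarz rather than by expansion, and it gives a cleaner conceptual picture than the paper's argument, where the $\sqrt{\theta}$ seems to emerge from the $\delta_i$ cross-term. The final constant in the lemma statement is immaterial downstream (it is absorbed into the $O(\cdot)$ in \Cref{thm:mph}), so the asserted bound $20\sqrt{\theta}/b!$ still holds up to constant, but your version is the one that is actually complete.
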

\begin{proof}
	Firstly, consider an index $i \notin \info$ (but we still do not condition on $i$ being in $\good$ or not). By the definition of $\info$, we have, 
	\[
		\kl{\dist({\rbsigmaP{i}_{\ell_i,v_i} \mid \bProt})}{\cU_{S_b}} \leq \theta. 
	\]
	By applying~\Cref{prop:pinsker++} to this KL-divergence bound (taking $A=A_i$ and $B=B_i$ in the proposition), we get
	\[
		\sum_{\sigma \in A_i} \card{\dist({\rbsigmaP{i}_{\ell_i,v_i}=\sigma \mid \bProt})-\frac{1}{b!}} + 
		\sum_{\sigma \in B_i} \frac{\paren{\dist({\rbsigmaP{i}_{\ell_i,v_i}=\sigma \mid \bProt})-\dfrac{1}{b!}}^2}{\dist({\rbsigmaP{i}_{\ell_i,v_i}=\sigma \mid \bProt})} \leq \frac{1}{1-\ln{2}} \cdot \theta \leq 4\theta. 
	\]
	By just taking the second term, and since for all $\sigma \in B_i$, $\dist({\rbsigmaP{i}_{\ell_i,v_i}=\sigma \mid \bProt}) \leq 2/b!$, 
	we get that 
	\begin{align}
		\sum_{\sigma \in B_i} {\paren{\dist({\rbsigmaP{i}_{\ell_i,v_i}=\sigma \mid \bProt})-\dfrac{1}{b!}}^2} \leq \frac{8\theta}{b!}. \label{eq:first-l2-norm}
	\end{align}
	We now consider the effect of conditioning on $i \in \good$ as well. 
	Define $\delta_i \in \IR$ such that  
	\[
		(1+\delta_i) := \dist(i \in \good \mid \bProt)^{-1} = \Paren{\dist({\eventB(i) \mid \bProt})}^{-1},
	\]
	which by~\Cref{clm:bad-each-i} and because $\theta < 1/100$ (see~\Cref{def:theta}), leads to
	\begin{align}
		\delta_i \leq \sqrt{18\cdot\theta}. \label{eq:deltai-eps}
	\end{align}
	This gives us, for every $\sigma \in B_i$,
	\begin{equation}\label{eq:condition-good-delta}
		\dist({\rbsigmaP{i}_{\ell_i,v_i}=\sigma \mid \bProt, i \in \good)} = \frac{\dist({\rbsigmaP{i}_{\ell_i,v_i}=\sigma \wedge i \in \good \mid \bProt})}{\dist({i \in \good \mid \bProt})} = 
		(1+\delta_i) \cdot \dist({\rbsigmaP{i}_{\ell_i,v_i}=\sigma \mid \bProt}),
	\end{equation}
	where in the last equality, we used the fact that for $\sigma \in B_i$, $\rbsigmaP{i}_{\ell_i,v_i} = \sigma$ also implies $i \in \good$ (as $\eventB(i)$ happens). 
	We can now calculate the $\ell_2$-distance of our desired distribution from $\cU_{S_b}$. For the simplicity of exposition in the following calculations, we denote, 
	\[
		\nu(\sigma) := \dist({\rbsigmaP{i}_{\ell_i,v_i}=\sigma \mid \bProt)}. 
	\]
	We have, 
	\begin{align*}
		&\norm{\dist({\rbsigmaP{i}_{\ell_i,v_i}=\sigma \mid \bProt, i \in \good)}- \cU_{S_b}}_2^2 \\
		&~~= \sum_{\sigma \in B_i} \paren{\dist({\rbsigmaP{i}_{\ell_i,v_i}=\sigma \mid \bProt, i \in \good)}-\frac{1}{b!}}^2 \tag{conditioning on $i \in \good$, implies that only $\sigma \in B_i$ have non-zero probability}\\
		&~~= \sum_{\sigma \in B_i} \paren{(1+\delta_i)  \cdot \nu(\sigma)-\frac{1}{b!}}^2 \tag{by \Cref{eq:condition-good-delta}  and the definition of $\nu(\sigma)$ above} \\
		&~~= \sum_{\sigma \in B_i} \paren{\paren{ \nu(\sigma)-\frac{1}{b!}}^2 + {\delta_i}^2 \cdot  \nu(\sigma)^2 + 
		2\delta_i \cdot \nu(\sigma) \cdot (\nu(\sigma)-\frac{1}{b!}))} \\
		&~~\leq \sum_{\sigma \in B_i} \paren{\paren{ \nu(\sigma)-\frac{1}{b!}}^2 + {\delta_i}^2 \cdot  \nu(\sigma)^2 + 
		2\delta_i \cdot \nu(\sigma) \cdot \frac{1}{b!}} \tag{as $\nu(\sigma)  \leq 2/b!$ since $\sigma \in B_i$}\\
		&~~\leq \paren{\sum_{\sigma \in B_i} {\paren{\nu(\sigma)-\dfrac{1}{b!}}^2}} + \frac{4{\delta_i}^2}{b!} + \frac{4\delta_i}{b!} \tag{by the bound of $\nu(\sigma) \leq 2/b!$ for $\sigma \in B_i$, and $\card{B_i} \leq b!$} \\
		&~~\leq \frac{8\theta + 72\theta + \sqrt{288\cdot \theta}}{b!} \tag{by~\Cref{eq:first-l2-norm} for the first term and~\Cref{eq:deltai-eps} for $\delta_i$-terms} \\
		&~~\leq \frac{20\sqrt{\theta}}{b!}, \tag{as $\theta < 1/100$}  
	\end{align*}
	concluding the proof. 
\end{proof}

\paragraph{Conditioning Step:} For the rest of the proof, we further condition on the choice of the set $\good$ such that 
\begin{align}
	\card{\good} \geq k/4, \label{eq:good-k4}
\end{align}
and we know that by~\Cref{eq:good-size}, this event happens with high probability. We then condition on the 
entire choice of $\bsigmaP{i}_{\ell_i,v_i} \in S_b$ for any $i \notin \good$. At this point, the only 
remaining variables which are not fixed are $\bsigmaP{i}_{\ell_i,v_i}$ for $i \in \good$, which are still chosen independently by~\Cref{lem:conditional-independence-hcp} (although we have conditioned on $i \in \good$, which
influences their individual distribution). 

\subsubsection{Step III: Amplified $\ell_2$-Distance for the Target Permutation}

We now go over the last step of our proof by analyzing the distribution of the target permutation 
\[
	\gammastar = \bsigmaP{1}_{\ell_1,v_1} \circ \cdots \circ \sigmaP{k}_{\ell_k,v_k}; 
\]
in particular, we show that the distribution of $\gammastar$ is exponentially closer to the uniform distribution compared to the bounds of~\Cref{lem:good} as a result 
of concatenation (namely, that independent concatenation reduces the ``bias''). 

To continue, we need some notation. 
Let $g := \card{\good}$ and $i_1,\ldots,i_g$ be indices in $\good$ sorted in increasing order. Define the following $g$ permutations for $j \in [g]$ as concatenation of each $\bsigmaP{i_j}_{\ell_{i_j},v_{i_j}}$ 
with all subsequent permutations which are not in $\good$ until we hit the index $i_{j+1}$; formally, 
\begin{align*}
	\tsigma_{j} := \bsigmaP{i_j}_{\ell_{i_j},v_{i_j}} \quad \circ \quad \bsigmaP{i_j+1}_{\ell_{i_j+1},v_{i_j+1}} \quad \ldots \quad \circ \quad \bsigmaP{i_{j+1}-1}_{\ell_{i_{j+1}-1},v_{i_{j+1}-1}}; 
\end{align*}
Notice that the distribution of each $\tsigma_j$ is the same as that of $\bsigmaP{i_j}_{\ell_{i_j},v_{i_j}} \mid \bProt,i_j \in \good $ except for a fixed ``shift'' (by the fixed permutations indexed between $i_j , i_{j+1} \in \good$). Moreover, define 
\[
	\tgamma := \tsigma_{1} \circ \ldots \circ \tsigma_{g};
\]
the distribution of $\tgamma$ is now the same as that of the target permutation $\gammastar  \mid \bProt,i_j \in \good $, again, except for a fixed shift (by the fixed permutations before $i_1 \in \good$). 
Thus, for $j \in [g]$, we have, 
\begin{equation}
    \begin{aligned}
	\norm{\dist(\rtsigma_j) - \cU_{S_b}}_2^2& &&= \norm{\dist(\rbsigmaP{i_j}_{\ell_{i_j},v_{i_j}} \mid \bProt, i_j \in \good) - \cU_{S_b}}_2^2 \leq \frac{20\sqrt{\theta}}{b!};  \\
	\norm{\dist(\rtgamma)- \cU_{S_b}}^2_2& &&= \norm{\dist(\rgammastar \mid \bProt,  \good) - \cU_{S_b}}_2^2; \\
	\dist(\rtsigma_1,\ldots,\rtsigma_g)& &&= ~\dist(\rtsigma_1) \times \cdots \times \dist(\rtsigma_g).
    \end{aligned}
    \label{eq:tilde-ones}
\end{equation}
here, the first inequality of the first equation is by~\Cref{lem:good} and equality of the last equation is by~\Cref{lem:conditional-independence-hcp}. 

We can now prove the following lemma on the distance of $\tgamma$ from the uniform distribution, using a basic application of the Fourier analysis on permutations reviewed in~\Cref{app:fourier-permutation}. 

\begin{lemma}[``Concatenation reduces $\ell_2$-distances'']\label{lem:conc-amplify}
		\[
			\norm{\dist(\rtgamma)- \cU_{S_b}}^2_2 \leq \frac{({20\sqrt{\theta})}^{g}}{b!}
		\]
\end{lemma}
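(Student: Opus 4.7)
The plan is to leverage Fourier analysis on the symmetric group $S_b$ (as developed in~\Cref{app:fourier-permutation}) to show that convolution of distributions on $S_b$ exponentially shrinks the $\ell_2$-distance from uniform, provided each factor is individually close to uniform. The key tools will be: (i) the convolution theorem, which says that the Fourier transform of a convolution (induced by concatenation of independent group-valued random variables) equals the matrix product of Fourier transforms at each irreducible representation $\rho$; and (ii) Plancherel's identity (\Cref{prop:Four-plancherel}), which writes the squared $\ell_2$-distance from uniform as $\tfrac{1}{b!}\sum_{\rho \neq \text{triv}} d_\rho \, \|\hat{f}(\rho)\|_F^2$, where the trivial representation drops out because $\hat{f}(\text{triv})=1$ for every distribution $f$.

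Write $\mu_j := \dist(\rtsigma_j)$ and $\mu := \dist(\rtgamma)$. By~\Cref{eq:tilde-ones}, the $\mu_j$'s are independent, each satisfies $\|\mu_j - \cU_{S_b}\|_2^2 \leq 20\sqrt{\theta}/b!$, and $\mu = \mu_1 * \cdots * \mu_g$. Plancherel's identity then gives, for every non-trivial irreducible $\rho$ of $S_b$,
\[
\|\hat{\mu_j}(\rho)\|_F^2 \;\leq\; d_\rho \, \|\hat{\mu_j}(\rho)\|_F^2 \;\leq\; b!\cdot\|\mu_j - \cU_{S_b}\|_2^2 \;\leq\; 20\sqrt{\theta},
\]
and hence in particular $\|\hat{\mu_j}(\rho)\|_{\text{op}}^2 \leq \|\hat{\mu_j}(\rho)\|_F^2 \leq 20\sqrt{\theta}$.

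Next, I combine the convolution theorem with the standard sub-multiplicative inequality $\|AB\|_F \leq \|A\|_{\text{op}}\|B\|_F$, applied $g-1$ times to the product $\hat{\mu}(\rho) = \hat{\mu_1}(\rho)\cdots\hat{\mu_g}(\rho)$, to obtain
\[
\|\hat{\mu}(\rho)\|_F^2 \;\leq\; \Big(\prod_{j<g}\|\hat{\mu_j}(\rho)\|_{\text{op}}^2\Big)\cdot \|\hat{\mu_g}(\rho)\|_F^2 \;\leq\; (20\sqrt{\theta})^{g-1}\cdot \|\hat{\mu_g}(\rho)\|_F^2.
\]
Plugging this into Plancherel's identity for $\mu$ yields
\[
\|\mu - \cU_{S_b}\|_2^2 \;=\; \tfrac{1}{b!}\!\!\sum_{\rho \neq \text{triv}}\! d_\rho\,\|\hat{\mu}(\rho)\|_F^2 \;\leq\; (20\sqrt{\theta})^{g-1}\cdot\tfrac{1}{b!}\!\!\sum_{\rho \neq \text{triv}}\! d_\rho\,\|\hat{\mu_g}(\rho)\|_F^2 \;=\; (20\sqrt{\theta})^{g-1}\,\|\mu_g - \cU_{S_b}\|_2^2,
\]
and the single-factor bound for $\mu_g$ delivers the claimed $(20\sqrt{\theta})^g/b!$.

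The only real subtlety is choosing the right mix of matrix norms when bounding $\|\prod_j \hat{\mu_j}(\rho)\|_F^2$: using Frobenius sub-multiplicativity everywhere would keep each factor $\|\hat{\mu_j}(\rho)\|_F^2$ tied to a particular $\rho$ and fail to interact cleanly with the Plancherel sum, whereas using one Frobenius factor plus $g-1$ operator-norm factors lets me pull the worst-case-over-$\rho$ operator-norm bound outside the sum while still recognizing the remaining sum as $\|\mu_g - \cU_{S_b}\|_2^2$. The rest is bookkeeping with normalization conventions, which are fixed in~\Cref{app:fourier-permutation}.
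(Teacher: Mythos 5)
Your proof is correct and rests on the same two pillars as the paper's argument: Plancherel's identity (to relate $\ell_2$-distance from uniform to Fourier coefficients at non-trivial irreducibles) and the convolution theorem (to factor $\hat{\mu}(\rho)$ as a matrix product). The one place you diverge is the final norm bookkeeping. You peel off $g-1$ operator norms via $\|AB\|_F \le \|A\|_{\textnormal{op}}\|B\|_F$, bound each uniformly over $\rho$ by $20\sqrt\theta$, and let the one remaining Frobenius factor re-assemble inside the Plancherel sum as $\|\mu_g-\cU\|_2^2$. The paper instead uses Frobenius sub-multiplicativity for all $g$ factors, bounds $d_\rho \le d_\rho^g$ (valid since $d_\rho\ge 1$), and then applies the elementary inequality $\sum_\rho \prod_j a_{\rho,j} \le \prod_j \sum_\rho a_{\rho,j}$ for nonnegative $a_{\rho,j}$ to decouple the $\rho$-sum from the $j$-product. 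Both give exactly $(20\sqrt\theta)^g/b!$. Your closing remark---that pure Frobenius sub-multiplicativity ``would fail to interact cleanly with the Plancherel sum''---is a bit too strong: as the paper shows, the all-Frobenius route does go through once you invoke the sum-product inequality. The genuine tradeoff is just which standard inequality you lean on (mixed op/Frobenius bound vs.\ sum-product over a nonnegative tensor), and the two are comparable in length and level of elementariness.
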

\begin{proof}
	Recall the definition of the Fourier transform and the set of irreducible representations for $S_b$ from~\Cref{app:fourier-permutation}. 
	For the simplicity of exposition in this proof, we denote,  
	\begin{align*}
		\nu_j &:= \dist(\rtsigma_j)~\text{for every $j \in [g]$}, \quad \text{and} \quad \nu := \dist(\rtgamma) = \nu_1 \circ \cdots \circ \nu_g. 
	\end{align*}
	For the distribution $\mu$, by Plancherel's identity of \Cref{prop:Four-plancherel},
	\begin{align}
		\norm{\nu-\cU_{S_b}}_2^2 &= \sum_{\sigma \in S_b} \paren{\nu(\sigma)-\frac1{b!}}^2 \tag{by the definition of the $\ell_2$-norm} \\
		&= \frac1{b!}  \cdot \sum_{\rho \in \reps} d_{\rho} \sum_{i,j \in [d_{\rho}]} \paren{\hrho{\nu}-\hrho{\cU_{S_b}}}^2_{i,j} \tag{by~\Cref{prop:Four-plancherel}} \\
		&= \frac1{b!} \cdot\paren{ d_{\rho_0} \paren{\hrhot{\nu}-\hrhot{\cU_{S_b}}}^2 + \sum_{\rho \neq \rho_0} d_{\rho} \sum_{i,j \in [d_{\rho}]} \paren{\hrho{\nu}-\hrho{\cU_{S_b}}}^2_{i,j}} \tag{by splitting the outer sum over $\rho_0$ and 
		remaining representations in $\reps$}\\
		&= \frac1{b!} \cdot \paren{\sum_{\rho \neq \rho_0} d_{\rho} \sum_{i,j \in [d_{\rho}]} \hrho{\nu}_{i,j}^2} \tag{as $\hrhot{\nu} = \hrho{\cU_{S_b}} = 1$, and $\hrho{\cU_{S_b}} = \mathbf{0} \in \IR^{d_{\rho} \times d_{\rho}}$ for $\rho \neq \rho_0$ by~\Cref{fact:Four-trans}} \\
		&= \frac1{b!} \cdot  \sum_{\rho \neq \rho_0} d_{\rho} \cdot\norm{\hrho{\nu}}^2_F, \label{eq:maybe-1}
	\end{align}
	where $\norm{\cdot}_F$ denotes the Frobenius norm of the matrix $\hrho{\nu} \in \IR^{d_{\rho} \times d_{\rho}}$.  Doing the same exact calculation for each $\nu_j$ for $j \in [g]$, we also get, 
	\begin{align}
		\sum_{\rho \neq \rho_0} d_{\rho} \cdot\norm{\hrho{\nu_j}}^2_F = b! \cdot \norm{\nu_j-\cU_{S_b}}_2^2 \leq {20\sqrt{\theta}}, \label{eq:maybe-2}
	\end{align}
	where the inequality is by~\Cref{eq:tilde-ones}. 

	Finally, note that for every $\rho \in \reps$, by~\Cref{fact:Four-convo} (the convolution theorem), 
	\[
		\hrho{\nu} = \prod_{j=1}^{g} \hrho{\nu_j}. 
	\]
	Combining all these, we now have, 
	\begin{align*}
		 \norm{\nu-\cU_{S_b}}_2^2 &= \frac1{b!} \cdot \sum_{\rho \neq \rho_0} d_{\rho} \cdot \norm{\hrho{\nu}}^2_F \tag{by~\Cref{eq:maybe-1}}\\
		 &= \frac1{b!} \cdot \sum_{\rho \neq \rho_0} d_{\rho} \cdot \norm{\prod_{j=1}^{g}\hrho{\nu_j}}^2_F \tag{by the application of~\Cref{fact:Four-convo} right above} \\
		&\leq \frac1{b!} \cdot \sum_{\rho \neq \rho_0} d_{\rho} \cdot \prod_{j=1}^{g} \norm{\hrho{\nu_j}}^2_F \tag{as Frobenius norm is sub-multiplicative} \\
		&\leq   \frac1{b!} \cdot \sum_{\rho \neq \rho_0} \prod_{j=1}^{g} d_{\rho} \cdot \norm{\hrho{\nu_j}}^2_F \tag{as $d_{\rho}\geq 1$} \\
		&\leq \frac1{b!} \cdot \prod_{j=1}^{g} \sum_{\rho \neq \rho_t} d_{\rho} \cdot \norm{\hrho{\nu_j}}^2_F \tag{as $d_{\rho}, \norm{\hrho{\nu_j}}^2_F \geq 0$ for each $\rho \in \reps$ and $j \in [g]$} \\
		&\leq \frac1{b!} \cdot \prod_{j=1}^{g} (20\sqrt{\theta}) \tag{by~\Cref{eq:maybe-2}} \\
		&= \frac{(20\sqrt{\theta})^{g}}{b!},
	\end{align*}
	concluding the proof. 
\end{proof}

It is worth mentioning that the calculations we have in~\Cref{lem:conc-amplify} are tight (see~\Cref{app:tight-remark} for an example and more discussion). 

Plugging back the original variables we had in~\Cref{eq:tilde-ones} inside~\Cref{lem:conc-amplify}, and using the bound of~\Cref{eq:good-k4} on the size of $\good$, we obtain that 
\[
	\norm{\dist(\rgammastar \mid \bProt,  \good) - \cU_{S_b}}_2^2 \leq \frac{1}{b!} \cdot (20\sqrt{\theta})^{k/4}, 
\]
which in turn, by the $\ell_1$-$\ell_2$ gap, gives us
\begin{align}
	\tvd{\dist(\rgammastar \mid \bProt,  \good)}{\cU_{S_b}} &\leq \norm{\dist(\rgammastar \mid \bProt,  \good) - \cU_{S_b}}_1 \tag{by the definition of total variation distance in~\Cref{eq:tvd}} \\
	&\leq \sqrt{b!} \cdot \norm{\dist(\rgammastar \mid \bProt,  \good) - \cU_{S_b}}_2 \tag{as $\norm{x}_1 \leq \sqrt{m} \cdot \norm{x}_2$ for any $x \in \IR^m$} \\
	&\leq (20\sqrt{\theta})^{k/8}. \label{eq:final-one}
\end{align}
We have thus finally bounded the TVD of the target permutation from the uniform distribution (under conditioning on several high probability events), and are now done with the proof.




\subsection{Putting Everything Together: Proof of~\Cref{thm:mph}}

We are now ready to conclude the proof of~\Cref{thm:mph} by simply retracing back all the steps of the proof and accounting for the probability of several events we conditioned on. 
The rest of this proof is basically bookkeeping and some tedious calculations. 

Recall that $\prot$ is any deterministic protocol for $\IndexHPH$ that uses $s$ bits of communication in total. The proof consisted of the following: 
\begin{itemize}
	\item In~\Cref{eq:basic-problem}, we proved that, 
	\[
	\Pr\paren{\text{$\pi$ succeeds}} \leq \frac12 + 
	\frac r2 \cdot \!\!\! \max_{\substack{\cM_{<a},\Sigma_{*,\cM_{<a}}\\ \mu=\mu_{\cM_{<a},\Sigma_{*,\cM_{<a}}}}} 
	\!\!\!\bracket{\Exp_{\bProt \sim \dist} \, \Exp_{\substack{\ell_1,\ldots,\ell_k \\ v_1,\ldots,v_k} \sim \dist} \tvd{\dist(\rbsigmaP{1}_{\ell_1,v_1} \circ \cdots \circ \rbsigmaP{k}_{\ell_k,v_k} \mid \Prot)}{\cU_{S_b}}}.
	\]
	We then fixed any choice of $\cM_{<a},\Sigma_{*,\cM_{<a}}$ and $\mu = \mu_{\cM_{<a},\Sigma_{*,\cM_{<a}}}$ for the rest of the proof. 
	
	\item In~\Cref{eq:info-size}, we fixed any choice of $(\bProt,L,e)$ that guarantees
	\[
	\info := \info(\bProt,L,e) \leq k/2.
	\]
	By~\Cref{lem:info-size}, the probability of $(\bProt,L,e) \sim \mu$ \underline{not} satisfying this is at most $2 \cdot (64\theta)^{k/4}$. 
	
	\item In~\Cref{eq:good-k4}, we fixed any choice of $\good$ that guarantees 
	\[
		\card{\good} \geq k/4. 
	\]
	By~\Cref{eq:good-size}, the probability that $\good$ does \underline{not} satisfy this guarantee is at most $(128\theta)^{k/8}$.  
	\item In~\Cref{eq:final-one}, we proved that for this choice of $(\bProt,L,e)$ and conditioned on the choice of $\good$, 
	\[
		\tvd{\dist(\rbsigmaP{1}_{\ell_1,v_1} \circ \cdots \circ \rbsigmaP{k}_{\ell_k,v_k} \mid \bProt,\good)}{\cU_{S_b}} \leq (20\sqrt{\theta})^{k/8} \leq 2 \cdot \theta^{k/16}. 
	\]
\end{itemize}
Let us now retrace back. 

\begin{itemize}
\item For any $\cM_{<a},\Sigma_{*,\cM_{<a}}$, $\mu = \mu_{\cM_{<a},\Sigma_{*,\cM_{<a}}}$, and $(\bProt,L,e)$ satisfying $\info(\bProt,L,e) \leq k/2$, we have, 
\begin{align*}
	&\tvd{\dist(\rbsigmaP{1}_{\ell_1,v_1} \circ \cdots \circ \rbsigmaP{k}_{\ell_k,v_k} \mid \bProt)}{\cU_{S_b}}  \\
	&\hspace{15pt} \leq \Pr\paren{\card{\good} <k/4} + \Exp_{\substack{\good \sim \dist \mid \\  \card{\good} \geq k/4}} \tvd{\dist(\rbsigmaP{1}_{\ell_1,v_1} \circ \cdots \circ \rbsigmaP{k}_{\ell_k,v_k} \mid \bProt,\good)}{\cU_{S_b}}
	\tag{by~\Cref{fact:tvd-over-conditioning} and since $\norm{\cdot}_{\textnormal{tvd}} \leq 1$} \\
	&\hspace{15pt} \leq (128\theta)^{k/8} + 2 \cdot \theta^{k/16} 
\end{align*}
\item This in turn implies that for any $\cM_{<a},\Sigma_{*,\cM_{<a}}$, $\mu = \mu_{\cM_{<a},\Sigma_{*,\cM_{<a}}}$, 
\begin{align*}
	&\Exp_{\bProt,L,e \sim \dist} \tvd{\dist(\rbsigmaP{1}_{\ell_1,v_1} \circ \cdots \circ \rbsigmaP{k}_{\ell_k,v_k} \mid \Prot)}{\cU_{S_b}} \\
	&\hspace{15pt} \leq \Pr_{\mu}\paren{\card{\info(\bProt,L,e)} > k/2} + \Exp_{\substack{(\bProt,L,e) \sim \dist \mid \\  \card{\inf(\bProt,L,e} \leq k/2}} \tvd{\dist(\rbsigmaP{1}_{\ell_1,v_1} \circ \cdots \circ \rbsigmaP{k}_{\ell_k,v_k} \mid \bProt)}{\cU_{S_b}} \\
	&\hspace{15pt} \leq 2 \cdot (64\theta)^{k/4} +  (128\theta)^{k/8} + 2 \cdot \theta^{k/16}. 
\end{align*}
\item Finally, we can plug in this bound in~\Cref{eq:basic-problem} and have, 
\begin{align*}
	\Pr\paren{\text{$\pi$ succeeds}} &\leq \frac12 + \frac{r}{2} \cdot  \max_{\substack{\cM_{<a},\Sigma_{*,\cM_{<a}}\\ \mu=\mu_{\cM_{<a},\Sigma_{*,\cM_{<a}}}}} \Exp_{\bProt,L,e \sim \dist} \tvd{\dist(\rbsigmaP{1}_{\ell_1,v_1} \circ \cdots \circ \rbsigmaP{k}_{\ell_k,v_k} \mid \Prot)}{\cU_{S_b}} \\
	&\leq \frac12 + r \cdot O(\theta)^{k/16} \\
	&\leq \frac12 + r \cdot O\!\paren{\frac{8s + 8k \cdot \log{(t \cdot r)}}{t \cdot r}}^{k/32} \tag{by the definition of $\theta$ in~\Cref{def:theta} and since $r_a \geq r/2$ by~\Cref{clm:single-hyperedge-2}-\Cref{item:sh-Ri}} \\
	&\leq \frac12 + r \cdot O\!\paren{\frac{s}{r \cdot t}}^{k/32} \tag{by the lower bound on the size of $s$ in~\Cref{thm:mph}}.
\end{align*}
\end{itemize}
This concludes the proof for all deterministic protocols $\prot$ on the distribution induced by $\IndexHPH$. The lower bound directly extends to randomized protocols 
by the easy direction of Yao's minimax principle (namely, an averaging argument over the randomness of the protocol on the input distribution). This concludes the proof of~\Cref{thm:mph}. \qed

\clearpage

\renewcommand{\first}{\ensuremath{\textnormal{\textsc{First}}}}
\renewcommand{\last}{\ensuremath{\textnormal{\textsc{Last}}}}

\section{One Pass Permutation Hiding}\label{sec:one-pass}

In this section, we will construct permutation hiding graphs for 1-pass streaming algorithms. We repeat the key definitions here (\Cref{def:perm-graph} and~\Cref{def:perm-hiding}) for the convenience of the reader. 

\begin{Definition*}[Permutation Graph]
	For any integer $m \geq 1$, a layered graph $G = (V, E)$ is said to be a \textbf{permutation graph} for $\sigma  \in S_m$ if $\card{\first(G)},\card{\last(G)} \geq m$ and there is a path from $i \in \first_{[m]}(G)$ to $j \in \last_{[m]}(G)$ if and only if $\sigma(i) = j$ for each $i,j \in [m]$.
\end{Definition*}

\begin{Definition*}[Permutation Hiding Graphs; c.f.~\cite{ChenKPSSY21}]
	For  integers $m,n,p,s \geq 1$ and real $\delta \in (0,1)$, we define a \textbf{permutation hiding generator} $\GG=\GG(m,n,p,s,\delta)$ as any family of distributions $\GG: S_m \rightarrow \PP_m$ on permutation graphs satisfying the following two properties:
	\begin{enumerate}[label=$(\roman*)$]
		\item For any $\sigma \in S_m$, any permutation graph $G$ in the support of $\GG(\sigma)$ has $n$ vertices. 
		\item For any $\sigma_1,\sigma_2 \in S_m$, the distribution of graphs $\GG(\sigma_1)$ and $\GG(\sigma_2)$ are $\delta$-indistinguishable 
		for any $p$-pass $s$-space streaming algorithm. 
	\end{enumerate}
\end{Definition*}

We need some further definitions also. Given two layered graphs $G_1 = (V_1, E_1)$ and $G_2 = (V_2, E_2)$, we define their \textbf{concatenation}, $G_1 \circ G_2 = (V_1 \cup V_2, E)$ as follows. Let $\ell_1 = \card{\last(G_2)}$ and $\ell_2 = \card{\first(G_1)}$. The edge set $E$ is made up of $E_1 \cup E_2$ and the identity perfect matching  from the set $\last(G_2)$ to the set $\first_{[\ell_1]}(G_1)$ if $\ell_1 \leq \ell_2$ and identity perfect matching from $\last_{[\ell_2]}(G_2)$ to $\first(G_1)$ if $\ell_1 > \ell_2$. 

Concatenating permutation graphs gives us a graph for the concatenated permutation.

\begin{claim}\label{clm:concatenation}
	Given $G_1, G_2$ which are permutation graphs for $\sigma_1, \sigma_2 \in S_m$ respectively, $G_1 \circ G_2$ is a permutation graph for the permutation $\sigma_1 \circ \sigma_2 \in S_m$.
\end{claim}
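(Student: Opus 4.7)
The plan is to verify directly that $G_1 \circ G_2$ satisfies the three requirements of \Cref{def:perm-graph} for the permutation $\sigma_1 \circ \sigma_2$, where $(\sigma_1 \circ \sigma_2)(i) := \sigma_1(\sigma_2(i))$. First, I will observe that $G_1 \circ G_2$ is layered: the layers of $G_2$ come first, followed by the layers of $G_1$, with a single layer of identity matching edges inserted between $\last(G_2)$ and $\first(G_1)$. By construction, $\first(G_1 \circ G_2) = \first(G_2)$ and $\last(G_1 \circ G_2) = \last(G_1)$, both of which have at least $m$ vertices since $G_1, G_2$ are themselves permutation graphs for permutations in $S_m$. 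Moreover, the ``canonical'' labeling $[m]$ on $\first_{[m]}(G_1 \circ G_2)$ and $\last_{[m]}(G_1 \circ G_2)$ is inherited from $\first_{[m]}(G_2)$ and $\last_{[m]}(G_1)$ respectively.

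Next, I will prove the path-characterization requirement. For the forward direction, fix any $i \in [m]$ and let $j := \sigma_2(i)$ and $k := \sigma_1(j) = (\sigma_1 \circ \sigma_2)(i)$. By the permutation graph property of $G_2$, there is an $i \rightsquigarrow j$ path inside $G_2$ ending at vertex $j \in \last_{[m]}(G_2)$; the identity matching then takes this to vertex $j \in \first_{[m]}(G_1)$ (here I use that $\min(\ell_1,\ell_2) \geq m$, so the first $m$ endpoints on each side are matched identically); finally, the permutation graph property of $G_1$ supplies a $j \rightsquigarrow k$ path inside $G_1$. Concatenating these three segments yields an $i \rightsquigarrow k$ path in $G_1 \circ G_2$.

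For the reverse direction, consider any path in $G_1 \circ G_2$ from $i \in \first_{[m]}(G_1 \circ G_2) = \first_{[m]}(G_2)$ to $k \in \last_{[m]}(G_1 \circ G_2) = \last_{[m]}(G_1)$. Because the graph is layered and edges only cross from a layer to the next, the path must traverse $G_2$, then use exactly one identity-matching edge from some $j' \in \last(G_2)$ to $j' \in \first(G_1)$, then traverse $G_1$. Hence there exist $j' \in \last(G_2) \cap \first(G_1)$ such that $i \rightsquigarrow j'$ inside $G_2$ and $j' \rightsquigarrow k$ inside $G_1$. Applying the permutation graph property of $G_2$ to the first segment forces $j' \in \last_{[m]}(G_2)$ with $j' = \sigma_2(i)$, and applying the permutation graph property of $G_1$ to the second segment forces $k = \sigma_1(j') = \sigma_1(\sigma_2(i)) = (\sigma_1 \circ \sigma_2)(i)$, as desired.

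There is no real obstacle here beyond careful bookkeeping of indices across the two cases in the definition of concatenation; the only subtle point is to note that the identity matching always identifies the first $m$ vertices of $\last(G_2)$ with the first $m$ vertices of $\first(G_1)$, regardless of whether $\ell_1 \leq \ell_2$ or $\ell_1 > \ell_2$, since both quantities are at least $m$. With that observation in hand the argument reduces to the straightforward decomposition of paths described above.
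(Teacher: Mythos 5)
Your forward-direction argument is identical to the paper's; the paper proves only this direction explicitly and dismisses the reverse (``no other path from any vertex in $\first_{[m]}(G_1 \circ G_2)$ to $\last_{[m]}(G_1 \circ G_2)$ exists'') in a single unsupported sentence. Your reverse direction --- decomposing an arbitrary $i \rightsquigarrow k$ path at the unique matching edge it must cross and invoking the permutation-graph property of $G_2$ and $G_1$ on the two halves --- is a genuine improvement in explicitness.

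There is, however, a subtlety that the paper's brevity hides rather than resolves, and that your more explicit argument exposes without actually closing. You assert that ``the permutation graph property of $G_2$ \dots{} forces $j' \in \last_{[m]}(G_2)$,'' but \Cref{def:perm-graph} only constrains path existence between vertices indexed by $[m]$: the ``if and only if'' is quantified over pairs $(i,j) \in [m]\times[m]$. The definition says nothing about whether a vertex $i \in \first_{[m]}(G_2)$ might \emph{also} reach some vertex $j' > m$ in $\last(G_2)$. If it could, and if $G_1$ in turn had a path from that $j' > m$ in $\first(G_1)$ down to some $k \in \last_{[m]}(G_1)$ with $k \neq (\sigma_1 \circ \sigma_2)(i)$, then $G_1 \circ G_2$ would contain an $i \rightsquigarrow k$ path not accounted for by $\sigma_1 \circ \sigma_2$ and would fail to be a permutation graph. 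The definition as written does not exclude this, so the step ``the permutation-graph property forces $j' \in [m]$'' does not follow from it. The claim therefore implicitly relies on either a mildly strengthened definition (paths starting in $\first_{[m]}(G)$ only land in $\last_{[m]}(G)$) or on the fact --- true of every permutation graph this paper actually constructs --- that such escaping paths never occur. Your proof inherits the same hidden assumption as the paper's one-line assertion; it would be worth stating it.
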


\begin{proof}
	For any vertex $i \in \first_{[m]}(G_2)$, there is a path to $\sigma_2(i) \in \last_{[m]}(G_2)$ by \Cref{def:perm-graph}. There is an edge from $\sigma_2(i) \in \last_{[m]}(G_2)$ to $\sigma_2(i) \in \first_{[m]}(G_1)$ by the addition of the identity perfect matching. Again by \Cref{def:perm-graph}, there is a path from $\sigma_2(i) \in \first_{[m]}(G_1)$ to $\sigma_1(\sigma_2(i)) \in \last_{[m]}(G_1)$ for each $i \in [m]$.
	
	Thus, the path from $i \in \first_{[m]}(G_1 \circ G_2)$ to $\sigma_1(\sigma_2(i)) \in \last_{[m]}(G_1 \circ G_2)$ exists for each $i \in [m]$, and no other path from any vertex in $\first_{[m]}(G_1 \circ G_2)$ to $\last_{[m]}(G_1 \circ G_2)$ exists.
\end{proof}

The main lemma of this section follows.
\begin{lemma}\label{lem:1-pass-hiding}
	There exists a permutation hiding generator $\GG:S_m \rightarrow \cD_m$ for 1-pass streaming algorithms using space $s = o(m^{1+\beta/2})$ such that
	\begin{align*}
		n &\leq 2 \cdot \caks \cdot 10^6 \cdot (m/\alpha \beta^2) \\
		\delta &\leq 100\cdot  \caks \cdot(1/\beta) \cdot m^{-5} \cdot (\alpha^{\beta} \cdot \beta^{-1})^{50/\beta},
	\end{align*}
	  where $\alpha, \beta$ are from \Cref{eq:rs-parameters}.
\end{lemma}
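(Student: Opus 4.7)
The plan is to construct the generator in two stages: first build a generator $\GG^{\text{sim}}$ restricted to permutations that are simple (in the sense of \Cref{def:simple-perm}) on the lexicographic partition $\lexpart$ via a direct reduction from $\IndexHPH$, then bootstrap to arbitrary permutations using the $b$-sorter decomposition of \Cref{prop:AKSsorting}. Throughout, I fix $b = m^c$ for a small constant $c > 0$ chosen so that $\daks = \caks \cdot \log_b m = O(\caks)$, $r = 2m/b$, and $k = \Theta(1/\beta)$, and use an $(r,t)$-RS graph $\Grs$ with $\nrs = r/\alpha$ and $t = \nrs^\beta$.

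For the simple-permutation generator $\GG^{\text{sim}}$, I instantiate the reduction sketched in \Cref{sec:overview-perm-hiding}. Given a target $\rho \in \Ssim_m$ with vector $\tovec(\rho) \in (S_b)^{r/2}$, sample $\Sigma \sim (S_b)^{k \times t \times r}$, $L \sim [t]^k$, and a uniformly random $k$-hypermatching $\cM$ on $[r]^k$ with $r/2$ hyperedges; compute the permutation vector $\Gammastar$ induced by $(\Sigma, L, \cM)$ and set $\Gamma := \Gammastar^{-1} \circ \tovec(\rho)$ so that $\Gammastar \circ \Gamma = \tovec(\rho)$. The output graph is the $H(G^\Sigma, L, \cM, \Gamma)$ assembled from the $k$ encoded RS graphs $\Grs \otimes \SigmaP{i}$ together with the matching layers dictated by $(L, \cM, \Gamma)$, exactly as in \Cref{fig:H-over}. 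Tracing paths from the initial block $(a,\ast)$ as indicated there shows that $H$ is a permutation graph for $\joinit(\tovec(\rho)) = \rho$.

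For indistinguishability of $\GG^{\text{sim}}(\rho_1)$ and $\GG^{\text{sim}}(\rho_2)$, I convert any $s$-space $1$-pass streaming algorithm $A$ into a $k$-party protocol for $\IndexHPH_{r,t,b,k}$ with targets $\GammaY = \tovec(\rho_1), \GammaN = \tovec(\rho_2)$: player $\QP{i}$ holds $\SigmaP{i}$ and can simulate $A$ on the stream of edges of $\Grs \otimes \SigmaP{i}$ (which depend only on $\SigmaP{i}$), writing the resulting $s$-bit memory state onto the shared board in order; the referee then, using $(L, \cM, \Gamma)$, simulates $A$ on the remaining matching layers (which are independent of $\Sigma$) and outputs its decision. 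Total communication is $ks$. Invoking \Cref{thm:mph} with $s' = ks = o(m^{1+\beta/2}/\beta)$ against $rt = \Theta(m^{1+\beta}/(b\alpha^{\beta}))$ yields a distinguishing advantage of $r \cdot O(s'/rt)^{k/32}$, which for $k$ a sufficiently large multiple of $1/\beta$ is bounded by $m^{-6} \cdot (\alpha^\beta/\beta)^{O(1/\beta)}$. Per-graph vertex counts follow from adding the $k$ blocks of $2\nrs \cdot b = \Theta(m/\alpha)$ vertices, the input block of $m$ vertices, and the $O(k) = O(1/\beta)$ extra layers, giving $O(m/(\alpha\beta))$ vertices per simple-permutation instance and matching the claimed $n$-bound after multiplication by the $\daks$ layers below.

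For arbitrary $\sigma \in S_m$, apply \Cref{prop:AKSsorting} to decompose $\sigma = \gamma_1 \circ \cdots \circ \gamma_{\daks}$ where $\gamma_j$ is simple on partition $\cP_j$. Define $\GG(\sigma)$ by independently sampling $G_j \sim \GG^{\text{sim}}_{\cP_j}(\gamma_j)$ for each $j$ and outputting $G_1 \circ \cdots \circ G_{\daks}$; by \Cref{clm:concatenation} this is a permutation graph for $\sigma$. The construction of $\GG^{\text{sim}}$ depends on the partition only through a relabelling of vertices, so it extends verbatim to any fixed $\cP_j$ (alternatively, sandwich each $\GG^{\text{sim}}$ between two fixed identity-matching layers that relabel $\cP_j$ into $\lexpart$). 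Indistinguishability for $\GG$ is inherited from $\GG^{\text{sim}}$ by the hybrid argument of \Cref{prop:hybrid-arg}, losing a factor of $\daks$ in $\delta$; combined with the bound above this gives the stated $\delta$.

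The main obstacle I anticipate is the parameter accounting: $b$ must simultaneously satisfy $\log_b m = O(1)$ (to keep $n$ linear in $m$), $k \log(rt) \leq ks \leq 10^{-3} rt$ (to apply \Cref{thm:mph}), and $s = o(m^{1+\beta/2})$ together with $k = \Theta(1/\beta)$ (to drive the advantage below $m^{-5}$). A secondary technical check is that the edges added by the referee in $H$ are genuinely independent of $\Sigma$ — this is what legitimises the split of the stream between players and referee in the reduction — and that the path-tracing in $H$ exactly realises the composition $\Gammastar \circ \Gamma$ across the $k$ encoded RS graphs.
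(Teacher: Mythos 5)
Your two-stage plan (first a generator $\GG^{\text{sim}}$ for permutations simple on a fixed partition via a reduction to $\IndexHPH$, then a decomposition of an arbitrary permutation into $\daks$ simple pieces using the $b$-sorter network of \Cref{prop:AKSsorting} and a hybrid argument) is exactly the paper's route: \Cref{lem:simple-perm-hide} is your $\GG^{\text{sim}}$, \Cref{lem:general-partitions} is your relabelling trick (the paper literally sandwiches between $\basic{\swap}$ and $\basic{\swap^{-1}}$), and the concatenation/hybrid step is identical. The reduction from a $1$-pass $s$-space algorithm to an $\IndexHPH$ protocol is also the paper's \Cref{clm:running-alg}, and your identification of "referee's edges depend only on $(L,\cM,\Gamma)$, players' edges only on $\Sigma$" is precisely what makes the split legitimate.

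The genuine gap is the parameter choice you flag yourself. You fix $b = m^c$ with $c$ an absolute constant so that $\daks = \caks\log_b m = O(\caks)$, and you list $\log_b m = O(1)$ as a constraint. This is both unnecessary and, for small $\beta$, fatal. It is unnecessary because the target $n$-bound $n \leq 2\caks\cdot 10^6\cdot m/\alpha\beta^2$ already has a $1/\beta^2$ factor, which precisely accommodates $\daks = \Theta(1/\beta)$ times the $\Theta(m/\alpha\beta)$ vertices per simple-permutation layer (the latter from $k = \Theta(1/\beta)$ blocks of size $\Theta(\nrs b) = \Theta(m/\alpha)$). It is fatal because \Cref{lem:simple-perm-hide} needs $s = o((m/b)^{1+2\beta/3})$; if $b = m^c$ with constant $c$ this requires $(1-c)(1+2\beta/3) > 1+\beta/2$, i.e.\ $\beta(1/6 - 2c/3) > c$, which fails for $\beta$ small (and the lemma must hold down to $\beta = \Theta(1/\log\log m)$). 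The paper instead sets $b = m^{\beta/100}$, accepting $\daks = 100\caks/\beta$; with that choice $(m/b)^{1+2\beta/3}$ comfortably dominates $m^{1+\beta/2}$, the $\IndexHPH$ bound gives advantage $o(r^{-6})\cdot(\alpha^\beta\beta^{-1})^{50/\beta} \leq m^{-5}\cdot(\alpha^\beta\beta^{-1})^{50/\beta}$ per layer, and multiplying by $\daks$ yields exactly the stated $\delta$. So the fix is simply to let $b$ scale as $m^{\Theta(\beta)}$ rather than a fixed power of $m$ and to carry the resulting $1/\beta$ factor in $\daks$ through both the $n$- and $\delta$-bounds.
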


In the first subsection, we define the constructs required to describe our permutation hiding graphs and in the next subsection we prove that our graphs can hide simple permutations. In the final subsection, we extend our graphs to hide any general permutation from one pass semi-streaming algorithms.

\subsection{Building Blocks for Permutation Hiding}\label{sec:one-pass-build}

In this section, we will define the constructs needed to describe our permutation hiding graphs.  The main parts are as follows.

\begin{itemize}
	\item \textbf{Group layered graphs, encoded RS graphs, and other helper structures}: these are simple permutation graphs needed for our main construction; 
	\item \textbf{Blocks}: These are permutation graphs that ``encode" a single permutation inside it; 
	\item \textbf{Multi-blocks}: These are permutation graphs that hide the concatenation of several permutations (via concatenation of several blocks).
\end{itemize}

\subsubsection*{Group layered graphs, encoded RS graphs, and other helper structures}

The most basic permutation graph consisting of only two layers is defined first.

\begin{Definition}[Basic Permutation Graph]\label{def:basic-perm}
	Given a permutation $\sigma \in S_m$, we define a graph $G = (L \cup R, E)$ as a \textbf{basic permutation graph} of $\sigma$, denoted by $\basic{\sigma}$ if $\card{L} = \card{R} = m$ and $E = \set{(i, \sigma(i)) \mid i \in [m]}$. 
\end{Definition}

\begin{figure}[h!]
	\centering
	\tikzset{every picture/.style={line width=0.75pt}} 

\begin{tikzpicture}[x=0.75pt,y=0.75pt,yscale=-1,xscale=1]
	
	\draw  [color={rgb, 255:red, 0; green, 0; blue, 0 }  ,draw opacity=1 ][line width=1.5]  (161.05,145.95) .. controls (161,143.27) and (162.96,141.1) .. (165.42,141.1) .. controls (167.88,141.1) and (169.91,143.27) .. (169.96,145.95) .. controls (170.01,148.63) and (168.06,150.8) .. (165.6,150.8) .. controls (163.14,150.8) and (161.1,148.63) .. (161.05,145.95) -- cycle ;
	\draw  [color={rgb, 255:red, 0; green, 0; blue, 0 }  ,draw opacity=1 ][line width=1.5]  (160.6,170.29) .. controls (160.55,167.62) and (162.51,165.45) .. (164.97,165.45) .. controls (167.43,165.45) and (169.46,167.62) .. (169.51,170.29) .. controls (169.56,172.97) and (167.6,175.14) .. (165.14,175.14) .. controls (162.68,175.14) and (160.65,172.97) .. (160.6,170.29) -- cycle ;
	\draw  [color={rgb, 255:red, 0; green, 0; blue, 0 }  ,draw opacity=1 ][line width=1.5]  (160.42,194.53) .. controls (160.37,191.85) and (162.33,189.68) .. (164.79,189.68) .. controls (167.25,189.68) and (169.28,191.85) .. (169.33,194.53) .. controls (169.38,197.21) and (167.43,199.38) .. (164.97,199.38) .. controls (162.51,199.38) and (160.47,197.21) .. (160.42,194.53) -- cycle ;
	\draw  [color={rgb, 255:red, 0; green, 0; blue, 0 }  ,draw opacity=1 ][line width=1.5]  (160.42,219.53) .. controls (160.37,216.85) and (162.33,214.68) .. (164.79,214.68) .. controls (167.25,214.68) and (169.28,216.85) .. (169.33,219.53) .. controls (169.38,222.21) and (167.43,224.38) .. (164.97,224.38) .. controls (162.51,224.38) and (160.47,222.21) .. (160.42,219.53) -- cycle ;
	\draw  [color={rgb, 255:red, 0; green, 0; blue, 0 }  ,draw opacity=1 ][line width=1.5]  (207.05,146.95) .. controls (207,144.27) and (208.96,142.1) .. (211.42,142.1) .. controls (213.88,142.1) and (215.91,144.27) .. (215.96,146.95) .. controls (216.01,149.63) and (214.06,151.8) .. (211.6,151.8) .. controls (209.14,151.8) and (207.1,149.63) .. (207.05,146.95) -- cycle ;
	\draw  [color={rgb, 255:red, 0; green, 0; blue, 0 }  ,draw opacity=1 ][line width=1.5]  (206.6,171.29) .. controls (206.55,168.62) and (208.51,166.45) .. (210.97,166.45) .. controls (213.43,166.45) and (215.46,168.62) .. (215.51,171.29) .. controls (215.56,173.97) and (213.6,176.14) .. (211.14,176.14) .. controls (208.68,176.14) and (206.65,173.97) .. (206.6,171.29) -- cycle ;
	\draw  [color={rgb, 255:red, 0; green, 0; blue, 0 }  ,draw opacity=1 ][line width=1.5]  (206.42,195.53) .. controls (206.37,192.85) and (208.33,190.68) .. (210.79,190.68) .. controls (213.25,190.68) and (215.28,192.85) .. (215.33,195.53) .. controls (215.38,198.21) and (213.43,200.38) .. (210.97,200.38) .. controls (208.51,200.38) and (206.47,198.21) .. (206.42,195.53) -- cycle ;
	\draw  [color={rgb, 255:red, 0; green, 0; blue, 0 }  ,draw opacity=1 ][line width=1.5]  (206.42,220.53) .. controls (206.37,217.85) and (208.33,215.68) .. (210.79,215.68) .. controls (213.25,215.68) and (215.28,217.85) .. (215.33,220.53) .. controls (215.38,223.21) and (213.43,225.38) .. (210.97,225.38) .. controls (208.51,225.38) and (206.47,223.21) .. (206.42,220.53) -- cycle ;
	\draw [color={rgb, 255:red, 0; green, 0; blue, 0 }  ,draw opacity=1 ][fill={rgb, 255:red, 155; green, 155; blue, 155 }  ,fill opacity=1 ][line width=1.5]    (169.96,145.95) -- (203.31,169.02) ;
	\draw [shift={(206.6,171.29)}, rotate = 214.67] [fill={rgb, 255:red, 0; green, 0; blue, 0 }  ,fill opacity=1 ][line width=0.08]  [draw opacity=0] (6.97,-3.35) -- (0,0) -- (6.97,3.35) -- cycle    ;
	\draw [color={rgb, 255:red, 0; green, 0; blue, 0 }  ,draw opacity=1 ][fill={rgb, 255:red, 155; green, 155; blue, 155 }  ,fill opacity=1 ][line width=1.5]    (169.51,170.29) -- (203.12,193.27) ;
	\draw [shift={(206.42,195.53)}, rotate = 214.36] [fill={rgb, 255:red, 0; green, 0; blue, 0 }  ,fill opacity=1 ][line width=0.08]  [draw opacity=0] (6.97,-3.35) -- (0,0) -- (6.97,3.35) -- cycle    ;
	\draw [color={rgb, 255:red, 0; green, 0; blue, 0 }  ,draw opacity=1 ][fill={rgb, 255:red, 155; green, 155; blue, 155 }  ,fill opacity=1 ][line width=1.5]    (169.33,194.53) -- (203.15,218.23) ;
	\draw [shift={(206.42,220.53)}, rotate = 215.03] [fill={rgb, 255:red, 0; green, 0; blue, 0 }  ,fill opacity=1 ][line width=0.08]  [draw opacity=0] (6.97,-3.35) -- (0,0) -- (6.97,3.35) -- cycle    ;
	\draw [color={rgb, 255:red, 0; green, 0; blue, 0 }  ,draw opacity=1 ][fill={rgb, 255:red, 155; green, 155; blue, 155 }  ,fill opacity=1 ][line width=1.5]    (169.33,219.53) -- (205.21,150.5) ;
	\draw [shift={(207.05,146.95)}, rotate = 117.46] [fill={rgb, 255:red, 0; green, 0; blue, 0 }  ,fill opacity=1 ][line width=0.08]  [draw opacity=0] (6.97,-3.35) -- (0,0) -- (6.97,3.35) -- cycle    ;
	
	\draw (154.37,101.44) node [anchor=north west][inner sep=0.75pt]   [align=left] {$\displaystyle V^{1}$};
	\draw (200.37,102.44) node [anchor=north west][inner sep=0.75pt]   [align=left] {$\displaystyle V^{2}$};

\end{tikzpicture}\caption{An illustration of $\basic{\sigma}$ from \Cref{def:basic-perm} with $\sigma = (2,3,4,1)$.}\label{fig:basic-graph}
\end{figure}
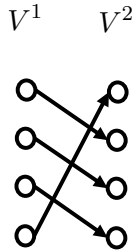
We need to define some more structure on layered graphs.

\begin{Definition}[Group-Layered Graph]\label{def:group-layer}
	For integers $\ww,\dd,\bb \geq 1$, we define a \textbf{group-layered graph} as any directed acyclic graph $G=(V,E)$ satisfying the following: 
	\begin{enumerate}[label=$(\roman*)$]
		\item Vertices of $G$ are partitioned into $d$ equal-size \textbf{layers} $V^1,\ldots,V^d$, each of size $\ww \cdot \bb$. We identify each layer
		with the pairs in $[\ww] \times [\bb]$. 
		\item For any layer $i \in [\dd]$ and $a \in [\ww]$, we define the \textbf{group} $V^{i,a} := \set{a} \times [\bb]$. 
		\item Edges of $G$ can be defined via some tuples $(i,a_1,a_2,\sigma) \in [\dd] \times [\ww] \times [\ww] \times S_\bb$ as follows: 
		We connect $(a_1,j) \in V^i$ to $(a_2,\sigma(j)) \in V^{i+1}$ for all $j \in [\bb]$. 
	\end{enumerate}
	We refer to $\ww$ as the \textbf{width} of the layered graph, to $\dd$ as its \textbf{depth}, and $\bb$ as its \textbf{group size}. 
	We use $\LL_{\ww,\dd,\bb}$ to denote the set of all layered graphs with width, depth, and group size, $\ww,\dd$, and $\bb$, respectively. See \Cref{fig:group-layer} for an illustration. 
\end{Definition}

\begin{figure}[h!]
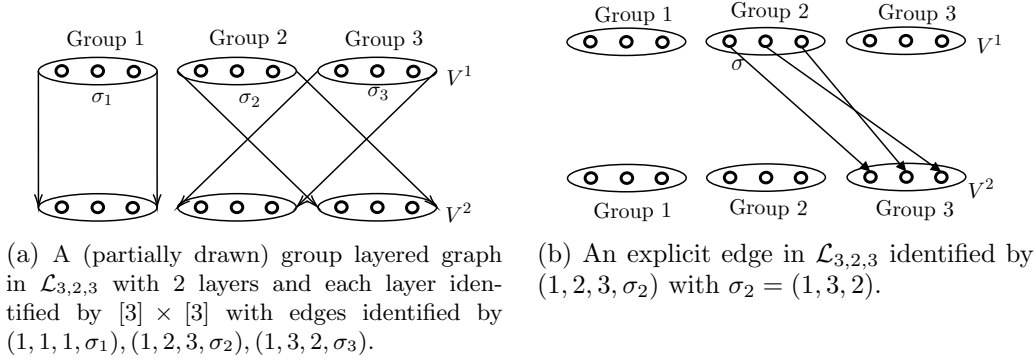

	\centering
	\subcaptionbox{\footnotesize A (partially drawn) group layered graph in $\LL_{3,2,3}$ with 2 layers and each layer identified by $[3] \times [3]$ with edges identified by $(1,1,1,\sigma_1), (1,2,3,\sigma_2), (1,3,2,\sigma_3)$.}%
	[.4\linewidth]{
		\scalebox{0.75}{\input{Figures/grouplayer}}
	} 
	\hspace{0.2cm} 
	\subcaptionbox{An explicit edge in $\LL_{3,2,3}$ identified by $(1,2,3,\sigma_2)$ with $\sigma_2 = (1,3,2)$.}%
	[.4\linewidth]{
		\scalebox{0.75}{\input{Figures/edge-group}}
	}
	\caption{An illustration of a group layered graph from \Cref{def:group-layer}.}
	\label{fig:group-layer}
\end{figure}


While performing concatenation on a group layered graph $G \in \LL_{\ww, \dd, \bb}$ with any other graph $G'$, we treat the vertices in $\first(G)$, indexed by $[\ww] \times [\bb]$, as being indexed by the set $[\ww \cdot \bb]$ by directly mapping any vertex $(i,j)$ to $(i-1) \cdot \bb + j$ for $i \in [\ww], j \in [\bb]$.

Group-layered graphs allow us to capture two main ideas: $(i)$ Reachability between multiple groups; and $(ii)$ Permuting within each group. We formalize this in the following. 


Consider any tuple $(i, a_1, a_2, \sigmaId) \in [d] \times [w] \times [w] \times S_b$.  The edges associated with this tuple simply connect group $V^{i, a_1}$ to $V^{i+1, a_2}$ with the identity permutation. Such edges allow us to \emph{add paths between the groups} of a group-layered graph. 
We formalize this concept next.

\begin{Definition}[Group Permuting Graph]\label{def:permute-groups}
	Given a permutation $\sigma \in S_{\ww}$ and any integer $\bb \geq 1$, we define the \textbf{group permuting graph}, denoted by $\permgroups{\sigma, b} = (V, E)$ as the group layered graph from $\LL_{\ww,2,\bb}$ with the edges $(1, i, \sigma(i), \sigmaId(b))$ for each $i \in [\ww]$ (see \Cref{fig:perm-groups}).
\end{Definition}

\begin{figure}[h!]
	\centering
	\scalebox{0.8}{\input{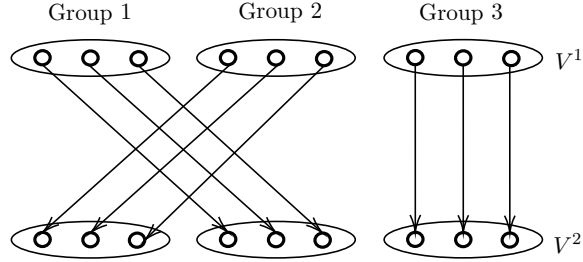}}\caption{A group permuting graph from \Cref{def:permute-groups} with $\sigma = (2,1,3)$ and $b = 3$. }\label{fig:perm-groups}
\end{figure}

The second main idea in group-layered graphs is to permute within the groups. Here, for any edge tuple $(i, a_1, a_2, \sigma)$, if $a_1, a_2$ are fixed, $\sigma \in S_{b}$ allows us to \emph{permute within the groups}. The following definition is an instance of permuting within the groups.

\begin{Definition}[Encoded RS graph]\label{def:encodedRS}
	Given an $(r,t)$-RS-graph $\Grs = (\Lrs \cup \Rrs, \Ers)$, and a permutation matrix $\Sigma \in \paren{S_b}^{t \times r}$, the \textbf{encoded RS graph}, denoted by $\encodedRS(\Grs, \Sigma) = (V, E)$ is a group layered graph from $\LL_{\nrs, 2, \bb}$ with the edges $(1, \leftRS{j}, \rightRS{j}, \sigma_{i,j})$ for edge $j \in \Mrs{i}$, for each $i \in [t], j \in [r]$ (see \Cref{fig:encoded-RS}).
\end{Definition}

\begin{figure}[h!]
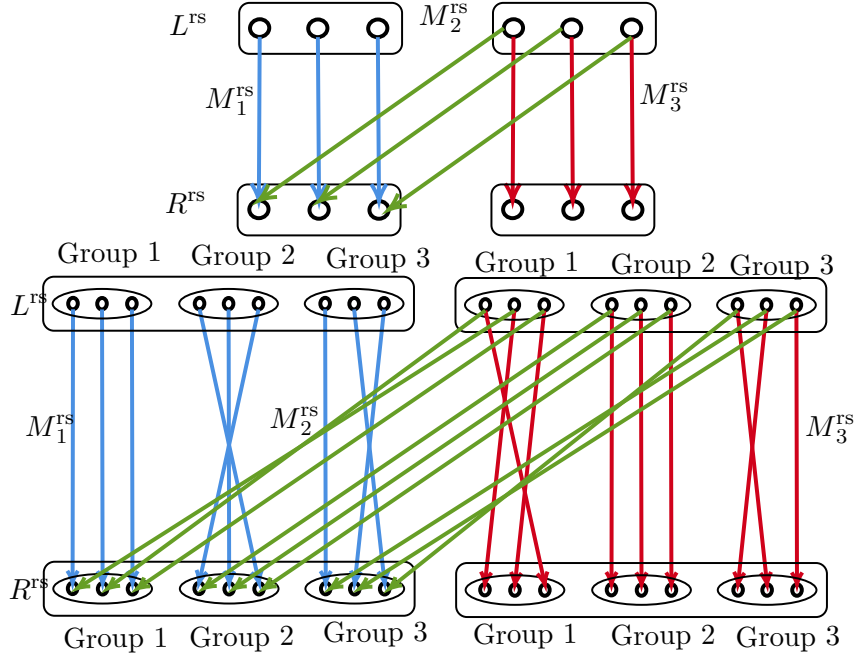

	\centering
	\input{Figures/rsgraph}
	\input{Figures/encodedrs}\caption{An illustration of an encoded $(r,t)$ RS-graph with $r = 3, t = 3$ from \Cref{def:encodedRS} with $\Sigma \in \paren{S_b}^{3 \times 3}$ and $\sigma_{1,1} = (1,2,3), \sigma_{1,2} = (3,2,1), \sigma_{1,3} = (1,3,2)$ as blue edges; $\sigma_{2,1} = (2,1,3), \sigma_{2,2} = (1,2,3), \sigma_{2,3} = (3,1,2)$ as green edges; $\sigma_{3,1} = (3,1,2), \sigma_{3,2} = (1,2,3), \sigma_{3,3} = (2,1,3)$ as red edges.}\label{fig:encoded-RS}

\end{figure}

Observe that if the RS-graph $\Grs$ is fixed, the tuple $(\leftRS{j}, \rightRS{j})$ is fixed for edge $j \in \Mrs{i}$ for each $i \in [t], j \in [r]$. Based on these fixed tuples, we are permuting within the groups based on matrix $\Sigma$. We combine the two main ideas when we construct blocks. Before we proceed, we need a few more definitions. 
The first  is that of a permutation that agrees with a given matching.

\begin{Definition}[Match-Aligned Permutation]\label{def:matching-aligned-perm}
	Given a matching $M$ on $G = (L \cup R, E)$ with $\card{L} = \card{R} = m$, $\sigma \in S_m$ is said to be a \textbf{match-aligned permutation}, denoted by $\extFull{M}$ if it is the lexicographically first permutation with $\sigma(u) = v$ for all~$(u,v) \in~M$. 
\end{Definition}

In other words, we view any matching as a partial permutation where the edges in the matching fix certain assignments of the permutation, and \Cref{def:matching-aligned-perm} extends this to a complete permutation. 
We also need permutations that pick certain edges from a given RS-graph.

\begin{Definition}[Edge Picking Permutations]\label{def:edge-picking-permutations}
	Given an $(r,t)$-RS-graph $\Grs = (\Lrs \cup \Rrs, \Ers)$, an index $\ell \in [t]$ for an induced matching $\Mrs{\ell}$ in $\Grs$ and $(r/2)$ edges $\Estar = (e_1, e_2, \ldots, e_{r/2})$ from matching $\Mrs{\ell}$, define matchings $\Mleft, \Mright$  on vertices $L \cup R$ with $\card{L} = \card{R} = \nrs$ obeying,
	\begin{align*}
		\Mleft= \set{(i, \leftRS{e_i}) \mid i \in [r/2]}  \qquad \textnormal{ and } \qquad \Mright = \set{(\rightRS{e_i}, i) \mid i \in [r/2]}.
	\end{align*} 
	
	Then,   \textbf{edge picking permutations} (see \Cref{fig:edge-pick}), denoted by $\edgepick(\Grs, \ell, \Estar)$ is an ordered pair $(\sigmaLeft, \sigmaRight) \in S_{\nrs} \times S_{\nrs}$, where
	\begin{align*}
		\sigmaLeft = \extFull{\Mleft} \qquad \textnormal{ and } \qquad \sigmaRight = \extFull{\Mright}.
	\end{align*}
\end{Definition}

\begin{figure}[h!]
	\centering
	\scalebox{0.8}{\input{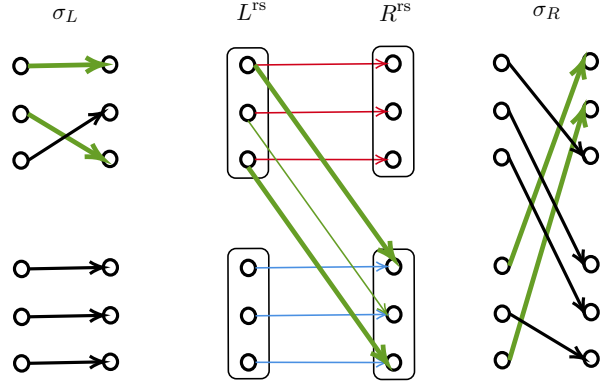}}\caption{An illustration of edge picking permutations from\ \Cref{def:edge-picking-permutations}. The RS-graph from \Cref{fig:encoded-RS} is picked with $\ell = 2$ and (bold) edges $e_1, e_2 \in \Mrs{2}$.  The permutation $\sigmaLeft$ and $\sigmaRight$ are $(1,3,2,4,5,6)$ and $(3,4,5,1,6,2)$, respectively.} \label{fig:edge-pick}
\end{figure}

\subsubsection*{Blocks}
A key gadget in our construction, called a block is built using the preceding definitions. It combines the two main ideas from group layered graphs. 

\begin{Definition}[Block]\label{def:block}
	For any $r, t, b \geq 1$, given
	\begin{itemize}
		\item An $(r,t)$ RS-graph $\Grs = (\Lrs \cup \Rrs, \Ers)$ with $\Lrs = \Rrs = [\nrs]$ and $t$ induced matchings $\Mrs{1}, \ldots \Mrs{t}$,
		\item A permutation matrix $\Sigma \in \paren{S_b}^{t \times r}$,
		\item An index $\ell \in [t]$ and $(r/2)$ edge tuple $\Estar = (e_1, e_2, \ldots, e_{r/2})$ with each edge belonging to  $\Mrs{\ell}$,
	\end{itemize}
	let $(\sigmaLeft, \sigmaRight) \in S_{\nrs} \times S_{\nrs}$ be $\edgepick(\Grs, \ell, \Estar)$. We define a \textbf{block} (see \Cref{fig:block}), denoted by $\block(\Grs, \Sigma, \ell, \Estar)$ as,
	\begin{align*}
		\permgroups{\sigmaRight, b} \circ \encodedRS(\Grs, \Sigma) \circ \permgroups{\sigmaLeft, b}.
	\end{align*}
\end{Definition}

\begin{figure}[h!]
	\centering
	\scalebox{0.75}{\input{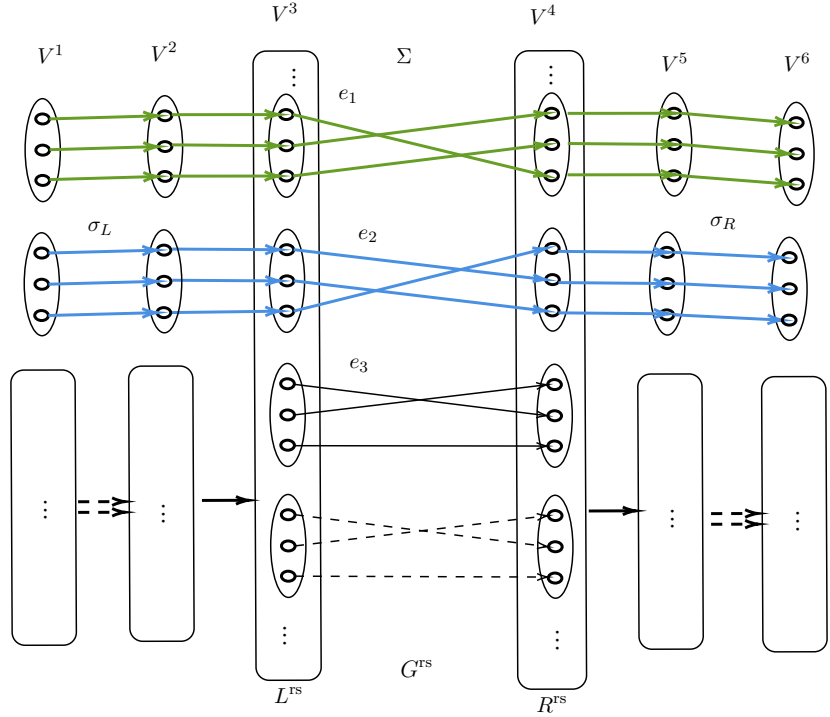}}\caption{An illustration of block from \Cref{def:block} with $r = 3, b = 3$ showing the choosing of one induced matching $\Mrs{\jstar}$ with edges $e_1$ (green) and $e_2$ (blue). The dashed light edge in $\Grs$ represents an edge in another induced matching. The edges from $V^1$ to $V^2$ are from $\permgroups{\sigmaLeft, b}$, the edges from $V^5$ to $V^6$ are from $\permgroups{\sigmaRight, b}$. The edges from $V^3$ to $V^4$ are from $\encodedRS(\Grs, \Sigma)$.}\label{fig:block}
\end{figure}

Let us prove some useful properties of blocks. 

\begin{claim}\label{clm:block-struct}
	The graph $\block(\Grs, \Sigma, \ell, \Estar) $ is a layered graph in $\LL_{\nrs, 6, \bb}$.
\end{claim}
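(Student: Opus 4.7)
The plan is to verify that the block is built by concatenating three group-layered graphs each of width $\nrs$, depth $2$, and group size $b$, and that concatenation preserves group-layeredness while adding depths.

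First, I would inspect the three constituents of $\block(\Grs, \Sigma, \ell, \Estar)$. By \Cref{def:permute-groups}, both $\permgroups{\sigmaLeft, b}$ and $\permgroups{\sigmaRight, b}$ belong to $\LL_{\nrs, 2, b}$: their $\nrs$ edges have the tuple form $(1, i, \sigma(i), \sigmaId(b))$ on layers of size $\nrs \cdot b$. Likewise, \Cref{def:encodedRS} makes $\encodedRS(\Grs, \Sigma)$ a member of $\LL_{\nrs, 2, b}$, with one edge tuple $(1, \leftRS{j}, \rightRS{j}, \sigma_{i,j})$ contributed by each of the $r \cdot t$ RS edges.

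Next, I would show that concatenation respects group-layeredness additively in depth. Given $G_1 \in \LL_{\nrs, d_1, b}$ and $G_2 \in \LL_{\nrs, d_2, b}$, both $\first(G_1)$ and $\last(G_2)$ have exactly $\nrs \cdot b$ vertices, so the identity perfect matching added in $G_1 \circ G_2$ connects $(a, j) \in \last(G_2)$ to the flat index $(a-1) b + j \in \first(G_1)$, which under the natural $[\nrs \cdot b] \leftrightarrow [\nrs] \times [b]$ reindexing is exactly $(a, j)$. Hence each such matching edge can be captured by the group-layered tuple $(d_2, a, a, \sigmaId(b))$ for $a \in [\nrs]$. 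The resulting layers are $V_2^1, \ldots, V_2^{d_2}, V_1^1, \ldots, V_1^{d_1}$, all of size $\nrs \cdot b$, and every edge remains between consecutive layers, so $G_1 \circ G_2 \in \LL_{\nrs, d_1 + d_2, b}$.

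Finally, applying this concatenation fact twice to $\permgroups{\sigmaRight, b} \circ \encodedRS(\Grs, \Sigma) \circ \permgroups{\sigmaLeft, b}$ yields a graph in $\LL_{\nrs, 2 + 2 + 2, b} = \LL_{\nrs, 6, b}$, as claimed. I do not foresee a real obstacle here; the only point that requires care is matching up the two conventions for indexing the boundary layers (flat $[\nrs b]$ for generic concatenation versus grouped $[\nrs] \times [b]$ for group-layered graphs), which the lexicographic identification stated just before \Cref{def:permute-groups} handles directly.
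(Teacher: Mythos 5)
Your proof is correct and follows the same route as the paper: identify the three constituent graphs as members of $\LL_{\nrs, 2, b}$ and observe that concatenation adds the depths. Your version is more careful about why the identity matchings inserted at concatenation boundaries are compatible with the group-layered structure, a detail the paper glosses over.
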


\begin{proof}
	The graph $\block(\Grs, \Sigma, \ell, \Estar)$ is made of concatenating three group layered graphs, $\permgroups{\sigmaRight, b}$, $\encodedRS(\Grs, \Sigma)$, and $\permgroups{\sigmaLeft,b}$. As both $\sigmaLeft, \sigmaRight$ are permutations from $S_{\nrs}$, $\permgroups{\sigmaLeft, b}$ and $\permgroups{\sigmaRight,b}$ both belong to $\LL_{\nrs, 2, \bb}$. By \Cref{def:encodedRS}, as $\encodedRS(\Grs, \Sigma) \in \LL_{\nrs, 2, \bb}$, the concatenation of all three graphs belongs to $\LL_{\nrs, 6, \bb}$. 
\end{proof}

Next, let us see how the edges in a block between various layers are determined by the inputs.

\begin{claim}\label{clm:block-edge-part}
	In any graph $G = \block(\Grs, \Sigma, \ell, \Estar)$ with layers $V^1, V^2, \ldots, V^6$ (see \Cref{clm:block-struct}), 
	\begin{enumerate}[label=$(\roman*)$]
		\item The edges between $V^{3}$ to $V^4$ are determined only by $\Grs $ and $\Sigma$.
		\item The edges from layers $V^{1}$ to $V^2$ and from $V^5$ to $V^6$ are fixed by $\Grs, \ell$ and $\Estar$. 
		\item The other edges from $V^2$ to $V^3$ and from $V^4$ to $V^5$ are fixed perfect matchings.
	\end{enumerate}
\end{claim}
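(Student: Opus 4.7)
The plan is to unwind the definitions of concatenation, $\block$, $\permgroups$, $\encodedRS$, and $\edgepick$, and match up which layers of $G$ correspond to which component graph. Since the claim is purely a structural consequence of the construction, the ``proof'' amounts to careful bookkeeping; there is no real obstacle, just making sure the layer indices line up correctly.

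First, I would recall the concatenation operator: in $G_1 \circ G_2$, the graph $G_2$ sits on the left (earlier layers), $G_1$ sits on the right (later layers), and the bridge between them is an identity perfect matching from $\last(G_2)$ to $\first(G_1)$. Applying this twice to $\block(\Grs, \Sigma, \ell, \Estar) = \permgroups{\sigmaRight, b} \circ \encodedRS(\Grs, \Sigma) \circ \permgroups{\sigmaLeft, b}$, and using the fact (from~\Cref{clm:block-struct}) that each of the three factors belongs to $\LL_{\nrs, 2, \bb}$, I would identify the six layers as: $V^1, V^2$ coming from $\permgroups{\sigmaLeft, b}$; $V^3, V^4$ coming from $\encodedRS(\Grs, \Sigma)$; and $V^5, V^6$ coming from $\permgroups{\sigmaRight, b}$. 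The two identity perfect matchings introduced by the two applications of $\circ$ then sit between $V^2$--$V^3$ and $V^4$--$V^5$.

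Next, I would read off each item of the claim from this layer assignment. For~(i), the edges from $V^3$ to $V^4$ are literally the edges of $\encodedRS(\Grs, \Sigma)$, and by~\Cref{def:encodedRS} these depend only on the structure of $\Grs$ (through $\leftRS{\cdot}, \rightRS{\cdot}$ and the induced matchings $\Mrs{i}$) and on the permutation entries $\sigma_{i,j}$ of $\Sigma$. For~(ii), the edges from $V^1$ to $V^2$ are those of $\permgroups{\sigmaLeft, b}$ and those from $V^5$ to $V^6$ are those of $\permgroups{\sigmaRight, b}$; by~\Cref{def:permute-groups} these are fixed once $\sigmaLeft$ and $\sigmaRight$ are fixed, and by~\Cref{def:edge-picking-permutations} the pair $(\sigmaLeft, \sigmaRight) = \edgepick(\Grs, \ell, \Estar)$ is a deterministic function of $\Grs$, $\ell$, and $\Estar$ (via the matchings $\Mleft, \Mright$ and the match-aligned extension from~\Cref{def:matching-aligned-perm}).

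Finally, for~(iii), I would note that both of the remaining sets of edges, $V^2$--$V^3$ and $V^4$--$V^5$, are precisely the identity perfect matchings inserted by the two applications of $\circ$ in the definition of $\block$, which are fixed and independent of all inputs. Putting these three observations together yields the claim.
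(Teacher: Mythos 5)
Your proof is correct and follows essentially the same route as the paper's: unwinding the definition of $\block$ as a double concatenation, assigning $V^1,V^2$ to $\permgroups{\sigmaLeft, b}$, $V^3,V^4$ to $\encodedRS(\Grs, \Sigma)$, $V^5,V^6$ to $\permgroups{\sigmaRight, b}$, and then reading off each item from the corresponding definitions. Your explicit care with the orientation of the concatenation operator (right factor goes to earlier layers) is a welcome extra detail that the paper leaves implicit, but the underlying argument is identical.
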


\begin{proof}
	The edges from $V^3$ to $V^4$ are from $\encodedRS(\Grs,\Sigma)$, and hence depend only on $\Grs$ and $\Sigma$. The edges between $V^1$ to $V^2$ and from $V^5$ to $V^6$ are based on permutations $\sigmaLeft$ and $\sigmaRight$ from \Cref{def:edge-picking-permutations}, and depend only on $\Grs, \ell $ and $\Estar$. The other edges are added when concatenating group-layered graphs, and as the layers have same size $[\nrs \cdot \bb]$, they are perfect matchings. 
\end{proof}

Let $\lexpart$ be an equipartition of the set $[m]$ into $m/\bb$ groups of size $\bb$ each, partitioning the elements lexicographically throughout this section. That is, for each $i \in [m/\bb]$, the set $\lexpart_{i} = \set{(i-1)\bb + a \mid a \in [\bb]}$ is a group in partition $\lexpart$. Recall from \Cref{def:simple-perm} that a simple permutation $\rho \in S_m$ on partition $\lexpart$ is such that for all $a \in \lexpart_i$, $\rho(a) \in \lexpart_i$. 	Given $\Sigma, \ell$ and $\Estar = (e_1, e_2, \ldots, e_{r/2})$ with $r/2 = m/\bb$ edges from $\Mrs{\ell}$ as the inputs from \Cref{def:block} , let $\rho$ be the following simple permutation on partition $\lexpart$:
\begin{equation}\label{eq:def-rho}
\forall i \in [m/\bb], a \in [\bb], \rho((i-1)\bb + a) = (i-1)\bb + \sigma_{\ell, e_i}(a). 
\end{equation}

\begin{claim}\label{clm:block-connections}
	$\block(\Grs, \Sigma, \ell, \Estar)$ is a permutation graph for $\rho$ defined in \Cref{eq:def-rho}.
\end{claim}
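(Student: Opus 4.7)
The strategy is to trace an arbitrary path from a vertex in $\first_{[m]}(G)$ through the six layers of $G = \block(\Grs,\Sigma,\ell,\Estar)$ and verify that its endpoint agrees with $\rho$. By \Cref{clm:block-struct}, $G \in \LL_{\nrs,6,\bb}$, so each layer has $\nrs \cdot \bb$ vertices, which exceeds $m = (r/2) \cdot \bb \leq \nrs \cdot \bb$; hence the size condition $\card{\first(G)}, \card{\last(G)} \geq m$ of \Cref{def:perm-graph} is immediate.

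For the reachability condition, I would fix $i \in [r/2]$ and $a \in [\bb]$ and track the unique successor of vertex $(i,a) \in V^1$ (i.e., index $(i-1)\bb + a$) through the edges. The first step, from $V^1$ to $V^2$, uses $\permgroups{\sigmaLeft,\bb}$ from \Cref{def:permute-groups}, which sends group $i$ to group $\sigmaLeft(i) = \leftRS{e_i}$ using the identity permutation on $[\bb]$ (here I invoke \Cref{def:edge-picking-permutations} to identify $\sigmaLeft(i)$, using that $i \leq r/2$ so $\sigmaLeft$ is not in its lexicographic ``filler'' portion). The identity matching from $V^2$ to $V^3$ then preserves the label $a$ inside group $\leftRS{e_i}$. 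Going from $V^3$ to $V^4$ via $\encodedRS(\Grs,\Sigma)$ (\Cref{def:encodedRS}), since $e_i \in \Mrs{\ell}$, the tuple $(1, \leftRS{e_i}, \rightRS{e_i}, \sigma_{\ell, e_i})$ sends vertex $a$ to vertex $\sigma_{\ell,e_i}(a)$ in group $\rightRS{e_i}$. The identity matching to $V^5$ preserves this, and finally $\permgroups{\sigmaRight,\bb}$ sends group $\rightRS{e_i}$ to $\sigmaRight(\rightRS{e_i}) = i$, leaving the intra-group coordinate unchanged. Concatenating these steps via \Cref{clm:concatenation} applied repeatedly, the vertex reached in $V^6$ has index $(i-1)\bb + \sigma_{\ell,e_i}(a)$, which equals $\rho((i-1)\bb + a)$ as defined in \Cref{eq:def-rho}.

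To finish the ``only if'' direction, I would observe that every step above is a bijection between the corresponding source and target groups: $\sigmaLeft, \sigmaRight$ are permutations of $[\nrs]$, the identity matchings are perfect matchings, and within $\encodedRS(\Grs,\Sigma)$ each edge tuple encodes a permutation in $S_\bb$. Hence the composed map from $V^1$ to $V^6$ along the first $m$ indices is a well-defined permutation, so for any $j \neq \rho((i-1)\bb + a)$ in $\last_{[m]}(G)$ there is no path from $(i-1)\bb+a$ to $j$; this confirms both directions of the iff in \Cref{def:perm-graph}.

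The main obstacle is bookkeeping: keeping straight the two conventions for indexing a group-layered graph (pairs $(i,a) \in [\nrs]\times[\bb]$ versus the linear index $(i-1)\bb + a$), and carefully appealing to \Cref{def:edge-picking-permutations} to ensure that $\sigmaLeft$ and $\sigmaRight$ act as claimed on the first $r/2$ groups (rather than on the lexicographic extension used in \Cref{def:matching-aligned-perm} to complete them to full permutations in $S_{\nrs}$). No substantial new idea beyond unwinding definitions is needed.
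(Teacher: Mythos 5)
Your proposal is correct and takes essentially the same approach as the paper: trace the unique path of $(i,a)\in V^1$ through the six layers of the block, using $\sigmaLeft(i)=\leftRS{e_i}$ and $\sigmaRight(\rightRS{e_i})=i$ from \Cref{def:edge-picking-permutations} and the intra-group permutation $\sigma_{\ell,e_i}$ from \Cref{def:encodedRS}. You add a brief check of the size requirement $\card{\first(G)},\card{\last(G)}\ge m$ and an explicit argument that no other paths exist (since each layer-to-layer step is a bijection), which the paper leaves implicit; these are worthwhile but do not change the core argument.
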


\begin{proof}
	We know that each vertex $(i,a) \in V^1$ for $i \in [m/\bb], a \in [\bb]$, also indexed by $(i-1)\bb+a \in [m]$ is connected to $(\sigmaLeft(i), a) \in V^2$ by the edge added from $\permgroups{\sigmaLeft}$.  By \Cref{def:edge-picking-permutations}, we know that $\sigmaLeft(i) = \leftRS{e_i}$ for $i \in [r/2]$ and edge $e_i \in \Mrs{\ell}$. The vertex $(\leftRS{e_i}, a) \in V^3$ is connected to $(\rightRS{e_i}, \sigma_{\ell, e_i}(a)) \in V^4$ by the edges from $\encodedRS(\Grs, \Sigma)$. Lastly, the vertex $(\rightRS{e_i}, \sigma_{\ell, e_i}(a)) \in V^5$ is connected to $(i, \sigma_{\ell, e_i}(a)) \in V^6$ based on permutation $\sigmaRight$. The proof is complete when we also look at the identity perfect matchings connecting layers $V^2$ to $V^3$ and layers $V^4$ to $V^5$. 
\end{proof}

\subsubsection*{Multi-blocks}

The next step is to combine multiple blocks to get a larger graph and to simulate the concatenation of multiple simple permutations.

\begin{Definition}[Multi-Block]\label{def:multi-block}
	For any $r, t, b, k \geq 1$, given
	\begin{itemize}
		\item An $(r,t)$ RS-graph $\Grs = (\Lrs \cup \Rrs, \Ers)$ with $\Lrs = \Rrs = [\nrs]$ and $t$ induced matchings $\Mrs{1}, \ldots \Mrs{t}$,
		\item A collection of $k$ permutation matrices $\Sigma  = (\SigmaP{1}, \SigmaP{2},\ldots,  \SigmaP{k}) \in \paren{\paren{S_b}^{t \times r}}^k$,
		\item A tuple $L = (\ell_1, \ell_2, \ldots, \ell_k) \in [t]^k$ and a hypermatching $M $ on the $k$-layered hypergraph $[r]^k$ of size $r/2$,
	\end{itemize}
	let $\Estar_i$ be $(M_{1,i}, M_{2,i}, \ldots, M_{r/2,i})$ be $r/2$ edges in $\Mrs{\ell_i}$ for $i \in [k]$. We define a \textbf{multi-block} (see \Cref{fig:multiblock}), denoted by $\mulblock(\Grs, \Sigma, L, M)$ as,
	\begin{align*}
		\block(\Grs, \SigmaP{1}, \ell_1, \Estar_1) \circ \block(\Grs, \SigmaP{2}, \ell_2, \Estar_2) \circ \ldots \circ \block(\Grs, \SigmaP{k}, \ell_k, \Estar_k).
	\end{align*}
\end{Definition}

\begin{figure}[h!]
	\centering
	\scalebox{0.8}{\input{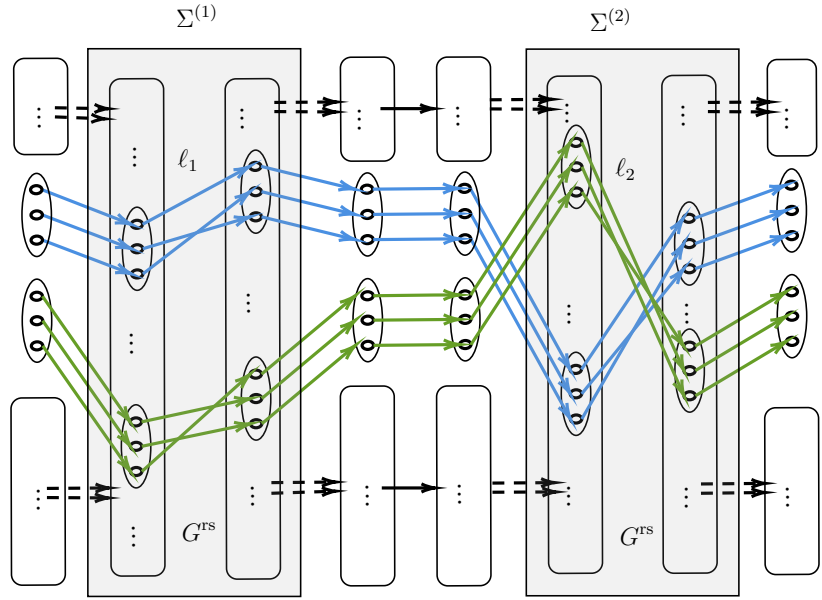}}\caption{An illustration of multi-block from \Cref{def:multi-block} with 2 blocks. The blue path corresponds to one specific edge in each induced matching, and the green corresponds to the other path. }\label{fig:multiblock}
\end{figure}

We can extend the properties we proved about blocks to multi-blocks. 

\begin{claim}\label{clm:mulblock-struct}
	The graph $\mulblock(\Grs, \Sigma, L, M)$ belongs to $\LL_{\nrs, 6k, \bb}$. 
\end{claim}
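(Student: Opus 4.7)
The plan is to proceed by induction on $k$ and invoke \Cref{clm:block-struct} together with the definition of concatenation of layered graphs. The base case $k=1$ is exactly \Cref{clm:block-struct}, which guarantees $\block(\Grs, \SigmaP{1}, \ell_1, \Estar_1) \in \LL_{\nrs, 6, \bb}$. For the inductive step, I would assume that the concatenation of the first $k-1$ blocks lies in $\LL_{\nrs, 6(k-1), \bb}$ and then argue that concatenating with the $k$-th block (which is in $\LL_{\nrs, 6, \bb}$ by \Cref{clm:block-struct}) yields a graph in $\LL_{\nrs, 6k, \bb}$.

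The key structural observation to verify is that concatenation of two group-layered graphs in $\LL_{\nrs, d_1, \bb}$ and $\LL_{\nrs, d_2, \bb}$ produces a graph in $\LL_{\nrs, d_1+d_2, \bb}$. To see this, note that when we concatenate, both the last layer of the first graph and the first layer of the second have size exactly $\nrs \cdot \bb$ (since both graphs share the width $\nrs$ and group size $\bb$), so by the concatenation definition the connecting edges form an identity perfect matching between equally-indexed vertices. This identity matching is itself a valid collection of group-layered edges: each group $\{a\} \times [\bb]$ in the last layer of the first graph is connected to the same group $\{a\} \times [\bb]$ in the first layer of the second graph via the tuple $(d_1, a, a, \sigmaId(\bb))$ as allowed by \Cref{def:group-layer}. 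Hence the concatenation still satisfies all three conditions of \Cref{def:group-layer} with the layers being the union of the layers of the two graphs.

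Applying this observation inductively across the $k$ blocks, each contributing $6$ layers of width $\nrs$ and group size $\bb$, gives the total depth of $6k$ and confirms $\mulblock(\Grs, \Sigma, L, M) \in \LL_{\nrs, 6k, \bb}$. The argument is essentially bookkeeping once we have the single-step concatenation lemma, so I do not expect any substantial obstacle; the main thing to be careful about is verifying that the identity perfect matching inserted between consecutive blocks genuinely fits into the edge specification of \Cref{def:group-layer}, which it does via the trivial tuple described above.
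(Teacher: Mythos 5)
Your proof is correct and follows the same approach as the paper's, which simply notes that the claim follows from \Cref{clm:block-struct} and \Cref{def:multi-block} since $k$ blocks are concatenated. You have merely made explicit the bookkeeping that the paper leaves implicit, in particular the observation that the identity perfect matchings inserted by concatenation fit the edge format of \Cref{def:group-layer} via tuples of the form $(d_1, a, a, \sigmaId(\bb))$.
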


\begin{proof}
	Follows directly from \Cref{clm:block-struct}, and \Cref{def:multi-block}, as we concatenate $k$ blocks. 
\end{proof}

\begin{claim}\label{clm:mulblock-edge-part}
	In any graph $G = \mulblock(\Grs, \Sigma, L, M)$,
	\begin{enumerate}[label=$(\roman*)$]
		\item For any $i \in[k]$, the edges between layer $V^{(i-1)6 + 3}$ to $V^{(i-1)6+4}$ are determined only by $\Grs, \SigmaP{i}$. 
		\item The edges between $V^{(i-1)6 + 1}$ to $V^{(i-1)6+2}$ and between $V^{(i-1)6+ 5}$ to $V^{6i}$ for any $i \in [k]$ are determined by $\Grs, L, M$.
		\item All the other edges are fixed identity perfect matchings.
	\end{enumerate}
\end{claim}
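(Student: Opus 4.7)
The plan is to derive this claim by applying Claim~\ref{clm:block-edge-part} (\texttt{clm:block-edge-part}) individually to each of the $k$ blocks that make up the multi-block, and then accounting for the additional edges introduced by the concatenation operator. By Definition~\ref{def:multi-block}, we have
\[
\mulblock(\Grs,\Sigma,L,M) = \block(\Grs,\SigmaP{1},\ell_1,\Estar_1) \circ \cdots \circ \block(\Grs,\SigmaP{k},\ell_k,\Estar_k),
\]
where $\Estar_i = (M_{1,i},\ldots,M_{r/2,i})$ is a tuple of edges in $\Mrs{\ell_i}$. By Claim~\ref{clm:block-struct}, each block has exactly $6$ layers, so the $i$-th block occupies layers $V^{(i-1)6+1},\ldots,V^{6i}$ of the multi-block, and in particular the layer numbers in the claim align perfectly with the per-block indexing of Claim~\ref{clm:block-edge-part}.

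For part $(i)$, I would fix $i\in[k]$ and apply Claim~\ref{clm:block-edge-part}$(i)$ to the $i$-th block: its ``$V^3 \to V^4$'' edges, which in the multi-block correspond to the edges from $V^{(i-1)6+3}$ to $V^{(i-1)6+4}$, depend only on $\Grs$ and $\SigmaP{i}$. For part $(ii)$, Claim~\ref{clm:block-edge-part}$(ii)$ tells us the ``$V^1\to V^2$'' and ``$V^5\to V^6$'' edges of the $i$-th block are determined by $\Grs, \ell_i, \Estar_i$; since $\ell_i$ is the $i$-th entry of $L$ and $\Estar_i$ is determined by the $i$-th column of the hypermatching $M$ together with $\Grs$, these edges are determined by $\Grs, L, M$.

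For part $(iii)$, there are two kinds of remaining edges. The first are the within-block ``$V^2\to V^3$'' and ``$V^4\to V^5$'' edges of each block, which by Claim~\ref{clm:block-edge-part}$(iii)$ are fixed identity perfect matchings. The second are the edges introduced by the concatenation operator between consecutive blocks, namely from $V^{6i}$ to $V^{6i+1}$ for each $i\in[k-1]$. By the definition of concatenation of layered graphs given at the start of Section~\ref{sec:one-pass}, since both $\last$ of block $i$ and $\first$ of block $i+1$ are indexed by $[\nrs]\times[\bb]$ (and hence have the same size by Claim~\ref{clm:mulblock-struct}), these connecting edges form the identity perfect matching. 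This exhausts all edges of the multi-block and completes the argument. The proof is essentially bookkeeping — the substantive content all lives in Claim~\ref{clm:block-edge-part}, and the only new observation is that concatenation preserves each block's internal structure while adding only identity matchings between them.
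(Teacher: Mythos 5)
Your proof is correct and follows essentially the same route as the paper's: apply Claim~\ref{clm:block-edge-part} to each of the $k$ constituent blocks and observe that the concatenation operator adds only identity perfect matchings between consecutive blocks. The paper's version is more terse but has the same content; your additional detail (tracking that $\ell_i, \Estar_i$ are determined by $L, M$ together with $\Grs$, and explicitly distinguishing within-block from between-block identity matchings) is accurate bookkeeping, not a divergent approach.
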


\begin{proof}
	For any $i \in [k]$, the edges between $V^{(i-1)6 + 3}$ to $V^{(i-1)6+4}$ come from edges between layer $V^3$ to $V^4$ of $\block(\Grs, \SigmaP{i}, \ell_i, \Estar_i)$, and by \Cref{clm:block-edge-part}, are determined only by $\Grs, \SigmaP{i}$. The edges between $V^{(i-1)6 + 1}$ to $V^{(i-1)6+2}$ and from $V^{(i-1)6+5}$ to $V^{6i}$ are determined by $\Grs, \ell_i, \Estar_i$ for $i \in [k]$, again by \Cref{clm:block-edge-part}, and thus depend only on $\Grs, L, M$. Finally, in the concatenations, we add identity-perfect matchings between the layers. 
\end{proof}

For $i \in [k]$, $\SigmaP{i}, \ell_i$ and $\Estar_i = (e_1, e_2, \ldots, e_{r/2})$ as in \Cref{def:multi-block} , let $\rho_i \in S_m$ be the following simple permutation on partition $\lexpart$:
\begin{equation}\label{eq:def-rho-is}
\forall j \in [m/\bb], a \in [\bb], \rho_i((j-1)\bb + a) = (j-1)\bb + \sigma_{\ell_i, e_j}(a).
\end{equation}
Let $\gammastar \in S_m$ be another simple permutation on $\lexpart$ defined as,
\begin{equation}\label{eq:def-gammastar}
	\gammastar = \rho_1 \circ \rho_2 \circ \ldots \circ \rho_k. 
\end{equation}

\begin{lemma}\label{lem:mulblock-concat}
	$\mulblock(\Grs, \Sigma, L, M)$ is a permutation graph of $\gammastar$. 
\end{lemma}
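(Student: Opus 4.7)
The plan is to derive this statement as a direct consequence of two results already established: \Cref{clm:block-connections}, which identifies each individual block as a permutation graph for the corresponding simple permutation $\rho_i$, and \Cref{clm:concatenation}, which says that concatenation of permutation graphs yields a permutation graph for the composed permutation. Hence the bulk of the work has been done, and what remains is essentially a bookkeeping argument by induction on $k$.

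First, I would fix the underlying data $(\Grs, \Sigma, L, M)$ and, for each $i \in [k]$, denote $B_i := \block(\Grs, \SigmaP{i}, \ell_i, \Estar_i)$. By \Cref{clm:block-connections} applied to the tuple $(\Grs, \SigmaP{i}, \ell_i, \Estar_i)$, together with the definition of $\rho_i$ in \Cref{eq:def-rho-is} (which matches \Cref{eq:def-rho} after relabeling), we get that $B_i$ is a permutation graph for $\rho_i \in S_m$. Moreover, by \Cref{clm:block-struct}, each $B_i \in \LL_{\nrs,6,\bb}$, so the first and last layers of $B_i$ both have size $\nrs \cdot \bb \geq m$, which is exactly what \Cref{def:perm-graph} requires of the endpoint layers for a permutation graph on $S_m$.

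Next, I would proceed by induction on $k$. The base case $k=1$ is immediate from the above. For the inductive step, let $G_{<k} := B_1 \circ B_2 \circ \cdots \circ B_{k-1}$ and assume $G_{<k}$ is a permutation graph for $\rho_1 \circ \rho_2 \circ \cdots \circ \rho_{k-1}$. Then
\[
\mulblock(\Grs,\Sigma,L,M) \;=\; B_1 \circ B_2 \circ \cdots \circ B_k \;=\; G_{<k} \circ B_k,
\]
so applying \Cref{clm:concatenation} to the pair $(G_{<k}, B_k)$ yields a permutation graph for $(\rho_1 \circ \cdots \circ \rho_{k-1}) \circ \rho_k = \gammastar$, as defined in \Cref{eq:def-gammastar}.

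The only point that requires a bit of care, and which I expect to be the main (mild) obstacle, is verifying that the identity perfect matchings inserted between consecutive blocks by the concatenation operator correctly connect the last layer of $B_i$ (indexed as $[\nrs] \times [\bb]$ via \Cref{def:group-layer}) to the first layer of $B_{i+1}$, using the flattened indexing $(j,a) \mapsto (j-1)\bb + a$ described just before \Cref{def:group-layer}. Under this indexing, the identity matching between $\last(B_i)$ and $\first(B_{i+1})$ precisely aligns vertex $(j,a)$ in $\last(B_i)$ with vertex $(j,a)$ in $\first(B_{i+1})$, which is the alignment required to apply \Cref{clm:concatenation} cleanly and obtain composition of the $\rho_i$'s in the correct order. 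Once this indexing compatibility is noted, the induction goes through without further issue.
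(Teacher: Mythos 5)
Your proof is correct and follows the same two-lemma route as the paper's own (one-line) argument: apply \Cref{clm:block-connections} to each block, then \Cref{clm:concatenation} to their concatenation. You merely make explicit the induction on $k$ and the indexing check that the paper leaves implicit, so this is the same proof, written out in more detail.
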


\begin{proof}
	By \Cref{clm:block-connections}, we know that $\block(\Grs, \SigmaP{i}, \ell_i, \Estar_i)$ is a permutation graph for $\rho_i \in S_m$ for each $i \in[k]$. By \Cref{clm:concatenation}, $\mulblock(\Grs, \Sigma, L, M)$ is a permutation graph for $\gammastar$. 
\end{proof}

This concludes our subsection for describing the building blocks and constructs needed for our permutation hiding generators. 

\subsection{Simple Permutation Hiding in One Pass}

In this section, we will construct permutation hiding generators for simple permutations. Let us start by defining them.

\begin{Definition}\label{def:simple-perm-hider}
	For any integers $ n, p, s\geq 1$, partition $\cP$ of $[m]$ into $m/b \geq 1$ blocks of size $b \geq 1$, and error parameter $\delta \in [0,1]$, a simple permutation hiding generator $\GGsim(n,p,s, \cP, \delta)$ is defined as a function  $\GGsim$ from the set of all simple permutations under partition $\cP$ to $\cD_m$ such that,
	\begin{enumerate}[label=$(\roman*)$]
		\item For any $\rho \in S_m$ which is simple under $\cP$, any $G \sim \GGsim(\rho)$ is a permutation graph for $\rho$ with $n$ vertices.
		\item For any $\rho_1, \rho_2 \in S_m$, both simple under $\cP$,  the two distributions $\GGsim(\rho_1)$ and $\GGsim(\rho_2)$ are $\delta$-indistinguishable for $p$-pass $s$-space streaming algorithms. 
	\end{enumerate}
\end{Definition}
In this subsection, we will prove that such generators can be constructed for the partition $\lexpart$.

\begin{lemma}\label{lem:simple-perm-hide}
	For any $b \geq 2$, there is a simple permutation hiding generator $\GGsim: \Ssim_m \rightarrow \cD_m$ with respect to partition $\lexpart$ for $1$-pass streaming algorithms using space $s = o(\paren{\frac{m}{b}}^{1+2\beta/3}) $ with 
	parameters $n = (2\cdot 10^4 \cdot m/\alpha \beta)$ and $\delta =  (2m/b)^{-6} \cdot (\alpha^{\beta} \cdot \beta^{-1})^{50/\beta}$.
\end{lemma}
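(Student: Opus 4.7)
The plan is to construct $\GGsim$ by reducing to the \IndexHPH communication problem of~\Cref{prob:mph}, using the multi-block gadget of~\Cref{def:multi-block} as the core object. Fix an $(r,t)$-RS graph $\Grs$ with $r = 2m/b$ (so $r/2 = m/b$ equals the length of the vector $\tovec(\rho)$) and $t = (\nrs)^{\beta}$, and set $k = \Theta(1/\beta)$ with a sufficiently large absolute constant. Given $\rho \in \Ssim_m$ with corresponding vector $\Gamma_\rho := \tovec(\rho) \in (S_b)^{m/b}$, the generator $\GGsim(\rho)$ samples $\Sigma,L,\cM$ uniformly and independently as in the \IndexHPH input, computes the induced vector $\Gammastar$, sets the ``correction'' vector $\Gamma_c := \Gammastar^{-1} \circ \Gamma_\rho$ (componentwise), and outputs
\[
  \mulblock(\Grs, \Sigma, L, \cM) \,\circ\, \basic{\joinit(\Gamma_c)}.
\]
By~\Cref{lem:mulblock-concat} and~\Cref{clm:concatenation}, this is a permutation graph for $\joinit(\Gammastar) \circ \joinit(\Gamma_c) = \joinit(\Gammastar \circ \Gamma_c) = \rho$, since simple permutations under $\lexpart$ compose componentwise. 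The total number of vertices is $6k(\nrs b) + 2m = O(km/\alpha) = O(m/(\alpha\beta))$, which fits into the claimed $n \leq 2 \cdot 10^4 m/(\alpha\beta)$.

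For indistinguishability, I would reduce to~\Cref{thm:mph} using target tuples $\GammaY := \tovec(\rho_1)$ and $\GammaN := \tovec(\rho_2)$ for any two simple permutations. Given a one-pass streaming algorithm $A$ of space $s$ that distinguishes $\GGsim(\rho_1)$ from $\GGsim(\rho_2)$ with advantage exceeding the target $\delta$, the protocol proceeds as follows. The referee first ``streams'' into $A$ every edge of the generator graph that does not depend on $\Sigma$: namely, the correction graph $\basic{\joinit(\Gamma)}$ (where $\Gamma$ is the referee's \IndexHPH input, which by construction equals $\Gammastar^{-1} \circ \GammaY$ or $\Gammastar^{-1} \circ \GammaN$), the $V^1\to V^2$ and $V^5 \to V^6$ boundary edges of each of the $k$ blocks (which depend only on $L,\cM$ by~\Cref{clm:mulblock-edge-part}), and the fixed identity matchings bridging blocks. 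The referee sends the resulting $s$-bit memory state to player~$1$. Sequentially, each player $i$ streams the $V^{6(i-1)+3} \to V^{6(i-1)+4}$ slab of block $i$ (which depends only on $\SigmaP{i}$ by~\Cref{clm:mulblock-edge-part}), updates the state, and forwards it to player $i+1$; player $k$ returns the final state to the referee, who outputs $A$'s verdict. Total communication is $(k+1)s$, lying inside the valid range $[k\log(rt),\, 10^{-3} rt]$ of~\Cref{thm:mph} because $s = o((m/b)^{1+2\beta/3})$ while $rt = \Theta(\alpha^{-\beta}(m/b)^{1+\beta})$. Moreover, the distribution of the graph seen by $A$ is precisely $\GGsim(\rho_1)$ or $\GGsim(\rho_2)$ depending on whether the \IndexHPH answer is $\GammaY$ or $\GammaN$, so $A$'s distinguishing advantage equals the protocol's success probability minus $1/2$.

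Applying~\Cref{thm:mph} bounds this advantage by $r \cdot O(s/(rt))^{k/32}$; substituting $r = 2m/b$, $rt \asymp \alpha^{-\beta}(m/b)^{1+\beta}$, $s \leq (m/b)^{1+2\beta/3}$, and $k = c/\beta$ for a sufficiently large absolute constant $c$, the bound becomes essentially $(m/b) \cdot C^{c/\beta} \cdot \alpha^{c/32} \cdot (m/b)^{-c/96}$ for some absolute $C$. The main obstacle is the parameter tuning here: picking $c$ large enough to $(i)$ push $c/96 > 7$ so the $(m/b)^{1-c/96}$ factor dominates a negative power $(2m/b)^{-6}$, $(ii)$ ensure $c/32 \geq 50$ so that $\alpha^{c/32} \leq \alpha^{50}$ (using $\alpha \in (0,1)$), and $(iii)$ absorb the $C^{c/\beta}$ constant into the $\beta^{-50/\beta}$ slack, all while preserving the vertex-count bound $n \leq 2\cdot 10^4 m/(\alpha\beta)$. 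This juggling of constants is delicate but essentially routine; the only additional subtlety is to verify that the prescribed stream ordering is valid within a single pass, which is immediate because the streaming model permits adversarial edge orderings.
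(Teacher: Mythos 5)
Your construction matches the paper's exactly: sample $\Sigma, L, \cM$ uniformly, set the correction $\Gamma_c = \Gammastar^{-1} \circ \tovec(\rho)$ (which is the same as insisting $\Gammastar \circ \Gamma = \tovec(\rho)$), and output $\mulblock(\Grs, \Sigma, L, \cM) \circ \basic{\joinit(\Gamma_c)}$. The vertex count and the verification that the output is a permutation graph for $\rho$ are also correct, and the parameter juggling at the end is essentially the paper's computation with $k = 1600/\beta$ (giving $k/32 = 50/\beta$, so $\alpha^{k/32} = \alpha^{50/\beta \cdot \beta} = \alpha^{50}$ up to the $\beta$-dependence, matching the claimed $\delta$).

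However, there is a genuine gap in your reduction protocol. You have the referee start the simulation by streaming the $\Gamma$-dependent and $L,\cM$-dependent edges, and then \emph{send} the resulting $s$-bit state to player $1$. This is a referee-to-player message, and the $\IndexHPH$ model in~\Cref{prob:mph} does not permit it: only the $k$ players $\QP{1},\ldots,\QP{k}$ write on the shared board, and the referee acts purely as an output node that reads the final board contents together with its private input $(L,\cM,\Gamma)$. Your aside that ``the streaming model permits adversarial edge orderings'' is correct but does not address the issue --- the stream ordering is indeed free, but the \emph{protocol} must be expressible within the allowed communication pattern. An ordering that begins with referee-known edges forces the players to know the referee's initial memory state, which they cannot. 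The correct ordering is the reverse: all $\Sigma$-dependent edges first (streamed sequentially by $\QP{1}, \ldots, \QP{k}$, each appending the updated $s$-bit state to the board), and the referee-known edges last, processed privately by the referee using the state left by $\QP{k}$. This is what the paper does in~\Cref{clm:running-alg}, and it keeps the total communication at $k \cdot s$ rather than your $(k+1)s$ (the referee writes nothing). The fix is local, but as stated your protocol is not a valid $\IndexHPH$ protocol, so~\Cref{thm:mph} cannot be invoked.
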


We will construct these generators using multi-blocks, and the inputs to the multi-blocks are from instances of $\IndexHPH$.  By \Cref{clm:simple-perm-to-perm-vec}, it is sufficient to hide multiple smaller permutations from $S_b$ ($m/b$ of them, to be exact) to hide any simple permutation.  We will be talking about one instance of \Cref{prob:mph} in the construction, and we will use $\Sigma \in \paren{\paren{S_b}^{t \times r}}^{k}, L \in [t]^{k}$, hypermatching $M \subset [r]^k$ and $\Gammastar, \GammaY, \GammaN , \Gamma \in \paren{S_b}^{r/2}$ to denote the variables in the instance. See \Cref{fig:simple-perm-1-pass} for an illustration of this construction. 

\begin{ourbox}
	\textbf{Construction of $\GGsim(n,p,s,\lexpart, \delta) : \Ssim_m \rightarrow \cD_m$ on input $\rho \in \Ssim$:}
	\begin{enumerate}[label=$(\roman*)$]
		\item Fix an $(r,t)$-RS graph $\Grs$ with $r = 2m/b = \nrs/\alpha$ and $t = \paren{\nrs}^{\beta}$.
		\item Instantiate $k$ players and the referee in $\IndexHPH_{r,t,b,k}$ (\Cref{prob:mph}) with parameter $k = 1600/\beta$ such that the solution to the instance, $\Gammastar \circ \Gamma$ is $\tovec(\rho)$.
		\item Output $\mulblock(\Grs, \Sigma, L, M) \circ \basic{\joinit(\Gamma)}$.
	\end{enumerate}
\end{ourbox}

\begin{figure}[h!]
	\centering
	\scalebox{0.75}{\input{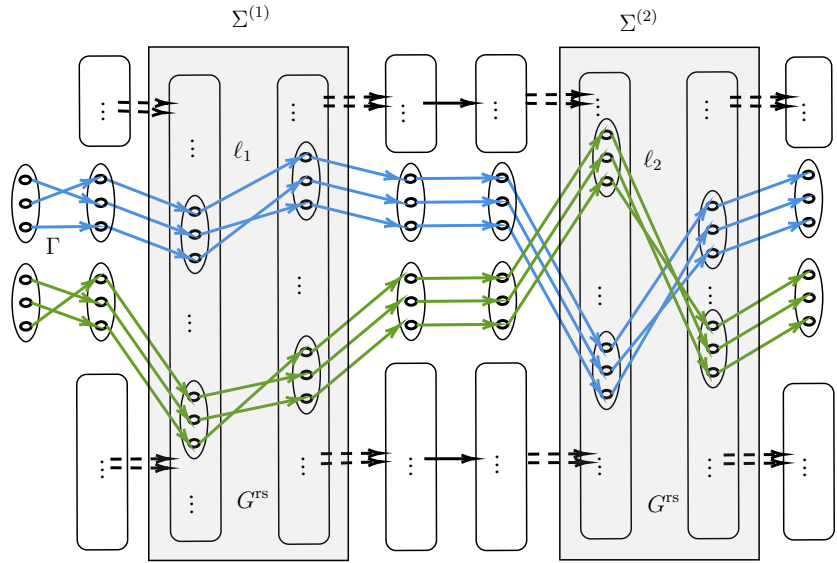}}\caption{An illustration of a simple permutation hiding generator with $k = 2$. Compared to~\Cref{fig:multiblock}, the basic permutation graph of $\joinit(\Gamma)$ is added to the left.}\label{fig:simple-perm-1-pass}
\end{figure}

Now let us show that our construction possesses the required properties. We begin by proving that it is indeed a valid permutation graph for $\rho$ with the required number of vertices.

\begin{claim}\label{clm:valid}
	For any $\rho \in \Ssim_m$, the graph $\GGsim(\rho)$ is a permutation graph for $\rho$ with $n \leq  2 \cdot 10^4 \cdot m /\alpha \beta$. 
\end{claim}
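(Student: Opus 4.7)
The plan is to verify the two required properties of a simple permutation hiding generator in turn: first that $\GGsim(\rho)$ is a permutation graph for $\rho$, and second that its vertex count satisfies the claimed bound. Both follow from the structural lemmas and definitions already established in~\Cref{sec:one-pass-build}, together with the calibration of parameters in the construction.

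For the permutation-graph property, the key is to chase identities through the bijection $\tovec : \Ssim_m \to (S_b)^{m/b}$ of~\Cref{clm:simple-perm-to-perm-vec}. By~\Cref{lem:mulblock-concat}, the graph $\mulblock(\Grs, \Sigma, L, M)$ is a permutation graph for $\gammastar = \rho_1 \circ \cdots \circ \rho_k$, where each $\rho_i$ is the simple permutation associated via~\Cref{eq:def-rho-is} with the $i$-th player's contribution; by construction of $\IndexHPH$, the tuple of concatenations $\Gammastar = (\gammastar_1, \ldots, \gammastar_{r/2})$ corresponds under $\joinit$ precisely to $\gammastar$, so $\gammastar = \joinit(\Gammastar)$. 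Similarly, $\basic{\joinit(\Gamma)}$ is a permutation graph for $\joinit(\Gamma)$ by~\Cref{def:basic-perm}. Applying~\Cref{clm:concatenation} to these two graphs yields a permutation graph for $\joinit(\Gammastar) \circ \joinit(\Gamma)$. Since $\joinit$ is a group homomorphism from $(S_b)^{r/2}$ (with coordinate-wise composition $\conc$) to $\Ssim_m$ (a short check: both sides act as $\gammastar_i \circ \gamma_i$ on block $\lexpart_i$), this composition equals $\joinit(\Gammastar \conc \Gamma) = \joinit(\tovec(\rho)) = \rho$, where the last equality uses that $\joinit$ and $\tovec$ are mutually inverse.

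For the vertex count, I will simply sum the sizes. By~\Cref{clm:mulblock-struct}, the multi-block lies in $\LL_{\nrs, 6k, b}$ and therefore has $6k \cdot \nrs \cdot b$ vertices. The basic graph $\basic{\joinit(\Gamma)}$ contributes $2m$ additional vertices, and concatenation takes the disjoint union on vertex sets. With the parameter settings $r = 2m/b$, $\nrs = r/\alpha$, and $k = 1600/\beta$, we get $\nrs \cdot b = 2m/\alpha$ and hence
\[
n \;=\; 6k \cdot \nrs \cdot b + 2m \;=\; \frac{12 \cdot 1600 \cdot m}{\alpha \beta} + 2m \;\leq\; \frac{2 \cdot 10^4 \cdot m}{\alpha \beta},
\]
where the final inequality uses $\alpha, \beta \leq 1$.

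I expect no substantial obstacle here: the hardest part is just carefully tracking what $\joinit(\Gammastar) \circ \joinit(\Gamma)$ means as a composition of simple permutations and verifying the homomorphism property of $\joinit$, which is straightforward but should be spelled out since the paper uses both the ``permutation vector'' and ``simple permutation'' viewpoints interchangeably. The vertex-count bound is a routine calculation with the stated parameters.
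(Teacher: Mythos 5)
Your proof is correct and follows essentially the same route as the paper's: apply \Cref{lem:mulblock-concat} and \Cref{clm:concatenation}, then sum the vertex counts from \Cref{clm:mulblock-struct} and the basic graph. The only difference is that you spell out why $\joinit(\Gammastar)\circ\joinit(\Gamma) = \rho$ via the homomorphism property of $\joinit$, whereas the paper dismisses it with ``by construction'' — a harmless bit of extra rigor.
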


\begin{proof}
	We know that $\mulblock(\Grs, \Sigma, L, M)$ is a permutation graph for $\joinit(\Gammastar)$, where $\Gammastar$ is defined as in \Cref{prob:mph} by \Cref{lem:mulblock-concat}. By \Cref{clm:concatenation}, the output $\GGsim(\rho)$ is a permutation graph for $\rho$, as $\joinit(\Gammastar) \circ \joinit(\Gamma) = \rho$ by construction. 
	
	The total number of vertices in $\mulblock(\Grs, \Sigma, L, M)$ is $6k\cdot \nrs \cdot b$ by \Cref{clm:mulblock-struct}. When we concatenate with $\basic{\joinit(\Gamma)}$, we add $2m$ vertices, so the total number is $6k \cdot \nrs \cdot b + 2m = 6 \cdot \frac{1600}{\beta} \cdot \frac{2m}{b\alpha} \cdot b  + 2m \leq 2 \cdot 10^4 \cdot m/ \alpha \beta$.  
\end{proof}

Next, we will show that any 1-pass streaming algorithm can be run by the players of $\IndexHPH$ and the referee on the graph $\GGsim(\rho)$.
\begin{claim}\label{clm:running-alg}
	For any $\rho \in \Ssim_m$ and given any 1-pass $s$-space streaming algorithm $\alg$ using space at most $s$, the players and the referee of $\IndexHPH$ (\Cref{prob:mph}) can run $\alg$ on $\GGsim(\rho)$ using at most $s$ bits of communication per player.
\end{claim}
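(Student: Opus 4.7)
The plan is to fix an adversarial stream ordering of the edges of $\GGsim(\rho)$ and have each player write to the shared board a single $s$-bit copy of the memory state of $\alg$ after processing its own block. The key structural input is~\Cref{clm:mulblock-edge-part}, which partitions the edges of $\mulblock(\Grs, \Sigma, L, M)$ into: (a) for each $i \in [k]$, the edges in the layer-pair $V^{6(i-1)+3} \to V^{6(i-1)+4}$, determined by $\Grs$ and player $\QP{i}$'s input $\SigmaP{i}$ alone; (b) the edges in the layer-pairs $V^{6(i-1)+1} \to V^{6(i-1)+2}$ and $V^{6(i-1)+5} \to V^{6i}$, determined by $\Grs$ and the referee's $(L, M)$; and (c) fixed identity perfect matchings. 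The additional edges of $\basic{\joinit(\Gamma)}$ are determined solely by the referee's $\Gamma$, and the graph $\Grs$ as well as the block structure are public, so the type-(c) edges are known to every party.

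With this in hand, I would fix the following stream order: first the type-(a) edges of $\QP{1}$, then those of $\QP{2}$, $\ldots$, then those of $\QP{k}$, then all type-(b) edges and $\basic{\joinit(\Gamma)}$ edges (in any fixed order), and finally the type-(c) edges. Since~\Cref{def:streaming} requires $\alg$ to succeed against any stream ordering, we are free to use this one. The protocol is then: player $\QP{1}$ starts a simulation of $\alg$ from the initial state $0^s$, feeds in its own type-(a) edges (which it can regenerate from the public $\Grs$ and its input $\SigmaP{1}$), and writes the resulting $s$-bit memory state to the board; for $i = 2, \ldots, k$, player $\QP{i}$ reads the most recently posted state, continues the simulation of $\alg$ on its own type-(a) edges, and posts the new state. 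Any random coins of $\alg$ are synchronized via public randomness, which is free under~\Cref{def:streaming}. After the players terminate, the referee reads the final posted state from the board and completes the simulation of $\alg$ on its type-(b), $\basic{\joinit(\Gamma)}$, and type-(c) edges to recover the output of $\alg$ on the full stream.

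By construction, each player writes exactly one $s$-bit message, namely the memory state of $\alg$ immediately after processing its own block, so the per-player communication is at most $s$ bits, as claimed. There is no deep obstacle here; the only thing that really needs verification is that chaining the per-player simulations of $\alg$ via the posted state genuinely reproduces the behavior of $\alg$ on the concatenated stream, which is immediate from the state-transition formulation in~\Cref{def:streaming} once public randomness synchronizes the contents of $\alg$'s random tape across the players and the referee.
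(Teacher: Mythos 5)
Your proposal is correct and is essentially the same argument the paper gives: invoke~\Cref{clm:mulblock-edge-part} to partition the edges by which party can generate them, fix the stream order so each player's block comes consecutively, chain the $s$-bit memory state of~$\alg$ through the players, and let the referee finish on the remaining edges. The added remarks about adversarial stream ordering and synchronizing randomness are correct and implicit in the paper's treatment.
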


\begin{proof}
	
	We know that the edges between layers $V^{6(i-1) + 3}$ to $V^{6(i-1) + 4}$ depend only on $\SigmaP{i}$ for $i \in [k]$ in $\mulblock(\Grs, \Sigma, L, M)$ and all other edges are known to the referee, as they either depend on $L, M$ or are fixed by \Cref{clm:mulblock-edge-part}. The players and the referee run $\alg$ on $G = \GGsim(\rho)$ as follows. 
	\begin{enumerate}[label=$(\roman*)$]
		\item Player $\QP{1}$ runs $\alg$ on the edges fixed by $\SigmaP{1}$  in $G$ and writes the memory state on the board. 
		\item In increasing order of $i \in [k] \setminus \set{1}$, player $\QP{i}$ gets the memory state of $\alg $ as written on the board by player $\QP{i-1}$ and continues to run $\alg$ on the edges based on $\SigmaP{i}$ in $G$. Player $\QP{i}$, then writes the memory state back on the board.
		\item The referee gets the state of $\alg $ as written by $\QP{k}$, and adds all the other edges based on $L, M$ and $\Gamma$ to get the final output.
	\end{enumerate}
	Each player writes on the board exactly once, and the communication per player is $s$ bits. 
\end{proof}

We will conclude this subsection by proving \Cref{lem:simple-perm-hide}.

\begin{proof}[Proof of \Cref{lem:simple-perm-hide}]
	The bound on the number of vertices follows readily from \Cref{clm:valid}.
	Let us assume that there is a 1-pass $s$-space streaming algorithm $\alg $ which distinguishes between $\GGsim(\rho_1)$ and $\GGsim(\rho_2)$ for some $\rho_1, \rho_2 \in \Ssim_m$ with $s=o((m/b)^{1+2\beta/3})$ and advantage $\delta$ more than $(2m/b)^{-6}$. We will argue that it can be used to solve an instance of $\IndexHPH$. 
	
	Create an instance of $\IndexHPH_{r,t, b , k}$  with $\GammaY = \tovec(\rho_1)$ and $\GammaN = \tovec(\rho_2)$. We know from \Cref{clm:running-alg} that the referee and players can run $\alg$ using at most $s = o((m/b)^{1+2\beta/3}) = o(r^{1+2\beta/3})$ bits of communication per player.
	The total number of bits is $k \cdot s = O(1/\beta) \cdot  o(r^{1+2\beta/3})= o(rt)$, as $t = (r/\alpha)^{\beta}$.
	By \Cref{thm:mph}, we know that the advantage the referee gains is at most
	\begin{align*}
		r \cdot O(\frac{k \cdot s}{r\cdot t})^{k/32} = r \cdot o\paren{\frac{r^{1+2\beta/3} \cdot \alpha^{\beta}}{r^{1+\beta} \cdot \beta}}^{50/\beta} = o(1/r^6) \cdot (\alpha^{\beta} \cdot \beta^{-1})^{50/\beta},
	\end{align*}
	which is a contradiction. 
\end{proof}

\subsection{General Permutation Hiding in One Pass}

In this subsection, we will hide any general permutation from 1-pass streaming algorithms by hiding multiple simple permutations constructed in \Cref{lem:simple-perm-hide} and prove \Cref{lem:1-pass-hiding}.  We first discuss how to divide any general $\rho \in S_m$ into multiple simple permutations with sorting networks. 

The simple permutations we hid in \Cref{lem:simple-perm-hide} were all under the fixed partition $\lexpart$. It is easy to hide them even if they were under different partitions, as we will show. 

\begin{lemma}\label{lem:general-partitions}
	Given any permutation $\rho \in S_m$ which is simple under partition $\cP$ with $ m/b$ groups of size $b$, there is a simple permutation hiding generator for partition $\cP$ with the same parameters as in \Cref{lem:simple-perm-hide}.
\end{lemma}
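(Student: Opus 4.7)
The plan is to reduce hiding simple permutations under an arbitrary partition $\cP$ to hiding simple permutations under the lexicographic partition $\lexpart$, by conjugating with a fixed ``relabeling'' permutation and prepending/appending basic permutation graphs for that relabeling.

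Concretely, let $\pi \in S_m$ be any fixed permutation that sends the $i$-th group of $\lexpart$ bijectively onto the $i$-th group of $\cP$ for every $i \in [m/b]$ (for instance, take $\pi$ to be the lex-first such bijection). The first step is to observe that if $\rho \in S_m$ is simple under $\cP$, then $\rho' := \pi^{-1} \circ \rho \circ \pi$ is simple under $\lexpart$: for any $j \in \lexpart_i$, $\pi(j) \in \cP_i$, so $\rho(\pi(j)) \in \cP_i$ (by simplicity of $\rho$ under $\cP$), and therefore $\pi^{-1}(\rho(\pi(j))) \in \lexpart_i$. So the map $\rho \mapsto \pi^{-1} \circ \rho \circ \pi$ is a bijection between permutations simple under $\cP$ and permutations simple under $\lexpart$.

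The construction is then to set
\[
\GGsim_{\cP}(\rho) := \basic{\pi} \circ \GGsim_{\lexpart}(\pi^{-1} \circ \rho \circ \pi) \circ \basic{\pi^{-1}},
\]
where $\GGsim_{\lexpart}$ is the generator from \Cref{lem:simple-perm-hide}. By \Cref{clm:concatenation} the result is a permutation graph for $\pi \circ (\pi^{-1} \circ \rho \circ \pi) \circ \pi^{-1} = \rho$, and since each basic permutation graph contributes only $2m$ vertices while $\GGsim_{\lexpart}$ already uses $\Theta(m/\alpha\beta)$ vertices, the vertex count stays within the bound in \Cref{lem:simple-perm-hide} up to a constant factor absorbed into the $2\cdot 10^4$ leading constant. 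The space parameter also carries over unchanged.

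For the indistinguishability step, suppose some $1$-pass $s$-space algorithm $\alg$ distinguishes $\GGsim_{\cP}(\rho_1)$ from $\GGsim_{\cP}(\rho_2)$ with advantage exceeding $\delta$. Because $\pi$ is a fixed publicly known permutation (independent of $\rho$), the basic graphs $\basic{\pi}$ and $\basic{\pi^{-1}}$ are deterministic and known to the algorithm in advance; therefore $\alg$ can be turned into a $1$-pass $s$-space algorithm $\alg'$ that, given a stream representing a graph $G \sim \GGsim_{\lexpart}(\sigma)$, simulates the stream of $\basic{\pi} \circ G \circ \basic{\pi^{-1}}$ by interleaving the (deterministically generated) edges of the two basic graphs with the edges of $G$. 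Then $\alg'$ would distinguish $\GGsim_{\lexpart}(\pi^{-1} \circ \rho_1 \circ \pi)$ from $\GGsim_{\lexpart}(\pi^{-1} \circ \rho_2 \circ \pi)$ with the same advantage, contradicting \Cref{lem:simple-perm-hide}.

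I do not anticipate a serious obstacle here: the only thing to be careful about is verifying that conjugation by $\pi$ really does preserve simplicity (a short group-theoretic check, as above) and that the free simulation of the basic graphs does not inflate the space bound, which is immediate because the basic graphs are layered two-layer graphs whose edges can be generated on the fly from $\pi$.
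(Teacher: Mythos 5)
Your proposal is correct and follows essentially the same route as the paper: conjugate $\rho$ by a fixed relabeling permutation to obtain a permutation simple under $\lexpart$, hide that via $\GGsim_{\lexpart}$, and sandwich with the basic permutation graphs of the relabeling and its inverse, using \Cref{clm:concatenation} to recover a permutation graph for $\rho$. You are a bit more careful than the paper in two small ways that are worth noting. First, the paper's text defines $\rho' = \swap \circ \rho \circ \swap^{-1}$ but outputs $\basic{\swap} \circ G \circ \basic{\swap^{-1}}$, which by \Cref{clm:concatenation} would represent $\swap \circ \rho' \circ \swap^{-1} = \swap^2 \circ \rho \circ \swap^{-2}$, not $\rho$; your choice $\pi = \swap^{-1}$ with $\rho' = \pi^{-1}\circ\rho\circ\pi$ and output $\basic{\pi}\circ G\circ\basic{\pi^{-1}}$ is the internally consistent version (and matches what the paper's Figure 14 caption actually says). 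Second, you make explicit the indistinguishability step --- that $\basic{\pi}$ and $\basic{\pi^{-1}}$ are fixed graphs that the adversary can generate for free, so any distinguisher for $\GGsim_{\cP}$ yields one for $\GGsim_{\lexpart}$ --- which the paper leaves implicit.
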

\begin{proof}
	We permute the elements of $[m]$ based on $\cP$ so that the partition will be $\lexpart$. To do so, let
	$f_{i}: P_i \rightarrow [b]$ be the lexicographic bijective mapping from group $P_i$ to $[b]$ for $i \in [m/b]$, and let $g : [m] \rightarrow [m/b]$ be such that $g(j)$ denotes which group among the $m/b$ groups $j \in [m]$ belongs to under $\cP$.  Define permutation $\swap \in  S_m $ as,
	\begin{align*}
		\swap(j) = (g(j)-1)\cdot b +  f_{g(j)}(j)
	\end{align*}
	for $j \in [m]$.  Define permutation $\rho' \in S_m$ as,
	\begin{align*}
		\rho' = \swap \circ \rho \circ \swap^{-1}
	\end{align*}
	
	\begin{claim}\label{clm:local-maps}
		The permutation $\rho'$ is simple under partition $\lexpart$.
	\end{claim}
	\begin{proof}
		Let $h:[m] \rightarrow [m/b]$ be such that $h(j) = \floor{\frac{j-1}{b}} + 1$. This denotes the group each element belongs to under partition $\lexpart$.
		We know that for any $x \in P_{g(x)}$, $\swap(x) \in \lexpart_{g(x)}$, and for any $y \in \lexpart_{h(y)}$, $\swap^{-1}(y) \in P_{h(y)}$ by the definition of $\swap$.
		Thus, for any $y \in \lexpart_{h(y)}$, we know that $\rho(\swap^{-1}(y)) \in P_{h(y)}$ as $\rho$ is simple under $\cP$. This implies that $\swap(\rho(\swap^{-1}(y))) \in \lexpart_{h(y)}$.
	\end{proof}
	
	Now we can easily construct permutation graphs for $\rho$, by sampling $G \sim \GGsim(\rho')$ under partition $\lexpart $ from \Cref{lem:simple-perm-hide}, and outputting
	\begin{align*}
		\basic{\swap} \circ G \circ \basic{\swap^{-1}}.
	\end{align*}
	We add $4m$ vertices to $G$, and the output is a permutation graph for $\rho$ by \Cref{clm:concatenation}.
\end{proof}
Refer to \Cref{fig:swap-perm-sorting} for an illustration of the property proved by \Cref{lem:general-partitions}.
Now we are ready to construct our generators.

\begin{figure}[h!]
		\centering
	\subcaptionbox{\footnotesize The permutation $\rho = (3,4,1,2)$ under partition $\cP$ with $P_1 = \{1,3\}$ and $P_2 = \{2,4\}$, and permutation $\swap = (1,3,2,4)$ based on $\cP, \lexpart$. }%
	[.4\linewidth]{
		\scalebox{0.75}{\input{Figures/gene-parts}}
	} 
	\hspace{0.2cm} 
	\subcaptionbox{An illustration of permutation $\rho' = \swap^{-1} \circ \rho \circ \swap  = (2,1,4,3)$ which is simple under partition $\lexpart$. }%
	[.4\linewidth]{
		\scalebox{0.75}{\input{Figures/gene-parts-2}}
	}
	\caption{An illustration of \Cref{lem:general-partitions}.}
\label{fig:swap-perm-sorting}
\end{figure}
\begin{ourbox}
	\textbf{Construction of family $\GG(m,n,1,s,\delta): S_m \rightarrow \PP_m$ on input $\rho \in S_m$}
	\begin{enumerate}[label=$(\roman*)$]
		\item Get $\daks = \caks \cdot \log_b(m) = 100\caks/\beta $ simple permutations  $\gamma_1, \gamma_2, \ldots, \gamma_{\daks} \in \paren{S_m}$ under partitions $\cP_1, \cP_2, \ldots \cP_{\daks}$ from \Cref{prop:AKSsorting}
		for $b = m^{\beta/100}$. 
		\item Sample $G_i \sim \GGsim(\gamma_i)$ for $i \in [\daks]$ as in \Cref{lem:general-partitions} and output $G = G_1 \circ G_2 \circ \ldots \circ G_{\daks}$.
	\end{enumerate}
\end{ourbox}

\begin{proof}[Proof of \Cref{lem:1-pass-hiding}]
	We know that $G$ is a permutation graph for $\rho $ by \Cref{lem:general-partitions} and \Cref{clm:concatenation}. The total number of vertices in $G$ is, 
	\begin{align*}
		\daks \cdot 2 \cdot 10^4 \cdot m \cdot \frac{1}{\alpha \beta} = 2 \cdot \caks \cdot 10^6 \cdot m \cdot \frac{1}{\alpha\beta}.
	\end{align*}
	Let $\alg$ be a 1-pass streaming algorithm that distinguishes between graphs sampled from $\GG(\rho_1)$ and $\GG(\rho_2)$ using space $s = o(m^{1+\beta/2})$ and advantage more than $m^{-5}$. That is, 
	\begin{align*}
		\tvd{\mem{1}{\alg}(\GG(\rho_1))}{\mem{1}{\alg}(\GG(\rho_2))} \geq m^{-5}.
	\end{align*}
	
	Let $\gamma_{1,i}, \gamma_{2,i}$ be the simple permutations under partition $\cP_i$ from step $(i)$ of the construction for $i \in [\daks]$ for $\rho_1,\rho_2$ respectively. For any $\rho \in S_m$ distribution $\GG(\rho)$ is fixed by sampling from $\GGsim(\gamma_i)$ for all $i \in [\daks]$. By \Cref{fact:tvd-data-processing}, we know that, 
\begin{align*}
		&\tvd{\memn(\GG(\rho_1))}{\memn(\GG(\rho_2))}  \\
		&\leq \tvd{\memn(\GGsim(\gamma_{1,1}), \ldots, \GGsim(\gamma_{1,\daks}))}{\memn(\GGsim(\gamma_{2,1}), \ldots, \GGsim(\gamma_{2,\daks}))} \\
		&\leq \sum_{i=1}^{\daks} \tvd{\memn(\GGsim(\gamma_{1,i}))}{\memn(\GGsim(\gamma_{2,i}))} \tag{by the hybrid argument \Cref{prop:hybrid-arg}} \\
		&\leq \daks \cdot (r)^{-6} (\alpha^{\beta} \cdot \beta^{-1})^{50/\beta} \leq (100 \caks /\beta) \cdot m^{-5} \cdot (\alpha^{\beta} \cdot \beta^{-1})^{50/\beta},
	\end{align*}
	where in the last inequality, we have used \Cref{lem:simple-perm-hide} because algorithm $\alg$ uses $s = o(m^{1+\beta/2}) = o(\paren{\frac{2m}{b}}^{1+2\beta/3})$ space. The proof follows as for our choice of $b = m^{\beta/100}$, $r^{-6} < m^{-5}$.
\end{proof}

\newcommand{\pblock}[1]{\ensuremath{\textnormal{\textsf{p-Block}}(#1)}}
\newcommand{\pmultiblock}[1]{\ensuremath{\textnormal{\textsf{p-Multi-Block}}(#1)}}
\newcommand{\Gammaspec}{\ensuremath{\Gamma^*}}
\newcommand{\Gbar}{\ensuremath{\overline{G}}}
\newcommand{\Lbar}{\ensuremath{\overline{L}}}
\newcommand{\Gammabar}{\ensuremath{\overline{\Gamma}}}
\newcommand{\Mbar}{\ensuremath{\overline{M}}}
\renewcommand{\protFake}{\ensuremath{\Prot^{\textnormal{\textsf{fake}}}}}
\newcommand{\cB}{\ensuremath{\mathcal{B}}}

\renewcommand{\protFake}{\ensuremath{\Prot^{\textnormal{\textsf{fake}}}}}
\newcommand{\Grest}{\ensuremath{\GG^{\textnormal{\textsf{rest}}}}}
\newcommand{\Gfixed}{\ensuremath{G^{\textnormal{\textsf{fixed}}}}}
\newcommand{\Grestbar}{\ensuremath{\overline{\GG^{\textnormal{\textsf{rest}}}}}}

\section{Multi-pass Permutation Hiding}\label{sec:multi}

In this section, we will construct permutation hiding graphs against $p$-pass semi-streaming algorithms for any general integer $p \geq 1$ with induction. We restate \Cref{thm:perm-hiding} here and prove it in this section.

\begin{theorem*}[Restatement of \Cref{thm:perm-hiding}]
	There exists a permutation hiding generator $\GG(m,n,p,s,\delta)$ where $p$ is any positive integer with the following parameters:
	\begin{itemize}
		\item Space $s = o(m^{1+\beta/2})$,
		\item Number of vertices $n = \Theta(1/\alpha \beta^2)^p \cdot m$,
		\item Error parameter $\delta = (p/\beta)^{\Theta(1)/\beta}\cdot \Theta(1/\beta)^{2p} \cdot 1/\poly{(m)}$.
	\end{itemize} 
\end{theorem*}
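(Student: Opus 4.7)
The plan is to proceed by induction on the number of passes $p$. The base case $p=1$ is exactly \Cref{lem:1-pass-hiding}, which meets all three parameter bounds once we specialize to $p=1$. For the inductive step, assume we have a permutation hiding generator $\GG_{p-1}$ satisfying the theorem for $(p-1)$-pass algorithms and build $\GG_p$ on top of it. The construction mirrors the one used in \Cref{sec:one-pass}: we set up a fresh instance of $\IndexHPH_{r,t,b,k}$ with $k=\Theta(1/\beta)$, $r=2m/b$, $t=(r/\alpha)^{\beta}$, and $b$ a small polynomial in $m$, arranged so that $\Gammastar\circ\Gamma$ encodes the target simple permutation; we then form $\mulblock(\Grs,\Sigma,L,M)$, concatenate with $\basic{\joinit(\Gamma)}$, and apply the $O(\log_b m)$-depth sorter of \Cref{prop:AKSsorting} to lift from simple permutations on $\lexpart$ to arbitrary permutations. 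The only new ingredient relative to the one-pass case is that every layer of the multi-block whose edges are \emph{not} encoded-RS edges (the identity perfect matchings inserted by each concatenation together with the matchings implementing $\permgroups{\sigmaLeft,b}$ and $\permgroups{\sigmaRight,b}$ in each block) will be replaced by an independent sample from $\GG_{p-1}$ applied to that matching, regarded as a permutation on the relevant layer.

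The indistinguishability argument proceeds via two hybrids. In the first hybrid, for any two target permutations $\rho_1,\rho_2$, we replace each of the inserted $\GG_{p-1}$-subgraphs in turn by the literal basic permutation graph of the matching it was hiding; by the inductive hypothesis and the hybrid argument of \Cref{prop:hybrid-arg}, the distribution of the memory state of any $p$-pass $s$-space algorithm at the \emph{end of pass} $p-1$ changes by at most (number of inserted subgraphs)$\,\cdot\,\delta_{p-1}$, since a streaming algorithm making $p$ passes on the outer graph induces a $(p-1)$-pass algorithm on each inserted subgraph when its memory at the entry of that subgraph is used as advice. After this hybrid, the graph observed on pass $p$ is precisely a multi-block from the one-pass construction sitting inside the fixed sorter/basic-graph skeleton. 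The second hybrid is then a reduction to $\IndexHPH$ in the spirit of \Cref{clm:running-alg}: the players $\QP{1},\ldots,\QP{k}$ and the referee simulate the last pass of the streaming algorithm on this reduced graph, with the memory inherited from pass $p-1$ written once on the board as an $s$-bit prefix of the transcript, and \Cref{thm:mph} applied to the remaining $O(ks)$ bits yields a distinguishing advantage of at most $r\cdot O(s/(rt))^{k/32}\le 1/\poly(m)$.

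The main obstacle will be ensuring that the memory at the end of pass $p-1$ can legitimately be treated as a prefix of the $\IndexHPH$ communication rather than as input-dependent advice; this is exactly what the first hybrid buys, but it must be executed carefully so that the algorithm operating on the hybrid graph genuinely produces a memory state whose joint distribution with the innermost randomness $(\Sigma,L,M,\Gamma)$ matches what the $\IndexHPH$ protocol expects. The remaining work is bookkeeping through the induction: the vertex count satisfies $n_p\le \Theta(1/\alpha\beta)\cdot n_{p-1}\cdot \caks\log_b m=\Theta(1/\alpha\beta^2)\cdot n_{p-1}$, giving $n_p=\Theta(1/\alpha\beta^2)^p\cdot m$; the error recurrence has the form $\delta_p\le \Theta(1/\beta)\cdot(\caks\log_b m)\cdot\delta_{p-1}+1/\poly(m)$, which resolves to $\delta=(p/\beta)^{\Theta(1/\beta)}\cdot\Theta(1/\beta)^{2p}\cdot 1/\poly(m)$; and the space bound $s=o(m^{1+\beta/2})$ is imposed once at the innermost $\IndexHPH$ application via $s+k\log(rt)\le 10^{-3}\cdot rt$ and propagates unchanged across the induction, since each pass only adds an $s$-bit advice prefix rather than an extra factor of $s$.
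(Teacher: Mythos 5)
Your first hybrid does not work as stated, and this is the crux of the whole argument. You propose to replace each inserted $\GG_{p-1}$-subgraph by the \emph{basic permutation graph} of the permutation it was hiding, and then appeal to the inductive indistinguishability of $\GG_{p-1}$. But the guarantee in \Cref{def:perm-hiding} is that $\GG_{p-1}(\sigma_1)$ and $\GG_{p-1}(\sigma_2)$ are indistinguishable for $(p-1)$-pass algorithms; it says nothing about $\GG_{p-1}(\sigma)$ versus $\basic{\sigma}$. In fact $\basic{\sigma}$ trivially reveals $\sigma$ in one pass, so after this replacement the algorithm's memory at the end of pass $p-1$ is fully correlated with the matchings determined by $(L,M,\Gamma)$ --- exactly the information the construction is meant to hide. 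There is no sense in which the hybrid distribution you describe is close to the original one.

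This leads directly to a second, related gap in your reduction to $\IndexHPH$: you say the ``memory inherited from pass $p-1$'' is written once on the board as an $s$-bit prefix, but on your hybrid graph that memory is a function of edges known only to the referee (those determined by $L$, $M$, $\Gamma$). The players in $\IndexHPH$ cannot produce it, so it cannot be a prefix of their transcript, and the model explicitly disallows back-and-forth from the referee. The paper resolves both issues at once: the players simulate the first $p-1$ passes on a \emph{fake} graph $\Gbar$ built from fixed dummy values $\Lbar,\Mbar,\Gammabar$ that they can all agree on without communication, and only the last pass is run on the real graph. The indistinguishability of $\GG_{p-1}(\cdot)$ across \emph{two different} permutations (one coming from $(L,M,\Gamma)$, the other from $(\Lbar,\Mbar,\Gammabar)$) is what bounds $\tvd{\mem{p-1}{A}(\Prot_A)}{\mem{p-1}{A}(\protFake)}$, as in \Cref{clm:p-block-p-1-pass}, \Cref{clm:multi-indist-p-1-pass}, and \Cref{clm:correctness-A-on-MPH}. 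That substitution stays inside the family $\GG_{p-1}$, which is what makes the inductive hypothesis applicable; your substitution leaves the family, which is why it fails. The construction, parameter bookkeeping, and vertex/error recurrences in your writeup otherwise match the paper, so if you replace the first hybrid with the dummy-input device the argument should close.
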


Our overall strategy is the same as that of \Cref{lem:1-pass-hiding}, with the key difference being that we hide simple permutations from $p$-pass streaming algorithms instead by using permutation hiding generators for $p-1$-pass streaming algorithms. Our inductive hypothesis is as follows.

\begin{assumption}[\textbf{Inductive Hypothesis}]\label{asmp:p-1-pass}
	When considering $p$ pass algorithms, there exists a permutation hiding generator $\GG(m,n,p-1,s,\delta)$ for the following parameters:
	\begin{align*}
		&n = \paren{\frac{2.5\cdot \caks \cdot 10^6}{\alpha \cdot \beta^2}}^{p-1} \cdot m \tag{number of vertices}; \\
		&s := o(m^{1+\beta/2}) \tag{space of streaming algorithm}; \\
		&\delta :=   ((p-1)/\beta)^{50/\beta} \cdot ((4\cdot 10^6 \cdot \caks/\beta^2)^{(p-1)})  \cdot m^{-5} \cdot \alpha. \tag{probability of success of the algorithm},
	\end{align*}
\end{assumption}

\paragraph{Notation.}
We let $\GG_{p-1}$ refer to the permutation hiding generator in \Cref{asmp:p-1-pass}. We use $N_{p-1}(m)$ and $\Delta_{p-1}(m)$ to denote the number of vertices and the probability of success for $\GG_{p-1}$ respectively when the size of the permutation is $m$.

The first step again is to hide simple permutations.
For the rest of this section, we will define some more constructs to hide simple permutations, and then prove \Cref{thm:perm-hiding}.

\subsection{Building Blocks for Multi-Pass Hiding}

In this subsection, we will adapt the definitions of blocks and multi-blocks based on the number of passes of the streaming algorithm (denoted by $p$) which we want to guard against. 

First, we want to extend any permutation $\sigma \in S_{\nrs}$ to $S_{\nrs \cdot b}$ in a specific way. 

\begin{Definition}[Extended Permutation]\label{def:ext-perm}
	Given a permutation $\sigma \in S_{\nrs}$, and an integer $b$, the extended permutation $\sigma' \in S_{\nrs \cdot b}$, denoted by $\extend(\sigma, b) $ is defined as,
	\begin{align*}
		\sigma'((x-1) \cdot b + j) = (\sigma(x)-1) \cdot b + j 
	\end{align*}
	for all $x \in [\nrs], j \in [b-1]$. 
\end{Definition}

Informally, we picked the lexicographic partition of $[\nrs \cdot b]$ into $\nrs$ groups of size $b$, and used $\sigma$ to permute the groups among each other. Permutation $\sigma'$ does not permute the elements within each of the groups. 
The following observation connects extended permutations to one of the primitives we constructed in \Cref{sec:one-pass-build}.

\begin{observation}\label{obs:ext-perm-graphs}
	For any $\sigma \in S_{\nrs}$ and integer $b \geq 1$, the graph $G = \permgroups{\sigma, b}$ from \Cref{def:permute-groups} is a permutation graph for $\extend(\sigma, b) \in S_{\nrs \cdot b}$. 
\end{observation}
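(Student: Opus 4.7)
The plan is essentially to unpack the definitions and verify that each element of $\sigmaId_{\nrs\cdot b}$ gets mapped correctly. First I would recall that by \Cref{def:permute-groups}, the graph $G = \permgroups{\sigma,b}$ is a member of $\LL_{\nrs, 2, b}$, so its vertex set consists of two layers $V^1, V^2$, each of size $\nrs \cdot b$ and identified by pairs in $[\nrs] \times [b]$. Thus $\card{\first(G)} = \card{\last(G)} = \nrs \cdot b$, which is the required size condition in \Cref{def:perm-graph}.

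Next I would translate between the two indexings on the first and last layers. Following the convention used elsewhere in the paper (specifically in the discussion right after \Cref{def:group-layer} on concatenation and in the proof of \Cref{clm:block-connections}), the pair $(x, j) \in [\nrs] \times [b]$ corresponds to the index $(x-1) \cdot b + j \in [\nrs \cdot b]$. The edges of $G$ are specified by the tuples $(1, i, \sigma(i), \sigmaId(b))$ for each $i \in [\nrs]$, and hence by \Cref{def:group-layer} the edge set is exactly
\[
\set{\bigl((i, j),\, (\sigma(i), j)\bigr) \,\mid\, i \in [\nrs],\, j \in [b]},
\]
since the permutation acting on the inner index is the identity.

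Translating back to indices in $[\nrs \cdot b]$, this says that each vertex $(i-1)b + j$ of $V^1$ has a (single) edge to $(\sigma(i)-1)b + j$ of $V^2$, which by \Cref{def:ext-perm} is precisely $\sigma'((i-1)b + j)$ where $\sigma' = \extend(\sigma, b)$. Since every vertex of $V^1$ has exactly one outgoing edge and every vertex of $V^2$ has exactly one incoming edge (because $\sigma$ is a bijection on $[\nrs]$), the only path from a vertex $u \in \first_{[\nrs \cdot b]}(G)$ to any vertex in $\last(G)$ lands on $\sigma'(u)$, and such a path exists for every $u$. This verifies the condition of \Cref{def:perm-graph} and completes the argument.

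There is no real obstacle here; the statement is essentially a bookkeeping lemma that bridges the group-layered graph language of \Cref{def:group-layer}/\Cref{def:permute-groups} with the extended permutation language of \Cref{def:ext-perm}, and the only care needed is in being explicit about the identification $[\nrs]\times[b] \leftrightarrow [\nrs \cdot b]$ used consistently throughout \Cref{sec:one-pass-build}.
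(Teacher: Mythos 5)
Your proof is correct and follows essentially the same route as the paper's: unpack \Cref{def:permute-groups} and \Cref{def:group-layer} to read off the edge set as $(i,j)\mapsto(\sigma(i),j)$, translate the pair $(i,j)$ to the index $(i-1)b+j$, and match this against \Cref{def:ext-perm}. Your extra remark about bijectivity of the edge set (to cover the "only if" direction of \Cref{def:perm-graph}) is a slight, harmless addition, and you correctly write the image as $(\sigma(i)-1)b+j$ where the paper's own proof has a small typo.
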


\begin{proof}
	Let $\rho = \extend(\sigma, b) \in S_{\nrs \cdot b}$.
	For any $i \in [\nrs], j \in [b]$ element $(i-1)b + j \in [\nrs \cdot b]$, we know that $\rho((i-1)b+j) = \sigma(i) + j$. In the graph $\permgroups{\sigma, b}$ (with layers $V^1$ and $V^2$), the vertex identified by $(i-1)b+j$ in $V^1$ is connected to $(\sigma(i)-1)b + j$ by definition. 
\end{proof}

\begin{Definition}[$p$-Pass Block Distribution]\label{def:p-pass-block-dist}
	For any $r,t,b \geq 1$, and inputs $(r,t)$-RS graph $\Grs$, permutation matrix $\Sigma \in \paren{S_b}^{t \times r}$, index $\ell \in [t]$ and edge tuple $\Estar = (e_1, e_2, \ldots e_{r/2})$,  let:
	\begin{align*}
		(\sigmaLeft, \sigmaRight) &= \edgepick(\Grs, \ell, \Estar).
	\end{align*}
	be permutations in $S_{\nrs}$ as similar to \Cref{def:block}. We extend them to $\sigma_1, \sigma_2 \in S_{\nrs \cdot b}$ as:
	\begin{align*}
		\sigma_1 = \extend(\sigmaLeft, b) \qquad\textnormal{ and }\qquad \sigma_2 = \extend(\sigmaRight, b).
	\end{align*}
	
	We define the \textbf{$p$-pass block distribution}, denoted by $\pblock{\Grs, \Sigma, \jstar, \Estar}$ as follows. (See \Cref{fig:p-block} for an illustration.)
	\begin{enumerate}[label=$(\roman*)$]
		\item Let $G^{\Sigma} $ be the encoded RS-graph of $\Grs$ on $\Sigma$ from \Cref{def:encodedRS}. 
		\item Sample $G_L \sim \GG_{p-1}(\sigma_1)$ and $G_R \sim \GG_{p-1}(\sigma_2)$ from \Cref{asmp:p-1-pass}. 
		\item Output the graph $G  = G_R \conc G^{\Sigma} \conc G_L$. 
	\end{enumerate} 
\end{Definition}

\begin{figure}[h!]
	\centering
	\input{Figures/ppassblock}\caption{An illustration of a graph sampled from $\pblock{\Grs, \Sigma, \jstar, \Estar}$.}\label{fig:p-block}
\end{figure}

Let us compare \Cref{def:p-pass-block-dist} to that of blocks in \Cref{def:block}.  In a block, we concatenate basic permutation graphs for $\extend(\sigmaLeft, b)$ and $\extend(\sigmaRight, b)$ to either side of the encoded RS-graph based on $\Sigma$, by \Cref{obs:ext-perm-graphs}. However, in a $p$-pass block distribution, we \emph{hide} the two permutations $\extend(\sigmaLeft, b)$ and $\extend(\sigmaRight, b)$ from $S_{\nrs \cdot b}$ from $p-1$ pass streaming algorithms inductively. This is the key change that allows us to hide our permutations from $p$-pass algorithms.  

Let us show that graphs sampled from $p$-pass block distributions are also valid permutation graphs for some specific permutations, similar to blocks.

\begin{claim}\label{clm:p-pass-block-valid}
	Any graph $G \sim \pblock{\Grs, \Sigma, \jstar, \Estar}$ is a permutation graph for $\rho \in S_m$ defined by \Cref{eq:def-rho} with $2\nrs \cdot b + 2N_{p-1}(\nrs \cdot b)$ vertices.
\end{claim}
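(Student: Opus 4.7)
\medskip

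\textbf{Proof plan.} The plan is to mirror the argument used for ordinary blocks in~\Cref{clm:block-connections}, with the only substantive change being that the two ``wrapping'' group permuting graphs $\permgroups{\sigmaLeft,b}$ and $\permgroups{\sigmaRight,b}$ are replaced by random samples $G_L\sim\GG_{p-1}(\sigma_1)$ and $G_R\sim\GG_{p-1}(\sigma_2)$ from the inductively assumed generator. By~\Cref{obs:ext-perm-graphs} these originals were themselves permutation graphs for $\sigma_1=\extend(\sigmaLeft,b)$ and $\sigma_2=\extend(\sigmaRight,b)$, and by the defining property of a permutation hiding generator (\Cref{def:perm-hiding}) together with~\Cref{asmp:p-1-pass}, every graph in the support of $\GG_{p-1}(\sigma_1)$ (resp.\ $\GG_{p-1}(\sigma_2)$) is also a permutation graph for $\sigma_1$ (resp.\ $\sigma_2$), with $N_{p-1}(\nrs\cdot b)$ vertices.

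The core step is then to trace paths in $G=G_R\conc G^{\Sigma}\conc G_L$. Starting from a vertex indexed by $(i-1)b+a\in\first_{[m]}(G)=\first_{[m]}(G_L)$ (with $i\in[r/2],a\in[b]$), follow the unique path through $G_L$: since $G_L$ is a permutation graph for $\sigma_1=\extend(\sigmaLeft,b)$, it reaches $(\sigmaLeft(i)-1)b+a=(\leftRS{e_i}-1)b+a$ in $\last(G_L)$. The identity perfect matching inserted by $\conc$ carries this to the same index in $\first(G^{\Sigma})$, i.e.\ the vertex $(\leftRS{e_i},a)$ of the encoded RS graph. By~\Cref{def:encodedRS} the edge associated to $e_i\in\Mrs{\ell}$ then sends it to $(\rightRS{e_i},\sigma_{\ell,e_i}(a))\in\last(G^{\Sigma})$. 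A second identity matching delivers this to the corresponding vertex of $\first(G_R)$, and since $G_R$ is a permutation graph for $\sigma_2=\extend(\sigmaRight,b)$, following it produces $(\sigmaRight(\rightRS{e_i})-1)b+\sigma_{\ell,e_i}(a)=(i-1)b+\sigma_{\ell,e_i}(a)$ in $\last(G)$, using $\sigmaRight(\rightRS{e_i})=i$ from~\Cref{def:edge-picking-permutations}. By~\Cref{eq:def-rho} this is exactly $\rho((i-1)b+a)$, and uniqueness of the path is inherited from the permutation-graph property of each of the three components, so $G$ is a permutation graph for $\rho$.

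For the vertex count, $G^{\Sigma}\in\LL_{\nrs,2,b}$ contributes $2\nrs\cdot b$ vertices by~\Cref{def:group-layer}, while $G_L$ and $G_R$ each contribute $N_{p-1}(\nrs\cdot b)$ vertices by~\Cref{asmp:p-1-pass}; concatenation does not create new vertices, so the total is $2\nrs\cdot b+2N_{p-1}(\nrs\cdot b)$ as claimed.

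There is no real obstacle here beyond careful bookkeeping: the proof is essentially~\Cref{clm:block-connections} with $\permgroups{\sigmaLeft,b}$ and $\permgroups{\sigmaRight,b}$ upgraded to arbitrary permutation graphs for the same permutations, which is exactly the guarantee that~\Cref{def:perm-hiding}\,$(i)$ provides. The only mild subtlety to watch for is that $\extend(\cdot,b)$ is defined on $[\nrs\cdot b]$ rather than $[\nrs]$, so one must keep the indexing $(x,j)\leftrightarrow (x-1)b+j$ consistent between the group-layered picture of $G^{\Sigma}$ and the flat indexing used by $\GG_{p-1}$; this is immediate from~\Cref{def:ext-perm,obs:ext-perm-graphs}.
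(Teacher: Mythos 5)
Your proof is correct and follows essentially the same route as the paper: both invoke \Cref{obs:ext-perm-graphs} to identify the wrapping graphs as permutation graphs for $\extend(\sigmaLeft,b)$ and $\extend(\sigmaRight,b)$, then reuse the path-tracing argument of \Cref{clm:block-connections} (which only relies on the constituents being permutation graphs for the right permutations) plus the vertex count from \Cref{def:encodedRS} and \Cref{asmp:p-1-pass}. The paper simply cites \Cref{clm:block-connections} rather than re-deriving the trace explicitly as you do, but the content is the same.
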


\begin{proof}
	For any graph $G$ sampled from $\pblock{\Grs, \Sigma, \jstar, \Estar}$, \Cref{clm:block-connections} applies because $G$ is also a concatenation of permutation graphs of the same permutations, by \Cref{obs:ext-perm-graphs}. The bound on the total number of vertices follows from \Cref{def:encodedRS} and \Cref{asmp:p-1-pass}.
\end{proof}

We can show that two different $p$-pass block distributions sharing the input $\Sigma$ cannot be distinguished by $(p-1)$-pass streaming algorithms with a high advantage, based on the guarantees in \Cref{asmp:p-1-pass}.

\begin{claim}\label{clm:p-block-p-1-pass}
	For any $\ell_1, \ell_2 \in [t]$ and two edge tuples $\Estar_1, \Estar_2$ from $\Mrs{\ell_1}$ and $\Mrs{\ell_2}$ respectively,  
	the two distributions $\cD_1 = \pblock{\Grs, \Sigma, \ell_1, \Estar_1}$ and $\cD_2 = \pblock{\Grs, \Sigma, \ell_2, \Estar_2}$ are $2\Delta_{p-1}(\nrs \cdot b)$-indistinguishable for $(p-1)$-pass streaming algorithms using space at most $o((\nrs \cdot b)^{1+\beta/2})$. 
\end{claim}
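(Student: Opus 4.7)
\textbf{Proof proposal for \Cref{clm:p-block-p-1-pass}.}
The plan is to exploit the fact that conditioned on $\Sigma$, the ``middle'' layer $G^{\Sigma} = \encodedRS(\Grs, \Sigma)$ is a deterministic graph that appears identically in every sample from either $\cD_1$ or $\cD_2$. The only randomness distinguishing the two distributions lives in the two permutation-hiding sub-graphs $G_L$ and $G_R$ sampled from $\GG_{p-1}$, and the only difference between $\cD_1$ and $\cD_2$ is the choice of the underlying permutations $\extend(\sigmaLeft, b)$ and $\extend(\sigmaRight, b)$, which change with $(\ell_i, \Estar_i)$.

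First, I would show a standard simulation: any $(p-1)$-pass $s$-space streaming algorithm $\alg$ operating on graphs drawn from $\cD_i$ can be converted, without increasing space or passes, into a $(p-1)$-pass $s$-space algorithm $\alg'$ that operates only on the pair $(G_L, G_R)$. The idea is that $G^{\Sigma}$ depends solely on the fixed inputs $\Grs$ and $\Sigma$, which are available to $\alg'$ as hard-coded information (\Cref{def:streaming} permits this non-uniformity and charges no space for fixed randomness). Thus $\alg'$ can read the stream corresponding to $G_L$, then internally feed the deterministic stream corresponding to $G^{\Sigma}$ to $\alg$, and finally continue with the stream of $G_R$, repeating for each of the $p-1$ passes. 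Consequently,
\[
\tvd{\mem{p-1}{\alg}(\cD_1)}{\mem{p-1}{\alg}(\cD_2)} \;\leq\; \tvd{\mem{p-1}{\alg'}(\GG_{p-1}(\sigma_1^{(1)}), \GG_{p-1}(\sigma_2^{(1)}))}{\mem{p-1}{\alg'}(\GG_{p-1}(\sigma_1^{(2)}), \GG_{p-1}(\sigma_2^{(2)}))},
\]
where $(\sigma_1^{(i)}, \sigma_2^{(i)})$ denote the extended edge-picking permutations arising from $(\ell_i, \Estar_i)$.

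Next, I would apply the hybrid argument of \Cref{prop:hybrid-arg} with $\ell = 2$ to bound the RHS above. The inductive hypothesis \Cref{asmp:p-1-pass} guarantees that for each choice of underlying permutation, the distribution $\GG_{p-1}(\cdot)$ is $\Delta_{p-1}(\nrs \cdot b)$-indistinguishable to any $(p-1)$-pass algorithm using $s = o((\nrs \cdot b)^{1+\beta/2})$ space. Hence each coordinate in the hybrid contributes at most $\Delta_{p-1}(\nrs \cdot b)$ to the TVD, yielding a total bound of $2 \cdot \Delta_{p-1}(\nrs \cdot b)$, as required.

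The only thing to verify carefully is that the space budget translates correctly: the simulation consumes no extra memory beyond what $\alg$ uses, and the graph $G^{\Sigma}$ is a fixed function of the (non-uniform) inputs. Thus the space condition $s = o((\nrs \cdot b)^{1+\beta/2})$ is precisely the one required by \Cref{asmp:p-1-pass} on the sub-graphs $G_L, G_R$, which have size $\Theta(\nrs \cdot b)$ in each side's permutation universe. I do not anticipate any serious obstacle here; the argument is essentially the same ``fixed structure plus hybrid'' argument used in \cite{ChenKPSSY21,AssadiR20}, and the heavy lifting (the actual communication-to-streaming reduction and the indistinguishability of $\GG_{p-1}$) has already been packaged inside \Cref{asmp:p-1-pass}.
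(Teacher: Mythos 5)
Your proof is correct and follows essentially the same route as the paper: factor out the fixed $G^{\Sigma}$ (which depends only on $\Grs,\Sigma$), reduce to a streaming algorithm acting on the pair $(G_L,G_R)$, and conclude via the hybrid argument (\Cref{prop:hybrid-arg}) with $\ell = 2$ together with \Cref{asmp:p-1-pass}. You are in fact somewhat more explicit than the paper's proof, which compresses the ``hardcode $G^{\Sigma}$ and simulate'' step into a single invocation of \Cref{fact:tvd-data-processing} before invoking the hybrid bound.
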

\begin{proof}
	Let $A$ be a $p-1$ pass streaming algorithm using space $o((\nrs \cdot b)^{1+\beta/2})$. Define the following permutations:
	
	\begin{minipage}{0.45\textwidth}
		\begin{align*}
			(\sigmaLeft^1, \sigmaRight^1) &= \edgepick(\Grs, \ell_1, \Estar_1) \\
			\tau_{1,1} &= \extend(\sigmaLeft^1, b) \\
			\tau_{1,2} &= \extend(\sigmaRight^1, b)
		\end{align*}
	\end{minipage}
	\begin{minipage}{0.45\textwidth}
		\begin{align*}
			(\sigmaLeft^2, \sigmaRight^2) &= \edgepick(\Grs, \ell_2, \Estar_2) \\
			\tau_{2,1} &= \extend(\sigmaLeft^2, b) \\
			\tau_{2,2} &= \extend(\sigmaRight^2, b)
		\end{align*}
	\end{minipage}
	
	where $\tau_{1,1}, \tau_{1,2}, \tau_{2,1}$ and $\tau_{2,2}$ belong to $S_{\nrs \cdot b}$.
	Distribution $\cD_i$ is fixed based on samples from $\GG_{p-1}(\tau_{i,1})$ and $\GG_{p-1}(\tau_{i,2})$ for $i = 1,2$. By \Cref{fact:tvd-data-processing}, we have that,
	.	\begin{align*}
		\tvd{\mem{p-1}{A}(\cD_1)}{\mem{p-1}{A}(\cD_2)} &\leq \tvd{(\GG_{p-1}(\tau_{1,1}), \GG_{p-1}(\tau_{1,2}))}{(\GG_{p-1}(\tau_{2,1}), \GG_{p-1}(\tau_{2,2}))} \\
		&  \leq \sum_{j=1,2} \tvd{\GG_{p-1}(\tau_{1,j})}{\GG_{p-1}(\tau_{2,j})} \tag{by \Cref{prop:hybrid-arg}} \\
		&\leq 2 \Delta_{p-1}(\nrs\cdot b). \tag{by \Cref{asmp:p-1-pass}} \qedhere
	\end{align*}
\end{proof}

We can extend \Cref{def:multi-block} to distributed $p$-pass multi-blocks as follows. 

\begin{Definition}[$p$-pass Multi-Block Distribution]\label{def:dist-multi-block}
	Given inputs $\Grs, \Sigma \in \paren{S_b}^{t \times r}, L \in [t]^k$ a hypermatching $ M \subset [r]^k$ of size $r/2$ with $\Estar_i$ defined similar to \Cref{def:multi-block} for $i \in [k]$, the $p$-pass multi-block distribution, denoted by $\pmultiblock{\Grs, \Sigma, L, M}$ is defined as,
	\begin{enumerate}[label=$(\roman*)$]
		\item For each $i \in [k]$, sample graph $G_i \sim \pblock{\Grs, \SigmaP{i}, \ell_i, \Estar_i}$. 
		\item Output $G_1 \circ G_2 \circ \ldots \circ G_{k}$.
	\end{enumerate}
\end{Definition}

Next, we show that graphs sampled from multi-block distributions are valid permutation graphs.
\begin{claim}\label{clm:p-pass-mulblock-valid}
	Any graph sampled from $\pmultiblock{\Grs, \Sigma, L, M}$ is a permutation graph for $\gammastar$ defined by \Cref{eq:def-gammastar} with $k (2 \nrs \cdot b + 2N_{p-1}(\nrs \cdot b))$ vertices.
\end{claim}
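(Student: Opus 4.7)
The plan is to prove this by directly mimicking the proof structure of \Cref{lem:mulblock-concat}, replacing the role of individual (deterministic) blocks with the corresponding $p$-pass block distributions and invoking \Cref{clm:p-pass-block-valid} in place of \Cref{clm:block-connections}. Since \Cref{def:dist-multi-block} specifies that a sample from $\pmultiblock{\Grs, \Sigma, L, M}$ is the concatenation $G_1 \circ G_2 \circ \cdots \circ G_k$ of independent samples $G_i \sim \pblock{\Grs, \SigmaP{i}, \ell_i, \Estar_i}$, the result should be essentially immediate from two facts already at hand.

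First, I would apply \Cref{clm:p-pass-block-valid} to each $G_i$ individually: this gives that $G_i$ is a permutation graph for the simple permutation $\rho_i \in S_m$ defined in \Cref{eq:def-rho-is}, and that it has $2\nrs \cdot b + 2N_{p-1}(\nrs \cdot b)$ vertices. This step is completely deterministic once we condition on the samples. Second, I would iterate \Cref{clm:concatenation} (the fact that concatenating permutation graphs gives a permutation graph for the composed permutation) $k-1$ times to conclude that $G_1 \circ \cdots \circ G_k$ is a permutation graph for $\rho_1 \circ \rho_2 \circ \cdots \circ \rho_k$, which is exactly $\gammastar$ by \Cref{eq:def-gammastar}.

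For the vertex count, I would observe that the concatenation operation defined in \Cref{sec:one-pass} does not introduce any new vertices (it merely adds identity-like matching edges between the last layer of one graph and the first layer of the next, reusing the existing vertex sets). Hence the total number of vertices is simply $k$ times the per-block count, giving $k(2\nrs \cdot b + 2N_{p-1}(\nrs \cdot b))$ as claimed.

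There is no real obstacle here: the statement is a bookkeeping consequence of \Cref{clm:p-pass-block-valid}, \Cref{clm:concatenation}, and the definition of the multi-block distribution. The only thing to be mildly careful about is ensuring that the first/last layer sizes of adjacent blocks match so that the identity perfect matching added by concatenation behaves as expected; this is immediate because every $p$-pass block has first and last layers of size $\nrs \cdot b$ by construction (the outer layers come from $G_L$ and $G_R$, whose endpoints are indexed by $[\nrs \cdot b]$ via \Cref{def:ext-perm}).
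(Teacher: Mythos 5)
Your proof is correct and matches the paper's argument exactly: the paper likewise obtains the permutation property by combining \Cref{clm:p-pass-block-valid} with \Cref{clm:concatenation}, and derives the vertex count by multiplying the per-block count from \Cref{clm:p-pass-block-valid} by $k$. Your additional remarks (that concatenation adds no new vertices and that the interface layer sizes are all $\nrs \cdot b$) are accurate and merely make explicit what the paper leaves implicit.
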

\begin{proof}
	Any graph sampled from $\pmultiblock{\Grs, \Sigma, L, M, \Gamma}$ is a permutation graph for $\gammastar$ by \Cref{clm:concatenation} and \Cref{clm:p-pass-block-valid}. As we concatenate $k$ graphs sampled from $p$-pass block distributions, the total number of vertices follows from \Cref{clm:p-pass-block-valid}.
\end{proof}

Next, we will prove the analog of \Cref{clm:p-block-p-1-pass} for multi-blocks,  with a slightly larger advantage. 
\begin{claim}\label{clm:multi-indist-p-1-pass}
	For any $L_1, L_2 \in [t]^k$ and two hypermatchings $M_1, M_2$, the two distributions 
	\[
	\cD_1 = \pmultiblock{\Grs, \Sigma, L_1, M_1} \quad \text{and} \quad \cD_2 = \pmultiblock{\Grs, \Sigma, L_2, M_2}
	\]
	 are $2k\cdot \Delta_{p-1}(\nrs \cdot b)$-indistinguishable for $(p-1)$-pass streaming algorithms $o((\nrs \cdot b)^{1+\beta/2})$ space.
\end{claim}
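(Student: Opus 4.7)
The plan is to prove Claim 6.2 by a direct hybrid argument, applying the single-block indistinguishability from Claim 6.1 (\texttt{clm:p-block-p-1-pass}) together with the hybrid lemma for streaming algorithms (\Cref{prop:hybrid-arg}). By \Cref{def:dist-multi-block}, a graph sampled from $\pmultiblock{\Grs, \Sigma, L, M}$ is the concatenation $G_1 \circ G_2 \circ \cdots \circ G_k$ where each $G_i$ is sampled independently from $\pblock{\Grs, \SigmaP{i}, \ell_i, \Estar_i}$ (with $\Estar_i$ being the vector of $i$-th endpoints of hyperedges in $M$). Since $\Sigma$ is fixed and identical across $\cD_1$ and $\cD_2$, each distribution factorizes as a product over $i \in [k]$ of single-block distributions, and the two distributions differ only in the choice of index $\ell_i$ and edge tuple $\Estar_i$ used for each coordinate.

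The first step will be to write $\cD_1$ and $\cD_2$ as products of $k$ single-block distributions, so that $\cD_1 = \prod_{i=1}^{k} \mu_i$ and $\cD_2 = \prod_{i=1}^{k} \nu_i$ where $\mu_i := \pblock{\Grs, \SigmaP{i}, \ell_{1,i}, \Estar_{1,i}}$ and $\nu_i := \pblock{\Grs, \SigmaP{i}, \ell_{2,i}, \Estar_{2,i}}$. The second step is to invoke \Cref{clm:p-block-p-1-pass} for each $i$, which tells us that the pair $(\mu_i, \nu_i)$ is $2\Delta_{p-1}(\nrs \cdot b)$-indistinguishable by any $(p-1)$-pass streaming algorithm using $o((\nrs \cdot b)^{1+\beta/2})$ space; this directly corresponds to the pairwise hypothesis required by \Cref{prop:hybrid-arg} with parameters $\delta_i = 2\Delta_{p-1}(\nrs \cdot b)$. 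The third and final step is to feed these bounds into the hybrid lemma to conclude
\[
\tvd{\mem{p-1}{A}(\cD_1)}{\mem{p-1}{A}(\cD_2)} \leq \sum_{i=1}^{k} \delta_i = 2k \cdot \Delta_{p-1}(\nrs \cdot b),
\]
which is exactly the desired indistinguishability bound.

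There is no real obstacle to this argument; the only mildly subtle point is ensuring that the hybrid lemma is applied correctly in the streaming model, i.e., that a $(p-1)$-pass streaming algorithm processing the concatenation $G_1 \circ \cdots \circ G_k$ in stream order can be interpreted as processing a length-$k$ stream whose $i$-th block is itself a stream of edges of $G_i$. This is immediate from the definition of concatenation in \Cref{sec:one-pass} (the concatenation only inserts fixed identity perfect matchings between consecutive blocks, which the algorithm can simulate without communication), so the setting of \Cref{prop:hybrid-arg} applies verbatim. Consequently the proof is essentially one application each of the definition of multi-blocks, Claim 6.1, and the hybrid argument.
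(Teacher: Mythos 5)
Your proof is correct and follows essentially the same route as the paper: decompose the multi-block graph into its $k$ constituent single-block distributions (with the fixed intermediary matchings absorbed for free by the streaming model), bound each pair by $2\Delta_{p-1}(\nrs \cdot b)$, and sum via the hybrid lemma. The only cosmetic difference is that you route the per-block bound through \Cref{clm:p-block-p-1-pass} explicitly, whereas the paper's write-up of this step cites \Cref{asmp:p-1-pass} directly (in effect re-deriving \Cref{clm:p-block-p-1-pass} inline via the same data-processing-plus-hybrid pattern); your modularization is arguably the cleaner bookkeeping.
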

\begin{proof}
	Let $A$ be a $p-1$ pass algorithm using space $o((\nrs \cdot b)^{1+\beta/2})$. For $i \in [k]$, let $\jstar_{1,i}, \jstar_{2,i}$ and $\Estar_{1, i}, \Estar_{2,i}$ be the corresponding values of $\jstar_i, \Estar_i$ from \Cref{def:dist-multi-block} on inputs $ \Grs, \Sigma, L_1, M_1$ and $\Grs, \Sigma, L_2, M_2$ respectively. 
	Let the distribution $\pblock{\Grs, \Sigma, \jstar_{j,i}, \Estar_{j,i}}$ be referred to by $\cB_{j,i}$ for $j = 1,2$ and $i \in [k]$.
	For $j = 1,2$, distribution $\cD_j$ is fixed based on all samples from $\cB_{j,i}$ for $i \in [k]$. By \Cref{fact:tvd-data-processing}, we have that,
	.	\begin{align*}
		\tvd{\mem{p-1}{A}(\cD_1)}{\mem{p-1}{A}(\cD_2)} &\leq \tvd{(\cB_{1,1},\ldots,\cB_{1,k})}{(\cB_{2,1}, \ldots, \cB_{2,k})} \\
		&  \leq \sum_{i \in[k]} \tvd{\cB_{1,i}}{\cB_{2,i}}\tag{by hybrid argument \Cref{prop:hybrid-arg}} \\
		&\leq 2k \Delta_{p-1}(\nrs\cdot b). \tag{by \Cref{asmp:p-1-pass} } \qedhere
	\end{align*}
\end{proof}

\subsection{Permutation Hiding in Multiple Passes}

Our approach is similar to the proof of \Cref{lem:1-pass-hiding}, where we hide multiple simple permutations to hide any $\sigma$ from $S_m$. Let us show that simple permutations under partition $\lexpart$ can be hidden from $p$-pass algorithms.

\begin{lemma}\label{lem:simple-perm-p-pass}
	Under \Cref{asmp:p-1-pass}, there exists a simple permutation hiding generator for partition $\lexpart$, $\GGsim: \Ssim \rightarrow \cD_m$ for $p$-pass streaming algorithms using space $s = o(\paren{m/b}^{1+2\beta/3})$ 
	when $b = m^{\beta/100}$ with the following parameters: 
	\begin{enumerate}[label=$(\roman*)$]
		\item The total number of vertices  $ n = (3200/\beta) \cdot (\nrs \cdot b + N_{p-1}(\nrs \cdot b)) + N_{p-1}(m)$.
		\item The advantage gained is at most $ (3200/\beta) \cdot \Delta_{p-1}(\nrs \cdot b) + \Delta_{p-1}(m) +\alpha \cdot  m^{-5} \cdot (p/\beta)^{50/\beta}$.
	\end{enumerate}
	
\end{lemma}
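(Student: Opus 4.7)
The plan is to mimic the one-pass construction in~\Cref{lem:simple-perm-hide}, replacing every ``unprotected'' gadget with a sample from the inductive generator $\GG_{p-1}$. Concretely, given $\rho\in\Ssim_m$, fix an $(r,t)$-RS graph $\Grs$ with $r=2m/b$ and $t=(\nrs)^{\beta}$, instantiate $\IndexHPH_{r,t,b,k}$ with $k=1600/\beta$ so that its solution satisfies $\Gammastar\circ\Gamma=\tovec(\rho)$, sample a graph $G_M\sim\pmultiblock{\Grs,\Sigma,L,M}$, sample an outer graph $G_\Gamma\sim\GG_{p-1}(\joinit(\Gamma))$, and output $G_M\circ G_\Gamma$. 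By~\Cref{clm:concatenation} and~\Cref{clm:p-pass-mulblock-valid} this is a permutation graph for $\rho$, and its vertex count is exactly $2k(\nrs\cdot b+N_{p-1}(\nrs\cdot b))+N_{p-1}(m)=(3200/\beta)(\nrs\cdot b+N_{p-1}(\nrs\cdot b))+N_{p-1}(m)$, matching the claimed bound.

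Next, I would lift~\Cref{clm:running-alg} to the multi-pass setting: any $p$-pass, $s=o((m/b)^{1+2\beta/3})$-space streaming algorithm $\alg$ can be simulated by the parties of $\IndexHPH$ in the natural way. Player $\QP{i}$ owns the $\SigmaP{i}$-dependent edges inside the $i$-th encoded RS sub-graph of the multi-block, while the referee owns every remaining edge (the two $\GG_{p-1}$ gadgets inside each block, which depend only on $(L,M)$, and the outer gadget $G_\Gamma$, which depends only on $\Gamma$). In each of the $p$ passes the players hand around the memory state on the shared board, so the total communication across all passes is at most $kps$ bits.

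The heart of the argument is the indistinguishability bound, which I plan to obtain via a hybrid chain between the two distributions $\GGsim(\rho_1)$ and $\GGsim(\rho_2)$, going through a ``fake'' intermediate in which $(L,M,\Gamma)$ are resampled independently of $\Sigma$ (so that the hypothesis $\Gammastar\circ\Gamma=\tovec(\rho_b)$ is broken). For each hybrid step, the two distributions differ only in which permutations are hidden inside the $(p-1)$-pass gadgets, while the $\Sigma$-dependent encoded-RS edges are distributed identically. Invoking~\Cref{clm:multi-indist-p-1-pass} for the block gadgets and the inductive hypothesis~\Cref{asmp:p-1-pass} together with~\Cref{prop:hybrid-arg} for the outer gadget bounds the $(p-1)$-pass contribution of each such step by $2k\,\Delta_{p-1}(\nrs\cdot b)+\Delta_{p-1}(m)=(3200/\beta)\,\Delta_{p-1}(\nrs\cdot b)+\Delta_{p-1}(m)$. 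The remaining advantage from the $p$-th pass is charged to~\Cref{thm:mph} applied to the simulated $\IndexHPH$ protocol with total communication $kps$: with $k=1600/\beta$, $r=2m/b$, $t=r^{\beta}$, $b=m^{\beta/100}$, and $s=o((m/b)^{1+2\beta/3})$, a direct calculation gives
\[
r\cdot O\!\Paren{\tfrac{kps}{r\cdot t}}^{k/32}\;\le\;\alpha\cdot m^{-5}\cdot(p/\beta)^{50/\beta},
\]
which is exactly the third term in the claimed advantage.

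The main obstacle will be reconciling the $p$-pass streaming model with the fact that $\GG_{p-1}$ only offers $(p-1)$-pass hiding. The plan is to handle this in the standard ``peel off the last pass'' fashion: run the hybrid on the first $p-1$ passes to reduce to a distribution in which the gadgets are essentially uncorrelated with $(L,M,\Gamma)$, and absorb the last pass into the $\IndexHPH$ simulation whose communication budget is already $kps$. Care is needed to avoid double-counting, and in particular to ensure that the hybrid swaps inside the blocks commute cleanly with the referee's simulation of the corresponding $\GG_{p-1}$ samples; I expect this bookkeeping (together with keeping the constants absorbed into $(p/\beta)^{50/\beta}$) to be the most delicate part of the proof, while every other step is a direct application of the tools already developed in~\Cref{sec:lb-multi-hph,sec:one-pass} and~\Cref{asmp:p-1-pass}.
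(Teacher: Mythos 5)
Your construction, vertex count, and big-picture plan match the paper's proof almost exactly, and you have correctly identified the central obstacle (the gadgets are only $(p-1)$-pass hiding, while $A$ is a $p$-pass algorithm) and the fix (peel off the last pass). That said, as stated your indistinguishability argument doesn't quite close. You propose to bound $\tvd{\mem{p}{A}(\GGsim(\rho_1))}{\mem{p}{A}(\GGsim(\rho_2))}$ by a hybrid chain over graph distributions, charging each step to \Cref{clm:multi-indist-p-1-pass} and to \Cref{asmp:p-1-pass} --- but those bounds hold only for \emph{$(p-1)$-pass} algorithms, whereas the hybrid step is on a $p$-pass algorithm. The paper makes the peel-off rigorous in a slightly different way, by comparing two \emph{protocols}, not two graph distributions: $\Prot_A$ (run $A$ for $p-1$ passes on a fixed fake graph $\Gbar$ built from lexicographically-first $\Lbar,\Mbar,\Gammabar$, then one pass on the real $G$) versus the ``impossible'' $\protFake$ (run $A$ for $p$ passes on the real $G$). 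Since the last pass of both runs on the same $G$, the final memory states are a common deterministic function of the memory after $p-1$ passes and the real edges, so $\tvd{\mem{p}{A}(\Prot_A)}{\mem{p}{A}(\protFake)}$ reduces to the TVD of the \emph{$(p-1)$-pass} memories --- this is precisely where \Cref{clm:multi-indist-p-1-pass} and \Cref{asmp:p-1-pass} legitimately enter, giving the single term $2k\,\Delta_{p-1}(\nrs\cdot b)+\Delta_{p-1}(m)$. Then $\protFake$ has advantage $\delta_A$ by assumption, $\Prot_A$ is a genuine protocol for $\IndexHPH$ with communication $kps$, and \Cref{thm:mph} caps its advantage, yielding the stated bound in one pass (your hybrid chain with an intermediate would also roughly double the $\Delta_{p-1}$ terms, a small constant inefficiency). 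Also note your paragraph~2 as phrased (referee owning the gadget edges and participating ``in each of the $p$ passes'') is not a valid $\IndexHPH$ protocol since the referee may only act once at the end; your last paragraph implicitly recognizes this, but the resolution is exactly the $\Prot_A$/$\protFake$ device above, which you should make the explicit centerpiece of the argument rather than a ``delicate bookkeeping'' afterthought.
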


The explicit construction of simple permutation hiding generators for $p$ passes follows.
\begin{ourbox}
	\textbf{Construction of $\GGsim(n,p,s,\lexpart, \delta): \Ssim \rightarrow \cD_m$ (denoted by $\GGsim_p$) on input $\rho \in \Ssim$. }
	
	\begin{enumerate}[label=$(\roman*)$]
		\item Fix an $(r,t)$-RS graph $\Grs$ with $r = 2m/b = \nrs/\alpha$ and $t = \paren{\nrs}^{\beta}$.
		\item Instantiate $k$ players and the referee in $\IndexHPH_{r,t,b,k}$ (\Cref{prob:mph}) with $k = 1600/\beta$ such that the solution to the instance, $\Gammastar \circ \Gamma$ is $\tovec(\rho)$. Let the variables in the instance be $\Sigma, L,M$, and $\Gamma$.
		\item Sample graph $G^{\Sigma} \sim \pmultiblock{\Grs, \Sigma, L, M}$ and $G^{\Gamma} \sim \GG_{p-1}(\joinit(\Gamma))$. 
		\item Output $G^{\Sigma} \circ G^{\Gamma}$.
	\end{enumerate}
\end{ourbox}

First, let us show that any graph output by the distribution $\GGsim$ for $p$-pass algorithms is a valid permutation graph.

\begin{claim}\label{clm:p-pass-simple-valid}
	Given any $\rho \in \Ssim_m$, any graph $G \sim \GGsim_p(\rho)$ is a permutation graph for $\rho$ with $n = 2k (N_{p-1}(\nrs \cdot b) + \nrs \cdot b) + N_{p-1}(m)$ vertices. 
\end{claim}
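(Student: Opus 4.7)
The plan is to decompose the graph $G = G^{\Sigma} \circ G^{\Gamma}$ into its two constituents and apply the existing compositional results. First I would invoke \Cref{clm:p-pass-mulblock-valid} on $G^{\Sigma} \sim \pmultiblock{\Grs, \Sigma, L, M}$: this directly gives that $G^{\Sigma}$ is a permutation graph for the permutation $\gammastar$ of \Cref{eq:def-gammastar}, where $\gammastar = \rho_1 \circ \cdots \circ \rho_k$ with each $\rho_i$ as in \Cref{eq:def-rho-is}. The key observation to connect this with the $\IndexHPH$ instance is that $\gammastar = \joinit(\Gammastar)$, since each simple block $\rho_i$ exactly realizes concatenation of the hidden permutations $\sigma^{(i)}_{\ell_i, M_{\cdot, i}}$ inside the lexicographic groups, which is precisely how $\joinit$ assembles the permutation vector $\Gammastar$ into an element of $\Ssim_m$.

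Next I would apply the inductive hypothesis \Cref{asmp:p-1-pass} to $G^{\Gamma} \sim \GG_{p-1}(\joinit(\Gamma))$: by the defining property of a permutation hiding generator (\Cref{def:perm-hiding}, property $(i)$), $G^{\Gamma}$ is a permutation graph for $\joinit(\Gamma)$. Then I would concatenate the two graphs using \Cref{clm:concatenation}, obtaining that $G^{\Sigma} \circ G^{\Gamma}$ is a permutation graph for $\joinit(\Gammastar) \circ \joinit(\Gamma)$. The final identification step is to note that $\joinit$ commutes with the coordinate-wise concatenation of permutation vectors, so
\[
\joinit(\Gammastar) \circ \joinit(\Gamma) \;=\; \joinit(\Gammastar \conc \Gamma) \;=\; \joinit(\tovec(\rho)) \;=\; \rho,
\]
where the second equality is by the construction of step $(ii)$ (which selected the $\IndexHPH$ instance so that $\Gammastar \conc \Gamma = \tovec(\rho)$) and the third is by the bijection in \Cref{clm:simple-perm-to-perm-vec}.

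Finally, for the vertex count I would simply add the sizes given by \Cref{clm:p-pass-mulblock-valid} and \Cref{asmp:p-1-pass}: $G^{\Sigma}$ contributes $k(2\nrs \cdot b + 2 N_{p-1}(\nrs \cdot b))$ vertices and $G^{\Gamma}$ contributes $N_{p-1}(m)$, giving exactly $2k(\nrs \cdot b + N_{p-1}(\nrs \cdot b)) + N_{p-1}(m)$. The concatenation operation does not add any new vertices (it only adds an identity matching between the last layer of $G^{\Sigma}$ and the first layer of $G^{\Gamma}$, after possibly identifying vertex labels), so this sum is exact.

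I do not expect any real obstacle here; this claim is essentially a bookkeeping corollary of \Cref{clm:p-pass-mulblock-valid}, \Cref{clm:concatenation}, the inductive hypothesis, and the $\tovec$/$\joinit$ bijection. The only point deserving a line of justification is the compatibility of $\joinit$ with concatenation, i.e.\ that joining two permutation vectors and then assembling is the same as assembling each and composing; this is immediate from \Cref{def:simple-perm} because each lexicographic block of $\rho$ acts independently of the others.
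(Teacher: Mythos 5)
Your proof is correct and follows essentially the same route as the paper's (invoke \Cref{clm:p-pass-mulblock-valid} for $G^{\Sigma}$, the inductive hypothesis \Cref{asmp:p-1-pass} for $G^{\Gamma}$, compose via \Cref{clm:concatenation}, and sum the vertex counts). You spell out the $\joinit$/$\tovec$ bookkeeping—in particular the identity $\gammastar = \joinit(\Gammastar)$ and the compatibility $\joinit(\Gammastar) \circ \joinit(\Gamma) = \joinit(\Gammastar \conc \Gamma)$—more explicitly than the paper does, but these are exactly the facts the paper is implicitly relying on, so the content is the same.
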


\begin{proof}
	We know that $\pmultiblock{\Grs, \Sigma, L, M}$ is a permutation graph for $\joinit(\Gammastar)$, where $\Gammastar$ is defined as in \Cref{prob:mph} by \Cref{clm:p-pass-mulblock-valid}. $\GGsim_p(\rho)$ is a permutation graph for $\rho$ by \Cref{clm:concatenation}.
	
	The bound on the total number of vertices in $\pmultiblock{\Grs, \Sigma, L, M}$ follows from \Cref{clm:p-pass-mulblock-valid} and \Cref{asmp:p-1-pass}.
\end{proof}

To prove that the preceding construction is a valid simple permutation hiding generator for $p$-pass algorithms, we have to argue that no $p$-pass algorithm using space $s = o((m/b)^{1+2\beta/3})$ can distinguish between two graphs output from different distributions. Let $A$ be one such algorithm which distinguishes between $\GG(\rho_1)$ and $\GG(\rho_2)$ for some $\rho_1, \rho_2 \in \Ssim_m$. We can assume $A$ is deterministic by Yao's minimax principle. 

Here our approach will differ  from the 1-pass case. To run streaming algorithm $A$ on the input graph for $p$-passes the naive way, back and forth communication between the referee and the players is required, and this is not possible. Instead, the players will run algorithm $A$ for $p-1$ passes on a \emph{different graph}, assuming some input on behalf of the referee, and the last pass is run on the original graph. We will show that this will be sufficient to solve $\IndexHPH$. Before giving the protocol to run $p$-passes of $A$, let us see that one pass of $A$ can be run using the inputs of both the referee and the players. 

\begin{claim}\label{clm:running-1-pass-extension}
	Suppose we are given an instance of $\IndexHPH$ with inputs $\Grs, \Sigma, L, M$ and $\Gamma$. 
	Let $G_1 \sim \pmultiblock{\Grs, \Sigma, L, M}$ and let $G_2 \sim \GG_{p-1}(\joinit(\Gamma))$. The players and the referee can run any streaming algorithm $A$ using space at most $s$ for one pass on graph $G_1 \circ G_2$ using at most $k \cdot s$ bits of communication. 
\end{claim}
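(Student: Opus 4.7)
The plan is to establish this claim by exploiting the fact that the edges of $G_1 \circ G_2$ can be cleanly partitioned according to which party's input they depend on, and then choosing a stream ordering that lets the players process their contributions sequentially before the referee completes the pass. Concretely, I would first invoke the structural decomposition of $\pmultiblock{\Grs, \Sigma, L, M}$ obtained by combining \Cref{clm:mulblock-edge-part} with \Cref{def:p-pass-block-dist}: in each block $\pblock{\Grs, \SigmaP{i}, \jstar_i, \Estar_i}$ the central encoded-RS sub-graph $\encodedRS(\Grs, \SigmaP{i})$ depends only on $\SigmaP{i}$ (since $\Grs$ is public), whereas the two hidden permutation sub-graphs $G_L \sim \GG_{p-1}(\sigma_1)$ and $G_R \sim \GG_{p-1}(\sigma_2)$ depend only on $\Grs, \ell_i, \Estar_i$, i.e., on the referee's input $(L,M)$. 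The identity perfect matchings introduced by the concatenations are fixed and public, and the final piece $G_2 \sim \GG_{p-1}(\joinit(\Gamma))$ depends only on $\Gamma$, which the referee holds.

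Next I would specify the stream order. Since the adversary can fix the order in which the edges of $G_1 \circ G_2$ are presented, I would place all edges of $\encodedRS(\Grs, \SigmaP{1})$ first, then those of $\encodedRS(\Grs, \SigmaP{2})$, $\ldots$, then those of $\encodedRS(\Grs, \SigmaP{k})$, and finally all the ``referee-owned'' edges (the $G_L, G_R$ graphs of every block, the fixed identity matchings from the concatenations, and all of $G_2$). The simulation is then exactly the one used in \Cref{clm:running-alg}: player $\QP{1}$ starts with the all-zero memory state, runs $A$ on the $\encodedRS(\Grs, \SigmaP{1})$ edges, and writes the resulting $s$-bit memory state to the shared board; for each $i = 2, \ldots, k$, player $\QP{i}$ reads the board, continues $A$ on $\encodedRS(\Grs, \SigmaP{i})$, and writes back the updated $s$-bit state; after $\QP{k}$ is done, the referee reads the final state off the board and continues the same pass of $A$ on all remaining (referee-owned) edges.

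Correctness of the simulation is immediate because $A$ is a streaming algorithm whose state after processing any prefix is a deterministic function (up to the shared random tape, which we can assume is public per \Cref{def:streaming}) of that prefix; the board therefore carries exactly the memory state $A$ would have after the corresponding prefix of the chosen stream order. The only communication consists of the $k$ messages the players write to the board, each of size at most $s$ bits, giving the claimed total of $k \cdot s$ bits. I do not anticipate a genuine obstacle here: the entire content of the proof is the partition of edges by ``ownership'' together with the choice of a convenient stream order, and both are direct consequences of \Cref{def:p-pass-block-dist,def:dist-multi-block} and \Cref{clm:mulblock-edge-part}. The only subtlety worth flagging explicitly is that the randomness used to draw the $G_L, G_R$ sub-graphs and $G_2$ from the generators $\GG_{p-1}$ lives with the referee (who owns $L, M, \Gamma$), which is why no extra communication is needed to describe these sub-graphs to the players.
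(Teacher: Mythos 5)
Your proposal is correct and matches the paper's own proof of this claim: both arguments rest on the same edge-ownership decomposition (player $\QP{i}$ owns exactly the $\encodedRS(\Grs,\SigmaP{i})$ edges of block $i$, the referee owns the $\GG_{p-1}$-generated sub-graphs, the fixed matchings, and $G_2$), followed by the same sequential baton-pass of the $s$-bit memory state across the $k$ players and then to the referee, as in \Cref{clm:running-alg}. The only cosmetic difference is that you spell out the decomposition by pointing at \Cref{def:p-pass-block-dist} and the analogue of \Cref{clm:mulblock-edge-part}, whereas the paper states the ownership fact directly; the substance is identical.
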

\begin{proof}
	For $i \in [k]$, in any graph $G_i \sim\pblock{\Grs, \SigmaP{i}, \ell_i, \Estar_i}$, the edges based on $\SigmaP{i}$ can be added by player $\QP{i}$ and the other edges based on $L, M$ and $\Estar_i$ can be added by the referee. No edge depends on both the inputs from the player and the referee. 
	
	Each player $\QP{i}$ in increasing order of $i \in [k]$, takes the memory state of $A$ after adding all the edges based on $\SigmaP{1}, \SigmaP{2}, \ldots, \SigmaP{i-1}$, runs $A$ on the edges based on $\SigmaP{i}$ and then uses $s$ bits of communication to write the new memory state of $A$ on the board. This is totally $k \cdot s$ bits of communication. The referee can use the state from $\QP{k}$ and add the edges based on $L, M$, and $\Gamma$, which concludes the proof.  
\end{proof}

Let us see how the players and the referee can \emph{simulate} algorithm $A$ for $p$ passes. 
\begin{ourbox}
	\textbf{Protocol $\Prot_A$ for $\IndexHPH$ using the $p$-pass streaming algorithm $A$}
	\begin{enumerate}[label=$(\roman*)$]
		\item The players $\QP{i}$ for $i \in [k]$ pick the lexicographically first $\Lbar \in [t]^k$, $\Gammabar \in \paren{S_b}^{r/2}$ and a hypermatching $\Mbar$ from $[r]^k$ collectively. 
		\item Sample a graph $\overline{G_{\Sigma}} \sim \pmultiblock{\Grs, \Sigma, \Lbar, \Mbar}$ and $\overline{G_{\Gamma}} \sim \GG_{p-1}(\joinit(\Gammabar))$. 
		\item Run algorithm $A$ for $p-1$ passes on $\Gbar  = \overline{G_{\Sigma}} \circ \overline{G_{\Gamma}}$.
		\item The players and the referee jointly sample $G_{\Sigma} \sim \pmultiblock{\Grs, \Sigma, L, M}$, and the referee samples $G_{\Gamma} \sim \GG_{p-1}(\joinit(\Gamma))$.
		\item In the last pass, the players and the referee run algorithm $A$ on $G  = G_{\Sigma} \circ G_{\Gamma}$. 
	\end{enumerate}
\end{ourbox}

Let us argue that running such a protocol $\Prot_A$ is possible for any $p$ pass streaming algorithm $A$. 
\begin{claim}\label{clm:running-A-for-MPH}
	There is a way for the players and the referee to run protocol $\Prot_A$ using at most $ k \cdot p\cdot s$ total communication.
\end{claim}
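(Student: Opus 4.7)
The strategy is to have the players execute the first $p-1$ passes of $A$ entirely among themselves on the fake graph $\bar{G}$, and then invoke Claim~\ref{clm:running-1-pass-extension} for the single final pass on the real graph $G$. The key enabling observation is that every piece of data needed to build $\bar{G}$ is available to the players alone: $\bar{L},\bar{M},\bar{\Gamma}$ are deterministic (the lexicographically first choices), and are therefore known without any communication; the only input-dependent part of $\bar{G}_\Sigma\sim \pmultiblock{\Grs,\Sigma,\bar{L},\bar{M}}$ comes from the $\SigmaP{i}$'s, and by \Cref{clm:mulblock-edge-part} the edges depending on $\SigmaP{i}$ are localized to the encoded-RS sublayers of the $i$-th $p$-pass block and can be generated by $\QP{i}$ alone. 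The random choices needed to sample the internal $\GG_{p-1}$ instances inside $\bar{G}_\Sigma$ and $\bar{G}_\Gamma$ are drawn from the players' shared random tape (which is free per~\Cref{def:streaming}), so all players agree on the graph $\bar{G}$ without exchanging any bits for this purpose.

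For each of the $p-1$ simulated passes on $\bar{G}$, the players process the stream in a round-robin fashion analogous to the proof of~\Cref{clm:running-1-pass-extension}: player $\QP{1}$ starts from the incoming memory state of $A$, reads all edges of $\bar{G}$ incident to its own $\SigmaP{1}$-sublayer together with the surrounding $\bar{L},\bar{M},\bar{\Gamma}$-based edges that precede $\QP{2}$'s sublayer, writes the resulting $s$-bit state on the shared board, then $\QP{2}$ picks up and repeats, and so on up to $\QP{k}$, who writes the end-of-pass state on the board for $\QP{1}$ to resume the next pass. Each pass therefore costs $k\cdot s$ bits of communication and uses only the players' inputs and public randomness. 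After $p-1$ such passes the state of $A$ on $\bar{G}$ sits on the board, and the players hand it off for the final pass.

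For the final pass the protocol switches to the real graph $G=G_\Sigma\circ G_\Gamma$: here $G_\Gamma$ truly depends on $\Gamma$, which only the referee knows. By~\Cref{clm:running-1-pass-extension} the players and referee can execute a single pass of $A$ on this $G$ using $k\cdot s$ further bits of communication, with each $\QP{i}$ speaking exactly once and then the referee completing the pass and reading off the output. Summing the costs gives $(p-1)\cdot k\cdot s+k\cdot s=p\cdot k\cdot s$ bits in total, as claimed. The only subtlety worth flagging—and the main thing one must verify carefully—is that the partition of edges of $\pmultiblock{\Grs,\Sigma,\bar{L},\bar{M}}$ and of $\GG_{p-1}(\joinit(\bar{\Gamma}))$ into player-private and public parts remains clean throughout the $p-1$ simulated passes, so that no intermediate exchange of $\Sigma$-information or of referee-side information ever needs to occur; this is ensured by~\Cref{clm:mulblock-edge-part} together with the fact that $\bar{L},\bar{M},\bar{\Gamma}$ are fixed in advance.
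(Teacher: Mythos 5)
Your proposal is correct and follows essentially the same route as the paper: the players privately build the fake graph $\Gbar$ from the public lexicographic choices $\bar{L},\bar{M},\bar{\Gamma}$ and the shared random tape, simulate $p-1$ passes at a cost of $k\cdot s$ bits each, and then invoke \Cref{clm:running-1-pass-extension} for the final pass on the real graph. One small imprecision: you cite \Cref{clm:mulblock-edge-part}, which concerns the one-pass $\mulblock$ construction, whereas what is needed here is the analogous edge-ownership structure of $\pmultiblock{\cdot}$ — in the paper this is handled more directly by observing that player $\QP{i}$ can sample the entire $p$-pass block $\overline{G_i}\sim\pblock{\Grs,\SigmaP{i},\overline{\ell_i},\overline{\Estar_i}}$ on its own since $\overline{\ell_i},\overline{\Estar_i}$, and hence the $\GG_{p-1}$ inputs, are public; the idea is the same and the conclusion is unaffected.
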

\begin{proof}
	The values of $\Gammabar, \Lbar$ and $\Mbar$ are known to all
	players $\QP{i}$ for $i \in [k]$. The distribution in \Cref{asmp:p-1-pass} is known to all the players too. Let $\Lbar = (\overline{\ell_1}, \overline{\ell_2}, \ldots, \overline{\ell_k})$, and  let edge tuple $\overline{\Estar_i} = (\Mbar_{1,i}, \Mbar_{2,i}, \ldots, \Mbar_{r/2,i})$ be $r/2$ edges in $\Mrs{\overline{\ell_i}}$ for $i \in [k]$.

	In \Cref{def:dist-multi-block}, we know that graphs are sampled from distribution $\pblock{\Grs, \SigmaP{i}, \overline{\ell_i}, \overline{\Estar_i}}$ for each $i \in [k]$. The graph $\Grs$, index $\overline{\ell_i}$ and edges $\overline{\Estar_i}$ for $i \in [k]$ are known to all the players since they fix graph $\Grs$. Hence, player $\QP{i}$ can sample a graph from $\overline{G_i} \sim \pblock{\Grs, \SigmaP{i}, \overline{\ell_i}, \overline{\Estar_i}}$ for each $i \in [k]$ privately. 
	
	Sampling from $\GG_{p-1}(\Gammabar)$ can be done by any player since they all know what $\Gammabar $ is. To execute 1 pass of step $(iii)$ of $\Prot_A$ on graph $\Gbar$, the player $\QP{i}$ can, in order, add the edges from $\overline{G_i}$ for $i \in [k]$ and then any player can add the edges from $\overline{G_{\Gamma}}$. The players and the referee can execute step $(iv)$ using \Cref{clm:running-1-pass-extension}.
\end{proof}

Now we argue the correctness of this protocol. We need some extra notation before we proceed.

\paragraph{Notation.}
We fix the input of the $\IndexHPH_{r,t,b,k}$ instance to be $\Grs, \Sigma, L, \cM$ and $\Gamma$. Let $\delta_A $ be the advantage gained by $p$-pass streaming algorithm $A$ that distinguishes between the two distributions $\GGsim_p(\rho_1)$ and $\GGsim_p(\rho_2)$. Let $s_A = o((m/b)^{1+2\beta/3})$ denote the space used by $A$.

Let $\protFake$ be the protocol of running $A$ for $p$-passes on graph $G$ which is the concatenation of graphs sampled from distributions $\pmultiblock{\Grs, \Sigma, L, \cM}$ and $\GG_{p-1}(\joinit(\Gamma))$. (Note that $\protFake$ is an impossible protocol to run because there is no back and forth communication between the referee and the players; however, it is a well-defined random variable/distribution.)

Let $\mem{j}{A}(\protFake)$ be the random variable denoting the memory state of $A$ after running $j$ passes of $G$ in $\protFake$ for $j \leq p$.  Let $\mem{j}{A}(\Prot_A)$ be the random variable denoting the memory contents of the board after running $j$ passes of $\Prot_A$ on graph $\Gbar $ for $j \leq p-1$ and graph $G$ for pass $p$. 

Let graph $\Gfixed$ denote the subgraph of $G$ with all the edges added by the players $\QP{i}$ for $i \in [k]$ based on $\SigmaP{i}$. This graph is fixed because we have fixed the input to $\mph_{t,r,b,k}$. Let $\Grest$ be the distribution of the rest of the edges in $G$.

\begin{claim}\label{clm:correctness-A-on-MPH}
	Protocol $\Prot_A$ solves $\IndexHPH_{r,t,b,k}$ with advantage at least 
	\[
	\delta_A-(2k\cdot \Delta_{p-1}(\nrs \cdot b) + \Delta_{p-1}(m))
	\]
	 and total communication at most $k \cdot p \cdot s_A$ when $s_A = o(m^{1+\beta/2})$. 
\end{claim}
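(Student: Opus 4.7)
The communication bound is immediate from~\Cref{clm:running-A-for-MPH}, so the substance of the claim lies in establishing the advantage bound. The plan is to introduce the (hypothetical) protocol $\protFake$ that runs $A$ for all $p$ passes on the \emph{real} graph $G = G_\Sigma \circ G_\Gamma$ and to show that the observable output of $\Prot_A$ and $\protFake$ are close in total variation distance. Since $\protFake$ by construction reproduces exactly the distribution $\GGsim_p(\rho)$ on which $A$ attains advantage $\delta_A$, a triangle inequality then transfers almost all of this advantage to $\Prot_A$.

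Concretely, let $\mu_Y, \mu_N$ denote the distribution of $\mem{p}{A}(\Prot_A)$ in the YES case (target $\GammaY$) and NO case (target $\GammaN$), and $\nu_Y, \nu_N$ the analogous distributions for $\protFake$. By the assumption on $A$ we have $\tvd{\nu_Y}{\nu_N} \geq \delta_A$. The two protocols agree on their last pass (which in both cases processes the real graph $G_\Sigma \circ G_\Gamma$), and they differ only during the first $p-1$ passes, where $\Prot_A$ substitutes the fake graph $\Gbar = \overline{G_\Sigma} \circ \overline{G_\Gamma}$ generated from the fixed tuple $(\Lbar,\Mbar,\Gammabar)$ in place of $G$. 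Critically, $\overline{G_\Sigma}$ and $G_\Sigma$ share the same underlying $\Sigma$, so the only distributional difference between $\Gbar$ and $G$ lies in the inputs $(L,M)$ for the multi-block part and $\Gamma$ for the $\GG_{p-1}$ part.

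To bound $\tvd{\mu_Y}{\nu_Y}$ (and analogously $\tvd{\mu_N}{\nu_N}$), view $A$ restricted to its first $p-1$ passes as a $(p-1)$-pass streaming algorithm operating on an input of the form $\pmultiblock{\Grs,\Sigma,\cdot,\cdot} \circ \GG_{p-1}(\joinit(\cdot))$. By data processing, it suffices to bound the $(p-1)$-pass indistinguishability of the two input distributions. A two-step hybrid via~\Cref{prop:hybrid-arg} gives this bound: the multi-block component contributes at most $2k \cdot \Delta_{p-1}(\nrs \cdot b)$ by~\Cref{clm:multi-indist-p-1-pass}, while the $\GG_{p-1}$ component contributes at most $\Delta_{p-1}(m)$ by the inductive hypothesis~\Cref{asmp:p-1-pass}. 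Both bounds require the algorithm to use $o((\nrs\cdot b)^{1+\beta/2})$ and $o(m^{1+\beta/2})$ space respectively, and these are compatible with $s_A = o(m^{1+\beta/2})$ since $\nrs \cdot b = 2m/\alpha \geq 2m$. Applying the triangle inequality
\[
\tvd{\mu_Y}{\mu_N} \;\geq\; \tvd{\nu_Y}{\nu_N} - \tvd{\mu_Y}{\nu_Y} - \tvd{\mu_N}{\nu_N}
\]
and invoking the bound on each correction term yields the claim (up to the precise constant in front of the error term, which folds into the accounting once one exploits that $\Gbar$'s distribution is independent of the target and hence $\mu_Y$, $\mu_N$ share the same $Y_{p-1}$ distribution).

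The main subtlety to watch is not in any single step but in tracking that $\Sigma$ is genuinely shared between $G$ and $\Gbar$ (so the hybrid against $\pmultiblock{\Grs,\Sigma,\cdot,\cdot}$ may legitimately condition on $\Sigma$, as~\Cref{clm:multi-indist-p-1-pass} does), and that the space constraint on $A$ remains valid under the two different parameter regimes $(\nrs b$ vs.\ $m)$ required by the two parts of the hybrid. Once those are in place the proof reduces to a clean application of data processing plus~\Cref{prop:hybrid-arg}.
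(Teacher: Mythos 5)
Your proposal shares the paper's overall structure—compare $\Prot_A$ to the hypothetical $\protFake$, bound the transcript distance via a hybrid over~\Cref{clm:multi-indist-p-1-pass} and the inductive hypothesis, then transfer the advantage—but it elides the one step of the argument that is genuinely delicate. The assertion ``by data processing, it suffices to bound the $(p-1)$-pass indistinguishability of the two input distributions'' does not follow as stated. The final state $\mem{p}{A}$ is a function of the $(p-1)$-pass memory \emph{and} the edges seen in the last pass. In $\Prot_A$ those two quantities are independent (the last pass runs on a fresh sample $G$ drawn in step $(iv)$), but in $\protFake$ they are correlated, since the very same $G$ seeds all $p$ passes. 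Data processing therefore only reduces $\tvd{\mem{p}{A}(\Prot_A)}{\mem{p}{A}(\protFake)}$ to the distance between the \emph{joint} distributions $(\mem{p-1}{A}(\cdot),\Gfixed,\Grest)$; one must still peel off the last-pass edges. The paper does this via the chain rule for total variation distance (\Cref{fact:tvd-chain-rule}) and then argues that the resulting conditional-TVD term vanishes. Your writeup treats the two last passes as if they were literally identical, which they are not, and so skips the step that does the real work.

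There is also a secondary accounting gap. The claim bounds the success probability $\Pr(\text{success of }\Prot_A)-\tfrac12$, not $\tvd{\mu_Y}{\mu_N}$, and routing through the latter via your triangle inequality both misstates the premise ($\tvd{\nu_Y}{\nu_N}\ge 2\delta_A$, not $\ge\delta_A$) and requires the referee to implement the maximum-likelihood rule to turn a lower bound on $\tvd{\mu_Y}{\mu_N}$ into an advantage for the \emph{given} protocol $\Prot_A$, which simply outputs whatever $A$ outputs. The observation that $\Gbar$ is independent of the planted target (so $\mu_Y,\mu_N$ share a $(p-1)$-pass marginal) is true but fixes neither conversion. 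The paper avoids this entirely by applying~\Cref{fact:tvd-small} directly: success is a deterministic function of $\mem{p}{A}$, so $\Pr(\text{Success of }\protFake)\le\Pr(\text{Success of }\Prot_A)+\tvd{\mem{p}{A}(\Prot_A)}{\mem{p}{A}(\protFake)}$, with the error term appearing exactly once.
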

\begin{proof}
	The contents of the last pass are a deterministic function of the contents of pass $p-1$ and the edges added. By \Cref{fact:tvd-data-processing}, we have that,
	\begin{align*}
		\tvd{\mem{p}{A}(\Prot_A)}{\mem{p}{A}(\protFake)} 
		\leq \tvd{(\mem{p-1}{A}(\Prot_A), \Gfixed, \Grest)}{(\mem{p-1}{A}(\protFake), \Gfixed, \Grest)}.
	\end{align*}
	The edges added in the last pass for both protocols are the ones from $\Gfixed$ which are fixed, and the others from distribution $\Grest$. 
	We can write the RHS term as,
	\begin{align*}
		&\tvd{(\mem{p-1}{A}(\Prot_A), \Gfixed, \Grest)}{(\mem{p-1}{\protFake}, \Gfixed, \Grest)} \\
		&\hspace{5mm}\leq \tvd{\mem{p-1}{A}(\Prot_A)}{\mem{p-1}{A}(\protFake)} \\
		& \hspace{10mm}+ \Exp_{\prot \sim \mem{p-1}{A}(\Prot_A)} \paren{\tvd{\paren{(\Gfixed, \Grest)\mid \mem{p-1}{A}(\Prot_A) = \prot}}{\paren{(\Gfixed, \Grest)\mid \mem{p-1}{A}(\protFake) = \prot}}} \tag{by \Cref{fact:tvd-chain-rule}} \\
		&\hspace{5mm} \leq \tvd{\mem{p-1}{A}(\Prot_A)}{\mem{p-1}{A}(\protFake)},
	\end{align*}
	where in the last step, the second term becomes zero, as they are the exact same distribution.
	For the first $p-1$ rounds, $\Prot_A$ and $\protFake$ run algorithm $A$ on input graphs sampled from $\pmultiblock{\Grs, \Sigma, \Lbar, \Mbar}, \GG_{p-1}(\joinit(\Gammabar))$ and $\pmultiblock{\Grs, \Sigma, L, \cM}, \GG_{p-1}(\joinit(\Gamma))$ respectively. As this is a $p-1$ pass streaming algorithm using space $o(m^{1+\beta/2})$, by \Cref{clm:multi-indist-p-1-pass} and \Cref{asmp:p-1-pass},
	\begin{align*}
		\tvd{\mem{p-1}{A}(\Prot_A)}{\mem{p-1}{A}(\protFake)} \leq  2k\cdot \Delta_{p-1}(\nrs \cdot b) + \Delta_{p-1}(m).
	\end{align*}
	For any input, by our assumption, we know that protocol $\protFake$ distinguishes between $\tilde{\GG}(\Gamma_1)$ and $\tilde{\GG}(\Gamma_2)$ with advantage at least $\delta_A$. However, we have shown that for any input, the total variation distance between the transcripts of protocols $\protFake$ and $\Prot_A$ is low. Hence,
	\begin{align*}
		\Pr\paren{\textnormal{Success of } \protFake} &\leq \Pr\paren{\textnormal{Success of } \Prot_A} +  \tvd{\mem{p}{A}(\Prot_A)}{\mem{p}{A}(\protFake)} \\
		\Pr\paren{\textnormal{Success of } \Prot_A} &\geq\frac12 +  \delta_A - (2k\cdot \Delta_{p-1}(\nrs \cdot b) + \Delta_{p-1}(m)). \qedhere
	\end{align*}
\end{proof}

We have all we need to prove \Cref{lem:simple-perm-p-pass} now.

\begin{proof}[Proof of \Cref{lem:simple-perm-p-pass}]
	The total number of vertices in our construction follows directly from \Cref{clm:p-pass-simple-valid}. 
	
	By \Cref{clm:running-A-for-MPH} and \Cref{clm:correctness-A-on-MPH}, we know that $\Prot_A$ can solve instances of $\mph_{t,r,b,k}$ with advantage at least $\delta_A - (2k \cdot \Delta_{p-1}(\nrs \cdot b)  + \Delta_{p-1}(rb/2)) $ with total communication $k \cdot p \cdot s_A$. 
	
	 When $p = o(\beta \cdot r^{\beta/3})$, we have that $k \cdot p \cdot s_A = O(1/\beta) \cdot o((m/b)^{1+2\beta/3})  \cdot o(\beta \cdot r^{\beta/3})= o(rt)$. Hence \Cref{thm:mph} is applicable. We know that the advantage gained by $\Prot_A$ is at most,
	\begin{align*}
		r \cdot o\paren{\frac{k \cdot p \cdot s_A}{r\cdot t}}^{k/32} \leq 	r \cdot o\paren{\frac{\alpha^{\beta} \cdot k \cdot p \cdot r^{1+2\beta/3}}{r^{1+\beta}}}^{50/\beta} \leq \frac{1}{r^{15}} \cdot (\alpha^{\beta} \cdot \beta^{-1} \cdot p)^{50/\beta} 
	\end{align*}
	Algorithm $A$ is given $o(r^{1+2\beta/3}) = o((m)^{1+\beta/2})$ when $b = m^{\beta/100}$. So we can lower bound the advantage of $\Prot_A$ from \Cref{clm:correctness-A-on-MPH} as,
	\begin{align*}
		\delta_A\leq (2k\cdot \Delta_{p-1}(\nrs \cdot b) + \Delta_{p-1}(m)) + m^{-5} \cdot (p\cdot \beta^{-1})^{50/\beta} \cdot \alpha.\qedhere
	\end{align*}
\end{proof}

Using \Cref{lem:simple-perm-p-pass}, we can construct the permutation hiding generator for $p$-passes similar to the construction in \Cref{sec:one-pass} using simple permutation hiding generators for $p$ passes.

\begin{proof}[Proof of \Cref{thm:perm-hiding}]
	The base case was proved in \Cref{lem:1-pass-hiding}.  We use \Cref{asmp:p-1-pass} to infer the statement of the theorem, completing the proof by induction. We fix $b = m^{\beta/100}$, and we concatenate $\caks \cdot 100/\beta$ graphs sampled from the simple permutation hiding generator for $p$-pass algorithms.  
	Using \Cref{lem:simple-perm-p-pass}, we can bound the total number of vertices as, 
	\begin{align*}
		n &\leq 	\daks \cdot\paren{ (3200/\beta) \cdot (\nrs \cdot b + N_{p-1}(\nrs \cdot b)) + N_{p-1}(m)}  \\
		&\leq 3.2 \cdot 10^5 \cdot \caks \cdot \frac1{\beta^2} \cdot \paren{\frac{2.5 \cdot \caks \cdot 10^6}{\alpha \beta^2}}^{p-1} \cdot \paren{6m/\alpha} \\
		&\leq \paren{\frac{2.5 \cdot \caks \cdot 10^6}{\alpha \beta^2}}^{p} \cdot m.
	\end{align*}
	Using the bound on the advantage in \Cref{lem:simple-perm-p-pass}, we get that, for any $\rho_1, \rho_2 \in S_m$, 
	\begin{align*}
		&\tvd{\mem{p}{A}(\GG(\rho_1))}{\mem{p}{A}(\GG(\rho_2))} \\  		
		&\hspace{10mm}\leq \daks \cdot  \paren{(3200/\beta) \cdot \Delta_{p-1}(\nrs \cdot b) + \Delta_{p-1}(m) + \alpha \cdot m^{-5} \cdot (p/\beta)^{50/\beta}} \\
		&\hspace{10mm}\leq \daks \cdot  \paren{(3200/\beta+1) \cdot \Delta_{p-1}(m) + \alpha \cdot m^{-5} \cdot (p/\beta)^{50/\beta}} \tag{as $\nrs \cdot b > m$} \\
		&\hspace{10mm}\leq \paren{\frac{4 \cdot 10^6 \cdot \caks}{\beta^2}}^{p-1} \cdot m^{-5} \cdot \alpha \cdot \paren{\frac{3.2 \cdot 10^5 \cdot \caks}{\beta^2}} \cdot (((p-1)/\beta)^{50/\beta} + (p/\beta)^{50/\beta}) \\
		&\hspace{10mm}\leq \paren{\frac{4 \cdot 10^6 \cdot \caks}{\beta^2}}^{p} \cdot m^{-5} \cdot \alpha \cdot (p/\beta)^{50/\beta},
	\end{align*}
	completing the proof.
\end{proof}

This concludes our construction of permutation hiding generators and their analysis.


\section{A Multi-Pass Streaming Lower Bound for Matchings}\label{sec:matching-lower}

We are now ready to present the proof of our main result in~\Cref{thm:main-lb}, restated below for the convenience of the reader. 

\begin{theorem*}[Restatement of~\Cref{res:main-lb}]
Suppose that for infinitely many choices of $N \geq 1$, there exists $(2N)$-vertex bipartite $(r,t)$-RS graphs with $r = \alpha \cdot N$ and $t = N^{\beta}$ for some fixed parameters $\alpha, \beta \in (0,1)$; the parameters $\alpha,\beta$ can depend
on $N$. 

Then, there exists an $\eps_0 = \eps_0(\alpha,\beta)$ such that the following is true. For any $0 < \eps < \eps_0$, any streaming algorithm that uses $o(\eps^{2} \cdot n^{1+\beta/2})$ space on $n$-vertex bipartite graphs 
and can determine with constant probability whether the input graph has a perfect matching or its maximum matchings have size at most $(1-\eps) \cdot n/2$ requires 
\[
\Omega\Paren{\dfrac{\log{(1/\eps)}}{\log{(1/\alpha\beta)}}}
\]
passes over the stream. 
\end{theorem*}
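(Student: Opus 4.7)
My plan is to follow the reduction template already sketched in~\Cref{sec:proof-main-lb}, instantiating the permutation hiding generator from~\Cref{thm:perm-hiding} with carefully chosen parameters and then passing through a standard ``vertex-disjoint paths to bipartite matching'' reduction. The starting point is the observation that in any permutation graph $G \in \PP_m$ for $\sigma \in S_m$, the paths between $\first_{[m]}(G)$ and $\last_{[m]}(G)$ are vertex-disjoint (otherwise the permutation graph property would be violated by two distinct routes through the same intermediate vertex). I will exploit two specific choices: the identity permutation $\sigma_{=}$, for which the first $m/2$ indices remain connected to the first $m/2$ indices via $m/2$ vertex-disjoint paths, and the ``cross'' permutation $\sigma_{\times}$ that maps $[1{:}m/2]$ bijectively to $[m/2{+}1{:}m]$, for which \emph{no} path exists between $\first_{[m/2]}(G)$ and $\last_{[m/2]}(G)$.

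Next, I would define the bipartite graph $\bipartite{G}$ as in the outline: duplicate $V(G)$ into copies $V^{\Left}, V^{\Right}$, add sentinel sets $S_0$ (attached to $\last_{[m/2]}$-copies on the right) and $T_0$ (attached to $\first_{[m/2]}$-copies on the left), include a canonical matching $M=\{(v^{\Left},v^{\Right}):v\in V\}$ across the bipartition, and include the edge set of $G$ suitably oriented between the two copies. The key structural claim to verify is that every $M$-augmenting path in $\bipartite{G}$ must start in $S_0$, terminate in $T_0$, and alternate along copies of a directed path of $G$; conversely, every vertex-disjoint path in $G$ from $\first_{[m/2]}$ to $\last_{[m/2]}$ lifts to an $M$-augmenting path, and vertex-disjointness in $G$ is preserved by this lift. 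Combined with the permutation-graph structure, this gives: if $G\sim\GG(\sigma_{=})$ then $\bipartite{G}$ has a perfect matching of size $|V(G)|+m/2$, whereas if $G\sim\GG(\sigma_{\times})$ then the maximum matching size is exactly $|V(G)|$.

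Now I would simply set parameters. Given $\eps$, choose $m,n$ so that $m/n = \Theta(\eps)$, specifically so that $(1-\eps)(|V(G)|+m/2) \leq |V(G)|$; since the number of vertices $n$ of $\bipartite{G}$ is $\Theta(|V(G)|+m/2)$, this fixes $\eps = \Theta(m/n)$. With $p$ denoting the number of passes, $n = \Theta(1/\alpha\beta^2)^p\cdot m$ by~\Cref{thm:perm-hiding}, so solving for $p$ yields $p = \Omega(\log(1/\eps)/\log(1/\alpha\beta))$, matching the theorem's bound. For the space, $\eps^2 = \Theta((m/n)^2) = \Theta((\alpha\beta^2)^{2p})$, so the hypothesis $s = o(\eps^{2}\cdot n^{1+\beta/2}) = o(m^{1+\beta/2})$ lands exactly in the regime where~\Cref{thm:perm-hiding} gives $\delta$-indistinguishability with $\delta$ subconstant (this is where $\eps_0=\eps_0(\alpha,\beta)$ enters, to absorb the $(p/\beta)^{\Theta(1/\beta)}\Theta(1/\beta)^{2p}$ factor into a $1/\poly(m)$ term).

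Any streaming algorithm solving the promise problem with constant probability would, by the reduction, distinguish $\GG(\sigma_{=})$ from $\GG(\sigma_{\times})$ with constant advantage while running on $\bipartite{G}$; since the bipartite graph can be streamed with only constant overhead per edge of $G$ (all reduction edges are deterministic functions of $V(G)$ and are interleaved into the stream), the same algorithm violates the $\delta$-indistinguishability of~\Cref{thm:perm-hiding}. The main obstacle I anticipate is the bookkeeping in the augmenting-path structural lemma (ensuring the alternation forces paths to traverse $G$-edges in the correct direction so that vertex-disjointness of augmenting paths in $\bipartite{G}$ is equivalent to vertex-disjoint $\first_{[m/2]}\rightsquigarrow\last_{[m/2]}$ paths in $G$); the parameter chase is then essentially mechanical.
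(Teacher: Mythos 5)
Your proposal is correct and follows essentially the same route as the paper: instantiate the permutation-hiding generator on $\sigma_{=}$ and $\sigma_{\times}$, pass the sampled graph through the vertex-disjoint-paths-to-bipartite-matching gadget, and then solve the $\eps$, $p$, $s$ relations from the parameters of \Cref{thm:perm-hiding}. One small wiring slip: for your stated claim that augmenting paths begin in $S_0$ and end in $T_0$ while following directed paths of $G$ forward, $S_0$ should attach to the $\first_{[m/2]}$-copies (not $\last_{[m/2]}$) on the right side of the bipartition and $T_0$ to the $\last_{[m/2]}$-copies on the left, matching the paper's construction.
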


The proof is based on a standard reduction, e.g., in~\cite{AssadiR20}, from reachability to bipartite matching (a similar reduction also appeared in~\cite{ChenKPSSY21}). We present the reduction for completeness.

Let $m \geq 1$ be an even integer and define the following two permutations: 
\begin{itemize}
	\item $\sigma_{=}$: the identity permutation over $[m/2]$; 
	\item $\sigma_{\times}$: the ``cross identity'' permutation that  maps $[m/2]$ to $[m/2+1:m]$ and $[m/2+1:m]$ to $[m/2]$ using identity permutations.  
\end{itemize}

Let $\GG = \GG_{m,n,p,s,\delta}$ be a permutation hiding generator for $m$ and any integer $p \geq 1$ with the parameters $n,s,\delta$ as in~\Cref{thm:perm-hiding}. We are going to use the 
indistinguishability of graphs sampled from $\GG(\sigma_{=})$ versus $\GG(\sigma_{\times})$ to prove~\Cref{thm:main-lb}. To do so, we need to turn the graphs sampled from this distribution 
into bipartite graphs wherein size of the matching, even approximately, is a distinguisher for the sample. This is done in the following. 

\subsubsection*{The bipartite graph construction} 

Let $G = (V,E)$ be sampled from $\GG(\sigma_{=})$ or $\GG(\sigma_{\times})$. Create the following bipartite graph $H$ from $G$:  

\begin{itemize}
	\item The vertices of $H$ are $(L \cup S)$ and $(R \cup T)$ on each side defined as follows: 
	\begin{enumerate}[label=$(\roman*)$]
	\item Let $L$ and $R$ be each a separate copy of $V$. For any vertex $v_i \in V$, we use $v^{\Left}_i$ and $v^{\Right}_i$ to denote the copy of $v_i$ in $L$ and $R$, respectively. 
	\item Additionally, create two new sets $S$ and $T$ of vertices with size $m/2$ each.
	\end{enumerate} 
	\item The edges of $H$ are $E_H$ plus a matching $M$ defined as follows: 
	\begin{enumerate}[label=$(\roman*)$]
	\item For any $i \in [m/2]$ and the $i$-th vertex $v_i \in \first{G}$ (resp. $v_i \in \last{G}$), add the edge between $i$-th vertex $u_i \in S$ and $v^{\Right}_i \in R$ (resp. $u_i \in T$ and $v^{\Left}_i \in L$) to $E_H$. 
	Moreover, for any edge $(u_i,v_j) \in E$, add the edge between $u^{\Left}_i \in L$ and $v^{\Right}_j \in R$. 
	\item Finally, for every $v_i \in V$, add the edge between $v^{\Left}_i \in L$ and $v^{\Right}_i \in R$ to $M$. 
	\end{enumerate} 
\end{itemize}
This way, we have a bipartite graph $H=(L \cup S, R \cup T, E_H \cup G)$ associated with $G$ with $n+m/2$ vertices on each side of the bipartition. The following lemma establishes the key property of $H$. 

\begin{lemma}\label{lem:H-associated-G}
	Consider a bipartite graph $H$ associated with a graph $G$ defined as above. We have:
	\begin{enumerate}[label=$(\roman*)$]
		\item\label{item:HaG-p1} if $G \sim \GG(\sigma_{=})$, then $H$ has a perfect matching of size $n+m/2$; 
		\item\label{item:HaG-p2} on the other hand, if $G \sim \GG(\sigma_{\times})$, the maximum matching size in $H$ is $n$. 
	\end{enumerate}
\end{lemma}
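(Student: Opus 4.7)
My plan is to analyze $H$ by identifying $M$ as a canonical matching of size $n$ (it matches every vertex of $L$ with its copy in $R$) and then studying augmenting paths of $M$ starting from unmatched vertices in $S$ and ending in unmatched vertices in $T$. The claim then reduces to counting the maximum number of vertex-disjoint augmenting paths, which corresponds exactly to vertex-disjoint paths in $G$ from $\first_{[m/2]}(G)$ to $\last_{[m/2]}(G)$.

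First I would observe that $M$ is a matching of size $n=|V|$ saturating all of $L\cup R$. Since $|L\cup S|=|R\cup T|=n+m/2$, the maximum matching in $H$ has size $n+m/2$ iff $M$ admits $m/2$ vertex-disjoint augmenting paths. I would then show that any augmenting path $P$ for $M$ must start at some $u_i\in S$, immediately cross to $v^{\Right}_j\in R$ where $v_j$ is the $i$-th vertex of $\first(G)$, alternate along $M$ and $E_H$-edges (which inside $L\cup R$ zig-zag as $v^{\Right}_j\to v^{\Left}_j\to v^{\Right}_k\to v^{\Left}_k\to\cdots$, tracing a directed path $v_j\rightsquigarrow v_k\rightsquigarrow\cdots$ in $G$), and finally exit to some $u_{i'}\in T$ via a vertex $v^{\Left}_{j'}\in L$ where $v_{j'}$ is the $i'$-th vertex of $\last(G)$. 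Conversely, every directed path in $G$ from $\first_{[m/2]}(G)$ to $\last_{[m/2]}(G)$ lifts in an obvious way to an augmenting path of $M$ in $H$.

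For part~\ref{item:HaG-p1}, if $G\sim\GG(\sigma_{=})$, then $G$ is a permutation graph for the identity permutation, so for every $i\in[m/2]$ there is a (unique) directed path $P_i$ from $i\in\first_{[m]}(G)$ to $i\in\last_{[m]}(G)$. Moreover, these $m/2$ paths are pairwise internally vertex-disjoint: if some vertex $v$ lay on both $P_i$ and $P_{i'}$, concatenating the prefix of $P_i$ up to $v$ with the suffix of $P_{i'}$ after $v$ would give a path from $i$ to $i'\ne i$ in $G$, contradicting the permutation graph property (\Cref{def:perm-graph}). Lifting these to $H$ gives $m/2$ vertex-disjoint augmenting paths, so $M$ can be augmented to a perfect matching of size $n+m/2$.

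For part~\ref{item:HaG-p2}, if $G\sim\GG(\sigma_{\times})$, the permutation graph property implies that for every $i\in[m/2]$ the only vertex of $\last_{[m]}(G)$ reachable from $i\in\first_{[m]}(G)$ is $\sigma_{\times}(i)=i+m/2$, which lies \emph{outside} $\last_{[m/2]}(G)$; likewise no vertex of $\first_{[m/2]}(G)$ reaches any vertex of $\last_{[m/2]}(G)$. Thus there is no directed path in $G$ from $\first_{[m/2]}(G)$ to $\last_{[m/2]}(G)$, hence no augmenting path for $M$ in $H$, so $M$ is already maximum and the maximum matching size in $H$ is exactly $n$. The main technical step is the bookkeeping for the bijection between augmenting paths and directed paths in $G$; everything else follows directly from the definition of a permutation graph.
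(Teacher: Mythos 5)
Your proof is correct and follows essentially the same route as the paper's: translate augmenting paths for the canonical matching $M$ into directed paths from $\first_{[m/2]}(G)$ to $\last_{[m/2]}(G)$, use the permutation-graph property to produce $m/2$ vertex-disjoint such paths when $\sigma=\sigma_{=}$, and to rule out any such path when $\sigma=\sigma_{\times}$. The only minor stylistic difference is that you first set up the general bijection between augmenting paths and directed paths and then apply it to both cases, whereas the paper handles each case inline; the underlying ideas and even the disjointness-by-contradiction argument are the same.
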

\begin{proof}
	The idea in both cases is to start with the matching $M$ in $H$ and consider augmenting it. We prove each part separately. 
	\paragraph{Proof of~\Cref{item:HaG-p1}.} Consider any vertex $u_i \in S$ in $H$. These vertices are all unmatched by $M$. Moreover, $u_i$ only has a single edge to some vertex $v^{\Right}_i \in R$. 
	Consider the original copy $v_i \in \first{G}$ of $v^{\Right}_i$ in the original graph $G$. Since $G$ is a permutation graph for $\sigma_{=}$, there is a path $P(v_i)$ in $G$ as follows: 
	\[
		P(v_i) := v_i \in \first{G} ~\rightarrow~ w_1 ~\rightarrow~ w_2 ~\rightarrow~ \cdots ~\rightsquigarrow~  \sigma_{=}(v_i) \in \last{G}.
	\]
	This path, translates to the following augmenting path in $H$ for the matching $M$:
	\[
		u_i \in S ~\rightarrow_{E_H}~ v^{\Right}_i ~\rightarrow_{M}~ v^{\Left}_i ~\rightarrow_{E_H}~ w^{\Right}_1 ~\rightarrow_{M}~ w^{\Left}_1 ~\rightarrow_{E_H}~ \cdots ~\rightsquigarrow~ \sigma_{=}(v^{\Left}_i) ~\rightarrow_{E_H}~ u_i \in T,
	\]
	using the fact that $\sigma_{=}$ is the identity matching. Thus, this is a valid augmenting path. 
	
	Moreover, for all vertices $u_i \in S$, these augmenting paths should be vertex disjoint. This is because these augmenting paths follow the same paths from $v_i \in \first{G}$ to $v_i \in \last{G}$ and the paths from the first layer 
	of a permutation graph to its last layer are vertex disjoint (if two paths collide, then starting point of each of these paths, can reach two separate vertices in $\last{G}$, violating the permutation graph property). 
	
	This implies that the matching $M$ of size $n$ admits $m/2$ vertex-disjoint augmenting paths, thus by augmenting it we obtain a matching of size $n+m/2$. Given the bound on the number of vertices of the graph, this is a perfect matching.
	
	\paragraph{Proof of~\Cref{item:HaG-p2}.} All unmatched vertices by $M$ in $L \cup S$ in $H$ actually belong to $S$. Consider any vertex $u_i \in S$ then and recall it has a single edge to some vertex $v^{\Right}_i \in R$. 
	For $u_i$ to be part of an augmenting path, it needs to reach some vertex $w_j \in T$. Such a vertex $w_j$ in turn only has one edge from a single vertex $z^{\Left}_j \in L$ by construction. 
	
	Using the correspondence between augmenting paths in $H$ and the directed paths in $G$, we thus need to have a path from $v_i \in \first{G}$ to $z_j \in \last{G}$. Moreover, by the construction of $H$, 
	we additionally need both $v_i$ to be in the first $m/2$ vertices of $\first{G}$ and $z_j$ to be in the first $m/2$ vertices of $\last{G}$. But, since $G$ is a permutation graph for $\sigma_{\times}$, 
	$v_i \in \first{G}$ can only reach $\sigma_{\times}(v_i) \in \last{G}$ which cannot be among the first $m/2$ vertices. 
	
	This implies that in this case, the matching $M$ has no augmenting paths in $H$ and is thus a maximum matching of size $n$. 
\end{proof}

We now use this lemma to conclude the proof of~\Cref{thm:main-lb}. 

\subsubsection*{Concluding the proof}

Let $A$ be any $p$-pass $s$-space streaming algorithm for the maximum matching problem on bipartite graphs with $n+m/2$ vertices on each side of the bipartition
that can distinguish between graphs with a perfect matching versus ones with maximum matching of size at most $n$. We turn $A$ into a $p$-pass $s$-space streaming algorithm $B$ for distinguishing 
between graphs sampled from $\GG(\sigma_{=})$ and $\GG(\sigma_{\times})$. We again use the extra power provided in~\Cref{def:streaming} to the streaming algorithms.

Given a graph $G$ in the stream, the algorithm $B$ can first create vertices of $H$ and edges in $M$ plus edges from $S$ to $R$ and $T$ to $L$ before even reading the edges of $G$, and pass these edges of $H$ to $A$;
then, it starts reading the stream of edges of $G$ and each edge $(u,v)$ translates into an edge $(u^{\Left},v^{\Right})$ of $H$ by construction, which $B$ passes to $A$ again. This way, $B$ can implement pass of $A$ 
and at the beginning of the next pass, it again creates edges of $M$ and edges from $S$ to $R$ and $T$ to $L$ in $H$ before reading its own stream, and continues as before until all $p$ passes are finished. 

We thus have, 
\[
	\delta \geq \tvd{\mem{p}{B}(\GG(\sigma_{=}))}{\mem{p}{B}(\GG(\sigma_{\times}))} = \tvd{\mem{p}{A}(H(\GG(\sigma_{=})))}{\mem{p}{A}(H(\GG(\sigma_{\times})))},
\]
where $\delta$ on the left is the maximum advantage because of the indistinguishability guarantee of $\GG$ in~\Cref{thm:perm-hiding}. 

On the other hand, by~\Cref{lem:H-associated-G}, size of the maximum matching in $H(\GG(\sigma_{=})))$ is always $n+m/2$ and in $H(\GG(\sigma_{\times}))$ is always $n$. 
This, together with~\Cref{fact:tvd-sample} (on probability of success of distinguishing distributions via a single sample) implies that $A$ also cannot distinguish between these two cases of perfect matching versus maximum matching of size $n$ with probability better than $1/2+\delta$. 

We only now need to instantiate the parameters of $A$. In the following, let $n_H := n+m/2$ denote the number of vertices on each side of the bipartition of $H$ and $\eps = m/4n$, so that in one case 
$H$ has a perfect matching of size $n_H$ and in the other case its maximum matching size is at most $(1-\eps) \cdot n_H$. We further have, 
\begin{align*}
	&n_H = n+m/2 = {\Theta(\frac{1}{\alpha \cdot \beta^2})}^{p} \cdot \Theta(m) \tag{number of vertices}, \\
	&s := o(m^{1+\beta/2}) \tag{space of streaming algorithm},
\end{align*}
by the guarantees of~\Cref{thm:perm-hiding}. By calculating parameters $p$ and $s$ in terms of $\eps$ and $n_H$, we get, 
\begin{align*}
	p &= \Omega\Paren{\frac{\log{({n_H}/{m})}}{\log({1}/{\alpha \cdot \beta^2})}} = \Omega\Paren{\frac{\log{(1/\eps)}}{\log{(1/\alpha\beta)}}}, \\
	s &= o(m^{1+\beta/2}) = o((\alpha \cdot \beta^2)^{p \cdot (1+\beta/2)} \cdot n_H^{1+\beta/2}) = o(\eps^{2} \cdot n_H^{1+\beta/2}).
\end{align*}
We set our range of $\eps $  to be $\eps = \Omega(n^{-\beta/4})$. We just have to see that for this range of $\eps$ and $p$, our advantage remains low. From \Cref{thm:perm-hiding},
\begin{align*}
	\delta &\leq (p/\beta)^{\Theta(1/\beta)} \cdot \Theta(1/\beta)^{2p} \cdot m^{-5} \\
	&\leq \paren{\frac{\log(1/\eps)}{\log(1/\alpha\beta)}}^{\Theta(1/\beta)} \cdot \Theta(1/\beta)^{\log(1/\eps)/\log(1/\alpha\beta)} \cdot m^{-5} \cdot (1/\beta)^{\Theta(1/\beta)}\\
	&\leq n \cdot \Theta(1)^{\log(1/\eps)} \cdot (4\eps \cdot n)^{-5} \cdot  (1/\beta)^{\Theta(1/\beta)}\\
	&\leq\Theta( n^{1+\beta/4} \cdot n^{(1-\beta/4)\cdot(-5)}) \leq 1/\poly(n).
\end{align*}
This concludes the proof of~\Cref{thm:main-lb}.

\begin{Remark}
{Our lower bound in~\Cref{thm:main-lb} can also be extended to some other fundamental problems including \textbf{reachability} and \textbf{shortest path} studied extensively in the streaming model in~\cite{FeigenbaumKMSZ08,GuruswamiO13,ChakrabartiGMV20,AssadiR20,ChenKPSSY21}, because 
	permutation hiding graphs also provide a lower bound for those problems; see~\cite{ChenKPSSY21}.}
	
	\medskip
	
	{In particular, under the (plausible) hypothesis that $\beta$ can be $\Omega(1)$, 
	by using a parameter $m/n \approx n^{-\beta/6}$ in the construction of permutation hiding graphs of~\Cref{thm:perm-hiding} (roughly, setting $\eps \approx n^{-\beta/6}$), we obtain a 
	\textbf{lower bound of 
	$\Omega(\log{n})$
	passes for solving directed reachability or shortest path via semi-streaming algorithms}. 
	
	\medskip
	
	This slightly improves the lower bound of $\Omega(\log{n}/\log\log{n})$ passes for these problems obtained in~\cite{GuruswamiO13,ChakrabartiGMV20,ChenKPSSY21} 
	(albeit under the hypothesis that $\beta = \Omega(1)$; using the current state-of-the-art bound of $\Omega(1/\log\log{n})$ on $\beta$ leads to asymptotically same bounds as in~\cite{GuruswamiO13,ChakrabartiGMV20,ChenKPSSY21}).} 
\end{Remark}

\subsection*{Acknowledgement} 

We are thankful to Soheil Behnezhad, Surya Teja Gavva, Michael Kapralov, and Huacheng Yu for helpful discussions. We are
also grateful to Vladimir V. Podolskii for pointers to the sorting network literature.

\bibliographystyle{halpha-abbrv}
\bibliography{new,dump}

\newcommand{\etalchar}[1]{$^{#1}$}
\begin{thebibliography}{BRWY13}
\expandafter\ifx\csname url\endcsname\relax
  \def\url#1{\texttt{#1}}\fi
\expandafter\ifx\csname doi\endcsname\relax
  \def\doi#1{\burlalt{doi:#1}{http://dx.doi.org/#1}}\fi
\expandafter\ifx\csname urlprefix\endcsname\relax\def\urlprefix{URL }\fi
\expandafter\ifx\csname href\endcsname\relax
  \def\href#1#2{#2}\fi
\expandafter\ifx\csname burlalt\endcsname\relax
  \def\burlalt#1#2{\href{#2}{#1}}\fi

\bibitem[A22]{Assadi22}
S.~Assadi.
\newblock A two-pass (conditional) lower bound for semi-streaming maximum
  matching.
\newblock In J.~S. Naor and N.~Buchbinder, editors, {\em Proceedings of the
  2022 {ACM-SIAM} Symposium on Discrete Algorithms, {SODA} 2022, Virtual
  Conference / Alexandria, VA, USA, January 9 - 12, 2022}, pages 708--742.
  {SIAM}, 2022.

\bibitem[A23]{Assadi23}
S.~Assadi.
\newblock A simple (1-{\(\epsilon\)})-approximation semi-streaming algorithm
  for maximum (weighted) matching.
\newblock {\em CoRR}, abs/2307.02968, 2023.

\bibitem[AB19]{AssadiB19}
S.~Assadi and A.~Bernstein.
\newblock Towards a unified theory of sparsification for matching problems.
\newblock In {\em 2nd Symposium on Simplicity in Algorithms, SOSA@SODA 2019,
  January 8-9, 2019 - San Diego, CA, {USA}}, pages 11:1--11:20, 2019.

\bibitem[AB21]{AssadiB21a}
S.~Assadi and S.~Behnezhad.
\newblock Beating two-thirds for random-order streaming matching.
\newblock In N.~Bansal, E.~Merelli, and J.~Worrell, editors, {\em 48th
  International Colloquium on Automata, Languages, and Programming, {ICALP}
  2021, July 12-16, 2021, Glasgow, Scotland (Virtual Conference)}, volume 198
  of {\em LIPIcs}, pages 19:1--19:13. Schloss Dagstuhl - Leibniz-Zentrum
  f{\"{u}}r Informatik, 2021.

\bibitem[ABB{\etalchar{+}}19]{AssadiBBMS19}
S.~Assadi, M.~Bateni, A.~Bernstein, V.~S. Mirrokni, and C.~Stein.
\newblock Coresets meet {EDCS:} algorithms for matching and vertex cover on
  massive graphs.
\newblock In {\em Proceedings of the Thirtieth Annual {ACM-SIAM} Symposium on
  Discrete Algorithms, {SODA} 2019, San Diego, California, USA, January 6-9,
  2019}, pages 1616--1635, 2019.

\bibitem[ABKL23]{AssadiBKL23}
S.~Assadi, S.~Behnezhad, S.~Khanna, and H.~Li.
\newblock On regularity lemma and barriers in streaming and dynamic matching.
\newblock In {\em {STOC} '23: 55th Annual {ACM} {SIGACT} Symposium on Theory of
  Computing}. {ACM}, 2023.

\bibitem[Abl93]{Ablayev93}
F.~M. Ablayev.
\newblock Lower bounds for one-way probabilistic communication complexity.
\newblock In {\em Automata, Languages and Programming, 20nd International
  Colloquium, ICALP93, Lund, Sweden, July 5-9, 1993, Proceedings}, pages
  241--252, 1993.

\bibitem[ACK19]{AssadiCK19b}
S.~Assadi, Y.~Chen, and S.~Khanna.
\newblock Polynomial pass lower bounds for graph streaming algorithms.
\newblock In {\em Proceedings of the 51st Annual {ACM} {SIGACT} Symposium on
  Theory of Computing, {STOC} 2019, Phoenix, AZ, USA, June 23-26, 2019}, pages
  265--276, 2019.

\bibitem[AG11]{AhnG11}
K.~J. Ahn and S.~Guha.
\newblock Linear programming in the semi-streaming model with application to
  the maximum matching problem.
\newblock In {\em Automata, Languages and Programming - 38th International
  Colloquium, {ICALP} 2011, Zurich, Switzerland, July 4-8, 2011, Proceedings,
  Part {II}}, pages 526--538, 2011.

\bibitem[AG18]{AhnG18}
K.~J. Ahn and S.~Guha.
\newblock Access to data and number of iterations: Dual primal algorithms for
  maximum matching under resource constraints.
\newblock {\em {ACM} Trans. Parallel Comput.}, 4(4):17:1--17:40, 2018.

\bibitem[AJJ{\etalchar{+}}22]{AssadiJJST22}
S.~Assadi, A.~Jambulapati, Y.~Jin, A.~Sidford, and K.~Tian.
\newblock Semi-streaming bipartite matching in fewer passes and optimal space.
\newblock In J.~S. Naor and N.~Buchbinder, editors, {\em Proceedings of the
  2022 {ACM-SIAM} Symposium on Discrete Algorithms, {SODA} 2022, Virtual
  Conference / Alexandria, VA, USA, January 9 - 12, 2022}, pages 627--669.
  {SIAM}, 2022.

\bibitem[AKL17]{AssadiKL17}
S.~Assadi, S.~Khanna, and Y.~Li.
\newblock On estimating maximum matching size in graph streams.
\newblock In {\em Proceedings of the Twenty-Eighth Annual {ACM-SIAM} Symposium
  on Discrete Algorithms, {SODA} 2017, Barcelona, Spain, Hotel Porta Fira,
  January 16-19}, pages 1723--1742, 2017.

\bibitem[AKLY16]{AssadiKLY16}
S.~Assadi, S.~Khanna, Y.~Li, and G.~Yaroslavtsev.
\newblock Maximum matchings in dynamic graph streams and the simultaneous
  communication model.
\newblock In {\em Proceedings of the Twenty-Seventh Annual {ACM-SIAM} Symposium
  on Discrete Algorithms, {SODA} 2016, Arlington, VA, USA, January 10-12,
  2016}, pages 1345--1364, 2016.

\bibitem[AKS83]{AjtaiKS83}
M.~Ajtai, J.~Koml{\'{o}}s, and E.~Szemer{\'{e}}di.
\newblock An {O(n log n)} sorting network.
\newblock In D.~S. Johnson, R.~Fagin, M.~L. Fredman, D.~Harel, R.~M. Karp,
  N.~A. Lynch, C.~H. Papadimitriou, R.~L. Rivest, W.~L. Ruzzo, and J.~I.
  Seiferas, editors, {\em Proceedings of the 15th Annual {ACM} Symposium on
  Theory of Computing, 25-27 April, 1983, Boston, Massachusetts, {USA}}, pages
  1--9. {ACM}, 1983.

\bibitem[AKSY20]{AssadiKSY20}
S.~Assadi, G.~Kol, R.~Saxena, and H.~Yu.
\newblock Multi-pass graph streaming lower bounds for cycle counting, max-cut,
  matching size, and other problems.
\newblock In {\em 61st Annual {IEEE} Symposium on Foundations of Computer
  Science, {FOCS} (to appear)}, 2020.

\bibitem[Alo02]{Alon02}
N.~Alon.
\newblock Testing subgraphs in large graphs.
\newblock {\em Random Struct. Algorithms}, 21(3-4):359--370, 2002.

\bibitem[ALT21]{AssadiLT21}
S.~Assadi, S.~C. Liu, and R.~E. Tarjan.
\newblock An auction algorithm for bipartite matching in streaming and
  massively parallel computation models.
\newblock In H.~V. Le and V.~King, editors, {\em 4th Symposium on Simplicity in
  Algorithms, {SOSA} 2021, Virtual Conference, January 11-12, 2021}, pages
  165--171. {SIAM}, 2021.

\bibitem[AMS12]{AlonMS12}
N.~Alon, A.~Moitra, and B.~Sudakov.
\newblock Nearly complete graphs decomposable into large induced matchings and
  their applications.
\newblock In {\em Proceedings of the 44th Symposium on Theory of Computing
  Conference, {STOC} 2012, New York, NY, USA, May 19 - 22, 2012}, pages
  1079--1090, 2012.

\bibitem[AN21]{AssadiN21}
S.~Assadi and V.~N.
\newblock Graph streaming lower bounds for parameter estimation and property
  testing via a streaming {XOR} lemma.
\newblock In S.~Khuller and V.~V. Williams, editors, {\em {STOC} '21: 53rd
  Annual {ACM} {SIGACT} Symposium on Theory of Computing, Virtual Event, Italy,
  June 21-25, 2021}, pages 612--625. {ACM}, 2021.

\bibitem[AR20]{AssadiR20}
S.~Assadi and R.~Raz.
\newblock Near-quadratic lower bounds for two-pass graph streaming algorithms.
\newblock In {\em 61st {IEEE} Annual Symposium on Foundations of Computer
  Science, {FOCS} 2020, Durham, NC, USA, November 16-19, 2020}, pages 342--353.
  {IEEE}, 2020.

\bibitem[AS06]{AlonS06}
N.~Alon and A.~Shapira.
\newblock A characterization of easily testable induced subgraphs.
\newblock {\em Combinatorics, Probability {\&} Computing}, 15(6):791--805,
  2006.

\bibitem[AS22]{AssadiS22}
S.~Assadi and V.~Shah.
\newblock An asymptotically optimal algorithm for maximum matching in dynamic
  streams.
\newblock In M.~Braverman, editor, {\em 13th Innovations in Theoretical
  Computer Science Conference, {ITCS} 2022, January 31 - February 3, 2022,
  Berkeley, CA, {USA}}, volume 215 of {\em LIPIcs}, pages 9:1--9:23. Schloss
  Dagstuhl - Leibniz-Zentrum f{\"{u}}r Informatik, 2022.

\bibitem[AS23]{AssadiS23}
S.~Assadi and J.~Sundaresan.
\newblock {(Noisy)} gap cycle counting strikes back: Random order streaming
  lower bounds for connected components and beyond.
\newblock In {\em {STOC} '23: 55th Annual {ACM} {SIGACT} Symposium on Theory of
  Computing}. {ACM}, 2023.

\bibitem[Bat68]{Batcher68}
K.~E. Batcher.
\newblock Sorting networks and their applications.
\newblock In {\em American Federation of Information Processing Societies:
  {AFIPS} Conference Proceedings: 1968 Spring Joint Computer Conference,
  Atlantic City, NJ, USA, 30 April - 2 May 1968}, volume~32 of {\em {AFIPS}
  Conference Proceedings}, pages 307--314. Thomson Book Company, Washington
  {D.C.}, 1968.

\bibitem[BBCR10]{BarakBCR10}
B.~Barak, M.~Braverman, X.~Chen, and A.~Rao.
\newblock How to compress interactive communication.
\newblock In {\em Proceedings of the 42nd {ACM} Symposium on Theory of
  Computing, {STOC} 2010, 5-8 June 2010}, pages 67--76, 2010.

\bibitem[BDL21]{BernsteinDL21}
A.~Bernstein, A.~Dudeja, and Z.~Langley.
\newblock A framework for dynamic matching in weighted graphs.
\newblock In S.~Khuller and V.~V. Williams, editors, {\em {STOC} '21: 53rd
  Annual {ACM} {SIGACT} Symposium on Theory of Computing, Virtual Event, Italy,
  June 21-25, 2021}, pages 668--681. {ACM}, 2021.

\bibitem[Ber20]{Bernstein20}
A.~Bernstein.
\newblock Improved bounds for matching in random-order streams.
\newblock In {\em 47th International Colloquium on Automata, Languages, and
  Programming, {ICALP} 2020, July 8-11, 2020, Saarbr{\"{u}}cken, Germany
  (Virtual Conference)}, pages 12:1--12:13, 2020.

\bibitem[BGW20]{BravermanGW20}
M.~Braverman, S.~Garg, and D.~P. Woodruff.
\newblock The coin problem with applications to data streams.
\newblock In S.~Irani, editor, {\em 61st {IEEE} Annual Symposium on Foundations
  of Computer Science, {FOCS} 2020, Durham, NC, USA, November 16-19, 2020},
  pages 318--329. {IEEE}, 2020.

\bibitem[BLM93]{BirkLM93}
Y.~Birk, N.~Linial, and R.~Meshulam.
\newblock On the uniform-traffic capacity of single-hop interconnections
  employing shared directional multichannels.
\newblock {\em {IEEE} Transactions on Information Theory}, 39(1):186--191,
  1993.

\bibitem[BR11]{BravermanR11}
M.~Braverman and A.~Rao.
\newblock Information equals amortized communication.
\newblock In {\em {IEEE} 52nd Annual Symposium on Foundations of Computer
  Science, {FOCS} 2011, October 22-25, 2011}, pages 748--757, 2011.

\bibitem[BRWY13]{BravermanRWY13}
M.~Braverman, A.~Rao, O.~Weinstein, and A.~Yehudayoff.
\newblock Direct products in communication complexity.
\newblock In {\em 54th Annual {IEEE} Symposium on Foundations of Computer
  Science, {FOCS} 2013, 26-29 October, 2013, Berkeley, CA, {USA}}, pages
  746--755, 2013.

\bibitem[BS15]{BuryS15}
M.~Bury and C.~Schwiegelshohn.
\newblock Sublinear estimation of weighted matchings in dynamic data streams.
\newblock In {\em Algorithms - {ESA} 2015 - 23rd Annual European Symposium,
  September 14-16, 2015, Proceedings}, pages 263--274, 2015.

\bibitem[CCE{\etalchar{+}}16]{ChitnisCEHMMV16}
R.~Chitnis, G.~Cormode, H.~Esfandiari, M.~Hajiaghayi, A.~McGregor,
  M.~Monemizadeh, and S.~Vorotnikova.
\newblock Kernelization via sampling with applications to finding matchings and
  related problems in dynamic graph streams.
\newblock In {\em Proceedings of the Twenty-Seventh Annual {ACM-SIAM} Symposium
  on Discrete Algorithms, {SODA} 2016, January 10-12, 2016}, pages 1326--1344,
  2016.

\bibitem[CCHM15]{ChitnisCHM15}
R.~H. Chitnis, G.~Cormode, M.~T. Hajiaghayi, and M.~Monemizadeh.
\newblock Parameterized streaming: Maximal matching and vertex cover.
\newblock In {\em Proceedings of the Twenty-Sixth Annual {ACM-SIAM} Symposium
  on Discrete Algorithms, {SODA} 2015, San Diego, CA, USA, January 4-6, 2015},
  pages 1234--1251, 2015.

\bibitem[CDK19]{CormodeDK19}
G.~Cormode, J.~Dark, and C.~Konrad.
\newblock Independent sets in vertex-arrival streams.
\newblock In {\em 46th International Colloquium on Automata, Languages, and
  Programming, {ICALP} 2019, July 9-12, 2019, Patras, Greece}, pages
  45:1--45:14, 2019.

\bibitem[CF13]{ConlonlF13}
D.~Conlonl and J.~Fox.
\newblock Graph removal lemmas.
\newblock {\em Surveys in combinatorics 2013}, 409:1, 2013.

\bibitem[CGMV20]{ChakrabartiGMV20}
A.~Chakrabarti, P.~Ghosh, A.~McGregor, and S.~Vorotnikova.
\newblock Vertex ordering problems in directed graph streams.
\newblock In {\em Proceedings of the 2020 {ACM-SIAM} Symposium on Discrete
  Algorithms, {SODA} 2020, Salt Lake City, UT, USA, January 5-8, 2020}, pages
  1786--1802, 2020.

\bibitem[CGQ15]{ChekuriGQ15}
C.~Chekuri, S.~Gupta, and K.~Quanrud.
\newblock Streaming algorithms for submodular function maximization.
\newblock In M.~M. Halld{\'{o}}rsson, K.~Iwama, N.~Kobayashi, and B.~Speckmann,
  editors, {\em Automata, Languages, and Programming - 42nd International
  Colloquium, {ICALP} 2015, Kyoto, Japan, July 6-10, 2015, Proceedings, Part
  {I}}, volume 9134 of {\em Lecture Notes in Computer Science}, pages 318--330.
  Springer, 2015.

\bibitem[Chv92]{chvatal1992lecture}
V.~Chvatal.
\newblock Lecture notes on the new {AKS} sorting network.
\newblock Technical report, Rutgers University, 1992.

\bibitem[CJMM17]{CormodeJMM17}
G.~Cormode, H.~Jowhari, M.~Monemizadeh, and S.~Muthukrishnan.
\newblock The sparse awakens: Streaming algorithms for matching size estimation
  in sparse graphs.
\newblock In {\em 25th Annual European Symposium on Algorithms, {ESA} 2017,
  September 4-6, 2017}, pages 29:1--29:15, 2017.

\bibitem[CK14]{ChakrabartiK14}
A.~Chakrabarti and S.~Kale.
\newblock Submodular maximization meets streaming: Matchings, matroids, and
  more.
\newblock In J.~Lee and J.~Vygen, editors, {\em Integer Programming and
  Combinatorial Optimization - 17th International Conference, {IPCO} 2014,
  Bonn, Germany, June 23-25, 2014. Proceedings}, volume 8494 of {\em Lecture
  Notes in Computer Science}, pages 210--221. Springer, 2014.

\bibitem[CK18]{ChakrabartyK18}
D.~Chakrabarty and S.~Khanna.
\newblock Better and simpler error analysis of the sinkhorn-knopp algorithm for
  matrix scaling.
\newblock In R.~Seidel, editor, {\em 1st Symposium on Simplicity in Algorithms,
  {SOSA} 2018, January 7-10, 2018, New Orleans, LA, {USA}}, volume~61 of {\em
  OASIcs}, pages 4:1--4:11. Schloss Dagstuhl - Leibniz-Zentrum f{\"{u}}r
  Informatik, 2018.

\bibitem[CKP{\etalchar{+}}21]{ChenKPSSY21}
L.~Chen, G.~Kol, D.~Paramonov, R.~R. Saxena, Z.~Song, and H.~Yu.
\newblock Almost optimal super-constant-pass streaming lower bounds for
  reachability.
\newblock In S.~Khuller and V.~V. Williams, editors, {\em {STOC} '21: 53rd
  Annual {ACM} {SIGACT} Symposium on Theory of Computing, Virtual Event, Italy,
  June 21-25, 2021}, pages 570--583. {ACM}, 2021.

\bibitem[CS14]{CrouchS14}
M.~Crouch and D.~S. Stubbs.
\newblock Improved streaming algorithms for weighted matching, via unweighted
  matching.
\newblock In {\em Approximation, Randomization, and Combinatorial Optimization.
  Algorithms and Techniques, {APPROX/RANDOM} 2014, September 4-6, 2014}, pages
  96--104, 2014.
\newblock \doi{10.4230/LIPIcs.APPROX-RANDOM.2014.96}.

\bibitem[CSWY01]{ChakrabartiSWY01}
A.~Chakrabarti, Y.~Shi, A.~Wirth, and A.~C. Yao.
\newblock Informational complexity and the direct sum problem for simultaneous
  message complexity.
\newblock In {\em 42nd Annual Symposium on Foundations of Computer Science,
  {FOCS} 2001, 14-17 October 2001}, pages 270--278, 2001.

\bibitem[CT06]{CoverT06}
T.~M. Cover and J.~A. Thomas.
\newblock {\em Elements of information theory {(2.} ed.)}.
\newblock Wiley, 2006.

\bibitem[DK20]{DarkK20}
J.~Dark and C.~Konrad.
\newblock Optimal lower bounds for matching and vertex cover in dynamic graph
  streams.
\newblock In S.~Saraf, editor, {\em 35th Computational Complexity Conference,
  {CCC} 2020, July 28-31, 2020, Saarbr{\"{u}}cken, Germany (Virtual
  Conference)}, volume 169 of {\em LIPIcs}, pages 30:1--30:14. Schloss Dagstuhl
  - Leibniz-Zentrum f{\"{u}}r Informatik, 2020.

\bibitem[dW08]{Wolf08}
R.~de~Wolf.
\newblock A brief introduction to fourier analysis on the boolean cube.
\newblock {\em Theory Comput.}, 1:1--20, 2008.

\bibitem[EHL{\etalchar{+}}15]{EsfandiariHLMO15}
H.~Esfandiari, M.~T. Hajiaghayi, V.~Liaghat, M.~Monemizadeh, and K.~Onak.
\newblock Streaming algorithms for estimating the matching size in planar
  graphs and beyond.
\newblock In {\em Proceedings of the Twenty-Sixth Annual {ACM-SIAM} Symposium
  on Discrete Algorithms, {SODA} 2015, January 4-6, 2015}, pages 1217--1233,
  2015.

\bibitem[EHM16]{EsfandiariHM16}
H.~Esfandiari, M.~Hajiaghayi, and M.~Monemizadeh.
\newblock Finding large matchings in semi-streaming.
\newblock In C.~Domeniconi, F.~Gullo, F.~Bonchi, J.~Domingo{-}Ferrer,
  R.~Baeza{-}Yates, Z.~Zhou, and X.~Wu, editors, {\em {IEEE} International
  Conference on Data Mining Workshops, {ICDM} Workshops 2016, December 12-15,
  2016, Barcelona, Spain}, pages 608--614. {IEEE} Computer Society, 2016.

\bibitem[EKMS12]{EggertKMS12}
S.~Eggert, L.~Kliemann, P.~Munstermann, and A.~Srivastav.
\newblock Bipartite matching in the semi-streaming model.
\newblock {\em Algorithmica}, 63(1-2):490--508, 2012.

\bibitem[FHM{\etalchar{+}}20]{FarhadiHMRR20}
A.~Farhadi, M.~T. Hajiaghayi, T.~Mai, A.~Rao, and R.~A. Rossi.
\newblock Approximate maximum matching in random streams.
\newblock In {\em Proceedings of the 2020 {ACM-SIAM} Symposium on Discrete
  Algorithms, {SODA} 2020, Salt Lake City, UT, USA, January 5-8, 2020}, pages
  1773--1785, 2020.

\bibitem[FHS17]{FoxHS15}
J.~Fox, H.~Huang, and B.~Sudakov.
\newblock On graphs decomposable into induced matchings of linear sizes.
\newblock {\em Bulletin of the London Mathematical Society}, 49(1):45--57,
  2017.

\bibitem[FKM{\etalchar{+}}05]{FeigenbaumKMSZ05}
J.~Feigenbaum, S.~Kannan, A.~McGregor, S.~Suri, and J.~Zhang.
\newblock On graph problems in a semi-streaming model.
\newblock {\em Theor. Comput. Sci.}, 348(2-3):207--216, 2005.

\bibitem[FKM{\etalchar{+}}08]{FeigenbaumKMSZ08}
J.~Feigenbaum, S.~Kannan, A.~McGregor, S.~Suri, and J.~Zhang.
\newblock Graph distances in the data-stream model.
\newblock {\em {SIAM} J. Comput.}, 38(5):1709--1727, 2008.

\bibitem[FLN{\etalchar{+}}02]{FischerLNRRS02}
E.~Fischer, E.~Lehman, I.~Newman, S.~Raskhodnikova, R.~Rubinfeld, and
  A.~Samorodnitsky.
\newblock Monotonicity testing over general poset domains.
\newblock In {\em Proceedings on 34th Annual {ACM} Symposium on Theory of
  Computing, May 19-21, 2002, Montr{\'{e}}al, Qu{\'{e}}bec, Canada}, pages
  474--483, 2002.

\bibitem[FMU22]{FischerMU22}
M.~Fischer, S.~Mitrovic, and J.~Uitto.
\newblock Deterministic (1+\emph{{\(\epsilon\)}})-approximate maximum matching
  with poly(1/\emph{{\(\epsilon\)}}) passes in the semi-streaming model and
  beyond.
\newblock In S.~Leonardi and A.~Gupta, editors, {\em {STOC} '22: 54th Annual
  {ACM} {SIGACT} Symposium on Theory of Computing, Rome, Italy, June 20 - 24,
  2022}, pages 248--260. {ACM}, 2022.

\bibitem[Fox11]{Fox11}
J.~Fox.
\newblock A new proof of the graph removal lemma.
\newblock {\em Annals of Mathematics}, 174(1):561--579, 2011.

\bibitem[FS22]{FeldmanS22}
M.~Feldman and A.~Szarf.
\newblock Maximum matching sans maximal matching: {A} new approach for finding
  maximum matchings in the data stream model.
\newblock In A.~Chakrabarti and C.~Swamy, editors, {\em Approximation,
  Randomization, and Combinatorial Optimization. Algorithms and Techniques,
  {APPROX/RANDOM} 2022, September 19-21, 2022, University of Illinois,
  Urbana-Champaign, {USA} (Virtual Conference)}, volume 245 of {\em LIPIcs},
  pages 33:1--33:24. Schloss Dagstuhl - Leibniz-Zentrum f{\"{u}}r Informatik,
  2022.

\bibitem[GKK{\etalchar{+}}07]{GavinskyKKRW07}
D.~Gavinsky, J.~Kempe, I.~Kerenidis, R.~Raz, and R.~de~Wolf.
\newblock Exponential separations for one-way quantum communication complexity,
  with applications to cryptography.
\newblock {\em STOC}, pages 516--525, 2007.

\bibitem[GKK12]{GoelKK12}
A.~Goel, M.~Kapralov, and S.~Khanna.
\newblock On the communication and streaming complexity of maximum bipartite
  matching.
\newblock In {\em Proceedings of the Twenty-third Annual ACM-SIAM Symposium on
  Discrete Algorithms}, SODA '12, pages 468--485. SIAM, 2012.
\newblock \urlprefix\url{http://dl.acm.org/citation.cfm?id=2095116.2095157}.

\bibitem[GKMS19]{GamlathKMS19}
B.~Gamlath, S.~Kale, S.~Mitrovic, and O.~Svensson.
\newblock Weighted matchings via unweighted augmentations.
\newblock In {\em Proceedings of the 2019 {ACM} Symposium on Principles of
  Distributed Computing, {PODC} 2019, Toronto, ON, Canada, July 29 - August 2,
  2019}, pages 491--500, 2019.

\bibitem[GM08]{GuhaM08}
S.~Guha and A.~McGregor.
\newblock Tight lower bounds for multi-pass stream computation via pass
  elimination.
\newblock In {\em Automata, Languages and Programming, 35th International
  Colloquium, {ICALP} 2008, July 7-11, 2008, Proceedings, Part {I:} Tack {A:}
  Algorithms, Automata, Complexity, and Games}, pages 760--772, 2008.

\bibitem[GO13]{GuruswamiO13}
V.~Guruswami and K.~Onak.
\newblock Superlinear lower bounds for multipass graph processing.
\newblock In {\em Proceedings of the 28th Conference on Computational
  Complexity, {CCC} 2013, K.lo Alto, California, USA, 5-7 June, 2013}, pages
  287--298, 2013.

\bibitem[Gow01]{Gowers01}
W.~Gowers.
\newblock Some unsolved problems in additive/combinatorial number theory.
\newblock {\em preprint}, 2001.

\bibitem[GT19]{GuruswamiT19}
V.~Guruswami and R.~Tao.
\newblock Streaming hardness of unique games.
\newblock In {\em Approximation, Randomization, and Combinatorial Optimization.
  Algorithms and Techniques, {APPROX/RANDOM} 2019, September 20-22, 2019,
  Massachusetts Institute of Technology, Cambridge, MA, {USA}}, pages
  5:1--5:12, 2019.

\bibitem[HGG09]{HuangGGFourier09}
J.~Huang, C.~Guestrin, and L.~Guibas.
\newblock Fourier theoretic probabilistic inference over permutations.
\newblock {\em J. Mach. Learn. Res.}, 10:997–1070, jun 2009.

\bibitem[JRS03]{JainRS03}
R.~Jain, J.~Radhakrishnan, and P.~Sen.
\newblock A direct sum theorem in communication complexity via message
  compression.
\newblock In {\em Automata, Languages and Programming, 30th International
  Colloquium, {ICALP} 2003, June 30 - July 4, 2003. Proceedings}, pages
  300--315, 2003.

\bibitem[Kap13]{Kapralov13}
M.~Kapralov.
\newblock Better bounds for matchings in the streaming model.
\newblock In {\em Proceedings of the Twenty-Fourth Annual {ACM-SIAM} Symposium
  on Discrete Algorithms, {SODA} 2013, New Orleans, Louisiana, USA, January
  6-8, 2013}, pages 1679--1697, 2013.
\newblock \doi{10.1137/1.9781611973105.121}.

\bibitem[Kap21]{Kapralov21}
M.~Kapralov.
\newblock Space lower bounds for approximating maximum matching in the edge
  arrival model.
\newblock In D.~Marx, editor, {\em Proceedings of the 2021 {ACM-SIAM} Symposium
  on Discrete Algorithms, {SODA} 2021, Virtual Conference, January 10 - 13,
  2021}, pages 1874--1893. {SIAM}, 2021.

\bibitem[KKL88]{KahnKL88}
J.~Kahn, G.~Kalai, and N.~Linial.
\newblock The influence of variables on boolean functions (extended abstract).
\newblock In {\em 29th Annual Symposium on Foundations of Computer Science,
  White Plains, New York, USA, 24-26 October 1988}, pages 68--80. {IEEE}
  Computer Society, 1988.

\bibitem[KKS14]{KapralovKS14}
M.~Kapralov, S.~Khanna, and M.~Sudan.
\newblock Approximating matching size from random streams.
\newblock In {\em Proceedings of the Twenty-Fifth Annual {ACM-SIAM} Symposium
  on Discrete Algorithms, {SODA} 2014, Portland, Oregon, USA, January 5-7,
  2014}, pages 734--751, 2014.

\bibitem[KKS15]{KapralovKS15}
M.~Kapralov, S.~Khanna, and M.~Sudan.
\newblock Streaming lower bounds for approximating {MAX-CUT}.
\newblock In {\em Proceedings of the Twenty-Sixth Annual {ACM-SIAM} Symposium
  on Discrete Algorithms, {SODA} 2015, San Diego, CA, USA, January 4-6, 2015},
  pages 1263--1282, 2015.

\bibitem[KKTY21]{KapralovKTY21}
M.~Kapralov, R.~Krauthgamer, J.~Tardos, and Y.~Yoshida.
\newblock Towards tight bounds for spectral sparsification of hypergraphs.
\newblock In S.~Khuller and V.~V. Williams, editors, {\em {STOC} '21: 53rd
  Annual {ACM} {SIGACT} Symposium on Theory of Computing, Virtual Event, Italy,
  June 21-25, 2021}, pages 598--611. {ACM}, 2021.

\bibitem[KMM12]{KonradMM12}
C.~Konrad, F.~Magniez, and C.~Mathieu.
\newblock Maximum matching in semi-streaming with few passes.
\newblock In {\em Approximation, Randomization, and Combinatorial Optimization.
  Algorithms and Techniques - 15th International Workshop, {APPROX} 2012, and
  16th International Workshop, {RANDOM} 2012, Cambridge, MA, USA, August 15-17,
  2012. Proceedings}, pages 231--242, 2012.

\bibitem[KMNT20]{KapralovMNT20}
M.~Kapralov, S.~Mitrovic, A.~Norouzi{-}Fard, and J.~Tardos.
\newblock Space efficient approximation to maximum matching size from uniform
  edge samples.
\newblock In {\em Proceedings of the 2020 {ACM-SIAM} Symposium on Discrete
  Algorithms, {SODA} 2020, Salt Lake City, UT, USA, January 5-8, 2020}, pages
  1753--1772, 2020.

\bibitem[KMT{\etalchar{+}}22]{KapralovMTWZ22}
M.~Kapralov, A.~Musipatla, J.~Tardos, D.~P. Woodruff, and S.~Zhou.
\newblock Noisy boolean hidden matching with applications.
\newblock In M.~Braverman, editor, {\em 13th Innovations in Theoretical
  Computer Science Conference, {ITCS} 2022, January 31 - February 3, 2022,
  Berkeley, CA, {USA}}, volume 215 of {\em LIPIcs}, pages 91:1--91:19. Schloss
  Dagstuhl - Leibniz-Zentrum f{\"{u}}r Informatik, 2022.

\bibitem[KN21]{KonradN21}
C.~Konrad and K.~K. Naidu.
\newblock On two-pass streaming algorithms for maximum bipartite matching.
\newblock In M.~Wootters and L.~Sanit{\`{a}}, editors, {\em Approximation,
  Randomization, and Combinatorial Optimization. Algorithms and Techniques,
  {APPROX/RANDOM} 2021, August 16-18, 2021, University of Washington, Seattle,
  Washington, {USA} (Virtual Conference)}, volume 207 of {\em LIPIcs}, pages
  19:1--19:18. Schloss Dagstuhl - Leibniz-Zentrum f{\"{u}}r Informatik, 2021.

\bibitem[KNR95]{KremerNR95}
I.~Kremer, N.~Nisan, and D.~Ron.
\newblock On randomized one-round communication complexity.
\newblock In {\em Proceedings of the Twenty-Seventh Annual {ACM} Symposium on
  Theory of Computing, 29 May-1 June 1995, Las Vegas, Nevada, {USA}}, pages
  596--605, 1995.

\bibitem[KNS23]{KonradNS23}
C.~Konrad, K.~K. Naidu, and A.~Steward.
\newblock Maximum matching via maximal matching queries.
\newblock In P.~Berenbrink, P.~Bouyer, A.~Dawar, and M.~M. Kant{\'{e}},
  editors, {\em 40th International Symposium on Theoretical Aspects of Computer
  Science, {STACS} 2023, March 7-9, 2023, Hamburg, Germany}, volume 254 of {\em
  LIPIcs}, pages 41:1--41:22. Schloss Dagstuhl - Leibniz-Zentrum f{\"{u}}r
  Informatik, 2023.

\bibitem[Knu97]{knuth1997art}
D.~E. Knuth.
\newblock {\em The art of computer programming}, volume~3.
\newblock Pearson Education, 1997.

\bibitem[Kon15]{Konrad15}
C.~Konrad.
\newblock Maximum matching in turnstile streams.
\newblock In {\em Algorithms - {ESA} 2015 - 23rd Annual European Symposium,
  September 14-16, 2015, Proceedings}, pages 840--852, 2015.

\bibitem[Kon18]{Konrad18}
C.~Konrad.
\newblock A simple augmentation method for matchings with applications to
  streaming algorithms.
\newblock In {\em 43rd International Symposium on Mathematical Foundations of
  Computer Science, {MFCS} 2018, August 27-31, 2018, Liverpool, {UK}}, pages
  74:1--74:16, 2018.

\bibitem[KT17]{KaleT17}
S.~Kale and S.~Tirodkar.
\newblock Maximum matching in two, three, and a few more passes over graph
  streams.
\newblock In {\em Approximation, Randomization, and Combinatorial Optimization.
  Algorithms and Techniques, {APPROX/RANDOM} 2017, August 16-18, 2017,
  Berkeley, CA, {USA}}, pages 15:1--15:21, 2017.

\bibitem[LNW14]{LiNW14}
Y.~Li, H.~L. Nguyen, and D.~P. Woodruff.
\newblock Turnstile streaming algorithms might as well be linear sketches.
\newblock In {\em Symposium on Theory of Computing, {STOC} 2014, New York, NY,
  USA, May 31 - June 03, 2014}, pages 174--183, 2014.

\bibitem[LSZ20]{LiuSZ20}
S.~C. Liu, Z.~Song, and H.~Zhang.
\newblock Breaking the n-pass barrier: {A} streaming algorithm for maximum
  weight bipartite matching.
\newblock {\em CoRR}, abs/2009.06106, 2020.

\bibitem[LW21]{LevinW21}
R.~Levin and D.~Wajc.
\newblock Streaming submodular matching meets the primal-dual method.
\newblock In D.~Marx, editor, {\em Proceedings of the 2021 {ACM-SIAM} Symposium
  on Discrete Algorithms, {SODA} 2021, Virtual Conference, January 10 - 13,
  2021}, pages 1914--1933. {SIAM}, 2021.

\bibitem[McG05]{McGregor05}
A.~McGregor.
\newblock Finding graph matchings in data streams.
\newblock In {\em Approximation, Randomization and Combinatorial Optimization,
  Algorithms and Techniques, 8th International Workshop on Approximation
  Algorithms for Combinatorial Optimization Problems, {APPROX} 2005 and 9th
  InternationalWorkshop on Randomization and Computation, {RANDOM} 2005,
  Berkeley, CA, USA, August 22-24, 2005, Proceedings}, pages 170--181, 2005.

\bibitem[MV16]{McGregorV16}
A.~McGregor and S.~Vorotnikova.
\newblock Planar matching in streams revisited.
\newblock In {\em Approximation, Randomization, and Combinatorial Optimization.
  Algorithms and Techniques, {APPROX/RANDOM} 2016, September 7-9, 2016}, pages
  17:1--17:12, 2016.

\bibitem[MV18]{McGregorV18}
A.~McGregor and S.~Vorotnikova.
\newblock A simple, space-efficient, streaming algorithm for matchings in low
  arboricity graphs.
\newblock In {\em 1st Symposium on Simplicity in Algorithms, {SOSA} 2018,
  January 7-10, 2018}, pages 14:1--14:4, 2018.

\bibitem[PP89]{ParkerP89}
B.~Parker and I.~Parberry.
\newblock Constructing sorting networks from k-sorters.
\newblock {\em Inf. Process. Lett.}, 33(3):157--162, 1989.

\bibitem[PS17]{PazS17}
A.~Paz and G.~Schwartzman.
\newblock A (2 + {\eps})-approximation for maximum weight matching in the
  semi-streaming model.
\newblock In {\em Proceedings of the Twenty-Eighth Annual {ACM-SIAM} Symposium
  on Discrete Algorithms, {SODA} 2017, Barcelona, Spain, Hotel Porta Fira,
  January 16-19}, pages 2153--2161, 2017.

\bibitem[RS78]{RuzsaS78}
I.~Z. Ruzsa and E.~Szemer{\'e}di.
\newblock Triple systems with no six points carrying three triangles.
\newblock {\em Combinatorics (Keszthely, 1976), Coll. Math. Soc. J. Bolyai},
  18:939--945, 1978.

\bibitem[Tir18]{Tirodkar18}
S.~Tirodkar.
\newblock Deterministic algorithms for maximum matching on general graphs in
  the semi-streaming model.
\newblock In {\em 38th {IARCS} Annual Conference on Foundations of Software
  Technology and Theoretical Computer Science, {FSTTCS} 2018, December 11-13,
  2018, Ahmedabad, India}, pages 39:1--39:16, 2018.

\bibitem[TV06]{TaoV06}
T.~Tao and V.~H. Vu.
\newblock {\em Additive combinatorics}, volume 105.
\newblock Cambridge University Press, 2006.

\bibitem[VY11]{VerbinY11}
E.~Verbin and W.~Yu.
\newblock The streaming complexity of cycle counting, sorting by reversals, and
  other problems.
\newblock In {\em Proceedings of the Twenty-Second Annual {ACM-SIAM} Symposium
  on Discrete Algorithms, {SODA} 2011, January 23-25, 2011}, pages 11--25,
  2011.

\bibitem[Yao82]{Yao82a}
A.~C. Yao.
\newblock Theory and applications of trapdoor functions (extended abstract).
\newblock In {\em 23rd Annual Symposium on Foundations of Computer Science,
  Chicago, Illinois, USA, 3-5 November 1982}, pages 80--91, 1982.

\bibitem[Yu22]{Yu22}
H.~Yu.
\newblock Strong {XOR} lemma for communication with bounded rounds : (extended
  abstract).
\newblock In {\em 63rd {IEEE} Annual Symposium on Foundations of Computer
  Science, {FOCS} 2022, Denver, CO, USA, October 31 - November 3, 2022}, pages
  1186--1192. {IEEE}, 2022.

\end{thebibliography}

\clearpage

\appendix
\clearpage
\section{Background on Information Theory}\label{app:info}

We now briefly introduce some definitions and facts from information theory that are used in our proofs. We refer the interested reader to the text by Cover and Thomas~\cite{CoverT06} for an excellent introduction to this field, 
and the proofs of the statements used in this Appendix. 

For a random variable $\rA$, we use $\supp{\rA}$ to denote the support of $\rA$ and $\distribution{\rA}$ to denote its distribution. 
When it is clear from the context, we may abuse the notation and use $\rA$ directly instead of $\distribution{\rA}$, for example, write 
$A \sim \rA$ to mean $A \sim \distribution{\rA}$, i.e., $A$ is sampled from the distribution of random variable $\rA$. 

\begin{itemize}[leftmargin=10pt]
\item We denote the \emph{Shannon Entropy} of a random variable $\rA$ by
$\en{\rA}$, which is defined as: 
\begin{align}
	\en{\rA} := \sum_{A \in \supp{\rA}} \Pr\paren{\rA = A} \cdot \log{\paren{1/\Pr\paren{\rA = A}}} \label{eq:entropy}
\end{align} 
\noindent
\item The \emph{conditional entropy} of $\rA$ conditioned on $\rB$ is denoted by $\en{\rA \mid \rB}$ and defined as:
\begin{align}
\en{\rA \mid \rB} := \Ex_{B \sim \rB} \bracket{\en{\rA \mid \rB = B}}, \label{eq:cond-entropy}
\end{align}
where 
$\en{\rA \mid \rB = B}$ is defined in a standard way by using the distribution of $\rA$ conditioned on the event $\rB = B$ in Eq~(\ref{eq:entropy}).

\item The \emph{mutual information} of two random variables $\rA$ and $\rB$ is denoted by
$\mi{\rA}{\rB}$ and is defined:
\begin{align}
\mi{\rA}{\rB} := \en{A} - \en{A \mid  B} = \en{B} - \en{B \mid  A}. \label{eq:mi}
\end{align}
\noindent
\item The \emph{conditional mutual information} $\mi{\rA}{\rB \mid \rC}$ is $\en{\rA \mid \rC} - \en{\rA \mid \rB,\rC}$ and hence by linearity of expectation:
\begin{align}
	\mi{\rA}{\rB \mid \rC} = \Ex_{C \sim \rC} \bracket{\mi{\rA}{\rB \mid \rC = C}}. \label{eq:cond-mi}
\end{align} 
\end{itemize}

\subsection{Useful Properties of Entropy and Mutual Information}\label{sec:prop-en-mi}

We shall use the following basic properties of entropy and mutual information throughout. 

\begin{fact}\label{fact:it-facts}
  Let $\rA$, $\rB$, $\rC$, and $\rD$ be four (possibly correlated) random variables.
   \begin{enumerate}
  \item \label{part:uniform} $0 \leq \en{\rA} \leq \log{\card{\supp{\rA}}}$. The right equality holds
    iff $\distribution{\rA}$ is uniform.
  \item \label{part:info-zero} $\mi{\rA}{\rB \mid \rC} \geq 0$. The equality holds iff $\rA$ and
    $\rB$ are \emph{independent} conditioned on $\rC$.
  \item \label{part:cond-reduce} \emph{Conditioning on a random variable reduces entropy}:
    $\en{\rA \mid \rB,\rC} \leq \en{\rA \mid  \rB}$.  The equality holds iff $\rA \perp \rC \mid \rB$.
    \item \label{part:sub-additivity} \emph{Subadditivity of entropy}: $\en{\rA,\rB \mid \rC}
    \leq \en{\rA \mid C} + \en{\rB \mid  \rC}$.
   \item \label{part:ent-chain-rule} \emph{Chain rule for entropy}: $\en{\rA,\rB \mid \rC} = \en{\rA \mid \rC} + \en{\rB \mid \rC,\rA}$.
  \item \label{part:chain-rule} \emph{Chain rule for mutual information}: $\mi{\rA,\rB}{\rC \mid \rD} = \mi{\rA}{\rC \mid \rD} + \mi{\rB}{\rC \mid  \rA,\rD}$.
  \item \label{part:data-processing} \emph{Data processing inequality}: for a function $f(\rA)$ of $\rA$, $\mi{f(\rA)}{\rB \mid \rC} \leq \mi{\rA}{\rB \mid \rC}$. 
   \end{enumerate}
\end{fact}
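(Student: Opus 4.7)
The plan is to derive all seven items from two fundamental building blocks: Jensen's inequality applied to $\log$, and the non-negativity of relative entropy (Gibbs' inequality). Since every quantity in the statement is either an entropy or a mutual information, both of which unfold via Eqs.~\eqref{eq:entropy}--\eqref{eq:cond-mi} into sums of logarithms of probability ratios, the whole fact follows by mechanical manipulation once these two tools are in hand. I would begin by establishing Gibbs' inequality: for any two distributions $p,q$ on a common support, $\sum_x p(x)\log(p(x)/q(x)) \geq 0$ with equality iff $p=q$, via Jensen applied to the convex function $-\log$. This one inequality, once proved, will carry most of the weight of the argument.

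With that in hand, I would prove the items in the following order. For part~\ref{part:uniform}, non-negativity is immediate from each $\log(1/\Pr(\rA{=}A)) \geq 0$, while the upper bound $\en{\rA} \leq \log|\supp{\rA}|$ comes from Gibbs applied to $p = \distribution{\rA}$ and $q$ uniform on $\supp{\rA}$; equality in Gibbs gives the uniform characterization. For part~\ref{part:info-zero}, I would rewrite $\mi{\rA}{\rB\mid \rC} = \Exp_{C\sim \rC}\bracket{\sum_{A,B} \Pr(\rA{=}A,\rB{=}B\mid C)\log\tfrac{\Pr(\rA{=}A,\rB{=}B\mid C)}{\Pr(\rA{=}A\mid C)\Pr(\rB{=}B\mid C)}}$ and apply Gibbs pointwise in $C$, with the equality case characterizing conditional independence. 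Parts~\ref{part:ent-chain-rule} and~\ref{part:chain-rule} (chain rules) follow directly from factoring joint probabilities inside the logarithm via $\log(xy)=\log x + \log y$ and linearity of expectation, without any inequality at all. Part~\ref{part:cond-reduce} (conditioning reduces entropy) is then a one-line corollary: $\en{\rA\mid \rB} - \en{\rA\mid \rB,\rC} = \mi{\rA}{\rC\mid \rB} \geq 0$ by part~\ref{part:info-zero}, with the equality condition transported from there. Part~\ref{part:sub-additivity} (subadditivity) combines the chain rule $\en{\rA,\rB\mid \rC}=\en{\rA\mid \rC}+\en{\rB\mid \rA,\rC}$ with conditioning reduces entropy $\en{\rB\mid \rA,\rC}\leq \en{\rB\mid \rC}$.

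The data processing inequality in part~\ref{part:data-processing} is the only item needing a small extra idea. I would apply the chain rule two ways to $\mi{\rA,f(\rA)}{\rB\mid \rC}$: once as $\mi{\rA}{\rB\mid \rC} + \mi{f(\rA)}{\rB\mid \rA,\rC}$ and once as $\mi{f(\rA)}{\rB\mid \rC} + \mi{\rA}{\rB\mid f(\rA),\rC}$. Since $f(\rA)$ is determined by $\rA$, the second term in the first decomposition vanishes (its conditional entropy $\en{f(\rA)\mid \rA,\rC}=0$, and $\mi{f(\rA)}{\rB\mid \rA,\rC}\leq \en{f(\rA)\mid \rA,\rC}=0$). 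The remaining term in the second decomposition is non-negative by part~\ref{part:info-zero}, yielding $\mi{\rA}{\rB\mid \rC}\geq \mi{f(\rA)}{\rB\mid \rC}$.

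Honestly, the main ``obstacle'' here is not mathematical but organizational: the order of the items matters, since parts~\ref{part:cond-reduce} and~\ref{part:sub-additivity} rely on parts~\ref{part:info-zero},~\ref{part:ent-chain-rule}, and~\ref{part:chain-rule}, and the equality characterizations must be tracked carefully through Gibbs' inequality each time. Apart from that, every step is a routine unfolding of definitions, so I would keep the write-up short and defer to Cover--Thomas for the two foundational inequalities rather than reproving Jensen from scratch.
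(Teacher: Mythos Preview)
Your proof sketch is correct and follows the standard textbook route (Gibbs' inequality as the workhorse, chain rules as identities, then parts~\ref{part:cond-reduce}, \ref{part:sub-additivity}, and~\ref{part:data-processing} as corollaries). The paper itself does not prove this fact at all: it is stated as background in the appendix with the proofs deferred to Cover--Thomas~\cite{CoverT06}, so your write-up already goes further than what the paper provides.
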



\noindent
We also use the following two standard propositions, regarding the effect of conditioning on mutual information.

\begin{proposition}\label{prop:info-increase}
  For random variables $\rA, \rB, \rC, \rD$, if $\rA \perp \rD \mid \rC$, then, 
  \[\mi{\rA}{\rB \mid \rC} \leq \mi{\rA}{\rB \mid  \rC,  \rD}.\]
\end{proposition}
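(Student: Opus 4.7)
The plan is to prove this by applying the chain rule of mutual information to expand $\mi{\rA}{\rB, \rD \mid \rC}$ in two different ways, and then use the hypothesis $\rA \perp \rD \mid \rC$ to eliminate one of the terms.

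First, I would apply \itfacts{chain-rule} in two ways to expand the joint quantity $\mi{\rA}{\rB, \rD \mid \rC}$. Treating $(\rB, \rD)$ as the paired variable and splitting off $\rB$ first, we get
\[
\mi{\rA}{\rB, \rD \mid \rC} = \mi{\rA}{\rB \mid \rC} + \mi{\rA}{\rD \mid \rB, \rC}.
\]
Splitting off $\rD$ first instead gives
\[
\mi{\rA}{\rB, \rD \mid \rC} = \mi{\rA}{\rD \mid \rC} + \mi{\rA}{\rB \mid \rD, \rC}.
\]
Setting these two expressions equal is the key algebraic identity.

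Next, I would invoke the hypothesis $\rA \perp \rD \mid \rC$, which by \itfacts{info-zero} implies $\mi{\rA}{\rD \mid \rC} = 0$. Substituting this into the second expansion and equating with the first yields
\[
\mi{\rA}{\rB \mid \rC} + \mi{\rA}{\rD \mid \rB, \rC} = \mi{\rA}{\rB \mid \rD, \rC}.
\]
Finally, since $\mi{\rA}{\rD \mid \rB, \rC} \geq 0$ (again by \itfacts{info-zero}), rearranging gives the desired inequality $\mi{\rA}{\rB \mid \rC} \leq \mi{\rA}{\rB \mid \rC, \rD}$.

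There is no real obstacle here; the proof is a short two-line calculation using only the chain rule and non-negativity of mutual information, together with the fact that independence is equivalent to zero mutual information. The only subtlety worth flagging is that the hypothesis $\rA \perp \rD \mid \rC$ does \emph{not} imply $\rA \perp \rD \mid \rB, \rC$ in general, which is precisely why conditioning on additional information can \emph{increase} mutual information rather than decrease it (contrast with \itfacts{cond-reduce} for entropy).
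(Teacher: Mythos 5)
Your proof is correct, but it takes a different route from the paper's. The paper works directly with conditional entropies: it expands $\mi{\rA}{\rB \mid \rC} = \HH(\rA \mid \rC) - \HH(\rA \mid \rC, \rB)$, uses the hypothesis $\rA \perp \rD \mid \rC$ to replace $\HH(\rA \mid \rC)$ by $\HH(\rA \mid \rC, \rD)$ (the equality case of \itfacts{cond-reduce}), and then bounds $\HH(\rA \mid \rC, \rB)$ from below by $\HH(\rA \mid \rC, \rB, \rD)$ (conditioning reduces entropy). You instead apply the chain rule of mutual information (\itfacts{chain-rule}) to $\mi{\rA}{\rB,\rD \mid \rC}$ in two orders, use \itfacts{info-zero} to kill the $\mi{\rA}{\rD\mid\rC}$ term, and conclude by non-negativity of the residual $\mi{\rA}{\rD\mid\rB,\rC}$. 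The two arguments are dual presentations of the same fact; yours has the advantage of staying entirely at the level of mutual information and making the slack term $\mi{\rA}{\rD\mid\rB,\rC}$ explicit, whereas the paper's avoids the double chain-rule expansion by working with entropy differences directly. Your closing remark that $\rA\perp\rD\mid\rC$ does not imply $\rA\perp\rD\mid\rB,\rC$ correctly identifies why the inequality can be strict and why this proposition is not simply a corollary of \itfacts{cond-reduce}.
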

 \begin{proof}
  Since $\rA$ and $\rD$ are independent conditioned on $\rC$, by
  \itfacts{cond-reduce}, $\HH(\rA \mid  \rC) = \HH(\rA \mid \rC, \rD)$ and $\HH(\rA \mid  \rC, \rB) \ge \HH(\rA \mid  \rC, \rB, \rD)$.  We have,
	 \begin{align*}
	  \mi{\rA}{\rB \mid  \rC} &= \HH(\rA \mid \rC) - \HH(\rA \mid \rC, \rB) = \HH(\rA \mid  \rC, \rD) - \HH(\rA \mid \rC, \rB) \\
	  &\leq \HH(\rA \mid \rC, \rD) - \HH(\rA \mid \rC, \rB, \rD) = \mi{\rA}{\rB \mid \rC, \rD}. \qed
	\end{align*}
	
\end{proof}

\begin{proposition}\label{prop:info-decrease}
  For random variables $\rA, \rB, \rC,\rD$, if $ \rA \perp \rD \mid \rB,\rC$, then, 
  \[\mi{\rA}{\rB \mid \rC} \geq \mi{\rA}{\rB \mid \rC, \rD}.\]
\end{proposition}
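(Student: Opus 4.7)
The plan is to mirror the argument used for Proposition \ref{prop:info-increase}, which treated the dual hypothesis $\rA \perp \rD \mid \rC$. Here the conditional independence has moved into the conditioning (it is $\rA \perp \rD \mid \rB, \rC$ instead), so the role of the two entropy terms in the definition of mutual information will swap.

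First I would expand both sides using the identity $\mi{\rA}{\rB \mid \rX} = \HH(\rA \mid \rX) - \HH(\rA \mid \rB, \rX)$, giving
\[
\mi{\rA}{\rB \mid \rC} = \HH(\rA \mid \rC) - \HH(\rA \mid \rB, \rC),
\qquad
\mi{\rA}{\rB \mid \rC, \rD} = \HH(\rA \mid \rC, \rD) - \HH(\rA \mid \rB, \rC, \rD).
\]
Next I would apply the hypothesis $\rA \perp \rD \mid \rB, \rC$, which by \itfacts{cond-reduce} (the equality case) gives $\HH(\rA \mid \rB, \rC) = \HH(\rA \mid \rB, \rC, \rD)$, so the two ``subtracted'' entropies agree. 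Consequently, the claimed inequality reduces to
\[
\HH(\rA \mid \rC) \;\geq\; \HH(\rA \mid \rC, \rD),
\]
which is exactly \itfacts{cond-reduce} (conditioning on $\rD$ cannot increase entropy). Chaining these equalities and the inequality together completes the proof.

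There is no real obstacle here: the entire argument is a two-line manipulation once one observes that the independence hypothesis kills the $\rD$ that would otherwise appear in the second entropy term, so the inequality is reduced to a bare instance of ``conditioning reduces entropy.'' The only thing to be careful about is the direction of the hypothesis: in Proposition \ref{prop:info-increase} the independence $\rA \perp \rD \mid \rC$ makes the \emph{unconditional-on-$\rB$} entropies equal and forces the inequality to point the other way, whereas here the independence \emph{conditional on $\rB$} makes the \emph{conditional-on-$\rB$} entropies equal, flipping the direction. I would structure the write-up to highlight this symmetry so the reader sees the two propositions as two halves of the same elementary calculation.
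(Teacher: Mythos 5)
Your proof is correct and uses exactly the same two ingredients as the paper's: the equality $\HH(\rA \mid \rB,\rC) = \HH(\rA \mid \rB,\rC,\rD)$ from the conditional-independence hypothesis, and the inequality $\HH(\rA \mid \rC) \geq \HH(\rA \mid \rC,\rD)$ from ``conditioning reduces entropy.'' The only cosmetic difference is the order in which these are applied to the chain of (in)equalities; the argument is the same.
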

 \begin{proof}
 Since $\rA \perp \rD \mid \rB,\rC$, by \itfacts{cond-reduce}, $\HH(\rA \mid \rB,\rC) = \HH(\rA \mid \rB,\rC,\rD)$. Moreover, since conditioning can only reduce the entropy (again by \itfacts{cond-reduce}), 
  \begin{align*}
 	\mi{\rA}{\rB \mid  \rC} &= \HH(\rA \mid \rC) - \HH(\rA \mid \rB,\rC) \geq \HH(\rA \mid \rD,\rC) - \HH(\rA \mid \rB,\rC) \\
	&= \HH(\rA \mid \rD,\rC) - \HH(\rA \mid \rB,\rC,\rD) = \mi{\rA}{\rB \mid \rC,\rD}. \qed
 \end{align*}

\end{proof}

\subsection{Measures of Distance Between Distributions}\label{sec:prob-distance}

We use two main measures of distance (or divergence) between distributions, namely the \emph{Kullback-Leibler divergence} (KL-divergence) and the \emph{total variation distance}. 

\paragraph{KL-divergence.} For two distributions $\mu$ and $\nu$ over the same probability space, the \textbf{Kullback-Leibler (KL) divergence} between $\mu$ and $\nu$ is denoted by $\kl{\mu}{\nu}$ and defined as: 
\begin{align}
\kl{\mu}{\nu}:= \Ex_{a \sim \mu}\Bracket{\log\frac{\mu(a)}{{\nu}(a)}}. \label{eq:kl}
\end{align}
We also have the following relation between mutual information and KL-divergence. 
\begin{fact}\label{fact:kl-info}
	For random variables $\rA,\rB,\rC$, 
	\[\mi{\rA}{\rB \mid \rC} = \Ex_{(B,C) \sim {(\rB,\rC)}}\Bracket{ \kl{\distribution{\rA \mid \rB=B,\rC=C}}{\distribution{\rA \mid \rC=C}}}.\] 
\end{fact}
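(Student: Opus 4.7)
The plan is to prove the identity by direct expansion of both sides from their definitions and then matching terms. This is a classical identity in information theory, and the proof is essentially a bookkeeping exercise rather than requiring any clever insight.

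First, I would unfold the right-hand side by substituting the definition of KL-divergence from~\eqref{eq:kl}. The outer expectation is over $(B,C)\sim(\rB,\rC)$ while the inner sum is over the support of $\rA$ conditioned on $\rB=B,\rC=C$. Combining these into a single expectation over the joint $(A,B,C)\sim(\rA,\rB,\rC)$ (using that $\Pr(\rB=B,\rC=C)\cdot\Pr(\rA=A\mid\rB=B,\rC=C)=\Pr(\rA=A,\rB=B,\rC=C)$) gives
\[
\Ex_{(B,C)}\bracket{\kl{\distribution{\rA\mid\rB=B,\rC=C}}{\distribution{\rA\mid\rC=C}}}
=\Ex_{(A,B,C)}\bracket{\log\frac{\Pr(\rA=A\mid\rB=B,\rC=C)}{\Pr(\rA=A\mid\rC=C)}}.
\]

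Second, I would split the logarithm of the ratio into a difference of two logarithms and recognize each resulting term as a (negative) conditional entropy via~\eqref{eq:entropy} and~\eqref{eq:cond-entropy}. Specifically, the term with numerator yields $-\HH(\rA\mid\rB,\rC)$ after noting that $\Ex_{(A,B,C)}[\log\Pr(\rA=A\mid\rB=B,\rC=C)] = \Ex_{(B,C)}\Ex_{A\mid B,C}[\log\Pr(\rA=A\mid\rB=B,\rC=C)] = -\HH(\rA\mid\rB,\rC)$, and similarly the denominator term yields $-\HH(\rA\mid\rC)$ after marginalizing out $B$ from the joint distribution (the integrand depends only on $A,C$).

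Finally, combining the two terms gives $\HH(\rA\mid\rC)-\HH(\rA\mid\rB,\rC)$, which is exactly the definition of $\mi{\rA}{\rB\mid\rC}$ from~\eqref{eq:mi} and~\eqref{eq:cond-mi}. There is no real obstacle here; the only thing to be careful about is the marginalization step where we reduce $\Ex_{(A,B,C)}[\log\Pr(\rA=A\mid\rC=C)]$ to $-\HH(\rA\mid\rC)$, which uses that $\sum_B\Pr(\rA=A,\rB=B,\rC=C)=\Pr(\rA=A,\rC=C)$. Since this fact is standard and short, I would present it as a two-line derivation rather than breaking it into separate claims.
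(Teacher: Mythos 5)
Your derivation is correct and is the standard bookkeeping argument for this identity: expand the KL-divergence, fold the two expectations into a single expectation over the joint $(A,B,C)$, split the logarithm, and recognize the two terms as $-\HH(\rA\mid\rB,\rC)$ and $-\HH(\rA\mid\rC)$ after marginalizing $B$ out of the second term. The paper states this fact without proof as a standard item from the information-theory toolkit (deferring to the Cover--Thomas reference), so there is no competing approach to compare against; your argument is exactly the one a careful reader would reconstruct.
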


We use the following standard facts about KL-divergence. 
\begin{fact}[Chain rule of KL-divergence]\label{fact:kl-chain-rule}
  Let $\mu(\rX,\rY)$ and $\nu(\rX,\rY)$ be two distributions for random variables $\rX,\rY$. Then, 
\[
	\kl{\mu(\rX,\rY)}{\nu(\rX,\rY)} = \kl{\mu(\rX)}{\nu(\rX)} + \Exp_{x \sim \mu(\rX)} \kl{\mu(\rY \mid \rX=x)}{\nu(\rY \mid \rX=x)}. 
\]
Moreover, if $\rX \perp \rY$ in $\nu$ (the second argument of the KL-divergence), then, 
\[
	\kl{\mu(\rX,\rY)}{\nu(\rX,\rY)} \geq \kl{\mu(\rX)}{\nu(\rX)} + \kl{\mu(\rY)}{\nu(\rY)}. 
\]
\end{fact}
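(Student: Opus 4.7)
The plan is to prove the two parts of the fact by direct computation from the definition, with the only real observation being that the ``moreover'' statement reduces to convexity of KL-divergence in its first argument.

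For the equality (chain rule), I would start by expanding $\kl{\mu(\rX,\rY)}{\nu(\rX,\rY)}$ via the definition in~\Cref{eq:kl}, writing the expectation over the joint $\mu(\rX,\rY)$ of $\log(\mu(x,y)/\nu(x,y))$. Using the factorization $\mu(x,y)=\mu(x)\cdot\mu(y\mid x)$ and the analogous factorization for $\nu$, the log inside the expectation splits as $\log(\mu(x)/\nu(x))+\log(\mu(y\mid x)/\nu(y\mid x))$. Taking expectations, the first term is exactly $\kl{\mu(\rX)}{\nu(\rX)}$, and the second term, after writing the joint expectation as $\Exp_{x\sim\mu(\rX)}\Exp_{y\sim\mu(\rY\mid\rX=x)}[\cdot]$, becomes $\Exp_{x\sim\mu(\rX)}\kl{\mu(\rY\mid\rX=x)}{\nu(\rY\mid\rX=x)}$. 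This is a purely mechanical calculation.

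For the inequality, I would apply the equality just proven and focus on the conditional term. The assumption $\rX \perp \rY$ under $\nu$ means $\nu(\rY\mid\rX=x)=\nu(\rY)$ for every $x$, so the conditional KL reduces to $\Exp_{x\sim\mu(\rX)}\kl{\mu(\rY\mid\rX=x)}{\nu(\rY)}$. The key step is then to invoke convexity of KL-divergence in its first argument (i.e.\ Jensen's inequality for the convex functional $p\mapsto\kl{p}{\nu(\rY)}$ on the set of distributions in $\rY$), which yields
\[
	\Exp_{x\sim\mu(\rX)}\kl{\mu(\rY\mid\rX=x)}{\nu(\rY)} \;\geq\; \kl{\Exp_{x\sim\mu(\rX)}\mu(\rY\mid\rX=x)}{\nu(\rY)} \;=\; \kl{\mu(\rY)}{\nu(\rY)},
\]
where the final equality is just marginalization. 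Combining this with the chain rule gives the stated inequality.

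There is really no main obstacle here: both parts are standard manipulations. The only point deserving attention is justifying the convexity step cleanly, which can be done either by citing the log-sum inequality or by a one-line direct argument via Jensen applied to $t\mapsto t\log t$ on $\IR_{>0}$. I would prefer the log-sum inequality route since it is the cleanest statement of the needed convexity and avoids any measure-theoretic fuss.
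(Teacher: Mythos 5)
Your proof is correct. The paper states this as a background fact in~\Cref{app:info} without giving a proof (it refers the reader to Cover and Thomas for proofs of the standard facts in that appendix), so there is no paper proof to compare against; your derivation---chain rule by factorizing and splitting the logarithm, then the inequality by setting $\nu(\rY\mid\rX=x)=\nu(\rY)$ and invoking convexity of $p\mapsto\kl{p}{\nu(\rY)}$ via the log-sum inequality together with marginalization $\Exp_{x\sim\mu(\rX)}\mu(\rY\mid\rX=x)=\mu(\rY)$---is the standard argument and is sound.
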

\begin{fact}[Conditioning in KL-divergence]\label{fact:kl-event}
  For any random variable $\rX$ and any event $\event$, 
\[
	\kl{\rX \mid \event}{\rX} \leq \log{\paren{\frac{1}{\Pr(\event)}}}. 
\]
Moreover, if $\event$ is a deterministic function of $\rX$, then this equation holds with equality. 
\end{fact}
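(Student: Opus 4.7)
The plan is to expand the definition of KL-divergence directly and then rewrite the likelihood ratio via Bayes' rule to obtain a clean expression from which the bound is immediate. Concretely, the first step is to write, using~\Cref{eq:kl},
\[
\kl{\rX \mid \event}{\rX} = \sum_{x \in \supp{\rX}} \Pr(\rX=x \mid \event)\cdot \log\frac{\Pr(\rX=x \mid \event)}{\Pr(\rX=x)},
\]
adopting the standard convention $0\log 0 = 0$ for values of $x$ outside the conditional support. The next step is to apply Bayes' rule, $\Pr(\rX=x \mid \event) = \Pr(\event \mid \rX=x)\cdot\Pr(\rX=x)/\Pr(\event)$, which gives the pointwise identity
\[
\frac{\Pr(\rX=x \mid \event)}{\Pr(\rX=x)} = \frac{\Pr(\event \mid \rX=x)}{\Pr(\event)}.
\]

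Plugging this into the KL expansion and splitting the logarithm, the sum becomes
\[
\kl{\rX \mid \event}{\rX} = \log\frac{1}{\Pr(\event)} + \sum_{x} \Pr(\rX=x \mid \event)\cdot \log\Pr(\event \mid \rX=x).
\]
Since $\Pr(\event \mid \rX=x) \in [0,1]$, each summand in the second term satisfies $\log\Pr(\event \mid \rX=x) \le 0$, and the weights $\Pr(\rX=x \mid \event)$ are non-negative, so the whole correction term is non-positive. This yields the desired upper bound $\kl{\rX \mid \event}{\rX} \le \log(1/\Pr(\event))$.

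For the ``moreover'' part, if $\event$ is a deterministic function of $\rX$, then $\Pr(\event \mid \rX=x)\in\{0,1\}$ for every $x$. Moreover, $\Pr(\rX=x \mid \event) > 0$ forces $\Pr(\event \mid \rX=x) = 1$, so in every term of the correction sum one of the two factors $\Pr(\rX=x \mid \event)$ or $\log\Pr(\event \mid \rX=x)$ vanishes. Hence the correction term is exactly zero and equality holds. There is no real obstacle here; the only minor care is in handling the convention $0\log 0 = 0$ when $x$ lies outside $\supp{\rX \mid \event}$, which is standard.
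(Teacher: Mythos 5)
Your proof is correct, and since the paper simply cites Cover–Thomas rather than proving this fact itself, there is no in-paper argument to compare against. The Bayes-rule rewriting gives a clean identity $\kl{\rX\mid\event}{\rX} = \log(1/\Pr(\event)) + \sum_x \Pr(\rX=x\mid\event)\log\Pr(\event\mid\rX=x)$, from which both the inequality (the correction term is a nonpositively weighted sum) and the equality case (when $\event$ is a deterministic function of $\rX$, every surviving term has $\Pr(\event\mid\rX=x)=1$) fall out at once; this is a standard and tidy way to establish the fact.
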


\paragraph{Total variation distance.} We denote the \textbf{total variation distance} between two distributions $\mu$ and $\nu$ on the same 
support $\Omega$ by $\tvd{\mu}{\nu}$, defined as: 
\begin{align}
\tvd{\mu}{\nu}:= \max_{\Omega' \subseteq \Omega} \paren{\mu(\Omega')-\nu(\Omega')} = \frac{1}{2} \cdot \sum_{x \in \Omega} \card{\mu(x) - \nu(x)}.  \label{eq:tvd}
\end{align}
\noindent
We use the following basic properties of total variation distance. 
\begin{fact}\label{fact:tvd-small}
	Suppose $\mu$ and $\nu$ are two distributions for $\event$, then, 
	$
	{\mu}(\event) \leq {\nu}(\event) + \tvd{\mu}{\nu}.
$
\end{fact}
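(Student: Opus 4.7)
The plan is to derive this directly from the definition of total variation distance as stated in Eq.~\eqref{eq:tvd}, without invoking any of the more elaborate machinery developed earlier in the appendix. Since the fact is an inequality about the mass assigned to a single event $\event$, it should follow in one line from the variational characterization $\tvd{\mu}{\nu} = \max_{\Omega' \subseteq \Omega} (\mu(\Omega') - \nu(\Omega'))$.

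Concretely, the first step will be to identify $\Omega' := \event$ as a particular subset of the underlying probability space $\Omega$ on which both $\mu$ and $\nu$ are defined. By the definition of TVD, the quantity $\mu(\event) - \nu(\event)$ is at most the maximum over all subsets, so $\mu(\event) - \nu(\event) \leq \tvd{\mu}{\nu}$. Rearranging yields the desired inequality $\mu(\event) \leq \nu(\event) + \tvd{\mu}{\nu}$.

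There is no real obstacle here: the only thing to be slightly careful about is that the phrasing ``two distributions for $\event$'' in the statement is a minor typo and should be read as ``two distributions and an event $\event$ in the common support $\Omega$.'' Once one interprets the statement this way, the proof is a single application of the definition. No symmetry argument or switching of the roles of $\mu$ and $\nu$ is needed, since the claimed inequality is one-sided; the symmetric statement (bounding $|\mu(\event) - \nu(\event)|$) would follow by applying the same reasoning to the complement event $\bar{\event}$, but this is not what is asked.
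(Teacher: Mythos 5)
Your argument is correct. The paper states this fact without proof (it is listed among the standard TVD properties in the appendix), and your one-line derivation from the variational characterization in \Cref{eq:tvd} --- instantiating the maximum at $\Omega' = \event$ and rearranging --- is precisely the natural proof one would expect.
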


\begin{fact}\label{fact:tvd-sample}
	Suppose $\mu$ and $\nu$ are two distributions with same support $\Omega$; then, given a single sample from either $\mu$ or $\nu$, the best probability of successfully deciding whether $s$ came from $\mu$ or $\nu$ (achieved by the maximum likelihood estimator) is 
	\[
	\frac12 + \frac12\cdot\tvd{\mu}{\nu}.
	\]
\end{fact}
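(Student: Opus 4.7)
The plan is to set up the problem as a standard two-hypothesis testing task and then compute the optimal success probability pointwise. Formally, consider the experiment in which a hidden bit $b \in \{\mu,\nu\}$ is chosen uniformly at random (this uniform prior is implicit in the statement and is what makes the answer symmetric), a sample $s$ is drawn from the corresponding distribution, and the tester applies a decision rule $f : \Omega \to \{\mu,\nu\}$ to produce a guess. The success probability is
\[
	P(f) = \tfrac{1}{2}\!\!\sum_{x \in \Omega}\!\mu(x)\cdot \mathbb{1}[f(x)=\mu] \, + \, \tfrac{1}{2}\!\!\sum_{x \in \Omega}\!\nu(x)\cdot \mathbb{1}[f(x)=\nu].
\]
Since the two indicator functions at each $x$ are complementary (at most one is $1$), and $P(f)$ is linear in $f$, it suffices to consider deterministic rules (any randomized rule is a convex combination of deterministic ones and hence achieves no better expected payoff), and one can optimize pointwise.

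The next step is to observe that the pointwise maximizer is precisely the maximum likelihood rule: set $f(x) = \mu$ whenever $\mu(x) \ge \nu(x)$ and $f(x) = \nu$ otherwise (ties can be broken arbitrarily). Substituting this rule yields
\[
	P^{*} \;=\; \tfrac{1}{2}\sum_{x \in \Omega} \max\bigl(\mu(x),\nu(x)\bigr).
\]
The final step is the elementary identity $\max(a,b) = \tfrac{a+b}{2} + \tfrac{|a-b|}{2}$, valid for all reals (in particular for the nonnegative probabilities $\mu(x),\nu(x)$). Applying it term by term and using $\sum_x \mu(x) = \sum_x \nu(x) = 1$ together with the second expression in~\eqref{eq:tvd}, namely $\tvd{\mu}{\nu} = \tfrac{1}{2}\sum_{x}\card{\mu(x)-\nu(x)}$, we get
\[
	P^{*} \;=\; \tfrac{1}{2}\cdot\tfrac{1}{2}\Bigl(\sum_x \mu(x) + \sum_x \nu(x)\Bigr) \, + \, \tfrac{1}{2}\cdot\tfrac{1}{2}\sum_x \card{\mu(x)-\nu(x)} \;=\; \tfrac{1}{2} + \tfrac{1}{2}\cdot \tvd{\mu}{\nu},
\]
which is exactly the claimed bound, and matches the MLE rule advertised in the statement.

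There is no real obstacle here; the proof is a one-line manipulation once the MLE rule is identified as pointwise optimal. The only minor point worth being careful about is justifying the restriction to deterministic decision rules (handled by the linearity-in-$f$ remark above) and confirming that the ``best'' in the statement matches the Bayesian/uniform-prior formulation used in~\Cref{fact:tvd-sample}'s application sites (e.g.,~\Cref{eq:pi-success-tvd1}), which it does.
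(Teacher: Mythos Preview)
Your proof is correct and is the standard argument for this classical fact. The paper itself states this as a background fact without proof (it appears in the appendix on information theory preliminaries), so there is no paper proof to compare against; your derivation via pointwise optimization of the decision rule and the identity $\max(a,b) = \tfrac{a+b}{2} + \tfrac{|a-b|}{2}$ is exactly the expected one.
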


We also have the following (chain-rule) bound on the total variation distance of joint variables.

\begin{fact}\label{fact:tvd-chain-rule}
	For any distributions $\mu$ and $\nu$ on $n$-tuples $(X_1,\ldots,X_n)$, 
	\[
		\tvd{\mu}{\nu} \leq \sum_{i=1}^{n} \Exp_{X_{<i} \sim \mu} \tvd{\mu(X_i \mid X_{<i})}{\nu(X_i \mid X_{<i})}. 
	\]
\end{fact}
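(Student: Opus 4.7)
The plan is to prove this by induction on $n$, using the triangle inequality on a single hybrid distribution at each step.

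The base case $n=1$ is immediate since the right-hand side is simply $\tvd{\mu(X_1)}{\nu(X_1)}$. For the inductive step, I would factor the joint distributions using the chain rule of probability: write $\mu(X_1,\ldots,X_n) = \mu(X_1) \cdot \mu(X_{\geq 2} \mid X_1)$ and similarly for $\nu$. The key algebraic identity I would use is the ``add-and-subtract'' trick with the hybrid $\mu(X_1) \cdot \nu(X_{\geq 2} \mid X_1)$:
\[
\mu - \nu \;=\; \mu(X_1)\bigl[\mu(X_{\geq 2}\mid X_1) - \nu(X_{\geq 2}\mid X_1)\bigr] \;+\; \bigl[\mu(X_1) - \nu(X_1)\bigr]\nu(X_{\geq 2}\mid X_1).
\]
Applying the triangle inequality for the $\ell_1$ norm and recalling that $\tvd{\cdot}{\cdot} = \tfrac{1}{2}\|\cdot\|_1$, the first term gives $\Exp_{X_1 \sim \mu}\tvd{\mu(X_{\geq 2}\mid X_1)}{\nu(X_{\geq 2}\mid X_1)}$ (since $\sum_{x_1}\mu(x_1)$ pulls out the expectation), while the second term collapses to $\tvd{\mu(X_1)}{\nu(X_1)}$ because $\nu(X_{\geq 2}\mid X_1)$ sums to one over $X_{\geq 2}$. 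This yields the two-term decomposition
\[
\tvd{\mu(X_1,\ldots,X_n)}{\nu(X_1,\ldots,X_n)} \;\leq\; \tvd{\mu(X_1)}{\nu(X_1)} \;+\; \Exp_{X_1 \sim \mu}\tvd{\mu(X_{\geq 2}\mid X_1)}{\nu(X_{\geq 2}\mid X_1)}.
\]

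I would then apply the inductive hypothesis to the conditional distributions $\mu(X_{\geq 2}\mid X_1 = x_1)$ and $\nu(X_{\geq 2}\mid X_1 = x_1)$ (which are distributions on $(n-1)$-tuples), yielding
\[
\tvd{\mu(X_{\geq 2}\mid X_1)}{\nu(X_{\geq 2}\mid X_1)} \;\leq\; \sum_{i=2}^{n}\Exp_{X_{2:i-1}\sim \mu(\cdot \mid X_1)}\tvd{\mu(X_i\mid X_{<i})}{\nu(X_i\mid X_{<i})}.
\]
Taking the outer expectation over $X_1 \sim \mu$ and combining with the $i=1$ term assembles exactly the claimed sum $\sum_{i=1}^n \Exp_{X_{<i}\sim \mu}\tvd{\mu(X_i\mid X_{<i})}{\nu(X_i\mid X_{<i})}$, completing the induction.

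There is essentially no serious obstacle here: this is a routine hybrid/triangle-inequality argument and the only subtlety is the asymmetry in the expectation (it is taken under $\mu$, not $\nu$), which is why one must place the hybrid as $\mu(X_1)\cdot\nu(X_{\geq 2}\mid X_1)$ rather than the other way around so that the conditional-distribution term picks up the $\mu$-marginal weight. The proof is clean enough that writing it as a direct two-step calculation (factor, add-and-subtract, triangle inequality, then iterate) is probably preferable to spelling out a formal induction.
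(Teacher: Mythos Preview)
Your proof is correct and is the standard hybrid argument for this inequality. The paper itself does not prove this statement: it is listed among the basic facts in the information-theory appendix, with the reader referred to a textbook for proofs. Your write-up, including the observation about why the hybrid must be $\mu(X_1)\cdot\nu(X_{\geq 2}\mid X_1)$ to land the expectation under $\mu$, is exactly the right argument.
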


A simple consequence of this fact gives us the following ``over conditioning'' property as well. 

\begin{fact}\label{fact:tvd-over-conditioning}
	For any random variables $\rX,\rY,\rZ$, 
	\[
		\tvd{\rX}{\rY} \leq \tvd{\rX\rZ}{\rY\rZ} = \Exp_{Z} \tvd{(\rX \mid \rZ=Z)}{(\rY \mid \rZ=Z)}.  
	\]
\end{fact}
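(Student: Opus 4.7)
The plan is to prove both pieces of Fact~\ref{fact:tvd-over-conditioning} by direct manipulation of the definition of total variation distance in~\Cref{eq:tvd}, using the crucial observation that $\rZ$ denotes the \emph{same} random variable on both sides, so its marginal distribution is identical in the two joint distributions being compared.

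First I would prove the equality $\tvd{\rX\rZ}{\rY\rZ} = \Exp_{Z} \tvd{(\rX \mid \rZ=Z)}{(\rY \mid \rZ=Z)}$. Starting from the definition,
\[
\tvd{\rX\rZ}{\rY\rZ} = \frac{1}{2}\sum_{x,z}\bigl|\Pr(\rX=x,\rZ=z) - \Pr(\rY=x,\rZ=z)\bigr|.
\]
Since $\rZ$ has the same marginal in both joints, I can factor $\Pr(\rX=x,\rZ=z) = \Pr(\rZ=z)\cdot \Pr(\rX=x\mid \rZ=z)$ and similarly for $\rY$, pull the common nonnegative factor $\Pr(\rZ=z)$ outside the absolute value, and recognize the inner sum as $2\cdot\tvd{(\rX\mid\rZ=z)}{(\rY\mid\rZ=z)}$. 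Rewriting the outer sum over $z$ as an expectation over $\rZ$ yields the equality.

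Next I would prove the inequality $\tvd{\rX}{\rY} \leq \tvd{\rX\rZ}{\rY\rZ}$ by a straightforward triangle inequality / marginalization argument. Writing $\Pr(\rX=x) = \sum_z \Pr(\rX=x,\rZ=z)$ and the same for $\rY$,
\[
\bigl|\Pr(\rX=x) - \Pr(\rY=x)\bigr| \leq \sum_z \bigl|\Pr(\rX=x,\rZ=z) - \Pr(\rY=x,\rZ=z)\bigr|,
\]
and summing over $x$ and dividing by $2$ delivers the desired inequality. Conceptually, this is just the data-processing inequality for total variation distance applied to the projection $(\rX,\rZ)\mapsto \rX$.

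There is no real obstacle here: both parts follow directly from unpacking the definition, and the only subtle point to flag is that the factorization in the equality step relies on $\rZ$ being a single random variable (so $\Pr(\rZ=z)$ is well defined and common to both joints), rather than two independent copies of $\rZ$ coupled with $\rX$ and $\rY$. The entire argument fits in a short display and requires no auxiliary machinery beyond the definition of TVD.
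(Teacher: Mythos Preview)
Your proof is correct. The paper takes a slightly different route: rather than unpacking the definition of total variation distance directly, it derives both parts from the TVD chain rule (\Cref{fact:tvd-chain-rule}). For the equality, the paper applies the chain rule with the ordering $(\rZ,\rX)$ versus $(\rZ,\rY)$, so that the leading term $\tvd{\rZ}{\rZ}$ vanishes and only the conditional expectation remains; for the inequality, it applies the chain rule with the ordering $(\rX,\rZ)$ versus $(\rY,\rZ)$ together with nonnegativity of TVD. Your direct argument is more self-contained and is arguably cleaner for the inequality step: since \Cref{fact:tvd-chain-rule} is stated only as an \emph{upper} bound on $\tvd{\rX\rZ}{\rY\rZ}$, it does not by itself furnish a lower bound in terms of $\tvd{\rX}{\rY}$, whereas your marginalization/triangle-inequality argument (equivalently, data processing for TVD applied to the projection $(\rX,\rZ)\mapsto\rX$) gives the inequality immediately. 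Both proofs are only a few lines.
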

\begin{proof}
	By the non-negativity of TVD and~\Cref{fact:tvd-chain-rule}, 
	\[
		 \tvd{\rX}{\rY} \leq \tvd{\rX}{\rY} + \Exp_{X} \tvd{(\rZ \mid \rX=X)}{(\rZ \mid \rY=X)} = \tvd{\rX\rZ}{\rY\rZ} 
	\]
	Applying~\Cref{fact:tvd-chain-rule} again gives us
	\[
		\tvd{\rX\rZ}{\rY\rZ}  = \tvd{\rZ}{\rZ} + \Exp_{Z} \tvd{(\rX \mid \rZ=Z)}{(\rY \mid \rZ=Z)} = \Exp_{Z} \tvd{(\rX \mid \rZ=Z)}{(\rY \mid \rZ=Z)},
	\]
	which concludes the proof. 
\end{proof}

Similarly, we have the following data processing inequality for total variation distance as a consequence of the above. 

\begin{fact}\label{fact:tvd-data-processing}
	Suppose $\rX$ and $\rY$ are two random variables with the same support $\Omega$ and $f: \Omega \rightarrow \Omega$ is a fixed function.  Then,
	\[
		\tvd{f(\rX)}{f(\rY)} \leq \tvd{\rX}{\rY}. 
	\]
\end{fact}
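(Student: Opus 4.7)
The statement is the standard data processing inequality for total variation distance, so the proof should be short and go straight through the variational (``distinguisher'') characterization of TVD recalled in Eq.~(\ref{eq:tvd}). The plan is to observe that pushing both distributions forward through $f$ can only shrink the family of measurable events available to a distinguisher, and a max over a smaller family is at most the max over a larger one.

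Concretely, I would first write, using Eq.~(\ref{eq:tvd}),
\[
\tvd{f(\rX)}{f(\rY)} \;=\; \max_{\Omega' \subseteq \Omega}\,\bigl(\Pr(f(\rX) \in \Omega') - \Pr(f(\rY) \in \Omega')\bigr).
\]
Then for every $\Omega' \subseteq \Omega$, the preimage $f^{-1}(\Omega') \subseteq \Omega$ is a well-defined subset of $\Omega$, and by definition $\Pr(f(\rX) \in \Omega') = \Pr(\rX \in f^{-1}(\Omega'))$, and similarly for $\rY$. Substituting, the right-hand side is a maximum of $\Pr(\rX \in S) - \Pr(\rY \in S)$ taken over the sub-family $\{S : S = f^{-1}(\Omega'),\ \Omega' \subseteq \Omega\}$ of subsets of $\Omega$, which is a subset of the family of all subsets of $\Omega$. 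Thus it is upper bounded by $\max_{S \subseteq \Omega}(\Pr(\rX \in S) - \Pr(\rY \in S)) = \tvd{\rX}{\rY}$, giving the claim.

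As a sanity check, I would also note that the result can alternatively be derived from the already-stated Fact~\ref{fact:tvd-over-conditioning}: apply it to the pair $(f(\rX),\rX)$ versus $(f(\rY),\rY)$ to get $\tvd{f(\rX)}{f(\rY)} \leq \tvd{(f(\rX),\rX)}{(f(\rY),\rY)}$, and then observe that since $f(\rX)$ is a deterministic function of $\rX$ (and likewise on the $\rY$ side), the joint distributions $(f(\rX),\rX)$ and $(f(\rY),\rY)$ are in bijective correspondence with $\rX$ and $\rY$, so the right-hand TVD equals $\tvd{\rX}{\rY}$. There is no real obstacle here; the only thing worth being careful about is writing the preimage step cleanly so the reader sees why the ``max'' is over a sub-family, which is the one-line content of the argument.
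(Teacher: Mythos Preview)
Your proposal is correct. Your primary argument via the variational characterization of TVD and preimages is the standard direct proof and works cleanly as written. The paper itself does not spell out a proof; it simply states the fact ``as a consequence of the above,'' i.e., of Fact~\ref{fact:tvd-over-conditioning}. That is precisely your alternative ``sanity check'' derivation: marginalize the pair $(f(\rX),\rX)$ versus $(f(\rY),\rY)$ and use that the second coordinate determines the first. So your secondary argument matches the paper's intended route, while your primary argument is a slightly more elementary and self-contained alternative that avoids invoking the earlier fact. Either is fine here; the variational one has the minor advantage of not depending on the chain-rule-style Fact~\ref{fact:tvd-chain-rule} that underlies Fact~\ref{fact:tvd-over-conditioning}.
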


\subsubsection*{Connections Between KL-Divergence and Total Variation Distance} 

The following Pinsker's inequality bounds the total variation distance between two distributions based on their KL-divergence, 

\begin{fact}[Pinsker's inequality]\label{fact:pinskers}
	For any distributions $\mu$ and $\nu$, 
	$
	\tvd{\mu}{\nu} \leq \sqrt{\frac{1}{2} \cdot \kl{\mu}{\nu}}.
	$ 
\end{fact}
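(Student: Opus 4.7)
The plan is to prove Pinsker's inequality by the standard two-step reduction: first reduce the general statement to a two-point (Bernoulli) comparison via a data-processing argument, then establish the Bernoulli inequality by elementary calculus.

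For the first step, I would fix the set $A := \{x \in \Omega : \mu(x) \geq \nu(x)\}$ and observe from the definition in \Cref{eq:tvd} that $\tvd{\mu}{\nu} = \mu(A) - \nu(A)$. Let $p := \mu(A)$ and $q := \nu(A)$, and consider the pushforwards $\bar\mu$ and $\bar\nu$ of $\mu,\nu$ under the indicator map $\mathbf{1}_A : \Omega \to \{0,1\}$, so that $\bar\mu = \mathrm{Ber}(p)$ and $\bar\nu = \mathrm{Ber}(q)$. Since $\mathbf{1}_A$ is a deterministic function, an application of \Cref{fact:kl-chain-rule} (viewing $(\mathbf{1}_A(X), X)$ as a joint variable whose first coordinate is determined by the second) gives the data-processing inequality for KL divergence: $\kl{\bar\mu}{\bar\nu} \leq \kl{\mu}{\nu}$. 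Meanwhile the TVD is preserved: $\tvd{\bar\mu}{\bar\nu} = |p-q| = \tvd{\mu}{\nu}$. Thus it suffices to prove the Bernoulli case $2(p-q)^2 \leq \kl{\mathrm{Ber}(p)}{\mathrm{Ber}(q)}$.

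For the second step, define for fixed $q \in (0,1)$ the function
\[
f(p) \;:=\; p \log\!\frac{p}{q} + (1-p)\log\!\frac{1-p}{1-q} - 2(p-q)^2
\]
on $p \in [0,1]$ (with the usual conventions at the endpoints). A direct computation gives $f(q) = 0$, $f'(q) = 0$, and
\[
f''(p) \;=\; \frac{1}{p(1-p)} - 4 \;\geq\; 0,
\]
since $p(1-p) \leq 1/4$ on $[0,1]$. Hence $f$ is convex with a minimum at $p=q$, so $f(p) \geq 0$ throughout, which is precisely the Bernoulli Pinsker inequality. Combining with the reduction above yields
\[
\tvd{\mu}{\nu}^2 = (p-q)^2 \leq \tfrac{1}{2}\,\kl{\bar\mu}{\bar\nu} \leq \tfrac{1}{2}\,\kl{\mu}{\nu},
\]
and taking square roots finishes the proof.

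The routine calculations (the two derivatives of $f$ and the algebraic identity $\tvd{\mu}{\nu} = \mu(A) - \nu(A)$) are not the real content; the only step that requires a moment's care is justifying the data-processing step cleanly from the tools already in the excerpt, namely the chain rule in \Cref{fact:kl-chain-rule} applied to a deterministic coarsening. Everything else is bookkeeping.
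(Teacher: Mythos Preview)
The paper does not prove this statement: it is listed as a ``Fact'' in the information-theory appendix, with the reader referred to Cover and Thomas for proofs of all results in that section. Your proof is the standard and correct argument (reduction to the Bernoulli case via data processing, then a convexity computation), so there is nothing to compare against. One minor remark: your second-derivative computation $f''(p) = \tfrac{1}{p(1-p)} - 4$ tacitly assumes the natural logarithm; the paper's $\log$ is likely base~$2$ (Cover--Thomas convention), in which case $f''(p) = \tfrac{1}{p(1-p)\ln 2} - 4$, but this is only larger, so the convexity argument goes through unchanged.
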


We shall also use the following strengthening of Pinsker's inequality due to~\cite{ChakrabartyK18} that allows to lower bound KL-divergence between two distributions by a combination of $\ell_1$- and $\ell_2$-distance of the two distributions (instead of purely $\ell_1$-distance in the original Pinsker's inequality). 

\begin{proposition}[Strengthened Pinsker's Inequality{~\cite[KL vs $\ell_1/\ell_2$-inequality]{ChakrabartyK18}}]\label{prop:pinsker++}
	Given any pair of distributions $\mu$ and $\nu$ over the same finite domain $\Omega$, define 
	\[
	A := \set{x \in \Omega \mid \mu(x) > 2 \cdot \nu(x)} \quad \text{and} \qquad B := \Omega \setminus A. 
	\]
	Then, 
	\[
	\kl{\mu}{\nu} \geq (1-\ln{2}) \cdot \paren{\,\sum_{x \in A} \card{\mu(x)-\nu(x)} + \sum_{x \in B} \frac{(\mu(x)-\nu(x))^2}{\mu(x)}}. 
	\]
\end{proposition}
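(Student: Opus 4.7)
The natural approach is to reduce the claim to pointwise inequalities for the nonnegative convex function $\phi(y) := y\ln y - y + 1$ (natural log), which vanishes only at $y = 1$. Setting $y_x := \mu(x)/\nu(x)$, the KL divergence admits the representation $\kl{\mu}{\nu} = \sum_x \nu(x)\,\phi(y_x)$. A crucial tool will be the identity $\sum_x \nu(x)(y_x - 1) = \sum_x(\mu(x) - \nu(x)) = 0$, which means that any linear correction term $c(y-1)$ added to a pointwise lower bound on $\phi$ contributes nothing to the summed RHS. This degree of freedom is what one expects to need in order to patch the pointwise bound across the regimes of $y$.

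The plan is to establish two regime-specific lower bounds on $\phi$ matching the two sums in the stated inequality. First, for $y \geq 2$ (the region $A$): $\phi(y) \geq (1-\ln 2)(y-1)$, which I would verify by calculus: check the endpoint $\phi(2) = 2\ln 2 - 1 \geq 1 - \ln 2$ (equivalent to $3\ln 2 \geq 2$, which holds numerically) and then note that $\phi'(y) = \ln y \geq \ln 2 > 1 - \ln 2$ for every $y \geq 2$, so the difference $\phi(y) - (1-\ln 2)(y-1)$ is nondecreasing on $[2,\infty)$. Multiplying through by $\nu(x)$ gives $\nu(x)\phi(y_x) \geq (1-\ln 2)(\mu(x) - \nu(x))$, and on $A$ (where $\mu > 2\nu > \nu$) this coincides with $(1-\ln 2)|\mu(x) - \nu(x)|$.

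Second, for $y \in (0,2]$ (the region $B$): $\phi(y) \geq (1-\ln 2)\,(y-1)^2/y$, from which a direct algebraic manipulation ($\nu(x)\cdot (y_x-1)^2/y_x = (\mu(x)-\nu(x))^2/\mu(x)$) gives $\nu(x)\phi(y_x) \geq (1-\ln 2)(\mu(x)-\nu(x))^2/\mu(x)$. The verification of this scalar inequality is a more delicate calculus exercise: I would define $g(y) := y\phi(y) - (1-\ln 2)(y-1)^2 = y^2\ln y - y^2 + y - (1-\ln 2)(y-1)^2$, compute $g(1) = 0$, $g'(1) = 0$, and $g''(1) = 2\ln 2 - 1 > 0$, identify the unique inflection point $y_0 = 1/(2(1-\ln 2))$ of the comparison function, and then piece together convexity on $[0, y_0]$ and concavity on $[y_0, \infty)$ with endpoint checks at $y=0$ and $y=2$ to conclude $g \geq 0$ on the required interval. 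Once the two pointwise bounds are in hand, summing them over $x \in A$ and $x \in B$ respectively yields the claim.

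The main obstacle will be step (ii): the inequality is essentially tight, as a direct computation shows equality at both $y = 1$ and $y = 1/2$ (both sides equal $(1-\ln 2)/2$ at $y = 1/2$), leaving no slack and pinning down the constant $1 - \ln 2$. Moreover, the naive pointwise form of the bound becomes problematic near $y \to 0^+$, where $(y-1)^2/y \to \infty$ while $\phi(y) \to 1$. This is precisely where the summation-constraint freedom from the first paragraph is expected to enter: by replacing the target lower bound $(1-\ln 2)(y-1)^2/y$ by $(1-\ln 2)(y-1)^2/y + c(y-1)$ for a suitable constant $c$ (whose contribution to the sum vanishes), one can hope to simultaneously make the bound valid in both regimes. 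Finding the single value of $c$ that accomplishes this---and verifying the resulting amended pointwise inequality---is the key technical step I expect to spend most of the work on.
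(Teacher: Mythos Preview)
The paper does not prove this proposition---it is quoted from~\cite{ChakrabartyK18} without proof---so there is no in-paper argument to compare against. More seriously, the inequality \emph{as stated} is false, and the obstacle you flagged near $y\to 0^+$ is precisely why. Take $\Omega=\{a,b\}$ with $\mu=(\eps,1-\eps)$ and $\nu=(1/2,1/2)$ for small $\eps>0$. Then $\mu(a),\mu(b)\le 1=2\nu(\cdot)$, so $A=\emptyset$ and $B=\Omega$; the right-hand side equals $(1-\ln 2)\,(1/2-\eps)^2\bigl(\tfrac{1}{\eps}+\tfrac{1}{1-\eps}\bigr)\to\infty$ as $\eps\to 0$, while $\kl{\mu}{\nu}\to \ln 2$ stays bounded. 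In particular, your proposed linear correction cannot rescue the pointwise bound on $B$: for every fixed $c$, the quantity $(1-\ln 2)(y-1)^2/y+c(y-1)$ still diverges like $(1-\ln 2)/y$ as $y\to 0^+$, and even a global correction summed over $A\cup B$ cannot compensate for an unbounded right-hand side.

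Your own observation that equality holds at $y=1/2$ is the diagnostic: the scalar inequality $\phi(y)\ge(1-\ln 2)(y-1)^2/y$ holds exactly for $y\ge 1/2$, not for $y\le 2$. The statement becomes true---and provable by exactly your two-regime pointwise strategy, with no correction term needed---if one swaps $\mu$ and $\nu$ in the definition of $A$, i.e., takes $A=\{x:\nu(x)>2\mu(x)\}$ and $B=\Omega\setminus A$. Then on $A$ (where $y<1/2$) the $\ell_1$ bound $\phi(y)\ge(1-\ln 2)(1-y)$ holds: set $h(y):=\phi(y)-(1-\ln 2)(1-y)=y\ln(y/2)+\ln 2$, check $h(1/2)=0$ and $h'(y)=\ln(y/2)+1<0$ on $(0,1/2]$. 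On $B$ (where $y\ge 1/2$) your $\ell_2$ analysis goes through once you correct the inflection point of $g(y)=y\phi(y)-(1-\ln 2)(y-1)^2$ to $y_0=\sqrt{e}/2\approx 0.824$ (the value $1/(2(1-\ln 2))$ you wrote is the inflection of $\phi(y)-(1-\ln 2)(y-1)^2$, not of $g$); one then uses $g(1/2)=g(1)=0$, $g'(1)=0$, concavity on $[1/2,y_0]$, and convexity on $[y_0,\infty)$. An alternative fix keeps the paper's $A,B$ but replaces the denominator $\mu(x)$ by $\nu(x)$; either corrected version suffices for the paper's sole application of this proposition.
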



\section{Background on Fourier Analysis on Permutations}\label{app:fourier-permutation}

We review basics of representation theory and Fourier transform on permutations. We refer the interested reader to~\cite{HuangGGFourier09} for more details and background.

\paragraph{Representation theory.} We can define representations for any (symmetric) group, including the group of permutations on a fixed domain. 

\begin{Definition}\label{def:representation}
A \textnormal{\textbf{representation}} of a group $G$  is a map $\rho$ from $G$ to a set of invertible $d_{\rho}\times d_{\rho}$ (complex)
matrix operators ($\rho : G \rightarrow \mathbb{C}^{d_{\rho} \times d_{\rho}}$) which preserves algebraic structure in the sense that for all
$\sigma_1, \sigma_2 \in  G, \rho(\sigma_1\conc \sigma_2) = \rho(\sigma_1)\cdot  \rho(\sigma_2)$. The matrices which lie in the image of $\rho$ are called the
\textnormal{\textbf{representation matrices}}, and we will refer to $d_{\rho}$ as the \textnormal{\textbf{degree}} of the representation.
\end{Definition}

Two representations $\rho_1, \rho_2$ are said to be \textbf{equivalent} if there exists an invertible matrix $C $ such that for all $\sigma \in G$, 
\[
	\rho_2(\sigma) = C^{-1} \cdot \rho_1(\sigma) \cdot C.
\]
Given two representations $\rho_1, \rho_2$, we write the direct sum of $\rho_1$ and $\rho_2$, denoted by $\rho_1 \oplus \rho_2$ as,
\[
\rho_1 \oplus \rho_2(\sigma) = 
		\begin{bmatrix}
			\rho_1(\sigma)& 0 \\
			0 & \rho_2(\sigma)
		\end{bmatrix}.
\]
The degree of $\rho_1 \oplus \rho_2$ is $d_{\rho_1} + d_{\rho_2}$.

\begin{Definition}\label{def:irred-representation}
	A representation $\rho$ is said to be \textnormal{\textbf{reducible}} if it can be decomposed as $\rho = \rho_1 \oplus \rho_2$.
The set of \textnormal{\textbf{irreducible representations}} of any group $G$ is the collection of representations (up to equivalence) which are not reducible.
\end{Definition}
An example of an irreducible representation is the trivial representation, denoted by $\rho_0$, that takes every group element to $1 \in \mathbb{R}$. 

\paragraph{Fourier transform over permutations.} Fix any integer $b \geq 1$ and let $S_b$ be the set of permutations on $[b]$ in the following. 
Let $\reps := \reps(b)$ denote the (finite) set of all irreducible representations of the symmetric group $S_b$. 
 
For any function $f: S_b \rightarrow\mathbb{R}$ and representation $\rho \in \reps$, the Fourier transform of $f$ at $\rho$ is a $d_{\rho} \times d_{\rho}$ dimensional matrix defined as,
\begin{align}
	\hrho{f} := \sum_{\sigma \in S_b} f(\sigma) \cdot \rho(\sigma). \label{eq:fourier-transform}
\end{align}
The set of Fourier transforms at all representations in $\reps$ form the Fourier transform of $f$.

The Fourier Inversion theorem then implies that for any function $f: S_b \rightarrow \IR$,  
\begin{align}
	f(\sigma) = \frac{1}{b!} \cdot \sum_{\rho \in \reps} d_{\rho} \cdot \textnormal{\textsc{Trace}}(\hrho{f}^{\top} \cdot \rho(\sigma)). \label{eq:fourier-inverse}
\end{align}

\subsection*{Fourier Transforms of Distributions Over Permutations}

In the following, let $\cU := \cU_{S_b}$ denote the uniform distribution over $S_b$ and $\nu$ be any arbitrary distribution on $S_b$, where for $\sigma \in S_b$, $\nu(\sigma)$ denote the probability of $\sigma$ under $\nu$. 
Moreover, $\rho_0$ is the trivial representation in $\reps$ that maps all permutations to $1 \times 1$ dimensional matrix with entry $1$. 
The proofs of all following standard results can be found in~\cite{HuangGGFourier09}. 



\begin{fact}[{c.f.~\cite[Section 4.2]{HuangGGFourier09}}]\label{fact:Four-trans}
	For the Fourier transform over permutations,
	\begin{enumerate}
		\item\label{prop:Four-trivial-rep} For any probability distribution $\nu$ over $S_b$, $\hrhot{\nu} = 1$.
		\item\label{prop:Four-uniform-dist} For the uniform distribution $\cU$, $\hrhot{\cU} = 1$, and for any $\rho \in \reps \setminus \set{\rho_0}$, $\hrho{\cU} = \mathbf{0}$, where $\mathbf{0}$ is the $d_{\rho} \times d_{\rho}$ dimensional
		matrix of all zeros. 
	\end{enumerate}
\end{fact}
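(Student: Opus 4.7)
The plan is to prove both parts directly from the definition of the Fourier transform in Eq.~(\ref{eq:fourier-transform}), using only the basic algebraic property that $\rho$ is a homomorphism from $S_b$ to matrices, together with a standard invariance-plus-irreducibility argument for the non-trivial part of the second claim.

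For part (\ref{prop:Four-trivial-rep}), I would simply unfold the definition. By the definition of the trivial representation, $\rho_0(\sigma) = 1 \in \IR$ for every $\sigma \in S_b$, so
\[
\hrhot{\nu} = \sum_{\sigma \in S_b} \nu(\sigma) \cdot \rho_0(\sigma) = \sum_{\sigma \in S_b} \nu(\sigma) = 1,
\]
where the last equality uses only that $\nu$ is a probability distribution. The first half of part (\ref{prop:Four-uniform-dist}), namely $\hrhot{\cU} = 1$, is then an immediate instance of part (\ref{prop:Four-trivial-rep}) applied to $\nu = \cU$.

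The main content is the second half of part (\ref{prop:Four-uniform-dist}). Here I would fix any $\rho \in \reps \setminus \{\rho_0\}$ and set $M := \sum_{\sigma \in S_b} \rho(\sigma)$, so that $\hrho{\cU} = \tfrac{1}{b!} M$ and it suffices to show $M = \mathbf{0}$. The key observation is that $M$ is left-invariant under the group action: for any $\tau \in S_b$,
\[
\rho(\tau) \cdot M \;=\; \sum_{\sigma \in S_b} \rho(\tau) \rho(\sigma) \;=\; \sum_{\sigma \in S_b} \rho(\tau\sigma) \;=\; \sum_{\sigma' \in S_b} \rho(\sigma') \;=\; M,
\]
using that $\rho$ is a homomorphism and that $\sigma \mapsto \tau\sigma$ is a bijection on $S_b$. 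Hence every column of $M$ lies in the subspace $W := \{v \in \mathbb{C}^{d_\rho} : \rho(\tau) v = v \text{ for all } \tau \in S_b\}$ of $S_b$-fixed vectors.

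The final step is to observe that $W = \{0\}$ because $\rho$ is irreducible and not equivalent to $\rho_0$. Indeed, $W$ is a $G$-invariant subspace of $\mathbb{C}^{d_\rho}$, and by irreducibility it must be either $\{0\}$ or the whole space; but if $W = \mathbb{C}^{d_\rho}$, then $\rho(\tau) = I$ for all $\tau$, so $\rho$ decomposes as a direct sum of $d_\rho$ copies of $\rho_0$, contradicting irreducibility unless $\rho = \rho_0$. Therefore $W = \{0\}$, every column of $M$ is zero, so $M = \mathbf{0}$ and $\hrho{\cU} = \mathbf{0}$, completing the proof. I do not anticipate a main obstacle here; the only subtlety worth flagging is the appeal to irreducibility in ruling out $W = \mathbb{C}^{d_\rho}$, which could equivalently be phrased as a one-line application of Schur's lemma to the $S_b$-equivariant map $v \mapsto Mv$ between $\rho_0^{\oplus d_\rho}$ and $\rho$.
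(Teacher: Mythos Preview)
Your proof is correct and entirely standard. The paper does not actually supply a proof of this fact; it is stated with a citation to~\cite{HuangGGFourier09} and used without further justification, so there is nothing to compare against beyond noting that your direct computation for part~(\ref{prop:Four-trivial-rep}) and your invariance-plus-irreducibility argument for part~(\ref{prop:Four-uniform-dist}) are exactly the textbook route.
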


We can also use the Fourier convolution theorem to relate Fourier coefficient of distribution on concatenated permutations to each other. Given two distributions $\nu_1$ and $\nu_2$ over $S_b$, define $\nu := \nu_1 \circ \nu_2$ as the distribution obtained by sampling $\sigma_1 \sim \nu_1$ and $\sigma_2 \sim \nu_2$ independently and returning $\sigma_1 \circ \sigma_2$ as the sample
	of $\nu$. 

\begin{fact}[{c.f.~\cite[Definition 7 and Proposition 8]{HuangGGFourier09}}]\label{fact:Four-convo}
	For any distributions $\nu_1$ and $\nu_2$ over $S_b$ and $\nu = \nu_1 \circ \nu_2$, and any $\rho \in \reps$, 
	\[
		\hrho{\nu} = \hrho{\nu_1} \cdot \hrho{\nu_2}. 
	\] 
\end{fact}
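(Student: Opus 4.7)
The plan is to prove Fact B.3 directly from the definition of the Fourier transform in equation~\eqref{eq:fourier-transform} by exploiting the defining homomorphism property of a representation, namely $\rho(\sigma_1 \circ \sigma_2) = \rho(\sigma_1) \cdot \rho(\sigma_2)$ (Definition B.1). The entire statement is a near-tautology once we rewrite the convolution distribution $\nu = \nu_1 \circ \nu_2$ as a sum over pairs.

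First, I would expand $\hrho{\nu}$ using its definition, obtaining
\[
\hrho{\nu} \;=\; \sum_{\sigma \in S_b} \nu(\sigma) \cdot \rho(\sigma).
\]
Next, substitute the formula $\nu(\sigma) = \sum_{(\sigma_1, \sigma_2) \in S_b \times S_b \,:\, \sigma_1 \circ \sigma_2 = \sigma} \nu_1(\sigma_1) \cdot \nu_2(\sigma_2)$, which simply records that sampling from $\nu$ amounts to sampling $\sigma_1 \sim \nu_1$ and $\sigma_2 \sim \nu_2$ independently and composing. Swapping the order of summation then gives a sum over all pairs $(\sigma_1, \sigma_2) \in S_b \times S_b$ of $\nu_1(\sigma_1) \cdot \nu_2(\sigma_2) \cdot \rho(\sigma_1 \circ \sigma_2)$.

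The crucial step is to apply the homomorphism property of $\rho$ to replace $\rho(\sigma_1 \circ \sigma_2)$ by the matrix product $\rho(\sigma_1) \cdot \rho(\sigma_2)$. Since $\nu_1(\sigma_1)$ and $\nu_2(\sigma_2)$ are scalars (and matrix multiplication is bilinear), the resulting double sum factors as
\[
\hrho{\nu} \;=\; \Bigl(\sum_{\sigma_1 \in S_b} \nu_1(\sigma_1)\cdot \rho(\sigma_1)\Bigr)\cdot \Bigl(\sum_{\sigma_2 \in S_b} \nu_2(\sigma_2)\cdot \rho(\sigma_2)\Bigr) \;=\; \hrho{\nu_1}\cdot \hrho{\nu_2},
\]
which is the desired identity.

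There is essentially no obstacle in this proof: every step is either a definition unfolding (of Fourier transform or of the convolution distribution), a summation swap (justified since $S_b$ is finite), or an application of the representation homomorphism property. The only care needed is to respect the noncommutativity of matrix multiplication, which is consistent with the noncommutativity of the group operation $\circ$ on $S_b$; the ordering of the factors in the final product matches the ordering of the convolution.
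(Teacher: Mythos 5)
Your proof is correct and is the standard derivation of the convolution theorem for nonabelian Fourier transforms; every step (unfolding the definition in \cref{eq:fourier-transform}, expressing $\nu(\sigma)$ as a sum over pairs with $\sigma_1 \circ \sigma_2 = \sigma$, applying the homomorphism property from \Cref{def:representation}, and factoring the resulting double sum) is valid, and you correctly preserve the ordering of the matrix factors. Note that the paper does not actually prove this statement — it is cited directly to \cite{HuangGGFourier09} as a known fact — so there is no in-paper argument to compare against; your reconstruction is exactly the textbook proof one would find in that reference.
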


Finally, we also have the following Plancherel's identity for this Fourier transform. 

\begin{proposition}[\!\!{\cite[Proposition 13]{HuangGGFourier09}}]\label{prop:Four-plancherel}
	For any distributions $\nu_1$ and $\nu_2$ over $S_b$, 
	\[
		\sum_{\sigma \in S_b} \paren{\nu_1(\sigma) - \nu_2(\sigma)}^2 = \frac1{b!} \cdot \sum_{\rho \in \reps} d_{\rho} \cdot \sum_{i, j \in [d_{\rho}]} \paren{\hrho{\nu_1}-\hrho{\nu_2}}_{i,j}^2.
	\]
\end{proposition}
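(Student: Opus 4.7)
The plan is to derive this as the standard Plancherel identity for non-abelian Fourier analysis on the symmetric group, specialized to the difference $f := \nu_1 - \nu_2$. Since Fourier transform (as defined in~\Cref{eq:fourier-transform}) is linear in $f$, we have $\widehat{f}(\rho) = \widehat{\nu_1}(\rho) - \widehat{\nu_2}(\rho)$ at every $\rho \in \reps$, so it suffices to show the ``single-function'' Plancherel identity
\[
\sum_{\sigma \in S_b} f(\sigma)^2 \;=\; \frac{1}{b!} \sum_{\rho \in \reps} d_\rho \sum_{i,j \in [d_\rho]} \widehat{f}(\rho)_{i,j}^{\,2}
\]
for arbitrary real-valued $f : S_b \to \mathbb{R}$.

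To prove this, first note that for $S_b$ one may (and I will) fix each irreducible $\rho \in \reps$ to be a \emph{real orthogonal} representation (e.g.\ Young's orthogonal form), so $\rho(\sigma)^\top = \rho(\sigma^{-1})$ and there is no need to carry complex conjugates. The main tool is then the Schur orthogonality relations in the form
\[
\sum_{\sigma \in S_b} \rho(\sigma)_{i,j}\,\rho'(\sigma)_{k,\ell} \;=\; \frac{b!}{d_\rho}\,\delta_{\rho \rho'}\,\delta_{i k}\,\delta_{j \ell},
\]
valid for all $\rho, \rho' \in \reps$ and all index choices. I would invoke this as a standard fact from representation theory; the only ingredients used in its derivation are Schur's lemma and the averaging trick, neither of which requires additional setup beyond~\Cref{def:representation,def:irred-representation}.

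Given these tools, the calculation is short. Apply Fourier inversion (\Cref{eq:fourier-inverse}) to $f$ once for each of the two copies of $f(\sigma)$ in $\sum_\sigma f(\sigma)^2$, expand the traces as $\text{Tr}(\widehat{f}(\rho)^\top \rho(\sigma)) = \sum_{i,j} \widehat{f}(\rho)_{i,j}\,\rho(\sigma)_{i,j}$, and exchange the (finite) sums to pull the sum over $\sigma$ inside. This yields
\[
\sum_\sigma f(\sigma)^2 = \frac{1}{(b!)^2} \sum_{\rho, \rho' \in \reps} d_\rho d_{\rho'} \sum_{i,j,k,\ell} \widehat{f}(\rho)_{i,j}\,\widehat{f}(\rho')_{k,\ell} \sum_\sigma \rho(\sigma)_{i,j}\,\rho'(\sigma)_{k,\ell}.
\]
Substituting Schur orthogonality collapses the inner sum to $(b!/d_\rho)\,\delta_{\rho\rho'}\,\delta_{ik}\,\delta_{j\ell}$; the quadruple index sum collapses to a double sum over $(i,j)$, and one factor of $d_\rho$ cancels, leaving $\frac{1}{b!} \sum_\rho d_\rho \sum_{i,j} \widehat{f}(\rho)_{i,j}^{\,2}$ as required. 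Re-substituting $f = \nu_1 - \nu_2$ concludes the proof.

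Essentially no step here is a real obstacle; the closest thing to a technical point is simply choosing the representations in $\reps$ to be real orthogonal (so the Frobenius norm $\sum_{i,j}\widehat{f}(\rho)_{i,j}^2$ on the right-hand side makes literal sense without absolute values) and recalling the precise normalization $b!/d_\rho$ in Schur orthogonality. Both are standard for $S_b$, so the proof is genuinely a routine symbolic manipulation on top of those two facts.
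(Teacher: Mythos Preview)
Your proof is correct and is the standard derivation of Plancherel's identity via Schur orthogonality. Note that the paper does not actually prove this proposition: it is stated with a citation to~\cite[Proposition 13]{HuangGGFourier09} and used as a black box, so your argument simply supplies the routine proof that the paper omits.
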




\clearpage

\clearpage

\section{Tightness of~\Cref{lem:conc-amplify}}\label{app:tight-remark}

Our proof in~\Cref{lem:conc-amplify} can be restated as follows. Let $\nu_1,\ldots,\nu_g$ be $g$ distributions over $S_b$ and define $\nu = \nu_1 \circ \cdots \circ \nu_g$ (as defined in~\Cref{app:fourier-permutation}).
Suppose 
\[
	\norm{\nu_i - \cU_{S_b}}_2^2 = \frac{1}{b!} \cdot \eps_i,
\]
for all $i \in [g]$. Then, 
\[
	\norm{\nu - \cU_{S_b}}_2^2 \leq \frac{1}{b!} \cdot \prod_{i=1}^{g} \eps_i. 
\]
(the proof is verbatim as in~\Cref{lem:conc-amplify}). We now argue this bound is actually sharp.

Let $\nu_i$ for $i \in [g]$ be the following distribution. 
\begin{align*}
	\nu_i(\sigma) = \begin{cases}
		&\dfrac{1+\sqrt{\eps_i}}{b!} \qquad \textnormal{if $\sigma$ has an even number of inversions,} \\
		&\dfrac{1-\sqrt{\eps_i}}{b!} \qquad\textnormal{otherwise.}
	\end{cases}
\end{align*}
The $\ell_2$ distance of $\nu_i$ from $\cU_{S_b}$ is $\eps_i/b!$ by construction. Now, consider the distribution $\nu_1 \circ \nu_2$. 

\begin{itemize}
\item To get an even permutation, either two even permutations $\sigma_1 \sim \nu_1$ and $\sigma _2 \sim \nu_2$ can be picked, or two odd permutations can be picked. Thus for an even $\sigma \in S_b$, 
\begin{align*}
	\nu_1 \circ \nu_2(\sigma) &= \frac{b!}2 \cdot \paren{\paren{\frac1{b!}}^2 \cdot (1+\sqrt{\eps_1})(1+\sqrt{\eps_2})}  +  \frac{b!}2 \cdot \paren{\paren{\frac1{b!}}^2 \cdot (1-\sqrt{\eps_1})(1-\sqrt{\eps_2})}  \\
	&= \frac{1+\sqrt{\eps_1\eps_2}}{b!}.  
\end{align*}

\item To get an odd permutation,  we sample either an odd permutation from $\nu_1$ and even from $\nu_2 $ or vice-versa. Hence for any odd $\sigma \in S_b$, 
\begin{align*}
	\nu_1 \circ \nu_2(\sigma) &= \frac{b!}2 \cdot \paren{\paren{\frac1{b!}}^2 \cdot (1+\sqrt{\eps_1})(1-\sqrt{\eps_2})}  +  \frac{b!}2 \cdot \paren{\paren{\frac1{b!}}^2 \cdot (1-\sqrt{\eps_1})(1+\sqrt{\eps_2})}  \\
	&= \frac{1-\sqrt{\eps_1\eps_2}}{b!}. 
\end{align*}
\end{itemize}
If we proceed by induction we see that distribution $\nu = \nu_1 \circ \cdots \nu_g$ is as follows:
\begin{align*}
	\nu(\sigma) = \begin{cases}
		&\frac1{b!} \cdot \paren{1+ \paren{\prod_{i=1}^{g}\eps_i}^{1/2}} \qquad \textnormal{if $\sigma$ has even number of inversions and} \\
		&\frac1{b!} \cdot \paren{1- \paren{\prod_{i=1}^{g}\eps_i}^{1/2}} \qquad\textnormal{otherwise.}
	\end{cases}
\end{align*}
One can then calculate that the $\ell_2$-distance of $\nu$ from $\cU_{S_b}$ is 
\[
\frac1{b!} \cdot \prod_{i=1}^{\ell} \eps_i,
\] 
exactly matching the bounds in~\Cref{lem:conc-amplify}.

\clearpage
\section{Sorting Networks with Large Comparators}\label{app:sorting-stuff}

\newcommand{\merge}{\ensuremath{\textnormal{\textsc{Merge}}}}
\newcommand{\sort}{\ensuremath{\textnormal{\textsc{Sort}}}}
\newcommand{\kstar}{\ensuremath{k^*}}
\newcommand{\num}[1]{\ensuremath{\textnormal{\textsc{num(\texttt{#1})}}}}

\newcommand{\ones}[1]{\bm{1}(#1)}
\newcommand{\zeros}[1]{\bm{0}(#1)}

We provide a weaker version of~\Cref{prop:AKSsorting} that suffices for the proofs in our paper. The proof of the following result is adapted
from~\cite{ParkerP89} and we provide it here only for completeness. 

\begin{proposition}[\!\!\cite{ParkerP89}]\label{prop:simple-sorting}
		There exists an absolute constant $\caks > 0$ such that the following is true. For every pair of integers $m, b \geq 1$, there exists 
	\[
	\daks = \daks(r,b) = \caks \cdot \log^2_b(m)
	\]
	fixed equipartitions of $[m]$ into $\cP_1,\ldots,\cP_{\daks}$, each one consisting of $m/b$ sets of size $b$, with the following property. 
	Given any permutation $\sigma \in S_m$, there are $\daks$ permutations $\gamma_1,\ldots,\gamma_{\daks}$ where for every $i \in [\daks]$, $\gamma_i$ is simple on partition $\cP_i$ so that we have
	$
	\sigma = \gamma_1 \circ \cdots \circ \gamma_{\daks}. 
	$
\end{proposition}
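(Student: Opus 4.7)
The plan is to reduce the claim to constructing an $O(\log_b^2 m)$-depth sorting network on $m$ wires built from $b$-sorters in fixed parallel layers, and then to build such a network via a $b$-ary generalization of Batcher's odd--even merge sort. For the reduction, a depth-$d$ sorting network is specified by $d$ equipartitions $\cP_1^\star, \ldots, \cP_d^\star$ of $[m]$ into groups of size $b$ (one $b$-sorter per group per layer), with the guarantee that every input array ends up sorted. Given any $\sigma \in S_m$, I will feed the input array $a^{(0)}[i] := \sigma(i)$ to the network and track the position permutation induced by each layer: the $b$-sorter on a group $G \in \cP_j^\star$ rearranges the $b$ values within $G$, and on positions this is a simple permutation $\gamma_j^\star$ on $\cP_j^\star$. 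Writing out the layer-by-layer action, $a^{(d)} = a^{(0)} \circ (\gamma_1^\star)^{-1} \circ \cdots \circ (\gamma_d^\star)^{-1}$, and using that $a^{(d)}$ is the identity gives $\sigma = \gamma_d^\star \circ \cdots \circ \gamma_1^\star$; a re-indexing $\cP_i := \cP_{d+1-i}^\star$ and $\gamma_i := \gamma_{d+1-i}^\star$ then yields exactly the required decomposition $\sigma = \gamma_1 \circ \cdots \circ \gamma_d$.

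For the network itself, I will assume without loss of generality that $m$ is a power of $b$ (otherwise pad with sentinel $+\infty$ wires, which affects only constants and leaves the partitions on the genuine wires intact) and construct $\mathsf{Sort}_b(m)$ recursively: partition the $m$ wires into $b$ blocks of size $m/b$, sort each block in parallel using $\mathsf{Sort}_b(m/b)$ (reusing the same inner partition sequence across all $b$ blocks, so that each inner layer remains a single equipartition of $[m]$), and then apply a merger network $\mathsf{Merge}_b(m)$ that combines the $b$ sorted blocks into one sorted sequence. This yields the depth recurrence $D_{\mathsf{Sort}}(m) = D_{\mathsf{Sort}}(m/b) + D_{\mathsf{Merge}}(m)$, so it suffices to construct $\mathsf{Merge}_b(m)$ with depth $O(\log_b m)$, which is enough to conclude $D_{\mathsf{Sort}}(m) = O(\log_b^2 m)$ and thus $\daks = \caks \cdot \log_b^2 m$ for an absolute constant $\caks$.

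The main technical step, and the place I expect the hard part of the proof to live, is the merger: given $b$ sorted sequences $A_1, \ldots, A_b$ of length $m/b$ each, I will split every $A_j$ into $b$ subsequences by residue class modulo $b$, recursively merge, for each residue $r \in \{0, \ldots, b-1\}$, the $b$ residue-$r$ subsequences (one from each $A_j$) in parallel, interleave the resulting $b$ merged sequences, and apply a single cleanup layer of $b$-sorters on the aligned blocks of $b$ consecutive positions. This gives $D_{\mathsf{Merge}}(m) = D_{\mathsf{Merge}}(m/b) + O(1) = O(\log_b m)$. Correctness will be argued by the $0/1$ principle: on every $\{0,1\}$-valued input the output must be sorted, so it is enough to show that after the $b$ recursive merges the number of $1$'s within each aligned block of $b$ consecutive outputs deviates from its sorted value by at most $O(1)$, so that a single $b$-sorter per block corrects it. For $b = 2$ this is Batcher's classical two-line argument that adjacent entries in the interleaved output differ by at most one; for general $b$ the corresponding claim requires a more careful accounting of how many $1$'s each residue-class merger contributes to each block of the interleaving. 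Establishing this block-level counting lemma cleanly is the main obstacle; once it is in place, the recursion and the sorting-network-to-decomposition reduction above complete the proof.
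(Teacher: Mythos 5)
Your reduction from a depth-$d$ sorting network with $b$-sorters to the permutation decomposition is correct, your top-level merge-sort recursion and its recurrence analysis are fine, and your identification of the merger as the hard step is right. But the merger as you describe it---split each sorted input into residue classes mod $b$, recursively merge residue-by-residue, interleave, and apply \emph{one} cleanup layer of $b$-sorters on consecutive blocks of size $b$---does not work for $b > 2$, and the ``block-level counting lemma'' you hope for is false. Here is the issue: for a sorted $0/1$ sequence with $n_j$ ones, the residue classes have ones counts that differ pairwise by at most $1$; summing over the $b$ input sequences, the residue classes of the interleaved output have ones counts differing by up to $b$. This means the $0/1$-boundary of residue $r$ and that of residue $r'$ can be separated by up to $\Theta(b^2)$ positions in the interleaved output, so the ``dirty window'' spans far more than one block of $b$ consecutive positions, and a single layer of $b$-sorters cannot sort it. A concrete counterexample with $b=3$: merging $A_1 = 000001111$, $A_2 = 000000001$, $A_3 = 001111111$ gives residue-class ones counts $(3,3,6)$, an interleaved output with $1$'s at positions $12, 15, 18, 19\text{--}27$, and a sorted target with $1$'s at positions $16\text{--}27$; positions $12$ and $16$ are in different blocks, so one cleanup layer leaves it unsorted. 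For $b=2$ Batcher's lemma saves you because the imbalance is at most $2$, matching the sorter width, but this coincidence does not persist.

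The paper's proof (following Parker--Parberry) starts from exactly your residue split (that is what the columns of the matrix $B$ are), but then inserts the crucial extra idea you are missing: after the column-merges there are at most $b$ bad rows forming a contiguous band, and the ``diagonalization'' step---shifting column $j$ down by $j-1$ positions to form matrix $C$---rotates this bad band so that it is confined to at most \emph{two} consecutive $b\times b$ squares of $C$. Sorting the $b\times b$ squares with one layer and then merging adjacent half-squares with one more layer is then provably enough (\Cref{lem:sort-main} is the counting argument showing the overlap stays within $b^2/2$). Without this shift, the imbalance you identified spreads across $\Theta(b)$ squares and cannot be cleaned up in $O(1)$ layers; with it, two layers suffice. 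If you want to salvage your writeup, you should replace ``interleave and one cleanup layer'' by ``diagonalize, sort squares, merge consecutive half-squares,'' and your depth recurrence and reduction then go through unchanged. As a secondary note, the paper actually builds $\merge$ with $b^2$-sorters and then calls $\sort$ with branching factor $\sqrt{b}$ so that $(\sqrt{b})^2 = b$-sorters suffice overall; your version with branching factor $b$ is equivalent up to constants, but you need to make the comparator sizes consistent between the square-sort step and the claimed $b$-sorters.
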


This is equivalent to proving that there is a sorting network for $m$ elements with $b$-comparators with depth $O(\log_b^2 m)$. Each partition $\cP$ of $[m]$ with sets $P_1, P_2, \ldots, P_{m/b}$ of size $b$ each can be interpreted as $m/b$ wires as follows: wire $i$ compares elements $x_1, x_2, \ldots x_b \in P_i$ for each $i \in [m/b]$. The permutation which is simple on $\cP$ \emph{only} permutes elements from $ P_i$ with each other based on how wire $i$ arranges them. 

\subsection{Merge Subroutine with Large Comparators} 
The proof of \Cref{prop:simple-sorting} is based on the merge sort algorithm. We assume $m $ is a power of $b$ for simplicity. This can be removed by padding the input with dummy elements. We start by providing a 
$\merge$ subroutine first that is used in the sorting network.  For simplicity of exposition, \textbf{we assume we are working with $b^2$-sorters for $\merge$ instead of $b$-sorters}; we will address this easily by re-parameterizing 
when proving~\Cref{prop:simple-sorting}.

\begin{ourbox}
	\textbf{Input:} Array $A$ which is a permutation of $[m]$ such that $A[i \cdot m/b + 1]$ to $A[(i+1) \cdot m/b]$ is in increasing order for $0 \leq i \leq (b-1)$. 
	
	\smallskip
	
	\noindent
	\textbf{Output:} Array $A$ sorted in increasing order.  
	
	\smallskip
	\noindent
	\textbf{Algorithm $\merge(A,m,b)$:} (with $b^2$-sorters instead of $b$-sorters)
	\begin{enumerate}[label=$(\roman*)$]
		\item If $m \leq b^2$, compare all indices directly and sort them using a single sorter.  
		\item Otherwise, write array $A$ into a  $(m/b) \times b$ matrix $B$  such that $B[i][j] = A[(i-1)b + j]$ for $1 \leq i \leq m/b$ and $1 \leq j \leq b$. 
		\item Run $\merge(B[*][j],m/b,b)$ on each column $j \in [b]$ of $B$ separately. 
		\item Create matrix $C $ with dimensions $(m/b  + b) \times b$ where $C[i+j-1][ j] = B[i][ j]$, the empty elements in $C$ are filled with $-\infty$ for the top right corner, and $+\infty$ for the bottom left. 
		
		For any $\ell \in [m/b^2 + 1]$, we define the \textbf{square} $\ell$ as all $C[i][j]$ with $ (\ell-1) b + 1 \leq i \leq \ell \cdot b$ and $1 \leq j \leq b$ (see \Cref{fig:square}.)
		\item Divide $C$ into $m/b^2$ squares of size $b \times b$ each, and sort them each using a sorter. 
		\item For each $\ell \in [m/b^2-1]$, merge the last $b^2/2$ cells of square $\ell$ with the first $b^2/2$ cells of square $\ell+1$ using a single sorter for each one.  
	\end{enumerate}
\end{ourbox}

\begin{figure}[h!]
	\centering
	\tikzset{every picture/.style={line width=0.75pt}} 

\begin{tikzpicture}[x=0.75pt,y=0.75pt,yscale=-1,xscale=1]
\begin{scope}[scale=0.85]	
	\draw  [line width=1.5]  (159,68.82) -- (267.55,68.82) -- (267.55,321.09) -- (159,321.09) -- cycle ;
	\draw [fill={rgb, 255:red, 188; green, 94; blue, 104 }  ,fill opacity=1 ]   (158.55,171.73) -- (267.55,94.73) ;
	\draw    (160.55,271.73) -- (267.55,192.73) ;
	\draw  [line width=1.5]  (425,68.82) -- (533.55,68.82) -- (533.55,393.73) -- (425,393.73) -- cycle ;
	\draw [fill={rgb, 255:red, 188; green, 94; blue, 104 }  ,fill opacity=1 ]   (425.55,162.73) -- (532.55,163.09) ;
	\draw [fill={rgb, 255:red, 188; green, 94; blue, 104 }  ,fill opacity=1 ]   (424.55,254.73) -- (532.55,255.09) ;
	\draw    (274.55,176.32) -- (291,176) ;
	\draw [shift={(272.55,176.36)}, rotate = 358.87] [color={rgb, 255:red, 0; green, 0; blue, 0 }  ][line width=0.75]    (10.93,-3.29) .. controls (6.95,-1.4) and (3.31,-0.3) .. (0,0) .. controls (3.31,0.3) and (6.95,1.4) .. (10.93,3.29)   ;
	\draw    (397.55,174.36) -- (414,174.04) ;
	\draw [shift={(416,174)}, rotate = 178.87] [color={rgb, 255:red, 0; green, 0; blue, 0 }  ][line width=0.75]    (10.93,-3.29) .. controls (6.95,-1.4) and (3.31,-0.3) .. (0,0) .. controls (3.31,0.3) and (6.95,1.4) .. (10.93,3.29)   ;
	
	\draw (205,331.82) node [anchor=north west][inner sep=0.75pt]  [font=\large] [align=left] {$\displaystyle B$};
	\draw (272,81.82) node [anchor=north west][inner sep=0.75pt]  [font=\large] [align=left] {$ $};
	\draw (269.55,97.73) node [anchor=north west][inner sep=0.75pt]  [font=\large] [align=left] {$\displaystyle ( \ell -2) b\ +\ 2$};
	\draw (57,166.82) node [anchor=north west][inner sep=0.75pt]  [font=\large] [align=left] {$\displaystyle ( \ell -1) b+1\ $};
	\draw (133,251.82) node [anchor=north west][inner sep=0.75pt]  [font=\large] [align=left] {$\displaystyle \ell b$};
	\draw (474,401.82) node [anchor=north west][inner sep=0.75pt]  [font=\large] [align=left] {$\displaystyle C$};
	\draw (296,163.82) node [anchor=north west][inner sep=0.75pt]  [font=\large] [align=left] {$\displaystyle ( \ell -1) b+1\ $};
	\draw (537,227.82) node [anchor=north west][inner sep=0.75pt]  [font=\large] [align=left] {$\displaystyle \ell b$};
\end{scope}	
	
\end{tikzpicture}\caption{An illustration of a square $\ell \in [m/b^2 + 1]$ in the matrix $C$.}\label{fig:square}
\end{figure}
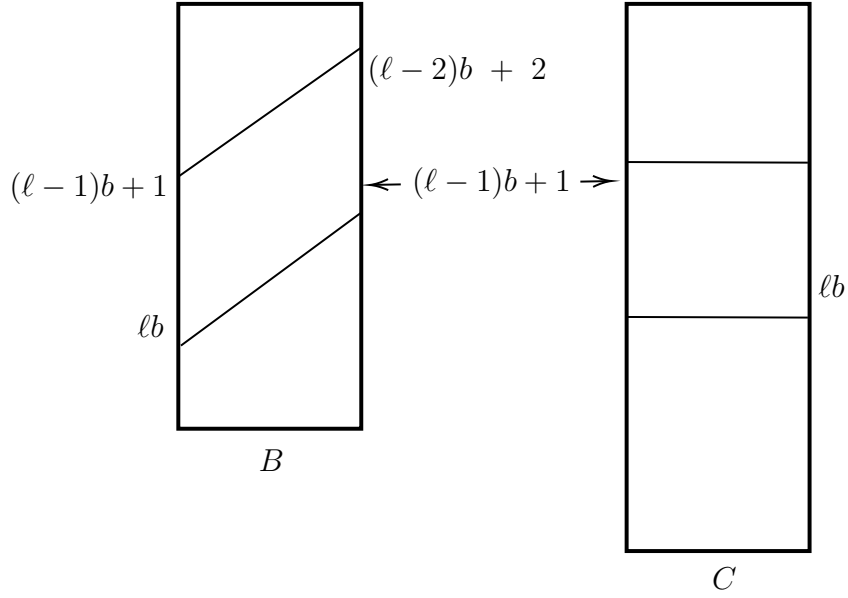

We will prove that the depth of $\merge$ is $O(  \log_b m)$, and it works correctly. The proof of correctness is based on the standard 0-1 principle for sorting networks. 

\begin{proposition}[c.f.{\cite[Section 5.3]{knuth1997art}}]\label{prop:0-1}
	Any sorting network which sorts correctly when the input is from the set $\set{0,1}^m$ can sort any arbitrary permutation of $[m]$ correctly also. 
\end{proposition}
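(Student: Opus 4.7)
The plan is to establish the 0-1 principle via a contrapositive argument using the fundamental \emph{monotonicity property} of comparator networks: for any monotone function $f : \mathbb{R} \to \{0,1\}$ and any comparator network $\cN$, if one applies $f$ coordinatewise to the input of $\cN$ and then runs $\cN$, the result is the same as first running $\cN$ and then applying $f$ coordinatewise. This holds because a single $b$-comparator simply sorts its $b$ inputs, and sorting commutes with any monotone coordinatewise map in the sense above (the $i$-th smallest of $f(x_1),\dots,f(x_b)$ equals $f$ of the $i$-th smallest of $x_1,\dots,x_b$ whenever $f$ is monotone non-decreasing). I would first state and prove this monotonicity lemma by induction on the depth of the network, observing that each layer is a disjoint union of $b$-sorters and that the property is preserved under composition.

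Given the monotonicity lemma, I would then argue the 0-1 principle by contrapositive. Suppose $\cN$ fails to correctly sort some arbitrary input $x = (x_1,\dots,x_m) \in \mathbb{R}^m$, producing output $y = (y_1,\dots,y_m)$ with $y_i > y_j$ for some $i < j$. Choose any threshold $t$ satisfying $y_j \leq t < y_i$ and define the threshold function
\[
f_t(z) = \begin{cases} 0 & \text{if } z \leq t, \\ 1 & \text{if } z > t. \end{cases}
\]
Since $f_t$ is monotone non-decreasing, the monotonicity lemma implies that running $\cN$ on the 0-1 input $(f_t(x_1),\dots,f_t(x_m))$ produces exactly $(f_t(y_1),\dots,f_t(y_m))$, whose $i$-th coordinate is $1$ and $j$-th coordinate is $0$ with $i < j$. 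Hence $\cN$ fails to sort this 0-1 input, contradicting the hypothesis that $\cN$ sorts all 0-1 inputs correctly.

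The main (and essentially only) technical point is the monotonicity lemma for $b$-sorters, which is a straightforward but crucial observation: a $b$-sorter outputs its inputs in sorted order, and for a monotone $f$, the multiset $\{f(x_1),\dots,f(x_b)\}$ sorted equals $\{f(\text{sort}(x_1,\dots,x_b)_1),\dots,f(\text{sort}(x_1,\dots,x_b)_b)\}$. I do not anticipate any real obstacle here; the proof is classical (see Knuth~\cite{knuth1997art}) and the generalization from $2$-sorters to $b$-sorters is immediate since the argument depends only on the fact that each network gate sorts its inputs monotonically.
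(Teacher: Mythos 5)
Your proof is correct and is exactly the classical argument for the 0-1 principle that the paper cites to Knuth without reproducing: establish that monotone maps commute with comparator networks (which for $b$-sorters reduces to the fact that the $k$-th smallest of $f(x_1),\dots,f(x_b)$ is $f$ of the $k$-th smallest when $f$ is non-decreasing), then take a contrapositive with a threshold function. Since the paper itself supplies no proof, there is nothing to compare beyond noting that your argument matches the standard one the citation points to.
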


Thus, from now on we assume the input $A$ consists of only 0's and 1's. 
To continue, we need a quick definition. We say a region (a region can be a row, a column or a square) \textbf{good} if it is comprised entirely of 1's or entirely of 0's. We call it a \textbf{bad} region otherwise. 
For any region $x$, we use $\ones{x}$ and $\zeros{x}$ to denote the number of 1's and 0's in $x$.

\begin{claim}\label{clm:step-3}
	At the end of step $(iii)$, each row of $B$ is still sorted. Moreover, there are at most $b$ bad rows in $B$ and they 
	form a contiguous region such that for each row $i < j$ in this region, we have $\ones{i} \leq \ones{j}$. 
\end{claim}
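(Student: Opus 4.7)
The plan is to invoke the 0--1 principle (Proposition A.1) to reduce to 0--1 inputs, and then carry out a direct combinatorial analysis of the matrix $B$ in terms of the number of zeros per column. With a 0--1 input $A$, each of the $b$ sorted blocks of $A$ is just some number of $0$'s followed by some number of $1$'s; call the zero-counts $\zeta_0,\ldots,\zeta_{b-1}$. The recursive call $\merge(B[*][j],m/b,b)$ makes sense because the positions of $A$ that end up in column $j$ split into $b$ contiguous sub-arrays, one per block of $A$, each of which is already sorted. After step $(iii)$, column $j$ of $B$ is therefore entirely sorted, and if we let $z_j$ denote its number of zeros, column $j$ equals $0^{z_j}1^{m/b-z_j}$.

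The main combinatorial observation I would then establish is that $z_j$ is \emph{non-increasing} in $j$. This follows from a per-block argument: for any fixed block $k$ of $A$, the positions contributed to column $j$ and to column $j+1$ come in pairs $((i{-}1)b + j,\, (i{-}1)b + j + 1)$ within the block, and a zero at the later position implies a zero at the earlier one since the block is sorted. Summing over blocks gives $z_{j+1}\le z_j$. Once we know $z_j$ is non-increasing, row $i$ of $B$ has a $0$ in column $j$ exactly when $z_j\ge i$, and $\{j:z_j\ge i\}$ is therefore a prefix of $[b]$; hence each row is $0^{p_i}1^{b-p_i}$ for some $p_i$, i.e.\ sorted. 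This proves the first sentence of the claim.

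For the second sentence, observe that row $i$ is bad iff $0<p_i<b$, which happens iff $z_b<i\le z_1$. This is automatically a contiguous set of rows, and since $p_i$ is non-increasing in $i$, $\ones{i}=b-p_i$ is non-decreasing, giving the required monotonicity. The one quantitative step that actually requires work is showing $z_1-z_b\le b$: I would prove this by bounding the per-block contribution, writing $\zeta_k = q_k b + r_k$ with $0\le r_k<b$ and checking that (zeros from block $k$ in column $1$) $-$ (zeros from block $k$ in column $b$) is at most $1$ (in fact, equal to $\mathbf{1}[r_k\ge 1]$, modulo edge cases when $\zeta_k<b$ or $\zeta_k>m/b-b$). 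Summing over the $b$ blocks yields the bound of $b$. The only subtlety I anticipate is the bookkeeping around these edge cases, but this is routine floor-function arithmetic rather than a conceptual obstacle.
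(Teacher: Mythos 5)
Your proof is correct and essentially matches the paper's argument: both reduce to $0$-$1$ inputs, use the row-sortedness of $B$ after step (ii) to deduce that the per-column zero-counts $z_j$ are monotone (hence rows remain sorted after step (iii)), and then bound the number of bad rows by $z_1-z_b\le b$. The one cosmetic difference is in how that last bound is obtained: you compute per-block column-count differences with floor arithmetic, whereas the paper observes more directly that each of the $b$ contiguous chunks of $m/b^2$ rows of $B$ corresponding to one sorted block of $A$ contains at most one bad row after step (ii), which reaches the same count with less bookkeeping.
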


\begin{proof}
	Given that we copy each sorted region of $A$ into $m/b^2$ rows of $B$, we have that rows of $B$ are sorted after step $(ii)$. Thus, 
	for columns $p , q \in [b]$ with $p < q$,  $\ones{p} \leq \ones{q}$ in matrix $B$ at the end of step $(ii)$. 
	Moreover, step $(iii)$ does not change the number of 0's and 1's in each column. 
	After step $(iii)$, if some row $i \in [m/b]$ of $B$ is not sorted, there will be columns $p, q $ such that $p < q $ but $\ones{p} > \ones{q}$, a contradiction. 
	 Thus, after step $(iii)$, each row of $B$ will still be sorted.  
	
	Let us partition the rows as follows: $P_i = \set{j \mid (i-1) \cdot m/b^2 + 1 \leq j \leq i \cdot m/b^2}$, for $i \in [b]$, namely, 
	contiguous regions of $m/b^2$ rows each. After step $(ii)$, each $P_i$ can only have one bad row as we copied a sorted region of $A$ into all entries in $P_i$. 
	Thus, there are at most $b$ bad rows in the matrix $B$ overall.  After step $(iii)$, we get that these $b$ bad rows should appear next to each other as the columns are now sorted. 
	The sortedness also implies that for each $i < j$ in these bad rows, $\ones{i} \leq \ones{j}$ as desired. 	Refer to \Cref{fig:sort-1} for an illustration.
\end{proof}

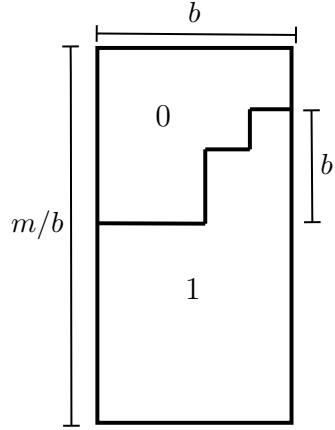
\begin{figure}[h!]
	\centering
	\tikzset{every picture/.style={line width=0.75pt}} 

\begin{tikzpicture}[x=0.75pt,y=0.75pt,yscale=-1,xscale=1]
\begin{scope}[scale=0.75]		
	\draw  [line width=1.5]  (275,46) -- (405.55,46) -- (405.55,298.27) -- (275,298.27) -- cycle ;
	\draw [line width=1.5]    (276,164) -- (347.55,164.27) ;
	\draw [line width=1.5]    (347.55,164.27) -- (347.55,114.27) ;
	\draw [line width=1.5]    (347.55,114.27) -- (377.55,114.27) ;
	\draw [line width=1.5]    (377.55,114.27) -- (377.55,87.27) ;
	\draw [line width=1.5]    (377.55,87.27) -- (406.55,87.27) ;
	\draw    (419,88) -- (419.55,163.27) ;
	\draw [shift={(419.55,163.27)}, rotate = 269.58] [color={rgb, 255:red, 0; green, 0; blue, 0 }  ][line width=0.75]    (0,5.59) -- (0,-5.59)   ;
	\draw [shift={(419,88)}, rotate = 269.58] [color={rgb, 255:red, 0; green, 0; blue, 0 }  ][line width=0.75]    (0,5.59) -- (0,-5.59)   ;
	\draw    (274.55,36.27) -- (408.55,37.27) ;
	\draw [shift={(408.55,37.27)}, rotate = 180.43] [color={rgb, 255:red, 0; green, 0; blue, 0 }  ][line width=0.75]    (0,5.59) -- (0,-5.59)   ;
	\draw [shift={(274.55,36.27)}, rotate = 180.43] [color={rgb, 255:red, 0; green, 0; blue, 0 }  ][line width=0.75]    (0,5.59) -- (0,-5.59)   ;
	\draw    (257,45) -- (257.13,62.28) -- (257.55,300.27) ;
	\draw [shift={(257.55,300.27)}, rotate = 269.9] [color={rgb, 255:red, 0; green, 0; blue, 0 }  ][line width=0.75]    (0,5.59) -- (0,-5.59)   ;
	\draw [shift={(257,45)}, rotate = 269.58] [color={rgb, 255:red, 0; green, 0; blue, 0 }  ][line width=0.75]    (0,5.59) -- (0,-5.59)   ;
	
	\draw (423,115) node [anchor=north west][inner sep=0.75pt]   [align=left] {$\displaystyle b$};
	\draw (334,13) node [anchor=north west][inner sep=0.75pt]   [align=left] {$\displaystyle b$};
	\draw (215,153) node [anchor=north west][inner sep=0.75pt]   [align=left] {$\displaystyle m/b$};
	\draw (332,199) node [anchor=north west][inner sep=0.75pt]  [font=\large] [align=left] {$\displaystyle 1$};
	\draw (312,83) node [anchor=north west][inner sep=0.75pt]  [font=\large] [align=left] {$\displaystyle 0$};
\end{scope}	
	
\end{tikzpicture}\caption{An illustration of matrix $B$ after step $(ii)$. All the cells below the partition are 1, and all the cells above it are 0.}\label{fig:sort-1}
\end{figure}

Now, with standard techniques, it is sufficient to sort every square of dimension $b \times b$ in $B$ to get the final sorted array. However, \cite{ParkerP89} employs a diagonalization technique to reduce the number of squares we need to sort. 
The next lemma is the main reason why we do not need to sort every square of dimension $b \times b$.

\begin{lemma}\label{lem:sort-main}
	At the end of step $(iv)$, for every $\ell \in [m/b^2-1]$: (1) if square $\ell$ has at least one $1$, then $\zeros{\ell+1} < b^2/2$, and (2) if 
	square $\ell+1$ has at least one $0$, then $\ones{\ell} < b^2/2$. 
\end{lemma}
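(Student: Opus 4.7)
The plan is to first invoke the 0-1 principle of Proposition~\ref{prop:0-1}, so we may assume the input array $A$ is binary; the cells of $C$ that originated from $A$ are then $0$ or $1$, and the padded cells are $\pm\infty$. Under this reduction, Claim~\ref{clm:step-3} gives a very clean picture of $C$ at the end of step~$(iv)$: column $j$ of $B$ is sorted (all $0$'s on top, then all $1$'s), the number of $0$'s $a_j$ in column $j$ is non-increasing in $j$, and $a_1 - a_b \leq b$. The shift in step~$(iv)$ therefore places the unique $0/1$ boundary of column $j$ at row $B_j := j + a_j$ in $C$.

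It is convenient to set $g_j := B_j - \ell b - 1$, which measures how far the boundary of column $j$ sits above the gap between squares $\ell$ and $\ell+1$. The monotonicity of $a_j$ together with the bound $a_1 - a_b \leq b$ translate into the two constraints on $(g_j)$ that drive the whole argument: the pointwise bound $g_{j+1} \leq g_j + 1$, and the global bound $g_b \geq g_1 - 1$. Direct counting of each column's $0/1$ contributions to squares $\ell$ and $\ell+1$ (the $\pm\infty$ pads, which may appear only in square $1$, never contribute to counts of $0$ or $1$) gives
\[
\ones{\ell} = \sum_{j=1}^b \max\!\bigl(0, \min(b, -g_j)\bigr), \qquad \zeros{\ell+1} = \sum_{j=1}^b \max\!\bigl(0, \min(b, g_j)\bigr).
\]

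The crux of the proof is then a purely combinatorial estimate on $(g_j)$. For claim~$(1)$, the hypothesis that square $\ell$ has a $1$ is equivalent to the existence of some index $j^*$ with $g_{j^*} \leq -1$. For $j > j^*$, iterating $g_{k+1} \leq g_k + 1$ forces $g_j \leq j - j^* - 1$. For $j < j^*$, I first combine the global bound $g_b \geq g_1 - 1$ with $g_b \leq g_{j^*} + (b - j^*) \leq b - j^* - 1$ to obtain $g_1 \leq b - j^*$, and then propagate pointwise to get $g_j \leq b - j^* + j - 1$. Plugging these column-wise upper bounds into the expression for $\zeros{\ell+1}$ and summing yields a bound in which all $j^*$-dependent terms cancel and the total telescopes to $(b-1)(b-2)/2$, which is strictly less than $b^2/2$. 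Claim~$(2)$ follows by the dual argument applied to $(-g_j)$: the same two constraints on $(g_j)$ are used, but with pointwise monotonicity propagated from $j^{**}$ backward and the global bound used to control $g_b$ instead of $g_1$.

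The step I expect to require the most care is justifying the global constraint $g_b \geq g_1 - 1$ and showing it combines cleanly with the pointwise bound to produce the $j^*$-independent estimate above. Without this global bound the sequence $(g_j)$ could swing by an arbitrary amount and a constant fraction of columns could contribute $\Theta(b)$ each to $\zeros{\ell+1}$, so it is precisely the quantitative consequence of the ``at most $b$ bad rows'' part of Claim~\ref{clm:step-3} that makes the telescoping work. Once the telescoping identity is verified the rest is bookkeeping, and the boundary case $\ell = 1$ (where square $\ell$ contains $-\infty$ pads in its leading columns) causes no difficulty since those pads do not affect the $0/1$ counts in either square.
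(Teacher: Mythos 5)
Your proof is correct, and it takes a genuinely different route from the paper. The paper's argument is geometric: it picks the topmost-leftmost $1$ in square $\ell$, draws the rhombus in $B$ corresponding to square $\ell+1$ of $C$, and lower-bounds $\ones{\ell+1}$ by summing and subtracting areas of the regions \texttt{XZRP}, \texttt{XVW}, \texttt{SRT}, \texttt{QTF} in its Figure~\ref{fig:areas}, arriving at $\zeros{\ell+1}\le (b^2-3b)/2$. You instead encode the state of $C$ after step~$(iv)$ by the per-column boundary offsets $g_j$, observe that Claim~\ref{clm:step-3} is exactly equivalent to the two linear constraints $g_{j+1}\le g_j+1$ and $g_b\ge g_1-1$, and then bound the columnwise contribution $\max(0,\min(b,g_j))$ by propagating these constraints around index $j^*$. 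The telescoping identity you describe does check out: writing $p=j^*-1$ and $q=b-j^*$, the two partial sums collapse to $\frac{(p+q)^2-(p+q)}{2}=\frac{(b-1)(b-2)}{2}$, independent of $j^*$, which is $<b^2/2$; the bound is off from the paper's by the additive constant $1$, which is immaterial. The gain of your version is that the role of the ``at most $b$ bad rows'' conclusion of Claim~\ref{clm:step-3} becomes completely transparent --- it is precisely the wraparound inequality $g_b\ge g_1-1$ that prevents a constant fraction of columns from contributing $\Theta(b)$ zeros each to square $\ell+1$ --- whereas in the paper's proof that fact is used only implicitly through the choice of $(i_1,j_1)$ and is somewhat buried in the area bookkeeping. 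The one thing I would insist you write out rather than wave at is part~$(2)$: the duality is real (work with $h_j=-g_j$, for which $h_{j}\le h_{j+1}+1$ and $h_b\le h_1+1$, and propagate from $j^{**}$ outward in the mirrored directions), but the roles of the two constraints swap, and it is easy to write the wrong inequality if you do it purely ``by symmetry.''
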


\begin{proof}
Let square $\ell$ have at least one 1 in it. We will look at matrix $B$ after step $(iii)$ to upper bound $\zeros{\ell+1}$. Each square $\ell$ in matrix $C$ will be a rhombus in matrix $B$. It will contain all the cells $B[i,j]$ such that: 
\[
  (\ell-1) b + 1 \leq i + j \leq \ell \cdot b \quad \text{and} \quad 1 \leq j \leq b.
\]
\Cref{fig:areas} gives an illustration of this (and also specifies several key regions needed for the proof). 
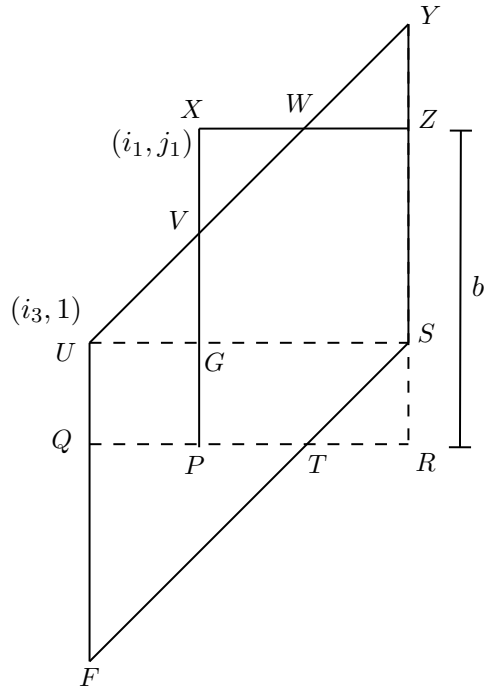
\begin{figure}[h!]
	\centering
	\tikzset{every picture/.style={line width=0.75pt}} 

\begin{tikzpicture}[x=0.75pt,y=0.75pt,yscale=-1,xscale=1]
	
	\draw [fill={rgb, 255:red, 188; green, 94; blue, 104 }  ,fill opacity=1 ]   (246.73,218.1) -- (407.55,57.27) ;
	\draw [fill={rgb, 255:red, 188; green, 94; blue, 104 }  ,fill opacity=1 ]   (246.73,378.92) -- (407.55,218.1) ;
	\draw    (302,110) -- (407.55,109.91) ;
	\draw  [dash pattern={on 4.5pt off 4.5pt}]  (407.55,57.27) -- (407.55,269.27) ;
	\draw    (407.55,57.27) -- (407.55,218.1) ;
	\draw    (246.73,218.1) -- (246.73,378.92) ;
	\draw    (302,110) -- (302,270.82) ;
	\draw  [dash pattern={on 4.5pt off 4.5pt}]  (246.73,269.27) -- (407.55,269.27) ;
	\draw  [dash pattern={on 4.5pt off 4.5pt}]  (246.73,218.1) -- (407.55,218.1) ;
	\draw    (434,111) -- (433.55,270.82) ;
	\draw [shift={(433.55,270.82)}, rotate = 270.16] [color={rgb, 255:red, 0; green, 0; blue, 0 }  ][line width=0.75]    (0,5.59) -- (0,-5.59)   ;
	\draw [shift={(434,111)}, rotate = 270.16] [color={rgb, 255:red, 0; green, 0; blue, 0 }  ][line width=0.75]    (0,5.59) -- (0,-5.59)   ;
	
	\draw (379,376.82) node [anchor=north west][inner sep=0.75pt]  [font=\large] [align=left] {$ $};
	\draw (291,93.82) node [anchor=north west][inner sep=0.75pt]  [font=\small] [align=left] {$\displaystyle X\ $};
	\draw (412,46.82) node [anchor=north west][inner sep=0.75pt]  [font=\small] [align=left] {$\displaystyle Y\ $};
	\draw (411,98.82) node [anchor=north west][inner sep=0.75pt]  [font=\small] [align=left] {$\displaystyle Z\ $};
	\draw (344,90.82) node [anchor=north west][inner sep=0.75pt]  [font=\small] [align=left] {$\displaystyle W\ $};
	\draw (228,216.82) node [anchor=north west][inner sep=0.75pt]  [font=\small] [align=left] {$\displaystyle U\ $};
	\draw (285,149.82) node [anchor=north west][inner sep=0.75pt]  [font=\small] [align=left] {$\displaystyle V\ $};
	\draw (226,260.82) node [anchor=north west][inner sep=0.75pt]  [font=\small] [align=left] {$\displaystyle Q$};
	\draw (293,273.82) node [anchor=north west][inner sep=0.75pt]  [font=\small] [align=left] {$\displaystyle P$};
	\draw (409.55,272.27) node [anchor=north west][inner sep=0.75pt]  [font=\small] [align=left] {$\displaystyle R$};
	\draw (355.55,273.27) node [anchor=north west][inner sep=0.75pt]  [font=\small] [align=left] {$\displaystyle T$};
	\draw (410.55,207.27) node [anchor=north west][inner sep=0.75pt]  [font=\small] [align=left] {$\displaystyle S$};
	\draw (240,380.82) node [anchor=north west][inner sep=0.75pt]  [font=\small] [align=left] {$\displaystyle F$};
	\draw (303,221.82) node [anchor=north west][inner sep=0.75pt]  [font=\small] [align=left] {$\displaystyle G$};
	\draw (438,183) node [anchor=north west][inner sep=0.75pt]   [align=left] {$\displaystyle b$};
	\draw (256,109) node [anchor=north west][inner sep=0.75pt]   [align=left] {$\displaystyle ( i_{1} ,j_{1})$};
	\draw (205,193) node [anchor=north west][inner sep=0.75pt]   [align=left] {$\displaystyle ( i_{3} ,1)$};

\end{tikzpicture}\caption{Upper bounding the number of 0's in square $\ell+1$ in \Cref{lem:sort-main}.}\label{fig:areas}
\end{figure}

Let $(i_1,j_1)$ be the top most cell in square $\ell$ with a 1. In case of ties, we pick the left most cell. By \Cref{clm:step-3}, we know that all the cells $(i, j)$ with $i \geq i_1$ and $j \geq j_1$ must be a 1 in $B$.
Let $i_2 = i_1 + b$. Cell $(i_2, j_1)$ must be present in square $\ell+1$ if cell $(i_1, j_1)$ is in square $\ell$. 

Let $(i_3, 1)$ be the point which is the left most and bottom most cell in square $\ell $ (by our construction, $i_3 = \ceil{\frac{i_1}{b}} \cdot b$). Then, $(i_3-j +1, j)$ is the last cell in column $j$ which is  in square $\ell$.

Refer to \Cref{fig:areas} for the various possible areas in matrix $B$. We use $\num{x}$ to denote the number of cells in region $x$. The following regions are defined in the figure, and we make some simple observations:

\begin{itemize}
	\item Square $\ell+1$ is all the cells inside rhombus $\texttt{UYSF}$.
	\item The top most and left most 1 in square $\ell$ is the first cell in region $\texttt{XVW}$. 
	\item $\num{XZRP} = b(b-j_1+1)$.
	\item $\num{XVW}= \nicefrac12 \cdot (i_3-j_1-i_1) \cdot (i_3-j_1-i_1+1)$. (The boundary of this region is marked by the cells $(i_1, j_1), (i_1, i_3-i_1+1), (i_3-j_1+1, j_1)$.)
	\item  $\num{STR} = \nicefrac12 \cdot (i_2-i_3-1) \cdot (i_2-i_3)$.
	\item $\num{QTF} = \nicefrac12 \cdot (i_3+b-i_2) \cdot (i_3+b-i_2+1)$.  
\end{itemize}

The total number of cells with 1 in square $\ell+1$ is lower bounded by,
\begin{align*}
	\ones{\ell+1} &\geq \num{VWZSTP} + \num{QTF} \\
	 &= \num{XZRP}-\num{XWV}-\num{SRT}+\num{QTF} \\
	&= b(b-j_1+1) - \\
	&\hspace{1cm} \nicefrac12 \cdot \Paren{(i_3-j_1-i_1)(i_3-j_1-i_1+1) +(i_2-i_3-1)(i_2-i_3)- (i_3+c-i_2)(i_3+c-i_2+1)} \\
	&= b(b-j_1+1) - \nicefrac12 \cdot \paren{(x-j_1)(x-j_1+1) + (b-x)(b-x-1) - x(x+1)} \tag{by defining $x := i_3-i_1$} \\ 
	&= b^2-bj_1 + b - \nicefrac12 \cdot \paren{(x-j_1)(x-j_1+1) + b^2+x^2-2bx-b+x-x^2-x} \\
	&=  b^2-bj_1 + b - \nicefrac12 \cdot \paren{(x-j_1)(x-j_1+1) + b^2-2bx-b} \\
	&= \nicefrac12 \cdot (b^2 + 3b +(x-j_1)(x-j_1+1+2b)).
\end{align*}

This quantity is minimized when $x = j_1$. Note that $i_1 + j_1 \leq i_3 $ for cell $(i_1, j_1)$ to belong to square $\ell$, and thus, $x \geq j_1$. This proves one part of the lemma, as the number of 0s is upper bounded by $\nicefrac12 \cdot (b^2 - 3b)$ in square $\ell+1$, whenever there is a cell with 1 in square $\ell$. 

We can prove that if there is a 0 in square $\ell+1$, the number of 1's in square $\ell$ is upper bounded by $(b^2-3b)/2 < b^2/2$ similarly. 
\end{proof}

\begin{lemma}\label{lem:main-correct}
	The network $\merge$ merges the input array $A$ correctly. Moreover, with $b^2$-size sorters, it has depth $O(\log_b m)$. 
\end{lemma}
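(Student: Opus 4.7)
The plan is to prove the depth bound and correctness separately, both by induction on $m$, and to use the 0-1 principle of \Cref{prop:0-1} for the correctness argument. The depth bound is the easy part; the correctness hinges on combining \Cref{clm:step-3} with \Cref{lem:sort-main} to track the 0-1 boundary through each step.

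For the depth bound, let $D(m)$ denote the depth of $\merge(A, m, b)$ with $b^2$-sorters. The base case $m \leq b^2$ gives $D(m) = 1$. For the inductive case, step (iii) consists of $b$ parallel recursive calls on arrays of length $m/b$, contributing depth $D(m/b)$; step (v) is one parallel layer of $b^2$-sorters (the $m/b^2$ squares are disjoint); step (vi) is one more parallel layer of $b^2$-sorters, since for any two consecutive boundary merges $\ell \leftrightarrow \ell+1$ and $\ell+1 \leftrightarrow \ell+2$, the last $b^2/2$ cells of square $\ell+1$ and the first $b^2/2$ cells of square $\ell+1$ are disjoint. Hence $D(m) = D(m/b) + 2 = O(\log_b m)$.

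For correctness, by \Cref{prop:0-1} it suffices to consider $A \in \{0,1\}^m$. Proceed by induction on $m$. After step (ii), each row of $B$ is a contiguous slice of a single sorted region of $A$ (each region fills exactly $m/b^2$ consecutive rows), hence each row of $B$ is already sorted. By the inductive hypothesis applied in step (iii), every column of $B$ is sorted as well, and \Cref{clm:step-3} then guarantees that the at most $b$ bad rows of $B$ form a contiguous block in which the number of 1s weakly increases going downward. The diagonalization in step (iv) turns this contiguous block of at most $b$ bad rows of $B$ into a diagonal staircase inside $C$ that intersects at most two consecutive $b \times b$ squares at a time, while all other squares of $C$ are monochromatic. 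After step (v) each square of $C$ is individually sorted; by \Cref{lem:sort-main}, any 0-1 transition between consecutive squares $\ell$ and $\ell+1$ is confined to the last $b^2/2$ cells of $\ell$ and the first $b^2/2$ cells of $\ell+1$. Step (vi) sorts precisely these overlap regions, after which reading the entries of $C$ along the appropriate order (and discarding the $\pm \infty$ padding introduced in step (iv)) yields the sorted array $A$.

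The main obstacle will be the final bookkeeping step: formally verifying that reading the entries of $C$ in the inverse of the step-(iv) shift yields a globally sorted array, and handling the boundary squares (the first and last, together with the $-\infty$ and $+\infty$ padding). The combinatorial heart of the argument, namely that all 0-1 discrepancies are localized to the overlap regions sorted in step (vi), is supplied by \Cref{lem:sort-main}; once this is in hand, the remaining verification is routine but requires some care because the mapping between $A$, $B$, and $C$ is indirect, and because one must check that the merges along the boundary of each pair of squares do not disturb already-sorted cells outside the overlap.
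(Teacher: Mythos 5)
Your proof is correct and takes essentially the same route as the paper: correctness rests on the same two facts (\Cref{clm:step-3} for the contiguous block of at most $b$ bad rows, \Cref{lem:sort-main} for localizing the residual 0--1 discrepancy to the half-square overlap sorted in step (vi)), together with the 0--1 principle of \Cref{prop:0-1}, and the depth bound comes from the same recurrence $M(m)=M(m/b)+2$. Your explicit remark that the step-(vi) sorters touch disjoint halves of each square and hence fit in a single parallel layer is a worthwhile clarification the paper leaves implicit, and the ``final bookkeeping'' you flag as an obstacle — reassembling the sorted array from $C$ and discarding the $\pm\infty$ padding — is likewise elided in the paper's own proof.
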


\begin{proof}
	By~\Cref{clm:step-3}, there are only $b$ contiguous bad rows in $B$, which implies that there are most two consecutive bad squares $\ell$ and $\ell+1$ in $C$.  
	By~\Cref{lem:sort-main}, once we merge the last $b^2/2$ cells of square $\ell$ with first $b^2/2$ cells of square $\ell+1$, we will definitely move all the 0's before the 1's, thus 
	making $C$ entirely sorted. This proves the correctness of $\merge$. 
	
	If we have sorters of size $b^2$, we can perform steps $(v)$ and $(vi)$ easily with 1 layer each. Let $M(m)$ denote the depth of $\merge$ on input size $m$. We get the following recurrence:
	\[
		M(m) = M(m/b) + 2.
	\]
	The total depth is thus $M(m) = O(\log_b m)$. 
\end{proof}

\subsection{The Final Sorting Network}

We are now ready to give the final sorting network to sort an array of size $m$ using $\merge$ primitive.
For this proof, \textbf{we revert back to using $b$-sorters as was the original problem}.

\begin{ourbox}
	\textbf{Input:} Array $A$ which is a permutation of $[m]$. 
	
	\smallskip
	\noindent
	\textbf{Output:} Array $A$ sorted in increasing order.  
	
	\smallskip
	\noindent
	\textbf{Algorithm $\sort(A)$:}
	\begin{enumerate}[label=$(\roman*)$]
		\item If $m \leq b$, compare all indices directly using one $b$-sorter.
		\item  Otherwise, run $\sort$ on $(A[\ell \cdot m/\sqrt{b}+1], \ldots, A[(\ell+1)\cdot m/\sqrt{b}])$ for each $0 \leq \ell \leq \sqrt{b}-1$.
		\item Run $\merge(A,m/\sqrt{b},\sqrt{b})$ (with $(\sqrt{b})^2 = b$-sorters)  
	\end{enumerate}
\end{ourbox}

\begin{proof}[Proof of \Cref{prop:simple-sorting}]
	The correctness of primitive $\sort$ follows from \Cref{lem:main-correct}, as we have $b$-sorters, and we merge $\sqrt{b}$ arrays of size $m/\sqrt{b}$ each.
	The depth of $\merge$ is $O(\log_{\sqrt{b}} m) = O(\log_b m)$. Let $S(m)$ be the depth of the sorting network on input of size $m$. Then,
	\[
		S(m) = S(\frac{m}{\sqrt{b}}) + O(\log_b m),
	\]
	giving us a final depth of $S(m) = O(\log_b^2 m)$ as desired. 
\end{proof}

\clearpage

\end{document}